\documentclass[ejsv2,authoryear]{imsart}

\RequirePackage{natbib}
\RequirePackage[colorlinks,citecolor=blue,urlcolor=blue]{hyperref}
\RequirePackage{graphicx}

\usepackage{algorithm}
\usepackage{algorithmic}
\usepackage{array}
\usepackage{bm}
\usepackage{bbm}
\usepackage{color}
\usepackage{float}
\usepackage{booktabs}
\usepackage{bigints}
\usepackage{extarrows}
\usepackage{multirow}
\usepackage{subcaption}
\usepackage[bbgreekl]{mathbbol}
\usepackage{mathtools}
\usepackage[section]{placeins}
\usepackage{verbatim}
\usepackage{enumitem}
\usepackage{xcolor}
\usepackage{optidef}
\usepackage{graphicx}
\usepackage{rotating}

\startlocaldefs
\theoremstyle{plain}
\newtheorem{theorem}{Theorem}[section]
\newtheorem{lemma}[theorem]{Lemma}
\newtheorem{remark}{Remark}[section]

\theoremstyle{definition}
\newtheorem{definition}[theorem]{Definition}

\theoremstyle{remark}

\numberwithin{equation}{section}
\numberwithin{theorem}{section}
\numberwithin{proposition}{section}
\numberwithin{corollary}{section}
\numberwithin{remark}{section}
\numberwithin{table}{section}
\numberwithin{figure}{section}

\endlocaldefs

\usepackage[most]{tcolorbox}

\newtcolorbox{highlightremark}{
    colback=white,
    colframe=black,
    boxrule=0.5pt,
    arc=1mm,
    left=2mm,
    right=2mm,
    top=1mm,
    bottom=1mm
}

\newcommand{\mE}{\mathbb{E}}
\newcommand{\mP}{\mathbb{P}}
\newcommand{\bW}{\mathbf{W}}
\newcommand{\bX}{\mathbf{X}}
\newcommand{\bY}{\mathbf{Y}}
\newcommand{\bZ}{\mathbf{Z}}

\newcommand{\bOmega}{\mathbf{\Omega}}
\newcommand{\bS}{\mathbf{S}}
\newcommand{\bP}{\mathbf{P}}
\newcommand{\bF}{\mathbf{F}}
\newcommand{\bG}{\mathbf{G}}
\newcommand{\bU}{\mathbf{U}}
\newcommand{\bD}{\mathbf{D}}

\newcommand{\bK}{\mathbf{K}}
\newcommand{\bJ}{\mathbf{J}}

\newcommand{\calD}{\mathcal{D}}
\newcommand{\calF}{\mathcal{F}}
\newcommand{\calH}{\mathcal{H}}
\newcommand{\calG}{\mathcal{G}}
\newcommand{\calK}{\mathcal{K}}
\newcommand{\calL}{\mathcal{L}}
\newcommand{\calP}{\mathcal{P}}
\newcommand{\TW}{\mathbb{TW}}

\newcommand{\bL}{\mathbf{L}}
\newcommand{\bH}{\mathbf{H}}

\newcommand{\tr}{\operatorname{tr}}

\newcommand{\SNR}{{\rm SNR}}
\newcommand{\LRT}{{\mathcal{T}}}

\renewcommand{\Re}{\operatorname{Re}}
\renewcommand{\Im}{\operatorname{Im}}

\newcommand{\I}{\mathcal{I} }

\begin{document}

\begin{frontmatter}
\title{Adaptable Regularized CCA Tests for Independence of High-Dimensional Random Vectors}
\runtitle{Regularized Test for Independence}

\begin{aug}
\author[A]{\fnms{Haoran}~\snm{Li}\ead[label=e1]{hzl0152@auburn.edu}}
\address[A]{Department of Mathematics and Statistics, Auburn University\printead[presep={,\ }]{e1}}
\runauthor{H. Li}
\end{aug}

\begin{abstract}
We propose an adaptable testing procedure for independence between two high-dimensional random vectors. The method incorporates ridge regularization and principal component-based dimension reduction into the canonical correlation analysis (CCA) framework, thereby stabilizing classical test statistics in high-dimensional settings. Depending on the reduced dimension, we develop both a regularized likelihood ratio test and a regularized largest-root test to accommodate different testing scenarios. We establish the asymptotic behavior of the proposed procedures under both the null hypothesis and representative alternatives, and further develop a data-driven method for selecting the regularization parameter. 
Extensive simulation studies demonstrate favorable finite-sample performance across a broad range of settings.
\end{abstract}

\begin{keyword}
\kwd{canonical correlation analysis}
\kwd{random matrix theory}
\kwd{Tracy-Widom law}
\kwd{ridge-regularization}    
\end{keyword}

\end{frontmatter}

\section{Introduction}\label{sec:introduction}

In many modern applications, variables can naturally be partitioned into several mutually exclusive groups according to their scientific roles or practical interpretations. Examples include collections of genes associated with different biological pathways, groups of financial assets from distinct industrial sectors, and sets of nodes belonging to separate communities in a network. A fundamental statistical question is whether these groups exhibit significant dependence or whether they may reasonably be treated as mutually independent. The validity of an independence assumption often carries important scientific and methodological implications; see, for example, applications in Gaussian graphical models \citep{devijver2018block}, gene expression analysis \citep{xu2019nonparametric}, and financial econometrics \citep{yang2015independence}. From a modeling perspective, independence assumptions can substantially reduce the number of unknown parameters, thereby simplifying estimation and improving interpretability. From a structural perspective, rejecting independence may reveal latent interaction patterns, common driving factors, or underlying information transmission mechanisms across groups of variables. Consequently, testing independence is frequently an essential preliminary step prior to subsequent analyses such as classification, clustering, dimension reduction, and network inference.

In this paper, we focus on a simplified yet representative setting involving two groups of variables, organized into two random vectors $X \in \mathbb{R}^{p_1}$ and $Y \in \mathbb{R}^{p_2}$ with dimensions $p_1$ and $p_2$, respectively. Denote the covariance and cross-covariance matrices as 
\[\Sigma_{x} = \mE (X -\mE X)(X-\mE X)^T, ~~ \Sigma_{y} = \mE (Y -\mE Y)(Y-\mE Y)^T, ~~ \Sigma_{xy} = \mE (X -\mE X)(Y-\mE Y)^T.  \]
Suppose that an i.i.d. sample of size $n$ from these two vectors $X$ and $Y$ is observed jointly, denoted as $(X_i^T, Y_i^T)$, $i=1,\dots, n$.  Moreover, denote the two centered data matrices as
\[ {\bX} = [X_1-\bar{X}, \dots, X_n-\bar{X}] \in \mathbb{R}^{p_1\times n} \quad \text{and} \quad {\bY} = [ Y_1-\bar{Y}, \dots, Y_n-\bar{Y}]  \in \mathbb{R}^{p_2\times n},\]
where $\bar{X} = n^{-1} \sum_{i=1}^n X_i$ and $\bar{Y} = n^{-1} \sum_{i=1}^n Y_i$ are the sample means. Define the sample covariance and cross-covariance matrices as 
\[ \bS_x =  (n-1)^{-1} {\bX}{\bX}^T, \quad \bS_y = (n-1)^{-1}{\bY}{\bY}^T, \quad \bS_{xy} = (n-1)^{-1}{\bX}{\bY}^T.\]

In classical multivariate analysis, the independence test is addressed under the framework of \emph{canonical correlation analysis} (CCA) under Gaussianity. In particular, if $X$ and $Y$ are jointly normal, testing independence between $X$ and $Y$ is equivalent to testing $\Sigma_{xy} = 0$. The traditional test procedures are then designed to measure the eigenvalues of the sample estimator of the normalized cross-covariance matrix $\widetilde{\Sigma}_{xy} = \Sigma_{x}^{-1/2} \Sigma_{xy} \Sigma_y^{-1/2}$.  

To motivate our proposed test procedures, we briefly summarize the traditional CCA tests. Classical literature commonly employs two main types of techniques, both relying on the eigenvalues $\rho_j$ of 
$\bS_{xy}\bS_{y}^{-1}\bS_{xy}^T \bS_{x}^{-1}$. 
The first approach is trace-based and involves aggregating all eigenvalues $\rho_j$ after applying a specific transformation, such as the classical likelihood ratio test (LRT). This approach is expected to be efficient when the departure of $\widetilde{\Sigma}_{xy}$ from zero is spread across all eigendirections under the alternative. The second approach, known as Roy's largest root test, relies exclusively on the largest eigenvalue, say $\max_j \rho_j$. This technique is particularly powerful for detecting concentrated alternatives under which the departure of $\widetilde{\Sigma}_{xy}$ from zero is concentrated on its leading eigendirections. 

To better illustrate the structure underlying these procedures, it is common to reformulate the test matrix $\bS_{xy} \bS_y^{-1} \bS_{xy}^T \bS_{x}^{-1}$ through an associated $F$-matrix representation. Define 
\[ \bP_0 = \bY^T (\bY\bY^T)^{-1}\bY,\]
which is the projection matrix onto the row space of $\bY$. We project $\bX$ onto the subspace associated with $\bP_0$ and its orthogonal complement $I_n-\bP_0$, and define the corresponding cross-product matrices $\bW_{01} = \bX \bP_0 \bX^T$ and $ \bW_{02} = \bX (I_n - \bP_0) \bX^T$.
Further, define the $F$-matrix 
\[ \bF_0 = \bW_{01} \bW_{02}^{-1}. \]
Then, $\rho_j/(1-\rho_j)$ 
is an eigenvalue of $\bF_0$. Since the transformation $\rho \mapsto {\rho}/{1-\rho}$ is strictly increasing on $(0,1)$, inference based on the eigenvalues $\rho_j$ is equivalent to inference based on the eigenvalues of $\bF_0$. 
We refer to Chapter 11 of \cite{muirhead2009aspects} for more details.


While the traditional CCA tests demonstrate reliable performance in the classical regime when the sample size $n$ is substantially larger than the total dimension $(p_1+p_2)$, its reliability diminishes significantly in high-dimensional settings in which $(p_1+p_2)$ is at least comparable to $n$. The main issue is the singularity or near-singularity of the matrices $\bY\bY^T $ and $\bW_{02}$. In particular, when $n< p_1+ p_2$, the matrix $\bW_{02}$ is not invertible; when $n<p_2$, $\bY\bY^T$ is not invertible, rendering the tests ill-defined. Even when $n>p_1+p_2$ but $n/(p_1+p_2) \approx 1$, the presence of small eigenvalues in $\bW_{02}$ can cause instability in $\bW_{02}^{-1}$, thereby compromising the power of the test. Addressing these issues is essential to ensure the validity of the test in high-dimensional regimes.

In this paper, we consider the high-dimensional regime in which both $p_1$ and $p_2$ are comparable to, or may even exceed, the sample size $n$. Our framework is motivated by the premise that, whenever $X$ and $Y$ are dependent, the dependence structure is primarily captured through the leading principal components (PCs) of $Y$. In contrast, the residual components associated with smaller eigenvalues are assumed to contribute negligibly to the dependence structure and to be approximately independent of $X$. 
Specifically, we make the following model assumptions: 
\begin{equation}\label{eq:model_PCR}
\begin{split}
X &= M \Lambda_{m}^T Y + \Sigma_0^{1/2} Z_x,\\
Y &= \Sigma_y^{1/2} Z_y.
\end{split}
\end{equation}
Here, $\Sigma_0 \in \mathbb{R}^{p_1 \times p_1}$ and $\Sigma_y \in \mathbb{R}^{p_2 \times p_2}$ are nonnegative definite covariance matrices; $Z_x \in \mathbb{R}^{p_1}$ and $Z_y \in \mathbb{R}^{p_2}$ are independent random vectors with i.i.d. entries of mean zero and variance one; $M \in \mathbb{R}^{p_1 \times m}$ is a deterministic coefficient matrix; and $\Lambda_{m} \in \mathbb{R}^{p_2 \times m}$ is the matrix consisting of the first $m$ eigenvectors of $\Sigma_y$, where $m \in \mathbb{N}_+$ ($m\leq p_2$)  may depend on $n$, $p_1$, and $p_2$. Moreover, mean shifts may be added to $X$ and $Y$ without affecting the subsequent analysis.

Under this model, the potential dependence between $X$ and $Y$ is assumed to be linear through only the first $m$ PCs of $Y$, expressed as $\Lambda^T_{m}Y$. Testing the independence between $X$ and $Y$ is transformed to testing 
\begin{equation}
    \label{eq:hypothesis_B}
    H_0~:~  M\Lambda_{m}^T = 0 \qquad \text{against}\qquad H_a~:~ M\Lambda_{m}^T\neq 0. 
\end{equation}

Notably, under the proposed framework, $X$ and $Y$ play asymmetric roles. In practical applications where two sets of variables are jointly observed, before applying the proposed procedures, one must therefore determine which random vector should be treated as $X$ and which as $Y$. We recommend a data-driven selection rule based on the spectral properties of the corresponding sample covariance matrices, presented in Section~\ref{sec:practical_guidance}. Throughout the analysis, we assume that this designation has been made.

To construct stable and powerful high-dimensional tests, we propose modifications of the classical trace-based tests and the largest-root test that incorporate ridge regularization together with PC-based dimension reduction. 

Specifically, let a rank parameter $k$ $(\leq p_2)$ be given, whose practical selection is discussed in Section~\ref{sec:practical_guidance}. Rather than projecting $\bX$ onto the entire row space of $\bY$ as in $\bP_0$, we project onto the subspace spanned by the first $k$ sample PCs of $\bY$. Let
$\hat{\Lambda}_k\in\mathbb{R}^{p_2\times k}$
denote the matrix whose columns are the leading $k$ eigenvectors of the sample covariance matrix $\bS_y$. We then define the corresponding projection matrices
\[
\bP_k
=
\bY^T
\hat{\Lambda}_k
\big(
\hat{\Lambda}_k^T
\bY\bY^T
\hat{\Lambda}_k
\big)^{-1}
\hat{\Lambda}_k^T
\bY.
\]
Using these projection matrices, we define
\[
\bW_{k1}
=
\frac{1}{k}
\bX\bP_k\bX^T,
\qquad
\bW_{k2}
=
\frac{1}{n-1-k}
\bX
(I_n-\bP_{k})
\bX^T.
\]



Next, since $\bW_{k2}$ becomes singular or nearly singular when $p_1 \gtrsim n-1-k$, we stabilize its inverse by incorporating ridge regularization. Specifically, for a prescribed regularization parameter $\lambda>0$, we define the ridge-regularized and rank-reduced $F$-matrix
\[
\bF_{k\lambda}
=
\bW_{k1}
(\bW_{k2}+\lambda I_{p_1})^{-1}.
\]

The proposed testing procedures depend on the magnitude of the reduced dimension $k$. When $k$ is relatively small, we consider a regularized trace-based statistic defined by 
\begin{equation}
\label{eq:LRT}
\LRT(k,\lambda)
=
\sum_{j=1}^{k} \ell_j(\bF_{k\lambda}).
\end{equation}
Here, $\ell_j(\cdot)$ denotes the $j$th largest eigenvalue of a matrix with real-valued eigenvalues, while $\ell_{\min}(\cdot)$ and $\ell_{\max}(\cdot)$ are the smallest and the largest one, respectively. 

On the other hand, when $k$ is relatively large, we propose to use the regularized largest-root statistic
\begin{equation}
\label{eq:largest_root}
\ell_{\max}(k,\lambda)
=
\ell_{\max}(\bF_{k\lambda}).
\end{equation}

Both procedures are well defined for arbitrary scenarios of $p_1$, $p_2$, and $n$. An appealing feature of the proposed statistics is their \emph{rotation invariance}. Specifically, the statistics remain unchanged if either $X$ or $Y$ is transformed by an arbitrary orthogonal transformation. This property is particularly desirable in high-dimensional settings where limited structural information, such as coordinate-wise sparsity, is available. In particular, it implies that the performance of the proposed tests does not depend on the coordinate system used to represent the data. Consequently, the procedures are robust to different data representations and broadly applicable across a wide range of high-dimensional problems.


The remainder of the paper is organized as follows. Section~\ref{sec:literature_contribution} reviews the related literature and highlights the contributions of the present work. In Section~\ref{sec:analysis_null}, we establish the asymptotic null distributions of the proposed statistics under different asymptotic regimes of the reduced dimension parameter $k$. Section~\ref{sec:power_analysis} investigates the asymptotic power properties of the proposed procedures under representative alternatives. In Section~\ref{sec:selection_regularization_parameter}, we study the selection of the regularization parameter $\lambda$ and develop a data-driven procedure motivated by a Bayesian--minimax principle. Section~\ref{sec:practical_guidance} provides practical guidance for implementing the proposed methodology.  
Finally, Section~\ref{sec:simulation} presents extensive simulation studies evaluating the finite-sample performance of the proposed methodology. Additional materials and technical proofs are deferred to the Appendix.

\section{Literature Review and Our Contribution}\label{sec:literature_contribution}

High-dimensional tests of independence have been developed along several complementary directions, differing primarily in the structural alternatives they target and the functionals of the cross-covariance matrix they exploit.

One major class consists of trace-based procedures, which aggregate information across the entire spectrum of the cross-covariance structure. Early contributions by \cite{zheng2012central} and \cite{jiang2013central} extended the classical theory of spectral functionals of $F$-matrices to high-dimensional settings. In a related direction, \cite{yang2015independence} proposed a regularized LRT-type procedure based on ridge-regularized sample covariance matrices, thereby extending the classical likelihood ratio framework to settings in which $p_1$ may exceed $n$ while $p_2$ remains smaller than $n$. A comparison between their method and the proposed procedure is provided in Remark~\ref{remark:compare_yang_pan}. Other representative works in this class include \cite{bao2017test,hyodo2015testing,jiang2013testing,yamada2017testing,bodnar2019testing,zheng2019hypothesis}.

In contrast, largest-root-based procedures have mainly been investigated in regimes where $(p_1+p_2)/n \lesssim 1$. In particular, \cite{johnstone2008multivariate}, \cite{johnstone2009approximate}, and \cite{han2018unified} established Tracy--Widom limits for the largest root after appropriate centering and scaling. The behavior of the largest root under low-rank alternatives was further investigated in \cite{bao2019canonical}, with additional developments in \cite{bai2022limiting}. However, to the best of our knowledge, existing largest-root procedures are not applicable to settings in which both $p_1$ and $p_2$ may exceed $n$. The present work addresses this limitation by developing a regularized largest-root framework that remains applicable in such regimes.

Complementary to these CCA-based approaches, which primarily target linear dependence structures, there is also a broad literature on nonparametric procedures for detecting more general forms of dependence. Representative examples include distance covariance based methods \citep{szekely2007measuring}, kernel-based approaches \citep{gretton2005kernel,gretton2010consistent}, rank-based procedures \citep{wang2026testing}, random projection methods \citep{najarzadeh2025testing}, and U-statistic-based tests \citep{lai2023block}.

On the technical side, the current work builds upon the well-established local law and universality framework in the random matrix theory (RMT) literature for various random matrix models. In particular, results concerning sample covariance matrices, spiked covariance models, and signal-plus-noise type matrices are especially relevant to the present setting and constitute an important part of the technical foundation of this paper. Representative references include \cite{Johnstone2001,BaikBenArousPeche2005,el2007tracy,Onatski2008,BaoPanZhou2013LocalEdge,bao2015,PillaiYin2012,PillaiYin2014,bloemendal2014isotropic,lee2016tracy,knowles2017anisotropic,DingYang2018AAP,Yang2019EJP,ding2021spiked,dingYang2021spiked,ding2022edge,ding2023extreme,ding2024eigenvector,ZhangLiuPan2024TWSignalNoise,ding2025tracy,li2025ridge}.
%
 


We summarize the main contributions of the present work as follows. From a methodological perspective, we propose a flexible and statistically stable regularization framework that extends classical CCA-based tests. By incorporating ridge regularization together with PC-based dimension reduction, the proposed procedures remain stable and effective when both $p_1$ and $p_2$ are comparable to, or may exceed, the sample size $n$. 

Second, from a theoretical perspective, we establish the asymptotic null distributions of $\LRT(k,\lambda)$ and $\ell_{\max}(k,\lambda)$ under two distinct asymptotic regimes determined by the reduced dimension parameter $k$. These results provide a theoretical foundation for constructing testing procedures with asymptotically correct type-I error control and are established under mild structural and distributional assumptions. 




Third, we develop an asymptotic power analysis under representative alternatives, explicitly characterizing how the covariance structures of $X$ and $Y$, the cross-covariance structure between them, the reduced dimension parameter $k$, and the regularization parameter $\lambda$ jointly influence the proposed statistics. Building upon this analysis, we further develop a Bayesian--minimax framework for selecting the regularization parameter.

\section{Asymptotic Theory Under Independence}\label{sec:analysis_null}
In this section, we develop the asymptotic null theory for the proposed testing procedures under the assumption that $X$ and $Y$ are independent. Throughout the analysis, the regularization parameter $\lambda$ is treated as fixed, and the rank parameter $k$ is regarded as given.

\paragraph{\textit{Notation.}}
For any $p\times p$ matrix $A$, the quantities $\|A\|_2$ and $\|A\|_F$ denote the spectral norm and Frobenius norm of $A$, respectively. If $A$ is nonnegative definite, we use $A^{1/2}$ to denote its symmetric square root. Furthermore, the notation $\stackrel{P}{\longrightarrow}$ denotes convergence in probability, and $\stackrel{D}{\longrightarrow}$ denotes convergence in distribution, both with respect to the joint distribution of $X$ and $Y$. For two positive quantities $A = A(n, k)$ and $B = B(n, k)$ depending on $n$ and/or $k$, we use the notation $A \asymp B$ to mean $c^{-1} A \leq B \leq c A$ for some positive constant $c$. 

Under Model \eqref{eq:model_PCR}, we make the following technical assumptions. 
\begin{enumerate}[label={\bf C\arabic*}, start=1, leftmargin =2.5em]
\item \label{enum:high_dimension_regime} \textit{(H-D regime)}
We assume that as $n\to\infty$, $p_1 \asymp n$ and $p_2 \asymp n^\zeta$ for some constant $\zeta \geq 1$. Moreover, the rank parameter $k$ is not excessively large in the sense that $n-k \asymp n$. 
\item \label{enum:moments_conditions} \textit{(Moment conditions)} All moments of the entries of $Z_x$ are finite. 
\item \label{enum:boundedness_spectral_norm} \textit{(Bounded spectrum of $\Sigma_0$)}  $0<\liminf \ell_{\min}(\Sigma_0)\leq\limsup \ell_{\max}(\Sigma_0)<\infty$.
\item \label{enum:lower_bound_eigen_Y} \textit{(Sufficient rank)} There exist constants $C>0$ and $\alpha \in(0,1)$ such that $\mP (\ell_{\lfloor \alpha n \rfloor }(\bS_y)> C) \geq 1-\exp(-n)$, for all sufficiently large $n$. We shall always assume that $k/n \leq \alpha$.
\item \label{enum:edger_regularity} \textit{(Edge regularity)} The spectrum of $\Sigma_0$ is asymptotically regular near its largest eigenvalue in the sense of Definition \ref{def:edge_regularity}. 
\end{enumerate}
Condition~\ref{enum:high_dimension_regime} specifies the high-dimensional asymptotic regime under which our theory is developed. Condition~\ref{enum:moments_conditions} requires only moment assumptions on $Z_x$. Condition~\ref{enum:boundedness_spectral_norm} is a standard regularity assumption in the Random Matrix Theory literature that guarantees uniform control of the extreme eigenvalues.  The assumptions on $\Sigma_y$ and $Z_y$ are comparatively mild and are imposed implicitly through Condition~\ref{enum:lower_bound_eigen_Y}, which guarantees that the rank of $\bS_y$ is sufficiently large for the subsequent analysis. Condition~\ref{enum:edger_regularity} is a standard edge regularity condition in RMT, commonly used in the study of the asymptotic fluctuations of extreme eigenvalues. Additional discussion is provided in Section~\ref{subsec:preliminary}.

\subsection{Preliminaries on Random Matrix Theory}\label{subsec:preliminary}
We begin by presenting several fundamental results from random matrix theory (RMT) concerning the asymptotic behavior of $\bW_{k2}$ under the null hypothesis of independence. Results in this section are based on \cite{silverstein1995empirical,bai2004clt, knowles2017anisotropic, li2024analysis, li2025ridge}.

An important quantity in the subsequent analysis is the aspect ratio
\[
q=q(k,n)\coloneqq \frac{p_1}{n-1-k}.
\]
Note that under \ref{enum:high_dimension_regime}, $q\asymp 1$. 

Recall that, for any function $G$ of bounded variation on $\mathbb{R}$, its Stieltjes transform $\varphi_G(\cdot)$ is defined by
\[ \varphi_G(z) = \int \frac{dG(\tau)}{\tau-z}, \qquad z\in \mathbb{C}^+ \coloneqq \{u+iv:\ v>0\}.\]
The Stieltjes transform maps $\mathbb{C}^+$ into itself and plays a role in RMT analogous to that of the Fourier transform in classical probability theory. In particular, it characterizes probability distributions through an inversion formula and provides a convenient tool for studying weak convergence of probability measures via convergence of the corresponding transforms. We refer to \citet{bai2010spectral} for a comprehensive treatment.

\begin{definition}
\label{def:ESD}
For any $p\times p$ matrix $A$ with real-valued eigenvalues, the \emph{empirical spectral distribution} (ESD) of $A$ is defined by
\[
F^A(\tau)
=
\frac{1}{p}\sum_{j=1}^p
\mathbb{I}\bigl(\ell_j(A)\leq \tau\bigr).
\]
Equivalently, $F^A$ is the probability measure that assigns mass $1/p$ to each eigenvalue of $A$.
\end{definition}

Consider the ESD $F^{\Sigma_0}$ of $\Sigma_0$ and the aspect ratio $q$. 
The renowned Mar\v{c}enko-Pastur (M-P) equation in RMT defined on $z\in \mathbb{C}^+$ is expressed as 
\begin{equation}\label{eq:M_P_equation}
z = \frac{-1}{\varphi} + q \int \frac{\tau dF^{\Sigma_0}(\tau)}{ \tau \varphi + 1}.
\end{equation}
Provided that $F^{\Sigma_0}$ is not the Dirac measure at zero and $q>0$, it is known that \eqref{eq:M_P_equation} admits a unique solution $\varphi \equiv \varphi(z, q, F^{\Sigma_0}) \in \mathbb{C}^+$, for any $z\in\mathbb{C}^+$. The function $\varphi(z): \mathbb{C}^+ \mapsto \mathbb{C}^+$ is the Stieltjes transform of a probability measure with bounded support in $[0,\infty)$, denoted by $\calF(F^{\Sigma_0}, q)$. Unfortunately, except in extreme cases where $F^{\Sigma_0}$ has a simple structure, there is no explicit closed-form solution for $\varphi(z)$ or $\calF(F^{\Sigma_0}, q)$. For further details of the M-P equation, see Chapter 3 of \cite{bai2010spectral}. In the sequel, we frequently suppress the dependence on $q$ and/or $F^{\Sigma_0}$ in the notation whenever no ambiguity arises.

In the present paper, we are primarily interested in the behavior of $\varphi(z)$ when $z$ approaches the negative real axis. A standard result in RMT states that $\varphi(z)$ admits a smooth extension to $\mathbb{R}_-$. With a slight abuse of notation, we continue to denote the extended function by
\[
\varphi(x)
\coloneqq
\lim_{\substack{z\in\mathbb{C}^+\mapsto x}}
\varphi(z),
\qquad
x\in\mathbb{R}_-.
\]

Two particularly useful results in our setting are: (i) the existence of a \emph{deterministic equivalent} for the regularized inverse of $\bW_{k2}$; (ii) the availability of a fully data-driven estimator of $\varphi(z)$ that does not require knowledge of the population spectrum. In particular, adapting standard results from RMT to the present framework, define a matrix 
\begin{equation}\label{eq:deteministic_equivalent} 
\calD(z) = \Big( -z \varphi(z) \Sigma_0 - z I_{p_1}\Big)^{-1},\qquad z\in\mathbb{C}^+ \cup \mathbb{R}_-.
\end{equation}
Further, define
    \begin{equation}
    \label{eq:def_hat_varphi}
    \hat{\varphi}(z)
    =
    \frac{1}{n-1-k}
    \sum_{j=1}^{n-1-k}
    \frac{1}{\ell_j(\bW_{k2}) -z}, \qquad z\in\mathbb{C}^+ \cup \mathbb{R}_-,
    \end{equation}
    where $\ell_j(\bW_{k2})$ denotes the $j$th largest eigenvalue of $\bW_{k2}$ for $j\leq p_1$, and $\ell_j(\bW_{k2})=0$ for $j>p_1$. Notably, $\hat{\varphi}(z)$ can be directly computed from the data. 
    
Then, $\hat{\varphi}(z)$ provides a consistent estimator of $\varphi(z)$, while the matrix $\calD(z)$ serves as a deterministic equivalent of
$\big(\bW_{k2}-zI_{p_1}\big)^{-1}$ when $n$ is large, in the sense described below.

\begin{lemma}
\label{lemma:determinist_equivalent}
Suppose that Conditions~\ref{enum:high_dimension_regime}--\ref{enum:lower_bound_eigen_Y} hold. Assume that $X$ and $Y$ are independent and $q = q(k,n) \asymp 1$.
Let $\mathcal{C}$ be any closed and compact subset  of  $\mathbb{C}^+ \cup \mathbb{R}_-$. Then, the following statements hold:
\begin{itemize}
    \item[(i)]
    Let $B$ be any sequence of deterministic matrices satisfying $\|B\|_2<\infty$. Then, 
    \[
    \sup_{z\in \mathcal{C}} n^{2/3}\Big|n^{-1}
    \tr\!\big[
    \big(\bW_{k2}-zI_{p_1}\big)^{-1}B
    \big]
    -
    n^{-1}
    \tr\!\big[
    \calD(z)B
    \big]\Big|
    \stackrel{P}{\longrightarrow}
    0.
    \]

    \item[(ii)]
    Let $\mu$ be any sequence of deterministic vectors satisfying $\|\mu\|_2=1$. Then, 
    \[
    \sup_{z\in\mathcal{C}}n^{2/3}\Big|
    \mu^T
    \big(\bW_{k2}-zI_{p_1}\big)^{-1}
    \mu
    -
    \mu^T
    \calD(z)
    \mu\Big|
    \stackrel{P}{\longrightarrow}
    0.
    \]

    \item[(iii)]
    Let $\hat{\varphi}'(z)$ and $\varphi'(z)$ denote the derivatives of $\hat{\varphi}(z)$ and $\varphi(z)$ with respect to $z$, respectively.
    \[
    \sup_{z\in\mathcal{C}}
    n^{2/3}\big|
    \hat{\varphi}(z)-\varphi(z)
    \big|
    \stackrel{P}{\longrightarrow}
    0
       \qquad \text{and}
        \qquad
    \sup_{z\in\mathcal{C}}
    n^{2/3}\big|
    \hat{\varphi}'(z)-\varphi'(z)
    \big|
    \stackrel{P}{\longrightarrow}
    0.
    \]
\end{itemize}
\end{lemma}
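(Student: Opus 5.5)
The plan is to reduce $\bW_{k2}$, under independence, to a standard sample covariance matrix with population covariance $\Sigma_0$ and sample size $N\coloneqq n-1-k$. Then we import the anisotropic local law of \cite{knowles2017anisotropic}, as adapted in \cite{li2025ridge}.

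\textbf{Step 1: reduction to a covariance matrix with Gaussian-free rows.} Under $H_0$ we have $\bX=\Sigma_0^{1/2}\bZ_x$, where $\bZ_x$ is the centered $p_1\times n$ matrix of the $Z_{x,i}$. The projection $\bP_k$ is a function of $\bY$ only, so it is independent of $\bZ_x$. Moreover, $\bP_k$ has rank $k$ and annihilates $\mathbf{1}_n$, because $\bY\mathbf{1}_n=0$. On the event of Condition~\ref{enum:lower_bound_eigen_Y}, which has probability at least $1-e^{-n}$, the inverse in $\bP_k$ is well defined. Hence $I_n-\bP_k-n^{-1}\mathbf{1}\mathbf{1}^T$ is a rank-$N$ projection $\mathbf{O}\mathbf{O}^T$ with $\mathbf{O}\in\mathbb{R}^{n\times N}$ having orthonormal columns, and $\mathbf{O}$ is independent of $\bZ_x$. (The centering in $\bX$ is absorbed here, since $\bZ_x\mathbf{1}=0$ after centering.) Therefore
\[
\bW_{k2}=\Sigma_0^{1/2}\,\frac{\widetilde{\bZ}\widetilde{\bZ}^T}{N}\,\Sigma_0^{1/2},\qquad \widetilde{\bZ}=\bZ_x^{\rm raw}\mathbf{O}.
\]
Conditionally on $\bY$, $\widetilde{\bZ}$ is an orthogonal mixing of an i.i.d.\ matrix. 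Its columns are not independent in general, unless the entries are Gaussian.

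\textbf{Step 2: local law conditional on $\bY$.} I would first prove (i)--(ii) for Gaussian $Z_x$. In that case $\widetilde{\bZ}$ is exactly i.i.d.\ Gaussian, so the anisotropic local law applies directly with aspect ratio $q=p_1/N\asymp1$ and population $\Sigma_0$, which has bounded spectrum by \ref{enum:boundedness_spectral_norm}. On a compact $\mathcal{C}\subset\mathbb{C}^+\cup\mathbb{R}_-$ the spectral parameter stays at distance $\gtrsim1$ from the spectrum, using also $\ell_{\min}(\bW_{k2})\ge0$ for real negative $z$. This gives error bounds $n^{-1+\epsilon}$ for the averaged law and $n^{-1/2+\epsilon}$ for the isotropic law, with $\calD(z)$ the standard deterministic equivalent solving \eqref{eq:M_P_equation}. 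Both bounds are $o(n^{-2/3})$. The supremum over $\mathcal{C}$ follows from a net of mesh $n^{-3}$ combined with Lipschitz continuity of both sides; the bound $\|(\bW_{k2}-z)^{-1}\|\le C$ holds on $\mathcal{C}$.

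\textbf{Step 3: removing Gaussianity.} For general $Z_x$ satisfying \ref{enum:moments_conditions}, the main obstacle is that $\mathbf{O}$ is a random data-dependent rotation, so $\widetilde{\bZ}$ lacks independent columns. I would try a Lindeberg/Green-function comparison on the entries of $\bZ_x^{\rm raw}$, holding $\mathbf{O}$ fixed. The resolvent depends smoothly on each entry through the rank-one update $\Sigma_0^{1/2}e_a (\mathbf{O}^T e_i)^T$. Since only resolvent bounds away from the spectrum are needed, third-order moment matching errors are summable to $o(n^{-2/3})$ once the fourth moment is finite. Alternatively, as in \cite{li2025ridge}, one can note that the isotropic law holds for $\bZ\mathbf{O}$ whenever it holds for $\bZ$, by writing $\bZ\mathbf{O}\mathbf{O}^T\bZ^T=\bZ\bZ^T-\bZ(I-\mathbf{O}\mathbf{O}^T)\bZ^T$ as a rank-$(k+1)$ perturbation. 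Its effect is then controlled by Woodbury together with the isotropic law for $\bZ$ on the low-dimensional $(k+1)$-block. I would attempt this route first, though for $k$ growing linearly it needs care.

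\textbf{Step 4: part (iii).} Take $B=I$ in (i). Since $\bW_{k2}$ and its companion $N\times N$ matrix share nonzero eigenvalues, $\hat\varphi$ is the companion Stieltjes transform, and $\hat\varphi(z)=q\,p_1^{-1}\tr(\bW_{k2}-z)^{-1}-(1-q)/z$. The $p_1$-dimensional trace converges by (i), and applying the identity to $\calD$ via \eqref{eq:M_P_equation} recovers $\varphi(z)$. For the derivative, both $\hat\varphi$ and $\varphi$ are analytic on a slightly enlarged compact neighborhood of $\mathcal{C}$, still bounded away from the spectrum. The Cauchy integral formula then converts the uniform $o(n^{-2/3})$ bound on $\hat\varphi-\varphi$ into the same bound for the derivatives.
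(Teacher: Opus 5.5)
Your route is the one the paper implicitly relies on; the paper gives no separate proof, deferring to the cited RMT results plus the conditioning-on-$\bY$ device of its appendix. That route is: condition on $\bY$, write $\bW_{k2}=N^{-1}\Sigma_0^{1/2}\bZ_x\mathbf{O}\mathbf{O}^T\bZ_x^T\Sigma_0^{1/2}$ with $\mathbf{O}$ independent of $\bZ_x$, use that a compact $\mathcal{C}\subset\mathbb{C}^+\cup\mathbb{R}_-$ stays at positive distance from $[0,\infty)$, import averaged and anisotropic laws, and get (iii) from (i) with $B=I$ via the companion identity and Cauchy's formula. The genuine gap is in part (ii). In Step 2 you assert that the isotropic error $n^{-1/2+\epsilon}$ is $o(n^{-2/3})$, but $n^{2/3}\cdot n^{-1/2+\epsilon}=n^{1/6+\epsilon}\to\infty$. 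No sharper argument can fix this, because the isotropic bound is sharp away from the spectrum. Take $\Sigma_0=I$, Gaussian $Z_x$, and $\mathcal{C}=\{-1\}$. Conditionally on $\bY$, $\bW_{k2}$ is Wishart, so its eigenvector matrix is Haar and independent of its eigenvalues. Hence $\mu^T(\bW_{k2}+I)^{-1}\mu=\sum_j v_j^2/(\ell_j+1)$ with $v$ uniform on the unit sphere of $\mathbb{R}^{p_1}$. Its conditional variance is of order $p_1^{-1}$ times the empirical variance of $\{(\ell_j+1)^{-1}\}$, which is of order one because the Mar\v{c}enko--Pastur spectrum has nondegenerate spread. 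Meanwhile $\mu^T\calD(-1)\mu=(1+\varphi(-1))^{-1}$ is deterministic. So $\sqrt{n}\,\bigl(\mu^T(\bW_{k2}+I)^{-1}\mu-\mu^T\calD(-1)\mu\bigr)$ has a nondegenerate limit, and $n^{2/3}|\cdot|\to\infty$ in probability. Part (ii) therefore fails at the stated scale, and you should have flagged this rather than claimed the rate. What your argument actually gives is (ii) with $n^{2/3}$ replaced by $n^{1/2-\epsilon}$. That is all the paper uses downstream: the $o_p(1)$ statements in the rank-one analysis and in the selection of $\lambda$.

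Parts (i) and (iii) do hold at rate $O_\prec(n^{-1})$, but two steps of your sketch need tightening. First, for (i) with $B\neq I$, the standard averaged local law (stated for $B=I$) does not apply directly. Averaging the isotropic law over an eigenbasis of $B$ only gives $n^{-1/2}$. You need a genuine averaged anisotropic (fluctuation-averaging) estimate for $n^{-1}\tr\bigl[\bigl((\bW_{k2}-z)^{-1}-\calD(z)\bigr)B\bigr]$; away from the spectrum this is standard, but it must be invoked. Second, for non-Gaussian entries you need neither the Lindeberg sketch, whose summability claim is unsupported, nor the rank-$(k+1)$ Woodbury route, which breaks down for $k\asymp n$ as allowed by \ref{enum:high_dimension_regime}. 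Conditionally on $\bY$, $\bW_{k2}=N^{-1}\Sigma_0^{1/2}\bZ_x(P_\perp-\bP_k)\bZ_x^T\Sigma_0^{1/2}$ is a separable covariance matrix with i.i.d.\ entries and a deterministic sample-side projection of rank $N\asymp n$. The local laws of \cite{Yang2019EJP}, which the paper itself invokes for this matrix in the rank-one analysis, therefore apply uniformly over $\bY\in\mathcal{E}_C$. You then integrate out $\bY$ as the paper does.
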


Next, we introduce a variant of a standard edge regularity condition commonly used in the RMT literature for establishing asymptotic fluctuation of extreme eigenvalues. To this end, consider the following master function:
\begin{equation}
    \label{eq:def_g}
    g(h) =  g(h; \lambda, q, F^{\Sigma_0}) 
    =
    h + \Bigg(
    1 + q
    \int \frac{\tau\, dF^{\Sigma_0}(\tau)}{\lambda-\tau h}
    \Bigg)^{-1},
    \qquad
    h \in (-\infty,\eta),
\end{equation}
where $\eta=\lambda/\ell_{\max}(\Sigma_0)$. It is straightforward to verify that the denominators appearing in \eqref{eq:def_g} remain nonzero over the specified domain, and hence the function is well defined.  To alleviate notation complexity, we frequently suppress the dependence on $\lambda$, $q$, and $F^{\Sigma_0}$ in the notation whenever no ambiguity arises. 

A straightforward calculation shows that $g''(h)<0$ on the specified domain and therefore $g(h)$ is strictly concave. Moreover, one can verify that $g'(h)>0$ for sufficiently large negative values of $h$. Consequently, the function $g(h)$ either increases monotonically on $(-\infty,\eta)$ or attains a unique interior maximum within this interval, depending on whether the equation
\[
g'(h)=0
\]
admits a solution in the domain. These two scenarios correspond to two distinct asymptotic phases in the technical analysis. For details, see Section \ref{subsec:proof_main_diverge_k} in the Appendix. The phase transition occurs when $q$, $\lambda$, and $\Sigma_0$ are such that 
\[\lim_{n\to\infty}
\lim_{h\uparrow \eta}
g'(h) = 0.\]
To exclude this boundary case, we impose the following mild regularity condition.
\begin{definition}
\label{def:edge_regularity}
Fix $\lambda>0$ and suppose that $q \asymp 1$. We say that the spectrum of $\Sigma_0$ is asymptotically regular near $\ell_{\max}(\Sigma_0)$ if there exist constants $\epsilon_1>0$ and $\epsilon_2>0$ such that, for all sufficiently large $n$, 
\[
\big|g'(h)\big|
>
\epsilon_1,
\qquad
\text{for all }
h\in
(\eta-\epsilon_2,\eta),
\]
uniformly over $q$.
\end{definition}

The above condition is standard in the RMT literature, although it is often formulated in different but equivalent forms. Related versions appear, for example, in Definition~2.2 of \cite{li2025ridge}, Definition~2.7 of \cite{knowles2017anisotropic}, Condition~(1) of \cite{el2007tracy}, Condition~(2.12) of \cite{lee2016tracy}, and Assumption~2.5 of \cite{DingYang2018AAP}.

\subsection{Asymptotic distribution of the regularized trace-based statistic}\label{subsec:asymptotic_k_fixed}
In this section, we derive the asymptotic distribution of the proposed regularized trace-based statistic $\LRT(k,\lambda)$ in the regime where the reduced dimension parameter $k$ remains fixed as $n\to\infty$.

To begin with, for $k\geq 1$, denote by ${\rm GOE}_k = \big[g_{ij}\big]_{i,j=1}^k$ the \emph{Gaussian Orthogonal Ensemble}, defined by (1) $g_{ij} = g_{ji}$; (2) $g_{ii} \sim N(0,1)$, $g_{ij} \sim N(0,1/2)$, $i\neq j$; (3) $g_{ij}$'s are jointly independent subject to symmetry. Let
\[
\calL_{\rm GOE}(k)
=
\Big(
\ell_1({\rm GOE_k}),
\dots,
\ell_k({\rm GOE_k})
\Big)
\]
denote the ordered eigenvalues of ${\rm GOE}_k$. 

Similarly, denote the vector consisting of the leading $k$ eigenvalues of the proposed regularized $F$-matrix $\bF_{k\lambda}$ after scaling to be 
\[
\calL(k,\lambda)
= \frac{k}{n-1-k}
\Big(
\ell_1(\bF_{k\lambda}),
\dots,
\ell_k(\bF_{k\lambda})
\Big)^T.
\]

\begin{theorem}\label{thm:main_fix_k}
Fix $\lambda>0$. 
Suppose that Conditions~\ref{enum:high_dimension_regime}--\ref{enum:lower_bound_eigen_Y} hold and $k$ remains fixed, as $n\to\infty$. Recall $q = p_1/(n-1-k)$. Assume that $X$ and $Y$ are independent. As $n\to\infty$,
\[ \frac{\sqrt{p_1} }{\Omega_2(\lambda,q)} \Big(\calL(k,\lambda) - \Omega_1(\lambda,q)\mathbb{1}_k  \Big)   \stackrel{D}{\longrightarrow} \calL_{\rm GOE}(k).\]
The normalization parameters $\Omega_1(\lambda,q)$ and $\Omega_2(\lambda, q)$ are given by
\begin{equation}
\label{eq:def_Theta12}
\begin{split}
\Omega_1(\lambda,q)
&= \big[\lambda {\varphi}(-\lambda,q)\big]^{-1} - 1,\\
\Omega_2(\lambda,q)
&= \Big(2q
\frac{{\varphi}'(-\lambda,q)-[{\varphi}(-\lambda,q)]^2}
{\lambda^2[{\varphi}(-\lambda,q)]^4}\Big)^{1/2}.
\end{split}
\end{equation}
\end{theorem}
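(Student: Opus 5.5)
Under independence, condition on $\bY$ throughout. Then $\bP_k$ is a deterministic rank-$k$ projection, and $\bX=\Sigma_0^{1/2}\bZ_x(I_n-\tfrac1n\mathbb{1}\mathbb{1}^T)$ with $\bZ_x$ independent of $\bY$. Since $\bP_k$ projects onto a subspace of the row space of the centered $\bY$, it is orthogonal to $\mathbb{1}$. Choose an orthogonal $n\times n$ matrix $O$ whose first $k$ columns span the range of $\bP_k$ and whose last column is $n^{-1/2}\mathbb{1}$. Write $\bZ_x O=[\bG_1,\bG_2,\ast]$ with $\bG_1\in\mathbb{R}^{p_1\times k}$ and $\bG_2\in\mathbb{R}^{p_1\times(n-1-k)}$. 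Then
\[
k\bW_{k1}=\Sigma_0^{1/2}\bG_1\bG_1^T\Sigma_0^{1/2},\qquad \bW_{k2}=\tfrac{1}{n-1-k}\Sigma_0^{1/2}\bG_2\bG_2^T\Sigma_0^{1/2}.
\]
For non-Gaussian entries, $\bG_1$ and $\bG_2$ are not exactly independent, but their columns are orthonormal rotations of i.i.d. rows. I would handle this either by a standard Lindeberg/Gaussian-comparison step justified by \ref{enum:moments_conditions}, or by a leave-$k$-columns-out resolvent expansion. Either way the plan reduces to treating $\bG_1$ as independent of $\bW_{k2}$, with i.i.d. standardized entries.

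The nonzero eigenvalues of $\bF_{k\lambda}$ coincide with those of the $k\times k$ matrix
\[
\bK=\tfrac1k\,\bG_1^T\Sigma_0^{1/2}(\bW_{k2}+\lambda I)^{-1}\Sigma_0^{1/2}\bG_1 ,
\]
so $\calL(k,\lambda)$ is the ordered spectrum of $\tfrac{k}{n-1-k}\bK=\tfrac{1}{n-1-k}\bG_1^T A\bG_1$, where $A=\Sigma_0^{1/2}(\bW_{k2}+\lambda I)^{-1}\Sigma_0^{1/2}$. Conditionally on $A$, this is a $k\times k$ matrix of quadratic and bilinear forms $g_i^TAg_j$ in independent vectors. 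A CLT for such forms (Bai--Silverstein type, with fixed $k$) gives
\[
\frac{1}{n-1-k}\bigl(g_i^TAg_j-\delta_{ij}\tr A\bigr)\approx \frac{\sqrt{\tr A^2}}{n-1-k}\,\xi_{ij}.
\]
Here the $\xi_{ij}$ are asymptotically jointly Gaussian, with variance $2$ on the diagonal and $1$ off the diagonal, plus a fourth-cumulant term $\kappa_4\sum_l A_{ll}^2$ on the diagonal. That fourth-cumulant term is the worry. I expect it to be the main obstacle: it is killed only if $\sum_l A_{ll}^2=o(\tr A^2)$ or after a rotation-averaging argument. I would first try to absorb it through the Gaussian comparison, exploiting the orthogonal rotation $O$, which delocalizes the entries of $\bG_1$ so that they are approximately Gaussian. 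Rescaling by $\sqrt2$ matches ${\rm GOE}_k$, whose diagonal has variance $1$ and off-diagonal variance $1/2$. Ordering of eigenvalues then follows from continuity of the eigenvalue map.

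It remains to identify the centering and scaling. Lemma~\ref{lemma:determinist_equivalent}(i) with $z=-\lambda$ and $B=\Sigma_0$ gives
\[
\tfrac{1}{n-1-k}\tr A\approx \tfrac{1}{n-1-k}\tr[\calD(-\lambda)\Sigma_0].
\]
Using the M-P equation \eqref{eq:M_P_equation} at $z=-\lambda$, $q\int\tau\,dF^{\Sigma_0}/(\tau\varphi+1)=\varphi^{-1}-\lambda$, this equals $(\lambda\varphi)^{-1}-1=\Omega_1$. The $n^{2/3}$ rate in the lemma is more than the $\sqrt{p_1}$ precision required.

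For the variance I would use
\[
\tr A^2=\tr\bigl[(\bW_{k2}+\lambda)^{-1}\Sigma_0\bigr]^2=-\tfrac{d}{dw}\tr\bigl[(\bW_{k2}+w)^{-1}\Sigma_0(\bW_{k2}+\lambda)^{-1}\Sigma_0\bigr]\Big|_{w=\lambda}.
\]
A cleaner route is to differentiate the identity $\tr[(\bW-z)^{-1}\Sigma_0]\approx\tr[\calD(z)\Sigma_0]$ after inserting a second parameter $\Sigma_0\to\Sigma_0+t\,\Sigma_0^2$-type perturbations. In practice I would express $\tr A^2$ through $\varphi(-\lambda)$ and $\varphi'(-\lambda)$ by differentiating the M-P equation in $z$. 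This should yield
\[
\tfrac{1}{(n-1-k)^2}\tr A^2\approx \frac{q\bigl(\varphi'-\varphi^2\bigr)}{p_1\lambda^2\varphi^4}.
\]
Uniformity of derivatives on compact sets follows from Lemma~\ref{lemma:determinist_equivalent} by Cauchy estimates. Multiplying by $\sqrt{p_1}$ and the factor $2$ from the GOE matching then produces $\Omega_2^2$.

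Finally, I would remove the conditioning on $\bY$. All approximations hold on the event from \ref{enum:lower_bound_eigen_Y}, which has probability $1-e^{-n}$, and that event guarantees that $\bP_k$ has rank $k$. The limit does not depend on $\bY$, so dominated convergence of conditional characteristic functions gives the unconditional statement.
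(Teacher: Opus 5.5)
Your outer structure matches the paper's. You condition on $\bY$, absorb the centering by adjoining $n^{-1/2}1_n$ to $\bU_k$, pass to the $k\times k$ companion $\frac1N\bG_1^TA\bG_1$ with $N=n-1-k$, and integrate $\bY$ out on $\mathcal E_C$ by dominated convergence. Inside the conditioning, however, the paper does no analysis of its own. It applies Theorem~\ref{thm:li2020} (the martingale-difference result of \cite{li2020high}) to the deterministic $U_{k+1}=[\bU_k,n^{-1/2}1_n]$ and reads off the leading $k\times k$ block.

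For Gaussian $\bZ_x$, your replacement for that black box is a valid and more elementary alternative, modulo the variance identification discussed below. There $\bG_1$ is exactly independent of $\bG_2$, and after diagonalizing $A$ you only need a CLT for $\sum_l a_l(h_lh_l^T-I_k)$.

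The non-Gaussian step, however, fails as written, and this is a genuine gap. Write $u_i$ for the $i$th column of $\bU_k$. The entries of $g_i=\bZ_xu_i$ are independent across rows, but:
\begin{itemize}
\item they have fourth cumulant $\kappa_4\sum_t u_{ti}^4$;
\item they are dependent across $i$ within a row;
\item they are dependent on $\bG_2$.
\end{itemize}
Even after decoupling, the fourth-cumulant contribution to the variance of $g_i^TAg_i$ is $\kappa_4(\sum_t u_{ti}^4)\sum_l A_{ll}^2$. Since $A_{ll}\ge\ell_{\min}(A)\gtrsim1$ with high probability, this is of the same order $p_1$ as $2\tr A^2$ unless $\sum_t u_{ti}^4\to0$.

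The rotation $O$ does not supply this. It is fixed by $\bY$, not Haar-distributed, so whether $\bG_1$ is nearly Gaussian depends entirely on $\max_{t,i}|(\bU_k)_{ti}|$. Conditions~\ref{enum:high_dimension_regime}--\ref{enum:lower_bound_eigen_Y} do not control this quantity: \ref{enum:lower_bound_eigen_Y} only bounds $\ell_{\lfloor\alpha n\rfloor}(\bS_y)$ from below. For example, a single outlying observation $i$ makes the leading score direction close to $e_i-n^{-1}1_n$. For such a localized direction the conditional limit is genuinely not GOE when $\kappa_4\neq0$, so no Lindeberg or leave-$k$-out argument can remove the term.

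To close the proof you need a delocalization hypothesis on $\bU_k$ (or $\kappa_4=0$), or else an imported GOE theorem for deterministic $U_{k+1}$, as in the paper. Your own cumulant computation also shows that such a black box, stated for an arbitrary deterministic orthonormal $U_{k+1}$ as in Theorem~\ref{thm:li2020}, can hold for non-Gaussian $Z_x$ only under some implicit delocalization. The paper's proof does not address this point either.

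In the Gaussian case, your centering via Lemma~\ref{lemma:determinist_equivalent}(i) is correct, and so is your target $\frac1N\tr A^2\approx(\varphi'-\varphi^2)/(\lambda^2\varphi^4)$. The derivation you sketch does not reach it, though. Differentiating $\tr[(\bW_{k2}-z)^{-1}\Sigma_0]$ in $z$ gives $\tr[(\bW_{k2}-z)^{-2}\Sigma_0]$, not $\tr A^2$, and perturbing $\Sigma_0$ does not directly produce it either. A clean route is as follows:
\begin{itemize}
\item Note that $A=\bG_{k\lambda}^{-1}$ exactly, so $\frac1N\tr A^2\approx q\phi'(0)$ with $\phi$ solving \eqref{eq:MP_G}, and $[1+q\phi(0)]^{-1}=\lambda\varphi(-\lambda)$.
\item Set $T=q\int\tau^2(1+\tau\varphi)^{-2}\,dF^{\Sigma_0}(\tau)$. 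Differentiating \eqref{eq:MP_G} at $z=0$ gives $q\phi'(0)=T/[\lambda^2(1-\varphi^2T)]$.
\item Differentiating \eqref{eq:M_P_equation} gives $\varphi'=\varphi^2/(1-\varphi^2T)$.
\end{itemize}
Combining the last two yields exactly your formula.
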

While our assumptions require the entries of $\bZ_x$ to possess moments of all orders, Theorem~\ref{thm:main_fix_k} is expected to remain valid under the weaker assumption of finite fourth moments. We do not pursue this technical extension here, as our primary objective is to develop a unified theoretical framework for both the trace-based and largest-root-based procedures.

Since $\Omega_1(\lambda,q)$ and $\Omega_2(\lambda,q)$ are smooth functions of $\varphi(-\lambda,q)$ and $\varphi'(-\lambda,q)$, consistent estimators can be constructed by replacing these quantities with $\hat{\varphi}(-\lambda)$ and $\hat{\varphi}'(-\lambda)$, respectively. We henceforth denote the resulting plug-in estimators by $\hat{\Omega}_1(\lambda)$ and $\hat{\Omega}_2(\lambda)$.

As a direct consequence of Theorem~\ref{thm:main_fix_k}, the asymptotic distribution of the regularized trace-based statistic can be obtained by an application of the delta method together with Slutsky's theorem.

\begin{theorem}
\label{thm:main_LRT}
Suppose that the conditions of Theorem~\ref{thm:main_fix_k} hold. Then,
\begin{equation}\label{eq:decision_normal}
\widetilde{\LRT}(k,\lambda) \coloneqq 
\frac{\sqrt{p_1}}{\sqrt{k}\hat{\Omega}_2(\lambda)} \Big(\LRT(k,\lambda)  - k \hat{\Omega}_1(\lambda) \Big)
\stackrel{D}{\longrightarrow}
N(0,1).
\end{equation}
\end{theorem}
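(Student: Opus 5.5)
Theorem~\ref{thm:main_LRT} follows from Theorem~\ref{thm:main_fix_k} combined with the consistency of the plug-in estimators from Lemma~\ref{lemma:determinist_equivalent}(iii). The plan has three steps: (1) pass from the joint eigenvalue limit to a limit for their sum; (2) identify the distribution of the GOE trace; (3) replace $\Omega_1,\Omega_2$ by their estimators with an error that vanishes at the $\sqrt{p_1}$ scale.

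\emph{Step 1: summing the eigenvalues.} Since $\LRT(k,\lambda)=\sum_{j=1}^k\ell_j(\bF_{k\lambda})$ and $\calL(k,\lambda)=\frac{k}{n-1-k}(\ell_1(\bF_{k\lambda}),\dots,\ell_k(\bF_{k\lambda}))^T$, we have
\[
\frac{k}{n-1-k}\LRT(k,\lambda)=\mathbb{1}_k^T\calL(k,\lambda).
\]
The map $v\mapsto\mathbb{1}_k^Tv$ is continuous and $k$ is fixed. By the continuous mapping theorem applied to Theorem~\ref{thm:main_fix_k},
\[
\frac{\sqrt{p_1}}{\Omega_2(\lambda,q)}\Big(\mathbb{1}_k^T\calL(k,\lambda)-k\Omega_1(\lambda,q)\Big)\stackrel{D}{\longrightarrow}\sum_{j=1}^k\ell_j({\rm GOE}_k)=\tr({\rm GOE}_k).
\]

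\emph{Step 2: the GOE trace.} We have $\tr({\rm GOE}_k)=\sum_i g_{ii}\sim N(0,k)$, since the diagonal entries are i.i.d.\ $N(0,1)$. Dividing by $\sqrt{k}$ gives a standard normal limit.

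One point needs care. The scaling $k/(n-1-k)$ in $\calL$ is not the same as the normalization written in \eqref{eq:decision_normal}. Because $\bW_{k1}$ already carries the factor $1/k$, I would recheck which normalization of $\LRT$ is intended. Under fixed $k$ the two differ only by a deterministic factor, so the target statement should hold with $\LRT$ read as $\mathbb{1}_k^T\calL(k,\lambda)$. Equivalently, the same scaling enters both $\LRT$ and the centering $k\hat\Omega_1$.

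\emph{Step 3: Slutsky with the plug-in estimators.} The centering is the delicate part, because the estimation error is multiplied by $\sqrt{p_1}\asymp n^{1/2}$.
\begin{itemize}
\item By Lemma~\ref{lemma:determinist_equivalent}(iii) at $z=-\lambda\in\mathbb{R}_-$, $n^{2/3}|\hat\varphi(-\lambda)-\varphi(-\lambda,q)|\stackrel{P}{\longrightarrow}0$, and likewise for the derivatives.
\item Since $\varphi(-\lambda,q)$ is bounded away from zero, $\Omega_1=[\lambda\varphi]^{-1}-1$ is locally Lipschitz in $\varphi$. Hence $\sqrt{p_1}|\hat\Omega_1(\lambda)-\Omega_1(\lambda,q)|=O_P(n^{1/2})\,o_P(n^{-2/3})=o_P(1)$.
\item This requires $\varphi(-\lambda,q)\in(0,1/\lambda]$ and bounded away from $0$. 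That follows from $\varphi(-\lambda)=\int(\tau+\lambda)^{-1}dF(\tau)$ together with the bounded support guaranteed by \ref{enum:boundedness_spectral_norm}.
\item For the scale, $\Omega_2$ is a smooth function of $(\varphi,\varphi')$, so $\hat\Omega_2(\lambda)/\Omega_2(\lambda,q)\stackrel{P}{\longrightarrow}1$. This needs $\Omega_2$ bounded away from zero, i.e.\ $\varphi'-\varphi^2\gtrsim1$. I would get this by Cauchy--Schwarz: $\int(\tau+\lambda)^{-2}dF\ge\big(\int(\tau+\lambda)^{-1}dF\big)^2$, with strict separation because the M-P law $\calF(F^{\Sigma_0},q)$ is non-degenerate when $q\asymp1$. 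Equivalently, one can compute $\Omega_2^2$ as a variance expression.
\end{itemize}
Slutsky's theorem then gives $\widetilde{\LRT}(k,\lambda)\stackrel{D}{\longrightarrow}N(0,1)$.

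The main obstacle is the centering step: the $n^{2/3}$ rate in Lemma~\ref{lemma:determinist_equivalent}(iii) has to beat the $\sqrt{p_1}$ inflation, and this is exactly why the lemma is stated with that rate.
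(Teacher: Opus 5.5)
Your proposal is correct and follows the paper's route. The paper also derives the result from Theorem~\ref{thm:main_fix_k} by passing to the sum of eigenvalues (its ``delta method'' step, with $\tr({\rm GOE}_k)\sim N(0,k)$) and then applying Slutsky's theorem with the plug-in estimators; you rightly add the point the paper leaves implicit, that the $n^{2/3}$ rate in Lemma~\ref{lemma:determinist_equivalent}(iii) makes the centering error negligible at the $\sqrt{p_1}$ scale. Your scaling remark is also correct, and it is necessary rather than cosmetic: because $k/(n-1-k)\to 0$, the statistic must use $\frac{k}{n-1-k}\LRT(k,\lambda)=\mathbb{1}_k^T\calL(k,\lambda)$ in place of $\LRT(k,\lambda)$, since the literally written centered quantity diverges under the null.
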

In practice, we reject the null hypothesis at significance level $\alpha$ whenever $\widetilde{\LRT}(k,\lambda)>z_{1-\alpha}$, where $z_{1-\alpha}$ denotes the $(1-\alpha)$ quantile of the standard normal distribution.

\textcolor{black}{The simulation results in Section~\ref{subsec:simulation_null} suggest that, over a broad range of settings, the finite-sample distribution of $\widetilde{\LRT}(k,\lambda)$ is well approximated by the standard normal distribution, provided that $k$ is not excessively large relative to $p_1$. In particular, the approximation is found to be accurate when $k\lesssim20$. Consequently, the proposed test provides satisfactory control of the type-I error rate within this regime.
}

\begin{remark}\label{remark:compare_yang_pan}
We compare the proposed statistic $\LRT(k,\lambda)$ with the regularized LRT procedure of \cite{yang2015independence}. Their method is based on the eigenvalues of $n^{-1}\bX^T (\bS_x + t I)^{-1} \bX \bY^T (\bY\bY^T)^{-}\bY$, where $(\bY\bY^T)^{-}$ denotes the Moore--Penrose pseudoinverse of $\bY\bY^T$. Although both procedures employ ridge regularization, the eigenvalues of $\bF_{k\lambda}$ do not admit a simple algebraic relationship with those of $n^{-1}\bX^T (\bS_x + t I)^{-1} \bX \bY^T (\bY\bY^T)^{-}\bY$, except in the unregularized case $\lambda=t=0$. A key distinction is that the proposed method incorporates PC-based dimension reduction through the projection matrix $\bP_k$, whereas the procedure of \cite{yang2015independence} relies on the projection $\bY^T(\bY\bY^T)^{-}\bY$ onto the row space of $\bY$.

When $p_2>n$ and $\bY$ has full column rank, which holds with probability tending to one under mild conditions, we have $\bY^T(\bY\bY^T)^{-}\bY= I_n$. Consequently, the statistic of \cite{yang2015independence} no longer depends on the correlation between $\bX$ and $\bY$, and therefore cannot effectively capture the dependence structure. In contrast, by exploiting the leading principal components of $\bY$, the proposed statistic $\LRT(k,\lambda)$ remains informative and is expected to retain nontrivial power in this regime.

When $p_2<n$, the distinction is more subtle. The asymptotic theory of \cite{yang2015independence} corresponds to the regime in which $k=p_2$ grows proportionally with $n$. In contrast, our analysis establishes the asymptotic distribution of $\LRT(k,\lambda)$ when $k$ remains relatively small. In this regime, the normalization parameters $\Omega_1(\lambda,q)$ and $\Omega_2(\lambda,q)$ admit simple and consistent estimators, leading to a practically implementable testing procedure. By comparison, estimation of the normalization parameters required by \cite{yang2015independence} is considerably more challenging. In Section \ref{sec:simulation}, we numerically compare the power of these tests under representative settings.
\end{remark}

\subsection{Asymptotic distribution of the regularized largest-root statistic}\label{subsec:asymptotic_k_diverge}
In this section, we derive the asymptotic distribution of the proposed regularized largest-root statistic $\ell_{\max}(k,\lambda)$ under the regime in which the reduced dimension parameter $k$ grows proportionally with $n$.

Under Condition~\ref{enum:edger_regularity}, define
\[
h_0=
\begin{cases}
\eta,
& \text{if } g'(h)>0 \text{ for all } h\in(\eta-\epsilon_2,\eta),\\[6pt]
\text{the unique solution of } g'(h)=0,
& \text{if } g'(h)<0 \text{ for } h\in(\eta-\epsilon_2,\eta).
\end{cases}
\]
The uniqueness of the root follows from the strict concavity of $g(h)$. We further define
\begin{equation}
\label{eq:def_rho}
\rho=\lim_{h\uparrow h_0}g(h).
\end{equation}
By construction, the restricted mapping $g: (-\infty,h_0)\to(-\infty,\rho)$ is strictly increasing and therefore admits an inverse mapping, denoted by
\[
h(g):\ (-\infty,\rho)\to(-\infty,h_0).
\]

Next, define the function
\begin{equation}\label{eq:def_s_nlambda_g}
s(g)= \frac{1}{q} \Big( \frac{1}{g - h(g) } -1 \Big), \qquad g\in [0,~\rho).    
\end{equation}
The following lemma summarizes several key properties of $s(g)$. 
\begin{lemma}
\label{lemma:properties_s_fun}
Suppose that Conditions \ref{enum:boundedness_spectral_norm} and \ref{enum:edger_regularity} hold. For all sufficiently large $n$, the function $s(g)$ is analytic on $(0,\rho)$ and satisfies $s'(g) >0$, $s''(g) >0$, $g\in [0,\rho)$. Furthermore, $s'(g)\to\infty$, as $g\uparrow \rho$.
\end{lemma}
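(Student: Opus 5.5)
The plan is to observe that $s(g)$ is, after a change of variables, just the function $m(h):=\int \tau\,dF^{\Sigma_0}(\tau)/(\lambda-\tau h)$ evaluated at $h=h(g)$. All claimed properties then follow from sign information on $m$, $g'$ and $g''$, via the chain rule and the inverse function theorem. By \eqref{eq:def_g}, $g(h)-h=(1+q\,m(h))^{-1}$. Substituting $h=h(g)$ into \eqref{eq:def_s_nlambda_g} gives
\[
s(g)=\frac1q\big(1+q\,m(h(g))-1\big)=m(h(g)),\qquad g\in[0,\rho).
\]

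The first step is well-definedness and analyticity. On $(-\infty,\eta)$ every denominator $\lambda-\tau h$ is strictly positive for $\tau$ in the support of $F^{\Sigma_0}$, since $\tau\le \ell_{\max}(\Sigma_0)$ and \ref{enum:boundedness_spectral_norm} keeps the spectrum in $(0,\infty)$. Hence $m$ and $g$ are real-analytic there, with
\[
m'(h)=\int\frac{\tau^2\,dF^{\Sigma_0}}{(\lambda-\tau h)^2}>0,\qquad m''(h)=2\int\frac{\tau^3\,dF^{\Sigma_0}}{(\lambda-\tau h)^3}>0.
\]
On $(-\infty,h_0)$ we have $g'>0$ by the construction of $h_0$. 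So $g$ is a strictly increasing analytic map onto $(-\infty,\rho)$, and by the analytic inverse function theorem $h(g)$ is analytic with $h'(g)=1/g'(h(g))$.

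To see that $0$ lies in the domain, I will check $\rho>0$. First, $g(h)\to-\infty$ as $h\to-\infty$, because $m(h)\to0$ so $g(h)=h+O(1)$. Second, $g(0)=(1+q\,m(0))^{-1}>0$ with $0<\eta$. Together these give $h(0)<0\le h_0$ and $\rho\ge g(0)>0$. Thus $s=m\circ h$ is analytic on $(0,\rho)$ and smooth on $[0,\rho)$.

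The second step computes the derivatives:
\[
s'(g)=\frac{m'(h)}{g'(h)},\qquad s''(g)=\frac{m''(h)g'(h)-m'(h)g''(h)}{g'(h)^3},\qquad h=h(g).
\]
Here $m'>0$ and $g'>0$, so $s'>0$. For $s''$, we have $m''>0$, $g'>0$, and $g''<0$ (the strict concavity stated before Definition~\ref{def:edge_regularity}). Hence both terms of the numerator are positive and $s''>0$.

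The final step, which I expect to be the only delicate one, is $s'(g)\to\infty$ as $g\uparrow\rho$, i.e.\ as $h\uparrow h_0$. I would split according to the two cases in the definition of $h_0$.

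If $h_0<\eta$ is the interior root of $g'$, then $m'(h_0)$ is finite and strictly positive while $g'(h)\downarrow0$, so $s'=m'/g'\to\infty$.

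If $h_0=\eta$, then by \ref{enum:edger_regularity} $g'$ stays bounded on $(\eta-\epsilon_2,\eta)$: it is positive there, and since $g'$ is decreasing it is at most its value at $\eta-\epsilon_2$. It therefore suffices to show $m'(h)\to\infty$ as $h\uparrow\eta$. This holds because, for each fixed $n$, $F^{\Sigma_0}$ carries an atom of mass at least $1/p_1$ at $\ell_{\max}(\Sigma_0)$, which contributes
\[
p_1^{-1}\,\ell_{\max}^2\,\bigl(\lambda-\ell_{\max}h\bigr)^{-2}\to\infty .
\]

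I would also remark that the same atom computation suggests that, for finite $n$, $g'$ actually becomes negative near $\eta$, so that only the first case occurs. Treating both cases keeps the argument robust to that subtlety, and it is also why the statement is phrased ``for all sufficiently large $n$''.
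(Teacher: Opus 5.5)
Your proof is correct, but it takes a different route from the paper's. The paper never differentiates through the inverse map. In Section~\ref{subsec:proof_properties_MP_G} it identifies $s(g)=\int d\calG(\tau)/(\tau-g)$, for $g<\rho$, as the Stieltjes transform of the limiting spectral law $\calG$ of $\bG$ given by the generalized Mar\v{c}enko--Pastur equation \eqref{eq:MP_G}. Positivity of $s'$ and $s''$ is then read off the integral. The blow-up of $s'$ comes from the edge behaviour of $\calG$ at $\rho$, imported from \cite{li2024analysis} (Lemma~\ref{lemma:leftmost_support_edge}): a square-root density in the interior-root case, and an atom of mass bounded below when $h_0=\eta$.

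You instead write $s=m\circ h$ with $m=\calH_1$, and use $s'=\calH_1'/g'$ and $s''=(\calH_1''g'-\calH_1'g'')/g'^3$. This is self-contained and needs no random-matrix input beyond the concavity of $g$, which does hold. Cauchy--Schwarz gives $2(\calH_1')^2\le \calH_1\calH_1''$, hence $g''=q(1+q\calH_1)^{-3}[2q(\calH_1')^2-(1+q\calH_1)\calH_1'']<0$.

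What the paper's route buys is uniformity in $n$. It shows that for every $\calK'$ there is $a'>0$ with $s'\ge \calK'$ on $(\rho-a',\rho)$ for all large $n$, and this is what is used afterwards to get $\rho-\beta\asymp 1$. Your boundary-case argument rests on an atom of mass only $1/p_1$. It therefore gives the pointwise limit stated in the lemma, but not that uniform bound.

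Two side claims need correcting, though neither affects your argument.

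First, $0\le h_0$ fails in general. For $\Sigma_0=I$ one has $h_0=\lambda+q-\sqrt q$, which is negative when $q<1$ and $\lambda<\sqrt q-q$. You only need $\rho=\sup_{h<\eta} g(h)\ge g(0)>0$, which holds in both cases.

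Second, your closing remark is wrong. The top atom, of mass $w_1$, also inflates $(1+q\calH_1)^2$, and in fact $g'(h)\to 1-1/(qw_1)$ as $h\uparrow\eta$. So the case $h_0=\eta$ genuinely occurs at finite $n$ whenever $qw_1>1$. An example is $\Sigma_0=I$ with $p_1>n-1-k$, where $\bG$ has an atom at $\lambda$ of mass $1-1/q$. Under \ref{enum:edger_regularity} this case forces $w_1\ge [q(1-\epsilon_1)]^{-1}$, a macroscopic atom. Using that bound instead of $1/p_1$ would make your boundary-case estimate uniform in $n$ as well.

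Finally, $g'\le 1$ holds trivially, so you do not need \ref{enum:edger_regularity} for the upper bound on $g'$.
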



\begin{theorem}\label{thm:main_diverge_k}  
Fix $\lambda>0$.  Suppose that Conditions~\ref{enum:high_dimension_regime}--\ref{enum:edger_regularity} hold and that $X$ and $Y$ are independent. Assume that $k/n  \to \gamma \in (0,\alpha)$, $n\to\infty$, where $\alpha$ is defined in Condition~\ref{enum:lower_bound_eigen_Y}. Recall that $q = q(k,n) =p_1/(n-1-k)$. As $n\to\infty$,
\begin{equation}
\frac{p_1^{2/3}}{\Theta_{2}(\lambda,q)} \Big(\ell_{\max}(\lambda, k) - \Theta_{1}(\lambda,q)\Big)~~ \stackrel{D}{\longrightarrow} ~~\TW_1,  
\label{eq:Tracy_Widom}
\end{equation}
where \( \TW_1 \) denotes the Tracy-Widom distribution of type~1.  
The normalization parameters \( \Theta_{1}(\lambda,k)\) and \( \Theta_{2}(\lambda,k) \) are determined as follows. Let \( \beta\) to be the solution in \( (0, \rho) \) to  
\begin{equation}\label{eq:def_beta}
\beta^2 {s}'(\beta) = \frac{k}{p_1}.
\end{equation}  
The existence and uniqueness of $\beta$ follow from Lemma~\ref{lemma:properties_s_fun}. We then have
\begin{align}
&\Theta_{1}(\lambda,q) = \frac{1}{\beta} \Big[1 + \frac{p_1}{k} \beta {s}(\beta)\Big], \label{eq:def_Theta1} \\
&\Theta_{2}(\lambda,q) = \Bigg[\frac{(p_1/k)^3}{2} {s}''(\beta) + \frac{(p_1/k)^2}{\beta^3} \Bigg]^{1/3}. \label{eq:def_Theta2}
\end{align}  
\end{theorem}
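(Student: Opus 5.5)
Throughout we work conditionally on $\bY$; under the null $\bX=\Sigma_0^{1/2}\bZ_x$ (after centering) is independent of $\bY$. The plan is to reduce $\ell_{\max}(\bF_{k\lambda})$ to the largest eigenvalue of a sample-covariance-type matrix whose population matrix is random but asymptotically deterministic, and then invoke an edge universality result for such matrices.

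\textbf{Step 1: algebraic reduction.} Write $\bP_k=UU^T$ with $U\in\mathbb{R}^{n\times k}$ orthonormal. Let $V\in\mathbb{R}^{n\times(n-1-k)}$ be an orthonormal basis of the complement of the span of $U$ and $\mathbb{1}_n$, which is the space on which $I_n-\bP_k$ acts after centering. Conditional on $\bY$, both $U$ and $V$ are deterministic. Set $A=\bZ_xU\in\mathbb{R}^{p_1\times k}$ and $B=\bZ_xV$. Then $\bW_{k2}=(n-1-k)^{-1}\Sigma_0^{1/2}BB^T\Sigma_0^{1/2}$, and the nonzero eigenvalues of $\bF_{k\lambda}$ coincide with those of
\[
\frac1k A^T T A,\qquad T=\Sigma_0^{1/2}(\bW_{k2}+\lambda I_{p_1})^{-1}\Sigma_0^{1/2}.
\]
Since $A$ and $B$ are built from orthogonal columns of the same $\bZ_x$, in the Gaussian case they are independent. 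Conditionally on $B$, the problem is then the largest eigenvalue of a $k\times k$ sample covariance matrix with population $T$ and aspect ratio $p_1/k\to$ a positive constant.

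\textbf{Step 2: spectral and edge structure of $T$.} Using Lemma~\ref{lemma:determinist_equivalent} with $z=-\lambda$, the ESD of $T$, and the quadratic quantities $p_1^{-1}\tr[T^j(I-\beta T)^{-1}]$ entering the Marčenko–Pastur edge equations, are approximated by deterministic functionals of $\varphi$ with error $o(n^{-2/3})$. Derivatives in the spectral parameter are handled by Cauchy's integral formula on compact sets.

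I then verify that these functionals are exactly encoded by $g$, $h(g)$ and $s(g)$. The expectation is that $s(\beta)$ plays the role of $p_1^{-1}\tr[T(I-\beta T)^{-1}]$-type quantities in the limit. Then the standard right-edge characterisation for sample covariance matrices with population $T$ and ratio $p_1/k$, namely $x=\beta^{-1}\big(1+(p_1/k)\beta s(\beta)\big)$ with the critical point determined by $\beta^2 s'(\beta)=k/p_1$, reproduces $\Theta_1$ in \eqref{eq:def_Theta1}. The usual edge scaling gives $\Theta_2$ in \eqref{eq:def_Theta2}.

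Lemma~\ref{lemma:properties_s_fun} guarantees that $\beta\in(0,\rho)$ exists and is unique. Condition~\ref{enum:edger_regularity}, transported through $g$, ensures that $\beta$ stays bounded away from the singular point $\rho$. This is precisely the edge-regularity hypothesis (e.g.\ Definition~2.7 of \cite{knowles2017anisotropic}, or \cite{el2007tracy,lee2016tracy}) needed for the random population $T$, holding with probability tending to one.

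\textbf{Step 3: Tracy–Widom limit.} In the Gaussian case, conditionally on $B$, apply the Tracy–Widom theorem for sample covariance matrices with general population (e.g.\ \cite{lee2016tracy,knowles2017anisotropic}) to $\frac1kA^TTA$. This yields fluctuations of order $p_1^{-2/3}$ around the random edge $\Theta_1(T)$, with the random scale $\Theta_2(T)$. Step 2 gives $\Theta_i(T)=\Theta_i(\lambda,q)+o_P(p_1^{-2/3})$. By Slutsky's theorem and dominated convergence over $B$, this yields \eqref{eq:Tracy_Widom} for Gaussian $\bZ_x$.

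For general $\bZ_x$ satisfying Condition~\ref{enum:moments_conditions}, I would use a Green function comparison (Lindeberg swapping of entries of $\bZ_x$). The comparison is applied to a linearised resolvent of the joint block matrix in $(A,B)$, showing that smooth functionals of $\ell_{\max}$ at scale $p_1^{-2/3}$ are asymptotically unchanged. Rigidity at the edge is supplied by local laws for this linearisation.

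\textbf{Main obstacle.} I expect the hardest step to be the non-Gaussian case. There $A$ and $B$ are no longer independent, so conditioning on $\bW_{k2}$ destroys the i.i.d.\ structure of $A$. The comparison must therefore be carried out for the full nonlinear functional $\ell_{\max}\big(\bX\bP_k\bX^T(\bW_{k2}+\lambda)^{-1}\big)$, with entrywise resolvent bounds accurate to $n^{-2/3}$.

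A secondary difficulty is showing that the random edge $\Theta_1(T)$ concentrates to precision $o(n^{-2/3})$. This relies on the rate in Lemma~\ref{lemma:determinist_equivalent}, together with the fact that $\beta$ stays uniformly away from the spectral edge of $T$.
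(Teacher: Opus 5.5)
Your architecture matches the paper's: the reduction to $k^{-1}A^{T}TA$ with $T=\bG^{-1}$, conditioning on the $\bZ_x\bU_\perp$ block, the theorem of \cite{lee2016tracy} for a random population followed by Slutsky, and Green function comparison through a block linearization. The gap is in Step~2, which is where $\Theta_1$ and $\Theta_2$ are actually identified, and Lemma~\ref{lemma:determinist_equivalent} cannot supply it.

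That lemma controls only linear functionals of the scalar-shift resolvent $(\bW_{k2}-zI_{p_1})^{-1}$, namely traces against deterministic $B$ and quadratic forms in deterministic vectors, for $z$ in compact subsets of $\mathbb{C}^+\cup\mathbb{R}_-$. A deterministic equivalent does not transfer spectra: for $\Sigma_0=I_{p_1}$, $\calD(-\lambda)$ is a multiple of $I_{p_1}$. The eigenvalues of $T$ are those of $(\bW_{k2}+\lambda I_{p_1})^{-1}\Sigma_0$. These depend on the eigenvectors of $\bW_{k2}$ unless $\Sigma_0\propto I_{p_1}$; already $\tr T^2$ is quadratic in the resolvent. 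The edge functional you need is
\[
\frac{1}{p_1}\tr\big[T(I_{p_1}-\beta T)^{-1}\big]=\frac{1}{p_1}\tr\big[(\bG-\beta I_{p_1})^{-1}\big]=\frac{1}{p_1}\tr\big[\Sigma_0(\bW_{k2}+\lambda I_{p_1}-\beta\Sigma_0)^{-1}\big],
\]
a resolvent with the matrix shift $\lambda I_{p_1}-\beta\Sigma_0$, evaluated at a real point. Even for $\Sigma_0=I_{p_1}$ this equals $p_1^{-1}\tr[(\bW_{k2}-(\beta-\lambda)I_{p_1})^{-1}]$, and $\beta-\lambda$ can be positive: for $q<1$ one has $\rho=\lambda+(1-\sqrt q)^2>\lambda$, and $\beta\uparrow\rho$ as $k/p_1$ grows. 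That point lies outside the lemma's domain.

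The lemma also says nothing about extreme eigenvalues. So it cannot exclude an eigenvalue of $\bG$ detaching below $\rho$, i.e.\ an outlier of $T$ above $1/\rho$. Excluding this is exactly what you need for two things: the regularity hypothesis of \cite{lee2016tracy,knowles2017anisotropic} to hold for the random $T$, and the random critical point $\tilde\beta<\ell_{\min}(\bG)$ to stay close to $\beta$. Your sentence ``The expectation is that $s(\beta)$ plays the role of \dots'' is precisely the step that remains unestablished.

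The missing idea is the paper's Step~1. The matrix $\bG=(n-1-k)^{-1}\bZ_x\bU_\perp\bU_\perp^T\bZ_x^T+\lambda\Sigma_0^{-1}$ is a sample covariance matrix plus a deterministic matrix. Its limiting law $\calG$ is governed by the generalized Mar\v{c}enko--Pastur equation \eqref{eq:MP_G}, not by the equation for $\varphi$. Setting $h=z-[1+q\phi(z)]^{-1}$ turns \eqref{eq:MP_G} into $z=g(h)$ and $\phi=\calH_1(h)$. Hence $s(g)=\phi(g)=\int d\calG(\tau)/(\tau-g)$ on $[0,\rho)$, where $\rho$ is the leftmost edge of $\calG$. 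Only with this identification are $1/\beta+(p_1/k)s(\beta)$ and \eqref{eq:def_Theta2} the right-edge parameters for population $T$ at ratio $p_1/k$. Condition~\ref{enum:edger_regularity} enters through the square-root or atom behaviour of $\calG$ at $\rho$, which forces $s'(g)\to\infty$ and keeps $\beta$ away from $\rho$.

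You then need three further inputs (from \cite{li2024analysis,li2025ridge}): a local law for $\bG$; left-edge rigidity $\ell_{\min}(\bG)\ge\rho-n^{-2/3+\varepsilon}$ with high probability (Lemma~\ref{lemma:convergence_edge_G}); and $O_\prec(n^{-1})$ convergence of linear spectral statistics of $\bG$ for functions analytic near its spectrum (Lemma~\ref{lemma:linear_spectral_G}). These give $\tilde\beta-\beta\prec n^{-1+\varepsilon}$ and $|\widetilde\Theta_i-\Theta_i|\le n^{-1+\varepsilon}$ for the $F^{\bG}$-based parameters, which is what feeds Slutsky. With Step~2 rebuilt on these inputs, the rest of your plan coincides with the paper's, including the non-Gaussian comparison you correctly flag as the hardest part.
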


\begin{remark}
\label{remark:Tracy_Widom}
For background on Tracy--Widom distributions, we refer readers to \cite{tracy1994fredholm,tracy1996orthogonal,johnstone2008multivariate}. Software implementations for numerical evaluation and simulation of Tracy--Widom laws are available in the R package \textit{RMTstat} \citep{johnstone2022package} and the MATLAB package \textit{RMLab} \citep{dieng2006matlab}.
\end{remark}

\begin{remark}
When $k$ remains fixed as $n\to\infty$, Theorem~\ref{thm:main_fix_k} implies that
\[
\frac{\sqrt{p_1}}{\Omega_2(\lambda,q)}
\Big(
\ell_{\max}(k,\lambda)
-
\Omega_1(\lambda,q)
\Big)
\stackrel{D}{\longrightarrow}
\ell_{\max}\big({\rm GOE}_k\big).
\]
Therefore, Theorem~\ref{thm:main_diverge_k} may be viewed as a large-$k$ extension of Theorem~\ref{thm:main_fix_k} when attention is restricted to the largest eigenvalue. A notable difference between the two regimes is the fluctuation scale. In the fixed-$k$ regime, $\ell_{\max}(k,\lambda)$ fluctuates on the order of $p_1^{-1/2}$, whereas in the diverging-$k$ regime of Theorem~\ref{thm:main_diverge_k}, the fluctuation scale becomes $p_1^{-2/3}$. The latter rate is characteristic of edge eigenvalue fluctuations and arises universally across a broad class of random matrix models. This transition is closely related to the fact that the largest eigenvalue of ${\rm GOE}_k$ converges to the Tracy--Widom distribution after appropriate centering and scaling as $k\to\infty$ \citep{johnstone2008multivariate}.
\end{remark}

\begin{remark}
When $\bY$ is treated as deterministic, Theorem~\ref{thm:main_diverge_k} specializes to the main theorem of \cite{li2025ridge}. Moreover, if $\lambda=0$ and $p_1<n-1-k$, it further reduces to the Tracy--Widom results of \cite{han2016tracy} and \cite{han2018unified}. Thus, Theorem~\ref{thm:main_diverge_k} unifies and extends these earlier results within a common framework that accommodates CCA, PC-based dimension reduction, and ridge regularization.
\end{remark}

To implement the proposed procedure, consistent estimators of $\Theta_1(\lambda,q)$ and $\Theta_2(\lambda,q)$ are required. To this end, we adapt the estimation procedure of \cite{li2025ridge} to the present setting. For ease of exposition, a detailed description of the estimation method is deferred to Section~\ref{sec:estimation} in the Appendix. The procedure relies solely on the eigenvalues of $\bW_{k2}$ and is computationally efficient.

Under mild regularity conditions, and provided that $k$ is not excessively close to $n$ (equivalently, $\gamma$ is not too large), the resulting estimators, denoted by $\hat{\Theta}_1(\lambda)$ and $\hat{\Theta}_2(\lambda)$, satisfy
\[
\hat{\Theta}_1(\lambda)-\Theta_1(\lambda,q)
=
o_p(p_1^{-2/3}),
\qquad
\hat{\Theta}_2(\lambda)-\Theta_2(\lambda,q)
=
o_p(1).
\]
See Section \ref{sec:estimation} for details. Consequently, replacing $\Theta_1(\lambda,q)$ and $\Theta_2(\lambda,q)$ by their estimators does not affect the Tracy--Widom limit in Theorem~\ref{thm:main_diverge_k}.

From a practical perspective, the numerical studies reported in Section~6.1 of \cite{li2025ridge} indicate that the proposed estimation procedure achieves high accuracy provided that $q=p_1/(n-1-k)$ is not excessively large (for example, $q\lesssim5$), although the estimation accuracy deteriorates as $q$ increases. We do not reproduce those numerical results here. Instead, in Section~\ref{subsec:simulation_null}, we directly evaluate the empirical sizes of the proposed procedures using the estimated normalization parameters $\hat{\Theta}_1(\lambda)$ and $\hat{\Theta}_2(\lambda)$. The simulation results demonstrate that the resulting tests maintain satisfactory control of the type-I error rate.

In practice, we reject the null hypothesis at significance level $\alpha$ whenever 
\begin{equation}\label{eq:decision_TW}
\widetilde{\ell}_{\max}(k,\lambda) \coloneqq \frac{p_1^{2/3}}{\hat\Theta_{2}(\lambda)} \Big(\ell_{\max}(\lambda, k) - \hat\Theta_{1}(\lambda)\Big) > \TW_{1,\, 1-\alpha},
\end{equation}
where $\TW_{1,\,1-\alpha}$ denotes the $(1-\alpha)$ quantile of $\TW_1$.


\section{Power Analysis}\label{sec:power_analysis}
In this section, we investigate the asymptotic power properties of the proposed regularized tests under the alternative hypothesis. Recall that under Model~\eqref{eq:model_PCR}, the dependence between $X$ and $Y$ is entirely characterized by the signal component $M\Lambda_m^T Y$. Throughout this section, we assume that this dependence can be fully captured asymptotically by the reduced principal eigenspace in the sense that there exists a sequence of integers $k_0=k_0(n)$ satisfying $(n-k_0) \asymp n$ and
\begin{equation}
\label{eq:consistency_eq1}
\left\|
n^{-1/2}
M\Lambda_m^T\bY(I_n-\bP_{k_0})
\right\|_2
=o_p(1).
\end{equation}
The condition is reasonable to assume in a broad range of applications. Heuristically, it is expected to hold under any of the three scenarios: 
\begin{itemize}
\item[(i)] The reduced dimension $k_0$ is chosen sufficiently large. In the extreme case where $k_0 = p_2 <n$, we have $\bY(I_n-\bP_{k_0}) = 0$, so the condition is satisfied automatically.
\item[(ii)] The spectrum of $\Sigma_y$ decays sufficiently rapidly so that the eigenvalues beyond the $k_0$-th are negligible. For example, the condition is satisfied if $\|M\|_2 = O(1)$ and there exists $k_0 < \min(p_2, n)$ such that $\ell_{k_0}(\bS_y) \stackrel{P}{\longrightarrow} 0$. 
\item[(iii)] The dependence between $X$ and $Y$ is primarily driven by the existence of a small number of common dominant factors. Such a structural assumption is natural in many application areas, including economics, finance, genomics, and psychometrics.
\end{itemize}
We next present a representative example illustrating Scenario~(iii).
\begin{remark}
\label{remark:sufficient_condition_factor_model}
Technically, suppose that $m$ is fixed and that
\[\ell_{1}(\Sigma_y)\ge\cdots\ge\ell_{m}(\Sigma_y)\ge c\log n,\qquad\ell_{m+1}(\Sigma_y)\le C,\]
for some positive constants $c$ and $C$. For simplicity, assume further that $Z_y$ is Gaussian. Under this model, the leading principal components $\Lambda_m^T Y$ may be interpreted as the dominant latent factors driving the dependence between $X$ and $Y$. If $\|M\|_2=O(1)$, Condition~\eqref{eq:consistency_eq1} is satisfied whenever $k_0\ge m$.
\end{remark}

We establish general sufficient conditions for the consistency of the proposed procedures. Define
\[ \mathcal S_{nk} = \frac1n M\Lambda_m^T \bY \bP_k \bY^T \Lambda_m M^T. \]
\begin{theorem}[Consistency of the trace-based test]
\label{thm:consistency_trace_based}
Fix $\lambda>0$. Suppose that Conditions~\ref{enum:high_dimension_regime}--\ref{enum:lower_bound_eigen_Y} hold. Assume that Condition~\eqref{eq:consistency_eq1} is satisfied for some fixed $k_0$, and that the chosen reduced dimension satisfies $k\ge k_0$ and remains fixed as $n\to\infty$. Suppose that the signal under $H_a$ is sufficiently large in the sense that for every $\xi \in(0,1)$, there exist constants $\varepsilon_\xi>0$ and $n_\xi$ such that  
\begin{equation}\label{eq:condition_on_S_nk}
\mP \Big( \|\mathcal S_{nk} \|_2 \ge\varepsilon_\xi \log(n)\Big) \ge \xi, \qquad n\ge n_\xi. 
\end{equation}
Then the proposed trace-based test is asymptotically consistent. Specifically,
\[
\mathbb P\!\big(
\widetilde{\mathcal T}(k,\lambda)>z_{1-\alpha}
\big)
\longrightarrow
1,
\]
as $n\to\infty$, for every significance level $\alpha\in(0,1)$.
\end{theorem}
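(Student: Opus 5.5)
Under $H_a$ we work with $X$-data $\bX = M\Lambda_m^T\bY + \Sigma_0^{1/2}\bZ_x$ (centering aside). We write $\bX = \bS + \bN$, where $\bS = M\Lambda_m^T\bY$ and $\bN$ is the noise part, which is independent of $\bY$. The plan has three steps. First, show that the plug-in normalizers $\hat\Omega_1(\lambda)$ and $\hat\Omega_2(\lambda)$ remain $O_p(1)$, with $\hat\Omega_2$ bounded away from zero. Second, show that $\LRT(k,\lambda)$ is bounded below by $c\|\mathcal S_{nk}\|_2$ minus an $O_p(1)$ term. Third, combine. Then $\widetilde{\LRT}(k,\lambda)\gtrsim \sqrt{p_1/k}\,(c\,\varepsilon_\xi\log n - O_p(1))\to\infty$ with probability at least $\xi$. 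Since $\xi$ is arbitrary, the rejection probability tends to one. In fact $\sqrt{p_1}$ alone would suffice, so even a bound $\LRT\gtrsim \|\mathcal S_{nk}\|_2/\log n$ would do. This slack lets us absorb crude losses, e.g.\ from $\ell_{\max}(\bW_{k2})$.

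\textbf{Control of $\bW_{k2}$ and the normalizers.} We have $\bX(I_n-\bP_k) = \bS(I_n-\bP_k) + \bN(I_n-\bP_k)$. Since $k\ge k_0$, the subspace for $\bP_k$ contains that for $\bP_{k_0}$, so $I_n-\bP_k \preceq I_n-\bP_{k_0}$. Condition~\eqref{eq:consistency_eq1} then gives $\|n^{-1/2}\bS(I_n-\bP_k)\|_2=o_p(1)$. Hence $\bW_{k2} = (n-1-k)^{-1}\bN(I_n-\bP_k)\bN^T + E$ with $\|E\|_2=o_p(1)$. Here we use that $\|(n-1-k)^{-1/2}\bN\|_2=O_p(1)$ by Conditions~\ref{enum:moments_conditions}--\ref{enum:boundedness_spectral_norm}, so the cross terms are $o_p(1)$. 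The noise part is exactly the null matrix, because $\bP_k$ depends only on $\bY$, which is independent of $\bN$. Its Stieltjes transform at $z=-\lambda<0$ is a $\lambda^{-2}$-Lipschitz function of the eigenvalues. A perturbation of spectral norm $o_p(1)$ therefore changes $\hat\varphi(-\lambda)$ and $\hat\varphi'(-\lambda)$ by $o_p(1)$. By Lemma~\ref{lemma:determinist_equivalent}(iii), $\hat\Omega_j(\lambda)\to\Omega_j(\lambda,q)$ in probability for $j=1,2$. These limits are bounded, and $\Omega_2$ is bounded away from zero by Condition~\ref{enum:boundedness_spectral_norm} (since $\varphi'-\varphi^2>0$ is a variance). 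The same argument gives $\ell_{\max}(\bW_{k2})=O_p(1)$.

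\textbf{Lower bound on the statistic.} We have $\LRT(k,\lambda)=\tr[\bW_{k1}(\bW_{k2}+\lambda I)^{-1}] \ge (\ell_{\max}(\bW_{k2})+\lambda)^{-1}\tr\bW_{k1}$. Next, $k\,\tr \bW_{k1} = \|\bX\bP_k\|_F^2 \ge \|\bS\bP_k\|_2^2 - 2\|\bS\bP_k\|_2\|\bN\bP_k\|_2$ in the appropriate sense. Conditioning on $\bY$, the matrix $\bN\bP_k$ is a noise matrix projected onto a fixed $k$-dimensional space, so $\|\bN\bP_k\|_2^2 = O_p(p_1)$. Meanwhile $\|\bS\bP_k\|_2^2 = n\|\mathcal S_{nk}\|_2$. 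Dividing by $n$ gives
\[
k\,\tr\bW_{k1}/n \ge \|\mathcal S_{nk}\|_2 - 2\|\mathcal S_{nk}\|_2^{1/2}O_p(1).
\]
Hence $\LRT(k,\lambda)\ge c\,(n/k)\,(\|\mathcal S_{nk}\|_2 - O_p(\|\mathcal S_{nk}\|_2^{1/2}))$ minus nonnegative-safe terms. Note that $\LRT$ itself carries no $n/k$ scaling; the centering $k\hat\Omega_1$ is $O_p(1)$ in the same units as $\mathcal L$ multiplied by $(n-1-k)/k$. So we must keep the scaling consistent with Theorem~\ref{thm:main_fix_k}, where $\calL$ uses $k/(n-1-k)$. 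I will carefully reconcile the centering $k\hat\Omega_1$ with this scaling. I expect it to mean comparing $(k/(n-1-k))\LRT$ against $\hat\Omega_1$, and the gap is of order $\|\mathcal S_{nk}\|_2\gtrsim\log n$, which diverges.

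\textbf{Main obstacle.} The delicate point is the bookkeeping of scalings between $\LRT$, $\calL$, and $\hat\Omega_1$. The more substantive difficulty is that $\bP_k$ is data-dependent and that signal and noise interact inside $\bW_{k2}$. Independence of $\bN$ from $\bY$, together with Condition~\eqref{eq:consistency_eq1}, is what resolves both. I would first verify the perturbation step for $\hat\varphi$ rigorously, since the whole argument rests on the normalizers not blowing up under $H_a$.
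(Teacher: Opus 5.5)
Your proposal is correct and follows essentially the paper's route. You perturb $\bW_{k2}$ off its null counterpart using \eqref{eq:consistency_eq1}, so that $(\bW_{k2}+\lambda I_{p_1})^{-1}$ has spectrum of order one. You then show that the projected signal $\mathcal S_{nk}$ dominates $\bW_{k1}$, bound $\LRT(k,\lambda)\ge \tr\bW_{k1}/(\ell_{\max}(\bW_{k2})+\lambda)$, and conclude from consistency of the normalizers under $H_a$. Your nesting argument for $k\ge k_0$ and your Lipschitz-perturbation argument for $\hat\varphi(-\lambda)$ and $\hat\varphi'(-\lambda)$ make explicit two steps the paper merely asserts; for the normalizers it cites a lemma stated for $\hat\Theta_j$, not $\hat\Omega_j$.

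The one wrong remark is that $\sqrt{p_1}$ alone would let a bound $\LRT\gtrsim\|\mathcal S_{nk}\|_2/\log n$ suffice. The crude trace bound does not cancel the order-one centering $k\hat\Omega_1$, so the argument genuinely needs $\|\mathcal S_{nk}\|_2\to\infty$ in probability. Condition \eqref{eq:condition_on_S_nk} supplies exactly this, and with it your bound works under either reading of the $k/(n-1-k)$ scaling you flagged.
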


\begin{theorem}[Consistency of the largest-root test]\label{thm:consistency_largest_root}
    Fix $\lambda>0$. Suppose that Conditions \ref{enum:high_dimension_regime}--\ref{enum:edger_regularity} hold and that $k/n\to \gamma \in (0,\alpha)$, as $n\to\infty$. Assume that Condition \ref{eq:consistency_eq1} is satisfied for some sequence $k_0 = k_0(n)$ with $k_0 \leq k$. Suppose further that the signal satisfies Condition \eqref{eq:condition_on_S_nk} in Theorem~\ref{thm:consistency_trace_based} and the estimated normalization parameters $\hat{\Theta}_1(\lambda)$ and $\hat{\Theta}_2(\lambda)$ are consistent. Then, the proposed largest-root test is asymptotically consistent. Specifically, 
    \[\mP\! \big(\tilde{\ell}_{\max}(k,\lambda)  > \TW_{1,\, 1-\alpha} \big) \longrightarrow 1,\]
    as $n\to \infty$, for every significance level $\alpha \in (0,1)$.
 \end{theorem}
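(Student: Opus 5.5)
The plan is to lower bound $\ell_{\max}(\bF_{k\lambda})$ directly by a quantity that diverges like $\log n$. Since $\hat\Theta_1(\lambda)$ and $\hat\Theta_2(\lambda)$ are consistent for the deterministic, order-one quantities $\Theta_1(\lambda,q)$ and $\Theta_2(\lambda,q)$, it then suffices to show that $p_1^{2/3}(\ell_{\max}(k,\lambda)-\hat\Theta_1(\lambda))\to\infty$ with probability tending to one. Indeed, it is enough to show that $\ell_{\max}(k,\lambda)\to\infty$ in probability along the event in \eqref{eq:condition_on_S_nk}, and then let $\xi\uparrow 1$.

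\textbf{Step 1: Decomposition.} Write $\bX=\bS+\bN$, where $\bS=M\Lambda_m^T\bY$ and $\bN=\Sigma_0^{1/2}\bZ_x$ (after centering). The noise $\bN$ is independent of $\bY$, and hence of $\bP_k$.

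\textbf{Step 2: Upper bound on the denominator.} We need $\|\bW_{k2}\|_2=O_p(1)$. Since $k\ge k_0$, we have $I_n-\bP_k\preceq I_n-\bP_{k_0}$: the range of $\bP_{k_0}$ is contained in that of $\bP_k$, because the sample PCs are nested. Hence
\[
\|\bS(I-\bP_k)\|_2\le\|\bS(I-\bP_{k_0})\|_2=o_p(n^{1/2}),
\]
using \eqref{eq:consistency_eq1}. Also $\|\bN\|_2=O_p(\sqrt n)$ by \ref{enum:boundedness_spectral_norm}, \ref{enum:moments_conditions}, and standard bounds on the largest singular value. Since $n-1-k\asymp n$, it follows that
\[
\|\bW_{k2}\|_2\le\frac{2}{n-1-k}\Big(\|\bS(I-\bP_k)\|_2^2+\|\bN\|_2^2\Big)=O_p(1).
\]
Therefore $(\bW_{k2}+\lambda I)^{-1}\succeq c\,I$ with probability tending to one, for some constant $c>0$. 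The eigenvalues of $\bF_{k\lambda}$ coincide with those of $(\bW_{k2}+\lambda I)^{-1/2}\bW_{k1}(\bW_{k2}+\lambda I)^{-1/2}$, so
\[
\ell_{\max}(\bF_{k\lambda})\ge c\,\ell_{\max}(\bW_{k1}).
\]

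\textbf{Step 3: Lower bound on the numerator.} Note that $\ell_{\max}(\bW_{k1})=k^{-1}\|\bX\bP_k\|_2^2$. By the triangle inequality,
\[
\|\bX\bP_k\|_2\ge\|\bS\bP_k\|_2-\|\bN\bP_k\|_2,
\qquad\text{where}\quad \|\bS\bP_k\|_2^2=n\|\mathcal S_{nk}\|_2 .
\]
Conditioning on $\bY$, the projection $\bP_k$ is a fixed rank-$k$ projection, so $\bN\bP_k\stackrel{d}{=}\Sigma_0^{1/2}\bZ_xU$ for an orthonormal $n\times k$ matrix $U$. This gives $\|\bN\bP_k\|_2=O_p(\sqrt{p_1}+\sqrt k)=O_p(\sqrt{n})$. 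The crude bound $\|\bN\|_2$ would also suffice here, since $k\asymp n$. On the event $\|\mathcal S_{nk}\|_2\ge\varepsilon_\xi\log n$, we have $\|\bS\bP_k\|_2\ge\sqrt{\varepsilon_\xi n\log n}$, which dominates $\sqrt n$. Hence
\[
\ell_{\max}(\bW_{k1})\gtrsim\frac{n\log n}{k}\asymp\log n\to\infty .
\]

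\textbf{Step 4: Conclusion.} Combining Steps 2 and 3, with probability at least $\xi-o(1)$,
\[
\tilde\ell_{\max}(k,\lambda)\ge \frac{p_1^{2/3}\big(c'\log n-\Theta_1(\lambda,q)-o_p(1)\big)}{\Theta_2(\lambda,q)+o_p(1)}\to\infty .
\]
Here we need $\Theta_1(\lambda,q)$ and $\Theta_2(\lambda,q)$ to be bounded and bounded away from zero, which follows from Lemma~\ref{lemma:properties_s_fun} together with $\beta\in(0,\rho)$ staying away from the endpoints. Since $\xi$ is arbitrary, the probability of rejection tends to one.

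\textbf{Main obstacle.} The delicate point is Step 2, namely the nestedness $\bP_{k_0}\preceq\bP_k$ and the control of $\|\bX(I-\bP_k)\|_2$ when $k$ grows proportionally to $n$. In particular, one must verify that the Gram restriction $\hat\Lambda_k^T\bY\bY^T\hat\Lambda_k$ yields a projection onto $\mathrm{span}\{\bY^T\hat\Lambda_k\}$, which increases with $k$. This uses \ref{enum:lower_bound_eigen_Y} to ensure that these matrices are invertible.

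A secondary point is justifying the boundedness of $\Theta_1$ and $\Theta_2$ uniformly in $n$. If this proves awkward, I would instead argue directly that $\ell_{\max}(k,\lambda)$ under $H_a$ exceeds any fixed constant, while the consistent estimators stay $O_p(1)$.
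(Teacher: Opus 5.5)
Your proposal is correct and follows essentially the same route as the paper's proof. The shared steps are: control $\bW_{k2}+\lambda I_{p_1}$ in spectral norm (the paper via $\|\bW_{k2}-\bW_{k2}^{(0)}\|_2=o_p(1)$, you directly via $\|\bW_{k2}\|_2=O_p(1)$, which suffices since $\lambda>0$); show that the signal term dominates, so that $\ell_{\max}(k,\lambda)\gtrsim\|\mathcal S_{nk}\|_2$; and conclude from the consistency of $\hat\Theta_1(\lambda)$ and $\hat\Theta_2(\lambda)$. Your proof also makes two points explicit that the paper passes over: the nesting $\bP_{k_0}\preceq\bP_k$, which the paper uses tacitly when applying \eqref{eq:consistency_eq1} with $\bP_k$ in place of $\bP_{k_0}$, and the $\xi\uparrow1$ argument needed because \eqref{eq:condition_on_S_nk} holds only with probability at least $\xi$.
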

\noindent For the consistency of $\hat{\Theta}_1(\lambda)$ and $\hat{\Theta}_2(\lambda)$ under $H_a$, we refer to Lemma \ref{lemma:consistency_under_Ha} in the Appendix.


While Theorems~\ref{thm:consistency_trace_based} and \ref{thm:consistency_largest_root} provide useful qualitative insight into the proposed procedures, they offer limited understanding of how the test statistics depend on the covariance structures of $X$ and $Y$, the cross-covariance between the two random vectors, and the regularization parameter $\lambda$. In particular, understanding the role of $\lambda$ is essential for developing a principled data-driven procedure for selecting the regularization parameter.

To gain further insight into these effects, we consider a simplified rank-one alternative with a dominating signal. Specifically, we assume that $M\Lambda_m^T$ has rank one. Let
\[ M\Lambda_m^T \Sigma_y^{1/2} = d_n  \mu_n \nu_n^T,\]
be its singular value decomposition, where $d_n>0$ denotes the nonzero singular value, and $\mu_n\in\mathbb R^{p_1}$ and $\nu_n\in\mathbb R^{p_2}$ are the corresponding unit left and right singular vectors, respectively. We further assume that the signal dominates the noise in the sense that $\tr(\Sigma_0) /d_n^2 \to 0$. 

Under this parametrization, the model can be written as
\begin{equation}\label{eq:rank_one_model}
X = d_n\mu_n\nu_n^T Z_y + \Sigma_0^{1/2} Z_0, \qquad Y = \Sigma_y^{1/2}Z_y.
\end{equation}
Thus, the dependence between $X$ and $Y$ is concentrated along a single pair of directions, with signal strength controlled by $d_n$.

Under the assumed conditions, it can be shown that 
\begin{equation}\label{eq:characterize_trace_rankone}
\frac{1}{d_n^2} \mathcal{T}(k,\lambda) = \frac{1}{k}\big(\nu_n^T \bZ_y \bP_k \bZ_y^T \nu_n\big)  \mu_n^T \calD(-\lambda)\mu_n + o_p(1),
\end{equation}
and
\begin{equation}\label{eq:characterize_largest_rankone}
\frac{1}{d_n^2} \ell_{\max}(k,\lambda) =  \frac{1}{k}\big(\nu_n^T \bZ_y \bP_k \bZ_y^T \nu_n\big)  \mu_n^T \calD(-\lambda)\mu_n + o_p(1), 
\end{equation}
where $\mathcal D(-\lambda)$ is the deterministic equivalent of the ridge-regularized inverse $(\bW_{k2}+\lambda I_{p_1})^{-1}$ defined in \eqref{eq:deteministic_equivalent}. Here, under the rank-one alternative, the trace and largest eigenvalue of $\bF_{k\lambda}$ are asymptotically equivalent after scaling, which explains why the two statistics admit the same leading-order characterization. 

This characterization provides useful insight into the roles of the covariance structures of $X$ and $Y$, the cross-covariance between the two random vectors, and the tuning parameters $k$ and $\lambda$. In particular, the overall signal strength is governed by the singular value $d_n$, which reflects the interaction between the dependence component $M\Lambda_m^T$ and the covariance structure of $Y$. The effect of $k$ is manifested through both the rank of the projection $\bP_k$ and the aspect ratio $q$ appearing in $\mathcal D(-\lambda)$. Finally, the effects of $\Sigma_0$ and $\lambda$ are encapsulated by the quadratic form
\[
\mu_n^T\mathcal D(-\lambda)\mu_n,
\]
which measures how the ridge regularization interacts with the covariance structure of $X$ along the signal direction.

\section{Selection of the Regularization Parameter}\label{sec:selection_regularization_parameter}

We now consider the selection of the regularization parameter $\lambda$. Throughout this section, we assume that a candidate parameter space $\mathcal L$ is prescribed, from which the regularization parameter is selected. Recommended choices of $\mathcal L$ are discussed later in this section.

Heuristically, for a given alternative characterized by $M\Lambda_m^T$, a natural strategy is to choose $\lambda$ so as to maximize the power of the proposed test. This, however, requires understanding how the power depends on $\lambda$. In general settings, such dependence is highly intricate and is not directly amenable to analysis. 

By contrast, under the rank-one alternative considered in Section~\ref{sec:power_analysis}, the asymptotic behavior of the proposed statistics admits an explicit characterization. In particular, the influence of $\lambda$ is entirely captured by the quadratic form $\mu_n^T\mathcal D(-\lambda)\mu_n$. Combined with the normalizations in \eqref{eq:decision_normal} and \eqref{eq:decision_TW}, this characterization suggests that the dominant effect of $\lambda$ on the asymptotic power is governed by
\[
\SNR(\lambda,q)
\coloneqq
\frac{
\mu_n^T
\mathcal D(-\lambda)
\mu_n
}{
\Xi(\lambda,q)
},
\]
where $\Xi(\lambda,q)=\Omega_2(\lambda,q)$ for the trace-based test, and $\Xi(\lambda,q)=\Theta_2(\lambda,q)$ for the largest-root test. Accordingly, $\SNR(\lambda,q)$ serves as a proxy for the asymptotic signal-to-noise ratio of the proposed procedures.

Motivated by this characterization, we use $\SNR(\lambda,q)$ as a surrogate objective for selecting the regularization parameter in practice, without explicitly requiring the rank-one structural assumption on the alternative to hold. Accordingly, we choose $\lambda$ by maximizing $\SNR(\lambda,q)$ over the prescribed candidate parameter space $\mathcal L$.


The remaining challenge lies in the empirical treatment of $\SNR(\lambda,q)$, since it depends on the unknown population quantities $\mu_n$ and $\Sigma_0$. To begin with, Result~(ii) of Lemma~\ref{lemma:determinist_equivalent} suggests that, if $\mu_n$ were known, then $\SNR(\lambda,q)$ could be estimated by
\[
\frac{
\mu_n^T
(\bW_{k2}+\lambda I_{p_1})^{-1}
\mu_n
}{
\hat{\Xi}(\lambda)
},
\]
where $\hat{\Xi}(\lambda)$ denotes either $\hat{\Omega}_2(\lambda)$ or $\hat{\Theta}_2(\lambda)$, depending on the decision rule employed. 

It remains to characterize $\mu_n$ in practice. A potential approach is to estimate $\mu_n$ using the leading eigenvector of an estimator of
$\Sigma_{xy}\Sigma_y^{-1}\Sigma_{xy}^T$. 
However, consistency of such an estimator is typically available only under relatively strong structural assumptions. Moreover, even when consistency can be established, the resulting estimator generally remains statistically dependent on $\bW_{k2}$. This dependence substantially complicates the theoretical analysis and makes it difficult to establish consistency of the resulting estimator of $\SNR(\lambda,q)$.

In this paper, rather than attempting to estimate $\mu_n$ directly, we adopt a Bayesian decision-theoretic perspective and select $\lambda$ according to Bayes and minimax principles. Specifically, we place a prior distribution on $\mu_n$ and consider the corresponding prior expectation of $\SNR(\lambda,q)$, rather than the quantity $\SNR(\lambda,q)$ itself. We then aggregate over a class of such prior models through a minimax criterion.

Specifically, for any $\Pi = (\pi_0, \pi_1, \dots, \pi_\theta)$ with $\pi_i \in [0,\infty)$ for $i=0,\dots,\theta$, we consider the prior model on $\mu_n$, denoted by $\mathcal{P}(\Pi)$, defined through
\[
\mu_n = \frac{\tilde{\mu}}{\|\tilde{\mu}\|_2},
\qquad \text{where} \qquad
\tilde{\mu} \sim \mathcal{N}\Big(0, \sum_{i=0}^{\theta} \pi_i \Sigma_0^{i} \Big).
\]
Note that the induced prior distribution of $\mu_n$ is invariant under a common scaling of $\pi_i$'s, namely $\pi_i \mapsto c \pi_i$, $i=0,1,\dots,\theta$, for any $c>0$. Therefore, without loss of generality, we impose the normalization condition 
$\sum_{i=0}^\theta \pi_i = 1$.

We define the Bayes risk function by
\[
\mathfrak{R}(\lambda,\Pi)
=
\limsup_{n\to\infty}
-
\mathbb{E}_{\mathcal{P}(\Pi)}
\big[
\SNR(\lambda,q)
\big] = - \liminf_{n\to\infty} \mE_{\calP(\Pi)}[\SNR(\lambda,q)],
\]
where $\mathbb{E}_{\mathcal{P}(\Pi)}$ denotes expectation with respect to the prior distribution $\mathcal{P}(\Pi)$. 

While the following procedure can be generalized to any finite value of $\theta$, in this paper, we focus on the case of $\theta = 1$, which leads to a family of priors whose covariance is linear in $\Sigma_0$.  
The proposed prior is motivated by two considerations. First, note that
$\mu_n^T\mathcal{D}(-\lambda)\mu_n$
is primarily determined by the projection of $\mu_n$ onto the eigendirections of $\mathcal{D}(-\lambda)$, which coincide with those of $\Sigma_0$. The proposed prior model leads to a natural interpretation of how $\mu_n$ aligns with the eigenspaces of $\Sigma_0$ and provides explicit control over the expected projection of $\mu_n$ along each eigen direction.  
For example, when $\pi_0=1$ and $\pi_i=0$ for all $i\geq1$, the resulting prior assigns equal weight to all directions and therefore induces a uniform distribution over the unit sphere. Conversely, larger values of $\pi_i$, $i\geq 1$ associated with higher powers of $\Sigma_0$ place greater prior mass on directions aligned with the leading eigenvectors of $\Sigma_0$. 
Second, the polynomial covariance structure yields tractable expressions and estimators for the Bayes risk that depend only on the eigenvalues of $\bW_{k2}$. The estimator is especially stable when $\theta = 1$. This substantially simplifies both computation and implementation. 

In particular, under the prior model $\mathcal{P}(\Pi)$,
\begin{align*}
\mathfrak{R}(\lambda,\Pi)
&=
\limsup_{n\to\infty}
\frac{
-\mathbb{E}_{\mathcal{P}(\Pi)}
\mu_n^T
\mathcal{D}(-\lambda)
\mu_n
}{
\Xi(\lambda,q)
}
\\
&=
\limsup_{n\to\infty}
\frac{-
\sum_{i=0}^{\theta}
\pi_i
\Upsilon_i(\lambda)
}{
\Xi(\lambda,q)
\sum_{i=0}^{\theta}
\pi_i
\mathfrak{M}_i
},
\end{align*}
where
\[
\Upsilon_i(\lambda)
=
p_1^{-1}
\tr\!\big[
\mathcal{D}(-\lambda)\Sigma_0^i
\big],
\qquad
\mathfrak{M}_i
=
p_1^{-1}
\tr\!\big[
\Sigma_0^i
\big], \qquad i\geq 0.
\]

The involved quantities can be estimated as follows. First, $\Xi(\lambda,q)$ is estimated by $\hat{\Omega}_2(\lambda)$ or $\hat{\Theta}_2(\lambda)$, depending on the test statistic in use. The resulting estimator is denoted by $\hat{\Xi}(\lambda)$. 

Second, while clearly $\mathfrak{M}_0=1$, we consistently estimate $\mathfrak{M}_1$ by
\[
\hat{\mathfrak{M}}_1
=
p_1^{-1}\tr[\bW_{k2}].
\]
For $\mathfrak{M}_i$ with $i\geq2$, we refer to Lemma~1 of \cite{bai2010estimation} for the construction of consistent estimators, denoted by $\hat{\mathfrak{M}}_i$. Since our primary focus is the case $\theta=1$, further details are omitted.

Third, the quantities $\Upsilon_i(\lambda)$ are estimated  by  $\hat{\Upsilon}_i(\lambda)$, defined through the following recursively formula
\begin{equation}
\label{eq:def_Upsilon}
\begin{split}
\hat{\Upsilon}_{i+1}(\lambda)
&=
\frac{
1
}{
\lambda\hat{\varphi}(-\lambda)
}
\Big\{
\hat{\mathfrak{M}}_i
-
\lambda
\hat{\Upsilon}_i(\lambda)
\Big\},
\qquad
i=0,1,2,\dots,
\\
\hat{\Upsilon}_0(\lambda)
&=
q^{-1}
\Big[
\hat{\varphi}(-\lambda)
-
(1-q)/\lambda
\Big].
\end{split}
\end{equation}
Consistency of these estimators was established in \cite{li2020adaptable}.


It follows that the Bayes risk function can be estimated by
\[
\hat{\mathfrak{R}}(\lambda,\Pi)
=
\frac{-
\sum_{i=0}^{\theta}
\pi_i
\hat{\Upsilon}_i(\lambda)
}{
\hat{\Xi}(\lambda)
\sum_{i=0}^{\theta}
\pi_i
\hat{\mathfrak{M}}_i
}.
\]
\begin{definition}
\label{def:Bayes_choice}
For a given prior specification $\Pi$ and parameter space $\calL$, we define the \emph{Bayes selection} of the regularization parameter under $\mathcal{P}(\Pi)$ as
\[
\lambda_B(\Pi)
=
\arg\min_{\lambda\in\calL}
\mathfrak{R}(\lambda,\Pi).
\]
\end{definition}

Next, we collect all admissible prior specifications into the class
\[
\mathfrak{P}
=
\Bigg\{
\Pi=(\pi_0,\pi_1,\dots,\pi_\theta)
~:~
\pi_j\geq0,
\quad
\sum_{j=0}^{\theta}\pi_j=1
\Bigg\},
\]
where the recommended choice is $\theta=1$.

\begin{definition}
\label{def:minimax_choice}
Given the parameter space $\calL$, we define the \emph{minimax selection} of the regularization parameter over the class $\mathfrak{P}$ as
\[
\lambda_*
=
\arg\min_{\lambda\in\calL}
\sup_{\Pi\in\mathfrak{P}}
\mathfrak{R}(\lambda,\Pi).
\]
\end{definition}

In practice, the above procedure is implemented by replacing $\mathfrak{R}(\lambda,\Pi)$ with its estimator $\hat{\mathfrak{R}}(\lambda,\Pi)$. The resulting choices are denoted by $\hat{\lambda}_B(\Pi)$ and $\hat{\lambda}_*$, respectively.

While the proposed strategy provides a systematic and principled approach for selecting the regularization parameter in a data-driven manner, it also raises potential concerns regarding a ``double-dipping'' phenomenon. Indeed, the same dataset is used both for selecting $\lambda$ and for conducting the final test. In particular, although Theorems~\ref{thm:main_fix_k} and \ref{thm:main_diverge_k} establish the asymptotic distributions of the proposed statistics for any fixed $\lambda$, the additional randomness introduced through the data-driven selection of $\lambda$ may affect the limiting behavior of the resulting test statistic.

To mitigate the variability of $\hat{\lambda}_B$ and $\hat{\lambda}_*$, we recommend restricting the admissible parameter space $\mathcal L$ to a relatively coarse discrete grid over a reasonable range. We suggest choosing $\mathcal L$ as a set of $10$ evenly spaced points ranging from
\[
\bigl((0.1q)\wedge1\bigr)\hat{\mathfrak M}_1
\quad\text{to}\quad
2\hat{\mathfrak M}_1.
\]
Our simulation studies indicate that the performance of the proposed procedures is relatively insensitive to the choice of $\lambda$ within this range. Consequently, discretizing the parameter space in this manner is not expected to result in a substantial loss of efficiency.

The following lemma shows that, provided the candidate parameter space $\mathcal L$ is finite and the population Bayes or minimax regularization parameter is asymptotically identifiable, the corresponding data-driven selection is consistent. Consequently, replacing $\lambda$ by the selected regularization parameter does not affect the asymptotic null distributions established in Theorems~\ref{thm:main_fix_k} and \ref{thm:main_diverge_k}.

\begin{lemma}
    \label{lemma:consistency_Bayes_lambda}
    Suppose that Conditions~\ref{enum:high_dimension_regime}--\ref{enum:edger_regularity} hold. Assume that either $H_0$ holds, or $H_a$ holds and Condition~\eqref{eq:consistency_eq1} is satisfied for some sequence $k_0$ with $k\ge k_0$. Further, assume that
\[
\sup_{\lambda\in\mathcal L}
\big|
\hat{\Xi}(\lambda)-\Xi(\lambda,q)
\big|
\stackrel{P}{\longrightarrow}
0,
\]
where $\mathcal L$ is a finite candidate parameter set.
    \begin{itemize}
        \item[(i)] Fix a prior model $\calP(\Pi)$. Suppose that there exists $\lambda_\infty \in \calL$ and $\epsilon>0 $ such that for all sufficiently large $n$
        \[ \min_{\lambda\in \calL\setminus \{\lambda_\infty\} } \mathfrak{R}(\lambda, \Pi) > \mathfrak{R}(\lambda_\infty, \Pi) + \epsilon. \]
        Then, $\mP(\hat{\lambda}_B(\Pi) = \lambda_\infty ) \longrightarrow 1$. Consequently, the conclusion of Theorem \ref{thm:main_fix_k} or Theorem \ref{thm:main_diverge_k} remain valid when $\lambda$ is replaced by $\hat{\lambda}_B(\Pi)$. 
        \item[(ii)] Fix $\mathfrak{P}$. Suppose that there exists $\lambda_\infty \in \calL$ and $\epsilon >0$ such that for all sufficiently large $n$
        \[ \min_{\lambda\in\calL\setminus\{\lambda_\infty\}} \sup_{\Pi\in \mathfrak{P} } \mathfrak{R} (\lambda, \Pi) > \sup_{\Pi\in \mathfrak{P}} \mathfrak{R}(\lambda_\infty, \Pi) + \epsilon. \]
        Then, $\mP(\hat{\lambda}_*  = \lambda_\infty) \longrightarrow 1$. Consequently,  the conclusion of Theorem \ref{thm:main_fix_k} or Theorem \ref{thm:main_diverge_k} remain valid when $\lambda$ is replaced by $\hat{\lambda}_*$. 
    \end{itemize}
\end{lemma}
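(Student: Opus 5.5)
The argument has two parts. First we show uniform consistency of the estimated risk over the finite grid $\mathcal L$. Then a separation argument gives selection consistency, and a finite-union argument transfers the fixed-$\lambda$ limit theorems to the data-driven choice.

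\emph{Step 1 (uniform consistency of $\hat{\mathfrak R}$).} Since $\mathcal L$ is finite, it suffices to show, for each fixed $\lambda\in\mathcal L$ and each $\Pi$, that
\[
\hat{\mathfrak R}(\lambda,\Pi)-\mathfrak R_n(\lambda,\Pi)\stackrel{P}{\longrightarrow}0,
\]
where $\mathfrak R_n$ denotes the pre-limit ratio $-\sum_i\pi_i\Upsilon_i(\lambda)/\{\Xi(\lambda,q)\sum_i\pi_i\mathfrak M_i\}$. With $\theta=1$ the ingredients are $\hat\varphi(-\lambda)$, $\hat{\mathfrak M}_1$, $\hat\Upsilon_0$, $\hat\Upsilon_1$ and $\hat\Xi$.
\begin{itemize}
\item Under $H_0$, Lemma~\ref{lemma:determinist_equivalent}(iii) gives $\hat\varphi(-\lambda)\to\varphi(-\lambda)$.
\item Lemma~\ref{lemma:determinist_equivalent}(i) with $B=I$ and $B=\Sigma_0$ identifies $\Upsilon_0,\Upsilon_1$ with normalized traces of the resolvent. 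The recursion \eqref{eq:def_Upsilon} is then the algebraic identity obtained from $\mathcal D(-\lambda)(\lambda\varphi\Sigma_0+\lambda I)=I$, so $\hat\Upsilon_i-\Upsilon_i\to0$.
\item $\hat{\mathfrak M}_1-\mathfrak M_1\to0$ by a law of large numbers for $p_1^{-1}\tr\bW_{k2}$.
\item $\hat\Xi$ is handled by the hypothesis of the lemma.
\end{itemize}
The denominators are bounded away from zero: $\mathfrak M_0=1$ and $\mathfrak M_1$ is bounded below by \ref{enum:boundedness_spectral_norm}, and $\Xi$ is bounded below because $\varphi'>\varphi^2$ strictly and \ref{enum:boundedness_spectral_norm} holds. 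The continuous mapping theorem then gives consistency. For the minimax part, $\Pi$ ranges over the compact simplex. Since $\mathfrak R$ and $\hat{\mathfrak R}$ are ratios of affine functions of $\Pi$ with denominators bounded away from zero, convergence holds uniformly in $\Pi$, so $\sup_\Pi\hat{\mathfrak R}(\lambda,\Pi)-\sup_\Pi\mathfrak R_n(\lambda,\Pi)\to0$.

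\emph{Step 2 (the alternative).} Under $H_a$, $\bW_{k2}$ involves $\bX(I-\bP_k)$, which contains the signal term $M\Lambda_m^T\bY(I-\bP_k)$. Because $k\ge k_0$, the projection $I-\bP_k$ is dominated by $I-\bP_{k_0}$, so condition \eqref{eq:consistency_eq1} makes $n^{-1/2}$ times this term $o_p(1)$ in spectral norm. By Weyl-type and resolvent perturbation bounds at $z=-\lambda$, where the resolvent norm is at most $1/\lambda$, the quantities $\hat\varphi(-\lambda)$, $\hat\varphi'(-\lambda)$ and $p_1^{-1}\tr\bW_{k2}$ differ from their noise-only versions by $o_p(1)$. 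Step 1 therefore carries over, and I would invoke Lemma~\ref{lemma:consistency_under_Ha} for the same purpose.

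\emph{Step 3 (selection and transfer).} On the event where $\sup_{\lambda}|\hat{\mathfrak R}-\mathfrak R_n|<\epsilon/2$, which has probability tending to one, the $\epsilon$-separation gap forces the argmin to be $\lambda_\infty$. Here one must check that $\mathfrak R_n$ (a limsup over $n$) versus the pre-limit is handled by passing to "sufficiently large $n$" in the gap assumption. Hence $\mP(\hat\lambda=\lambda_\infty)\to1$. For transfer, write the normalized statistic at $\hat\lambda$ as the statistic at $\lambda_\infty$ plus a term that vanishes on $\{\hat\lambda=\lambda_\infty\}$. Then
\[
|\mP(T(\hat\lambda)\le x)-\mP(T(\lambda_\infty)\le x)|\le\mP(\hat\lambda\ne\lambda_\infty)\to0,
\]
and Theorem~\ref{thm:main_fix_k} or \ref{thm:main_diverge_k} applied at the fixed $\lambda_\infty$ gives the result.

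The main obstacle is Step 2: the perturbation control of $\hat\varphi'(-\lambda)$, and hence $\hat\Xi$, under $H_a$. The signal component is random and dependent on the noise through $\bY$. I would try to handle this by conditioning on $\bY$, so that the signal becomes a deterministic low-norm additive perturbation of $\bX(I-\bP_k)$.
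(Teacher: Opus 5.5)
Your proposal is correct and follows essentially the same route as the paper. It establishes uniform consistency of $\hat{\mathfrak R}$ over the finite grid $\mathcal L$ and the compact simplex $\mathfrak P$, uses the $\epsilon$-separation to force the argmin to $\lambda_\infty$, and then transfers the fixed-$\lambda$ limit through $\mP(\hat\lambda\neq\lambda_\infty)\to 0$. You derive consistency of $\hat\Upsilon_i$ from the identity $\mathcal D(-\lambda)(\lambda\varphi(-\lambda)\Sigma_0+\lambda I_{p_1})=I_{p_1}$ where the paper cites \cite{li2020adaptable}. The paper also leaves implicit the $H_a$ case and the need to read the separation hypothesis for the finite-$n$ ratio rather than the limsup, both of which you address.

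The ``main obstacle'' you flag is not really one. $\hat\varphi'(-\lambda)$ enters $\hat{\mathfrak R}$ only through $\hat\Xi$, whose consistency is a hypothesis of the lemma. In any case, your spectral-norm perturbation bound at $z=-\lambda$ (resolvent norm at most $1/\lambda$) already controls it without conditioning on $\bY$.
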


\section{Practical Guidance}\label{sec:practical_guidance}  
In this section, we discuss several practical considerations for implementing the proposed methodology. A fundamental feature of the model in \eqref{eq:model_PCR} is that the roles of $X$ and $Y$ are asymmetric. In general, interchanging their roles leads to different testing procedures and may yield different conclusions. Consequently, when two sets of variables are jointly observed, one must first determine which random vector should be designated as $X$ and which as $Y$. 

To address this issue, we propose the following data-driven rule based on the spectral characteristics of the corresponding sample covariance matrices. Factor structures are common in applications such as economics, finance, and psychometrics, where PC-based dimension reduction is particularly effective. Accordingly, if at least one sample covariance matrix exhibits isolated spikes, indicating the presence of strong latent factors, we designate as $Y$ the random vector whose sample covariance matrix contains the larger number of spikes. Otherwise, if neither covariance matrix exhibits isolated spikes, or if both contain the same number of spikes, we designate as $Y$ the random vector with the larger dimension.

The second practical issue concerns the selection of the reduced dimension parameter $k$. Recall that the power analysis under $H_a$ relies on the assumption that the dependence between $X$ and $Y$ can be asymptotically captured by the leading principal eigenspace of $Y$. Since the intrinsic reduced dimension $m$ is generally unknown, we recommend a sequential, data-driven selection procedure.

Specifically, we begin with a relatively small value of $k$ and examine the change in the spectrum of $\bW_{k2}$ when $k$ is increased to $k+1$. We continue increasing $k$ until the spectrum of $\bW_{k2}$ becomes sufficiently stable. Motivated by our numerical experiments, we recommend the stopping criterion
\[ \Big| \tr(\bW_{k+1,\,2}) - \tr(\bW_{k2}) \Big| \leq 0.05\times \frac{1}{p_1} \tr(\bW_{k2}). \]
In practice, we recommend restricting the search to
\[ \min(0.01n,p_2) \le k \le \min(0.5n,p_2).\]

\section{Simulation Studies}\label{sec:simulation}
In this section, we evaluate the finite-sample performance of the proposed procedures and compare them with several representative methods from the existing literature through Monte Carlo experiments.

The proposed procedures are implemented under the following configurations. We consider three choices of the regularization parameter: $\lambda=0.5$, $\lambda=1$, and the data-driven minimax choice $\hat{\lambda}_*$ developed in Section~\ref{sec:selection_regularization_parameter} over the recommended candidate parameter space. To examine the effect of the reduced dimension parameter, we consider four values of $k$, corresponding to $k/p_2\in\{0.01,0.05,0.10,0.30\}$, thereby covering both the small-$k$ and large-$k$ regimes. Under the null hypothesis, the normalization parameters $\Omega_1$, $\Omega_2$, $\Theta_1$, and $\Theta_2$ are replaced by their proposed estimators. In particular, $\Theta_1$ and $\Theta_2$ are estimated using the recommended configuration described in Section~\ref{sec:estimation}.

We compare the proposed procedures with the following CCA-based methods from the existing literature.
\begin{itemize}
\item[(a)] {\bf YP-LRT.} The regularized likelihood ratio test proposed by \cite{yang2015independence}. See Remark~\ref{remark:compare_yang_pan} for a discussion of its relationship to the proposed procedures. This method is applicable only when $p_2<n$.
\item[(b)] {\bf HPY-Largest.} The calibrated version of the classical Roy's largest-root test proposed by \cite{han2018unified} for settings in which $p_1$ and $p_2$ are comparable to $n$. This method is applicable only when $p_1+p_2<n$.

\end{itemize}
While the critical values for {\bf YP-LRT} are difficult to estimate accurately, we use Monte Carlo critical values for all methods to facilitate a fair comparison of empirical power. Consequently, all procedures are calibrated to have the same nominal significance level.

\subsection{Settings}\label{subsec:simulation_setting}
Three distributions are considered for the entries of $Z_x$ and $Z_y$: the standard normal distribution, a standardized $t$-distribution with six degrees of freedom, and a standardized Poisson distribution. Since the results are nearly identical across the three distributions, only those under the Gaussian setting are reported. The dimensions are taken as $p_1,p_2\in\{100,200\}$ with sample size $n\in\{200,400\}$. For the covariance matrix of $X$, we consider
$\Sigma_0\in
\left\{
\Sigma_{\rm ID}(p_1),\,
\Sigma_{\rm Poly}(p_1),\,
\Sigma_{\rm AR}(p_1)
\right\}
$
where the three covariance models are:
\begin{itemize}
\item[(i)] {The identity model:} $\Sigma_{\rm ID}(p)=I_p$;
\item[(ii)] {Polynomial spectral decay model:}  $\Sigma_{\rm Poly}(p) \propto\operatorname{Diag}(\tau_1,\ldots,\tau_p)$, where
$\tau_j = (1+j/p)^{-6}$; 
\item[(iii)] {AR(1)-autocovariance model:} $\Sigma_{\rm AR}(p)\propto \big(0.6^{|i-j|}\big)_{i,j=1}^p$.
\end{itemize}
In each case, the matrix is normalized so that $\tr(\Sigma(p)) =p$.
For the covariance matrix $\Sigma_y$ of $Y$, we generate a spiked covariance model whose spectrum consists of the spectrum of $\Sigma_{\rm Poly}(p_2-5)$ together with five additional spikes. The spike eigenvalues are evenly spaced between $2\tau_{\max}$ and $1.2\tau_{\max}$, where $\tau_{\max}$ denotes the largest eigenvalue of $\Sigma_{\rm Poly}(p_2-5)$.

Under the alternative hypothesis, for each combination of $p_1$, $p_2$, $\Sigma_y$, and $\Sigma_0$, we generate the dependence structure according to
\[
M\Lambda_m^T\Sigma_y^{1/2}
=
SS \times
D_{\rm Haar}
\operatorname{Diag}(d_1,\ldots,d_m)
\Lambda_m^T,
\]
where $SS>0$ is a scalar controlling the signal strength, and, for each Monte Carlo replication, $D_{\rm Haar}\in\mathbb R^{p_1\times m}$ is a random orthonormal matrix drawn according to the Haar measure. We consider the following two configurations for the singular values $\{d_j\}$:
\begin{itemize}
\item[(i)] (Low-rank correlation) We set $m=5$ and choose $d_j\propto j^{-2}$, $j=1,\ldots,5$, followed by normalization so that $\sum_{j=1}^{5}d_j^2=1$.
\item[(ii)] (Exp-Decay correlation) We set $m=\min(p_1,p_2)$ and assume that the squared singular values satisfy $d_j^2\propto0.8^{\,j-1}$, $j=1,\ldots,m$. The singular values are then normalized so that $\sum_{j=1}^{m}d_j^2=1.$
\end{itemize}

\subsection{Empirical sizes}\label{subsec:simulation_null}

Table~\ref{tab:emp_sizes} reports the empirical sizes of the proposed procedures under the investigated settings at the nominal significance level of $5\%$. The results indicate that the trace-based procedure $\LRT(k,\lambda)$ provides satisfactory control of the type-I error rate when the reduced dimension parameter $k$ is relatively small, with the approximation becoming increasingly accurate as the sample size $n$ increases. As expected, the empirical sizes become progressively inflated as $k$ increases. 

For the largest-root procedure $\ell_{\max}(k,\lambda)$, the tests are slightly conservative when $k$ is small, with empirical sizes typically ranging from $1\%$ to $3\%$. As $k$ increases, the conservativeness is reduced, and the empirical sizes become increasingly close to the nominal level. For example, when $k/p_2=0.3$, the empirical sizes are approximately $4\%$ across the simulation settings.

The observed finite-sample behavior is consistent with the asymptotic regimes underlying the proposed approximations. Theorem~\ref{thm:main_fix_k} derives the asymptotic approximation to the finite-sample distribution of $\LRT(k,\lambda)$ under the assumption that $k$ remains fixed, whereas Theorem~\ref{thm:main_diverge_k} derives the corresponding approximation for $\ell_{\max}(k,\lambda)$ under a diverging-$k$ regime. The simulation results confirm that these approximations are effective in the finite-sample settings for which they are intended.

Combining these findings with the theoretical results, our numerical studies suggest using the trace-based procedure $\LRT(k,\lambda)$ when the reduced dimension is relatively small, for example when $k\lesssim 20$, and using the largest-root procedure $\ell_{\max}(k,\lambda)$ when $k$ is larger. This recommendation is consistent with the asymptotic regimes under which the normal and Tracy--Widom approximations are derived, respectively.

For both procedures, the empirical sizes obtained with fixed values of $\lambda$ are very similar to those obtained with the data-driven minimax choice of $\lambda$. This suggests that the additional variability introduced by the data-driven selection of the regularization parameter is negligible in terms of size control.

\begin{table}[htbp]
\centering
\small
\caption{Empirical sizes ($\times 100\%$) of the proposed procedures at the $5\%$ significance level. The innovation vectors $Z_x$ and $Z_y$ follow the standard normal distribution. The normalization parameters are estimated using the proposed estimators $\hat{\Omega}_1$, $\hat{\Omega}_2$, $\hat{\Theta}_1$, and $\hat{\Theta}_2$.}
\label{tab:emp_sizes}
\resizebox{\linewidth}{!}{
\begin{tabular}{lrrr|ccc|ccc|ccc|ccc}
\toprule
 &  &  & \multicolumn{1}{c}{} & \multicolumn{6}{c}{$n=200$} & \multicolumn{6}{c}{$n=400$} \\
 &  &  &\multicolumn{1}{c}{}  & \multicolumn{3}{c}{Trace-Based} & \multicolumn{3}{c}{Largest-Root} & \multicolumn{3}{c}{Trace-Based} & \multicolumn{3}{c}{Largest-Root} \\  \cmidrule(lr){5-7}\cmidrule(lr){8-10} \cmidrule(lr){11-13} \cmidrule(lr){14-16}
$\Sigma_0$ & $p_1$ & $p_2$ & \multicolumn{1}{c|}{$k/p_2$} & $\lambda=0.5$ & $\lambda=1$ & $\lambda=\hat\lambda_*$ & $\lambda=0.5$ & $\lambda=1$ & $\lambda=\hat\lambda_*$ & $\lambda=0.5$ & $\lambda=1$ & $\lambda=\hat\lambda_*$ & $\lambda=0.5$ & $\lambda=1$ & $\lambda=\hat\lambda_*$ \\
\midrule
\multirow{20}{*}{ID} & \multirow{10}{*}{100} & \multirow{5}{*}{100} & 0.01 & 5.86 & 5.86 & 5.68 & 2.02 & 2.12 & 2.18 & 6.22 & 6.20 & 5.80 & 2.12 & 2.06 & 2.16 \\
 &  &  & 0.05 & 6.56 & 6.70 & 6.46 & 3.02 & 2.98 & 3.14 & 5.80 & 5.68 & 5.44 & 3.44 & 3.42 & 3.32 \\
 &  &  & 0.1 & 6.64 & 6.58 & 6.20 & 3.94 & 3.98 & 3.86 & 5.90 & 5.76 & 5.80 & 3.72 & 4.02 & 3.84 \\
 &  &  & 0.2 & 7.96 & 7.24 & 6.94 & 4.36 & 4.36 & 4.28 & 6.92 & 6.34 & 6.46 & 3.76 & 3.72 & 3.72 \\
 &  &  & 0.3 & 8.82 & 8.18 & 7.66 & 4.80 & 4.52 & 4.46 & 6.50 & 6.36 & 6.14 & 4.16 & 4.22 & 4.24 \\
  \cline{3-16}
 &  & \multirow{5}{*}{200} & 0.01 & 6.90 & 6.64 & 6.48 & 2.72 & 2.62 & 2.72 & 6.22 & 6.16 & 5.96 & 2.88 & 2.94 & 3.04 \\
 &  &  & 0.05 & 6.84 & 6.48 & 6.20 & 3.48 & 3.42 & 3.24 & 6.14 & 5.98 & 5.70 & 3.70 & 3.62 & 3.56 \\
 &  &  & 0.1 & 8.20 & 7.58 & 7.06 & 4.36 & 4.30 & 4.08 & 6.18 & 6.10 & 5.86 & 4.14 & 4.20 & 3.96 \\
 &  &  & 0.2 & 10.12 & 9.32 & 8.36 & 5.10 & 4.82 & 4.48 & 7.12 & 6.68 & 6.60 & 4.30 & 4.30 & 4.26 \\
 &  &  & 0.3 & 13.38 & 11.00 & 9.88 & 5.16 & 4.74 & 4.42 & 8.32 & 7.74 & 7.40 & 4.48 & 4.48 & 4.86 \\
  \cline{2-16}
 & \multirow{10}{*}{200} & \multirow{5}{*}{100} & 0.01 & 6.28 & 5.92 & 5.64 & 1.98 & 2.04 & 1.86 & 5.92 & 5.78 & 5.86 & 1.84 & 1.80 & 1.68 \\
 &  &  & 0.05 & 5.94 & 5.62 & 5.50 & 3.56 & 3.58 & 3.54 & 6.08 & 5.96 & 5.68 & 3.30 & 3.24 & 3.50 \\
 &  &  & 0.1 & 6.60 & 6.08 & 5.92 & 3.60 & 3.64 & 3.68 & 5.90 & 5.74 & 5.38 & 3.52 & 3.46 & 3.76 \\
 &  &  & 0.2 & 7.92 & 7.20 & 6.86 & 3.86 & 4.06 & 3.92 & 6.28 & 5.72 & 5.74 & 4.54 & 4.84 & 4.58 \\
 &  &  & 0.3 & 9.52 & 8.62 & 7.64 & 4.88 & 4.88 & 4.60 & 7.18 & 6.84 & 6.42 & 4.30 & 4.08 & 4.16 \\
  \cline{3-16}
 &  & \multirow{5}{*}{200} & 0.01 & 5.56 & 5.80 & 5.78 & 2.68 & 2.64 & 2.54 & 5.70 & 5.60 & 5.40 & 2.60 & 2.62 & 2.42 \\
 &  &  & 0.05 & 6.66 & 6.48 & 6.04 & 3.56 & 3.26 & 3.38 & 6.22 & 5.96 & 6.00 & 3.96 & 3.82 & 3.54 \\
 &  &  & 0.1 & 8.30 & 7.44 & 7.08 & 4.42 & 4.10 & 3.90 & 6.42 & 6.12 & 5.86 & 4.42 & 4.56 & 4.58 \\
 &  &  & 0.2 & 10.94 & 9.74 & 8.72 & 4.98 & 5.18 & 4.92 & 7.58 & 7.08 & 6.58 & 4.24 & 4.12 & 3.96 \\
 &  &  & 0.3 & 14.06 & 12.46 & 11.44 & 5.04 & 4.84 & 4.52 & 8.60 & 7.44 & 6.96 & 4.14 & 4.12 & 4.04 \\
  \cline{1-16}
\multirow{20}{*}{AR} & \multirow{10}{*}{100} & \multirow{5}{*}{100} & 0.01 & 6.64 & 6.58 & 6.52 & 1.98 & 2.24 & 2.12 & 6.62 & 6.66 & 6.48 & 2.06 & 2.22 & 2.10 \\
 &  &  & 0.05 & 6.38 & 6.28 & 6.46 & 3.08 & 2.96 & 3.20 & 5.58 & 5.58 & 5.56 & 2.56 & 2.64 & 2.62 \\
 &  &  & 0.1 & 7.04 & 6.78 & 7.08 & 3.36 & 3.42 & 3.34 & 5.50 & 5.52 & 5.54 & 2.92 & 3.02 & 3.28 \\
 &  &  & 0.2 & 7.18 & 7.02 & 7.28 & 3.88 & 3.66 & 4.04 & 6.00 & 5.62 & 5.86 & 3.56 & 3.42 & 3.60 \\
 &  &  & 0.3 & 9.20 & 8.52 & 9.18 & 4.18 & 4.00 & 4.18 & 6.98 & 6.82 & 6.92 & 4.00 & 4.06 & 4.26 \\
  \cline{3-16}
 &  & \multirow{5}{*}{200} & 0.01 & 6.06 & 5.78 & 6.00 & 2.06 & 2.14 & 2.10 & 5.96 & 6.10 & 5.72 & 2.20 & 2.12 & 2.40 \\
 &  &  & 0.05 & 6.00 & 5.86 & 6.26 & 3.18 & 3.30 & 3.26 & 5.68 & 5.38 & 5.66 & 3.00 & 3.16 & 2.98 \\
 &  &  & 0.1 & 7.34 & 6.98 & 7.40 & 3.96 & 4.02 & 3.96 & 6.88 & 6.80 & 7.10 & 3.86 & 4.16 & 3.90 \\
 &  &  & 0.2 & 10.06 & 9.36 & 10.12 & 4.30 & 4.02 & 4.32 & 6.70 & 6.56 & 6.96 & 3.78 & 3.62 & 3.98 \\
 &  &  & 0.3 & 12.20 & 11.00 & 12.24 & 4.78 & 4.22 & 4.62 & 7.68 & 7.56 & 7.98 & 3.94 & 4.10 & 4.02 \\
  \cline{2-16}
 & \multirow{10}{*}{200} & \multirow{5}{*}{100} & 0.01 & 5.78 & 5.90 & 5.90 & 1.84 & 1.82 & 1.82 & 6.00 & 6.02 & 6.32 & 1.90 & 1.72 & 1.98 \\
 &  &  & 0.05 & 6.72 & 6.26 & 6.62 & 2.68 & 2.62 & 2.72 & 5.84 & 5.70 & 5.92 & 3.14 & 3.12 & 3.28 \\
 &  &  & 0.1 & 6.42 & 6.28 & 6.30 & 2.94 & 3.12 & 3.00 & 6.36 & 6.18 & 6.38 & 3.52 & 3.38 & 3.62 \\
 &  &  & 0.2 & 7.98 & 7.22 & 7.40 & 4.00 & 4.00 & 3.98 & 6.54 & 6.32 & 6.66 & 3.96 & 3.76 & 4.04 \\
 &  &  & 0.3 & 9.16 & 8.56 & 8.80 & 4.10 & 3.88 & 3.86 & 6.52 & 6.34 & 6.60 & 3.76 & 3.72 & 3.68 \\
  \cline{3-16}
 &  & \multirow{5}{*}{200} & 0.01 & 6.04 & 6.04 & 6.10 & 2.54 & 2.52 & 2.66 & 6.36 & 6.38 & 6.30 & 2.76 & 2.68 & 2.90 \\
 &  &  & 0.05 & 6.84 & 6.36 & 6.66 & 3.88 & 3.82 & 3.72 & 6.16 & 6.14 & 6.26 & 3.32 & 3.44 & 3.64 \\
 &  &  & 0.1 & 8.16 & 7.54 & 7.80 & 3.62 & 3.66 & 3.86 & 6.46 & 6.30 & 6.56 & 3.54 & 3.74 & 3.60 \\
 &  &  & 0.2 & 10.54 & 9.54 & 9.86 & 4.02 & 3.80 & 3.58 & 7.20 & 6.86 & 7.18 & 4.30 & 4.34 & 4.26 \\
 &  &  & 0.3 & 13.38 & 11.86 & 11.84 & 4.98 & 4.72 & 4.24 & 8.54 & 8.00 & 8.54 & 4.56 & 4.44 & 4.64 \\
  \cline{1-16}
\multirow{20}{*}{Poly} & \multirow{10}{*}{100} & \multirow{5}{*}{100} & 0.01 & 5.90 & 5.84 & 6.02 & 1.50 & 1.58 & 1.80 & 6.38 & 6.16 & 6.24 & 1.92 & 1.74 & 1.88 \\
 &  &  & 0.05 & 6.52 & 6.46 & 6.58 & 3.06 & 3.02 & 3.26 & 6.00 & 6.08 & 5.78 & 3.50 & 3.28 & 3.40 \\
 &  &  & 0.1 & 6.10 & 6.58 & 6.44 & 3.32 & 3.48 & 3.68 & 5.82 & 5.80 & 5.94 & 3.30 & 3.10 & 3.44 \\
 &  &  & 0.2 & 7.54 & 7.30 & 7.84 & 3.62 & 3.56 & 3.52 & 6.22 & 6.16 & 6.16 & 3.50 & 3.32 & 3.88 \\
 &  &  & 0.3 & 9.28 & 8.84 & 9.90 & 4.42 & 4.66 & 4.16 & 7.08 & 6.86 & 7.26 & 3.58 & 3.48 & 3.66 \\
  \cline{3-16}
 &  & \multirow{5}{*}{200} & 0.01 & 6.72 & 6.84 & 6.68 & 2.04 & 2.04 & 2.22 & 5.34 & 5.18 & 5.54 & 1.86 & 1.76 & 1.98 \\
 &  &  & 0.05 & 6.98 & 6.74 & 7.18 & 3.54 & 3.42 & 3.70 & 6.48 & 6.28 & 6.24 & 3.68 & 3.60 & 3.60 \\
 &  &  & 0.1 & 7.44 & 7.04 & 7.48 & 3.50 & 3.98 & 3.54 & 5.78 & 5.62 & 5.86 & 3.74 & 3.50 & 4.08 \\
 &  &  & 0.2 & 9.04 & 8.34 & 9.36 & 3.96 & 4.22 & 3.72 & 7.04 & 6.62 & 7.22 & 3.62 & 3.32 & 4.02 \\
 &  &  & 0.3 & 12.94 & 12.08 & 13.94 & 4.74 & 4.58 & 4.58 & 7.28 & 7.24 & 7.74 & 3.88 & 4.00 & 3.76 \\
  \cline{2-16}
 & \multirow{10}{*}{200} & \multirow{5}{*}{100} & 0.01 & 6.66 & 6.58 & 6.66 & 2.10 & 2.02 & 2.10 & 6.46 & 6.72 & 6.22 & 1.70 & 1.56 & 1.82 \\
 &  &  & 0.05 & 6.22 & 6.06 & 6.18 & 2.92 & 2.80 & 2.80 & 5.24 & 5.14 & 5.24 & 2.58 & 2.66 & 2.72 \\
 &  &  & 0.1 & 7.00 & 6.68 & 6.76 & 3.70 & 3.76 & 3.74 & 5.52 & 5.44 & 5.84 & 3.46 & 3.46 & 3.40 \\
 &  &  & 0.2 & 7.42 & 7.42 & 7.42 & 3.70 & 3.70 & 3.70 & 6.50 & 6.56 & 6.82 & 3.52 & 3.64 & 3.54 \\
 &  &  & 0.3 & 9.14 & 8.88 & 8.92 & 4.06 & 4.04 & 3.90 & 7.28 & 6.98 & 7.56 & 4.16 & 3.80 & 4.48 \\
  \cline{3-16}
 &  & \multirow{5}{*}{200} & 0.01 & 6.02 & 6.24 & 6.16 & 2.34 & 2.22 & 2.20 & 5.98 & 5.98 & 5.84 & 2.26 & 2.26 & 2.46 \\
 &  &  & 0.05 & 7.18 & 6.72 & 7.04 & 3.86 & 3.96 & 4.00 & 5.62 & 5.56 & 5.64 & 3.44 & 3.52 & 3.38 \\
 &  &  & 0.1 & 8.26 & 7.82 & 8.16 & 4.02 & 4.18 & 4.20 & 6.16 & 5.94 & 6.22 & 3.80 & 3.92 & 3.90 \\
 &  &  & 0.2 & 11.20 & 9.68 & 9.96 & 3.90 & 4.52 & 4.40 & 7.68 & 7.32 & 7.90 & 4.30 & 4.36 & 4.46 \\
 &  &  & 0.3 & 14.20 & 12.68 & 12.70 & 4.42 & 4.10 & 3.96 & 8.18 & 7.80 & 8.94 & 4.10 & 3.86 & 3.94 \\
\bottomrule
\end{tabular}}
\end{table}

\subsection{Empirical power}\label{subsec:simulation_power}

Figures~\ref{fig:power_poly_exp_decay}--\ref{fig:power_ID_exp_decay} show the size-adjusted empirical power as a function of the signal strength parameter $SS$ under a representative subset of the simulation settings. The remaining settings are presented in Figures~\ref{fig:power_AR_exp_decay}--\ref{fig:power_AR_low_rank} in Appendix~\ref{sec:addition_simulation_result}. To improve readability, we display only a representative subset of the combinations of $k$ and $\lambda$, thereby avoiding excessive overlap among the curves. Consistent with the practical recommendations in Section~\ref{sec:selection_regularization_parameter}, only the results obtained using the data-driven minimax choice $\lambda=\hat{\lambda}_*$ are reported. For the trace-based procedure $\LRT(k,\lambda)$, which is recommended when the reduced dimension is relatively small, we report the results for $k_1=0.01p_2$ and $k_2=0.05p_2$. As shown in Table~\ref{tab:emp_sizes}, these choices of $k$ also provide satisfactory control of the type-I error rate. For the largest-root procedure $\ell_{\max}(k,\lambda)$, which is recommended for larger values of $k$, we report the results for $k_3=0.10p_2$ and $k_4=0.30p_2$.

Several conclusions can be drawn from the simulation results. First, the proposed trace-based procedure exhibits the highest empirical power across all simulation settings considered. The largest-root procedure performs comparably under the Low-rank correlation model, but is generally less powerful under the Exp-Decay correlation model. This behavior is expected. Under the low-rank model, the signal is concentrated in a few leading eigendirections, allowing the largest-root statistic to capture most of the dependence through its dominant eigenvalue. By contrast, under the Exp-Decay model, the signal is distributed across many eigendirections. In this case, the trace-based statistic aggregates information over multiple directions, whereas the largest-root statistic utilizes only the leading one, resulting in a loss of power.

\begin{figure}[htbp]
    \centering
    \begin{subfigure}[t]{0.23\linewidth}
        \centering
        \includegraphics[width=\linewidth]{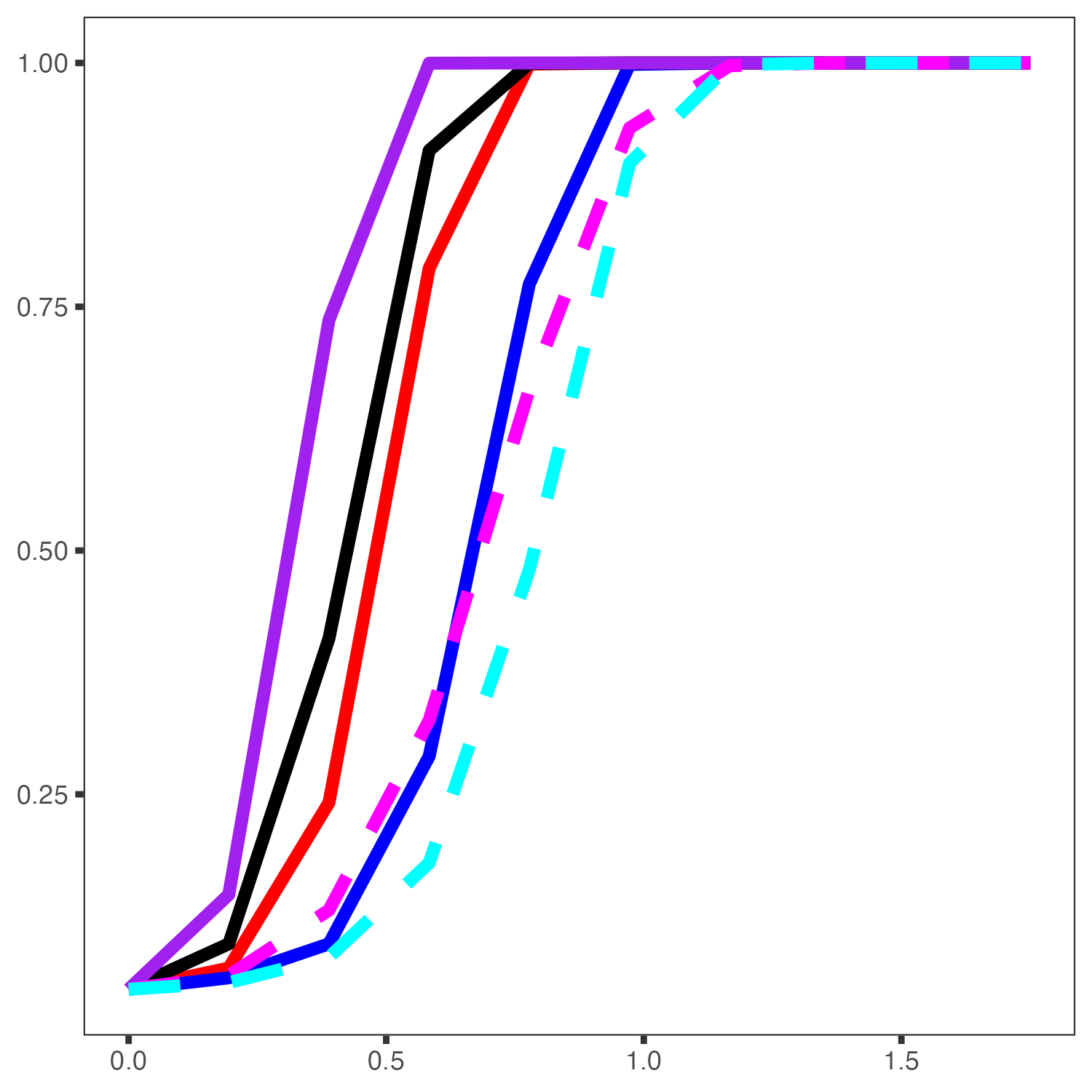}
    \end{subfigure}%
    \begin{subfigure}[t]{0.23\linewidth}
        \centering
        \includegraphics[width=\linewidth]{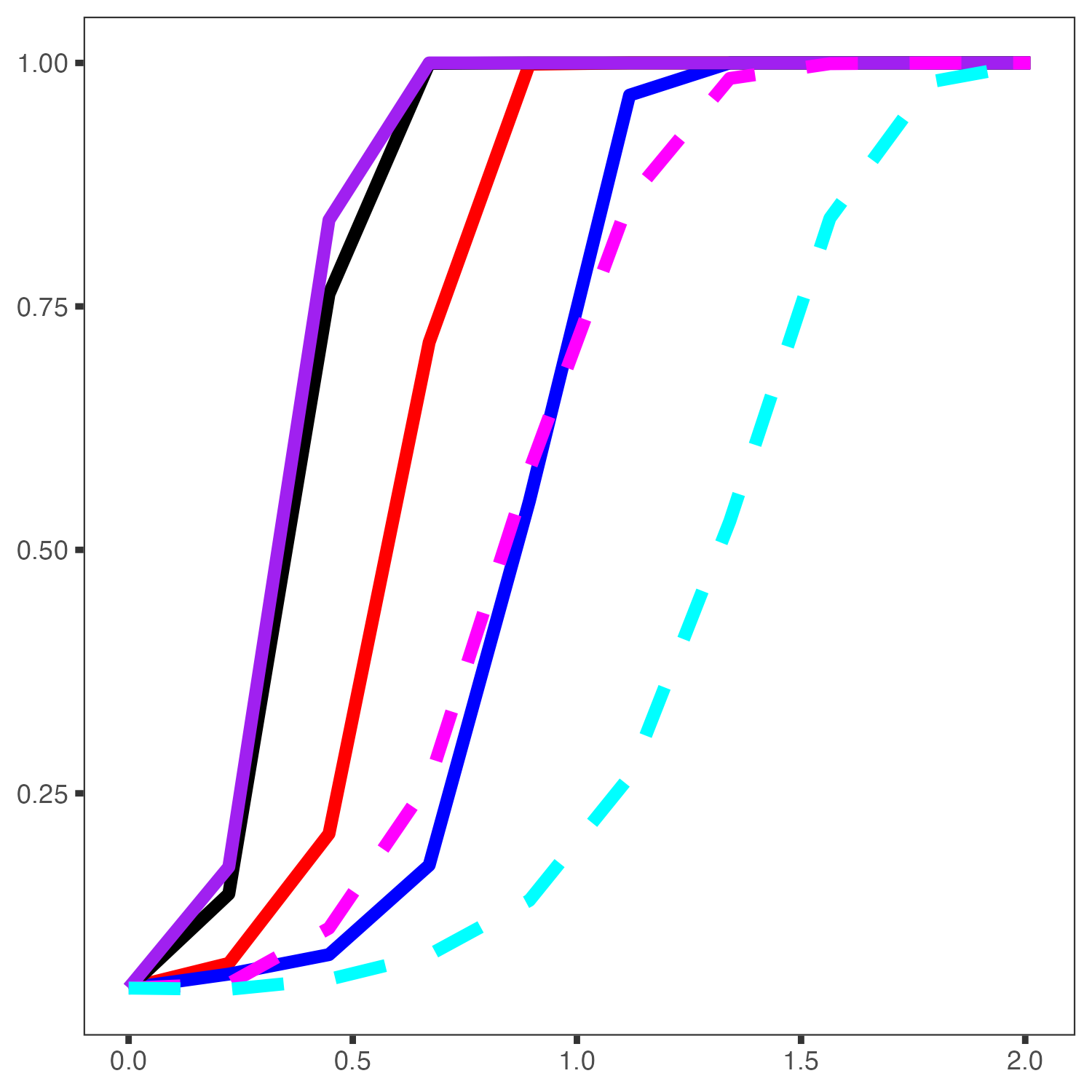}
    \end{subfigure}
    \begin{subfigure}[t]{0.23\linewidth}
        \centering
        \includegraphics[width=\linewidth]{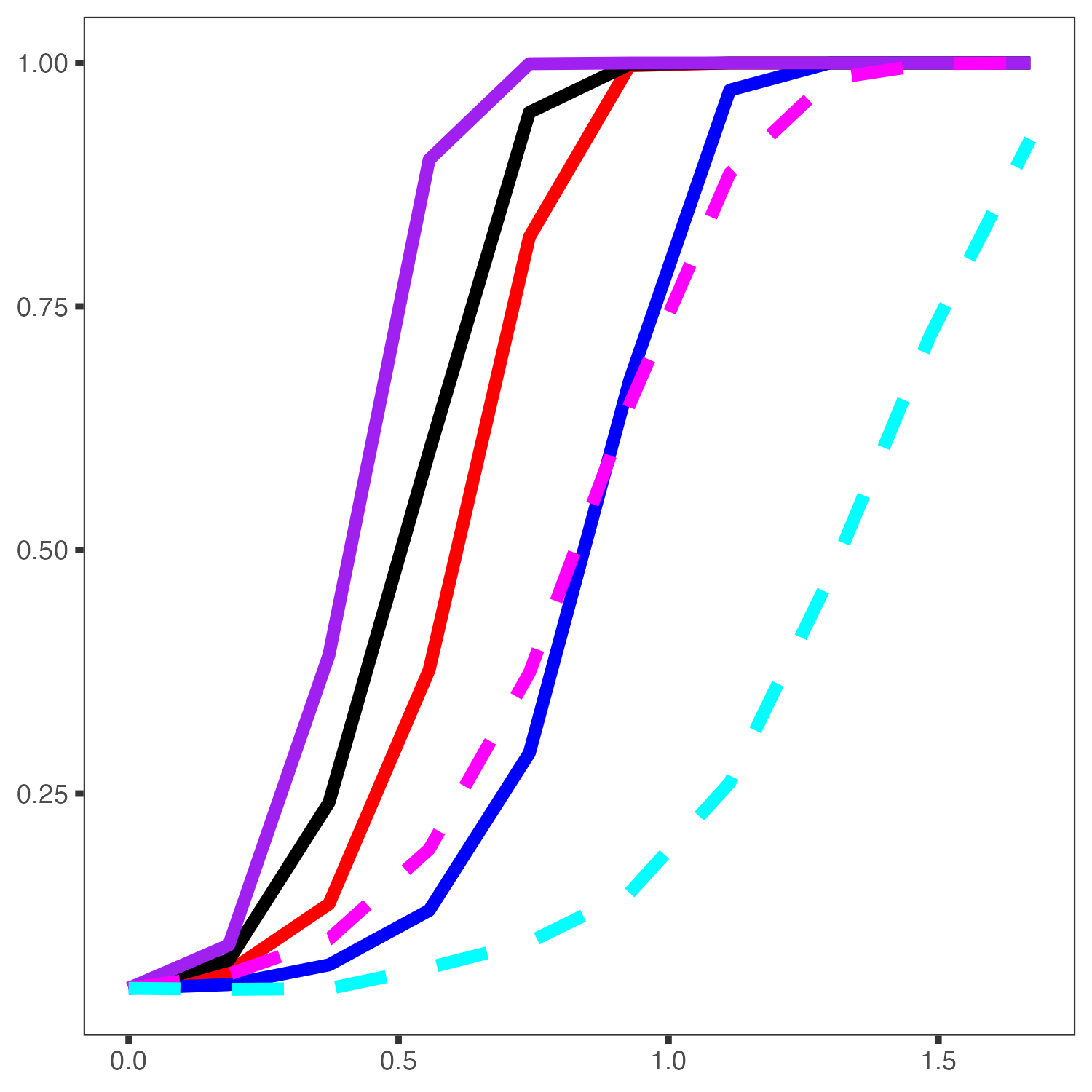}
    \end{subfigure}
    \begin{subfigure}[t]{0.23\linewidth}
        \centering
        \includegraphics[width=\linewidth]{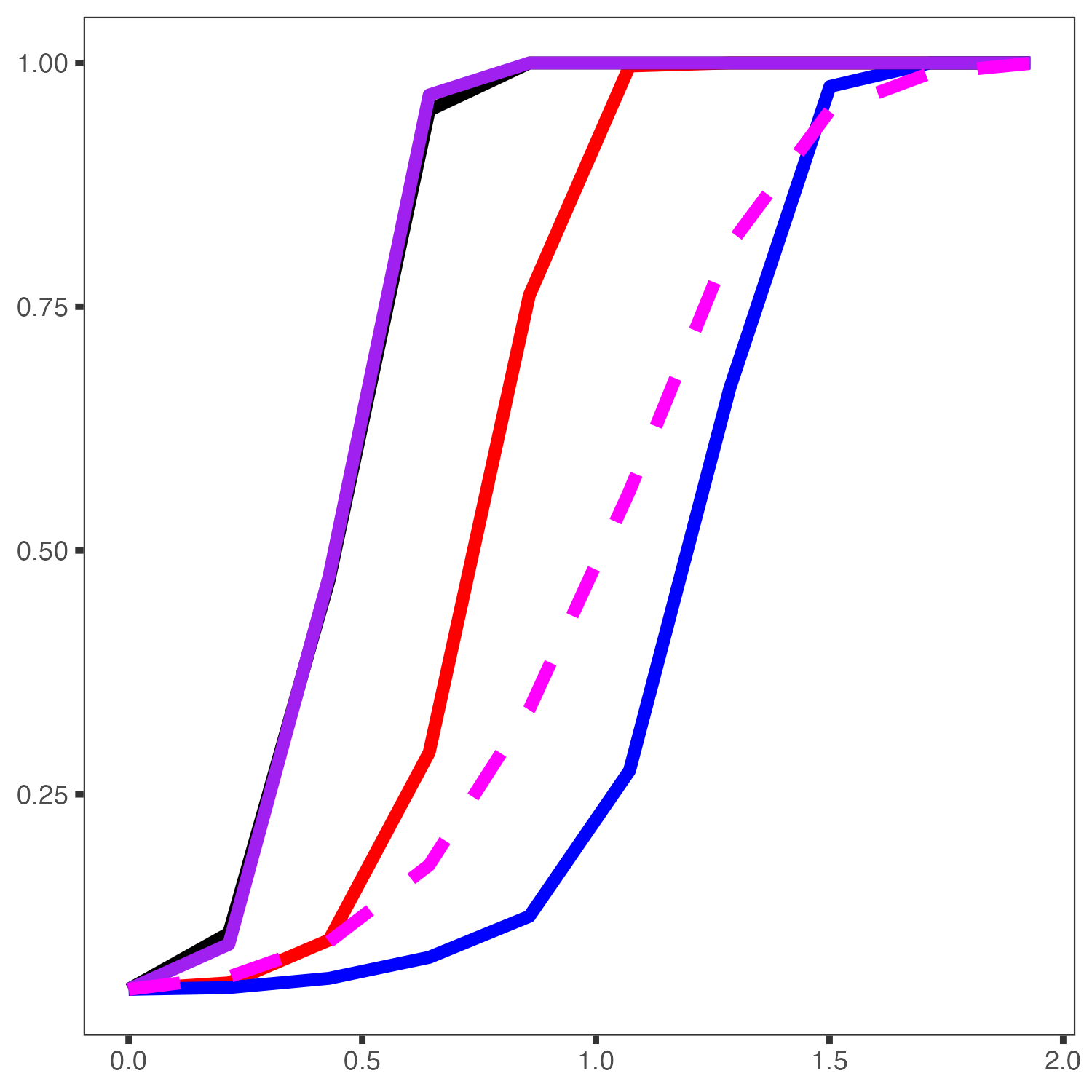}
    \end{subfigure}
    \vfill
    \begin{subfigure}[t]{0.23\linewidth}
        \centering
        \includegraphics[width=\linewidth]{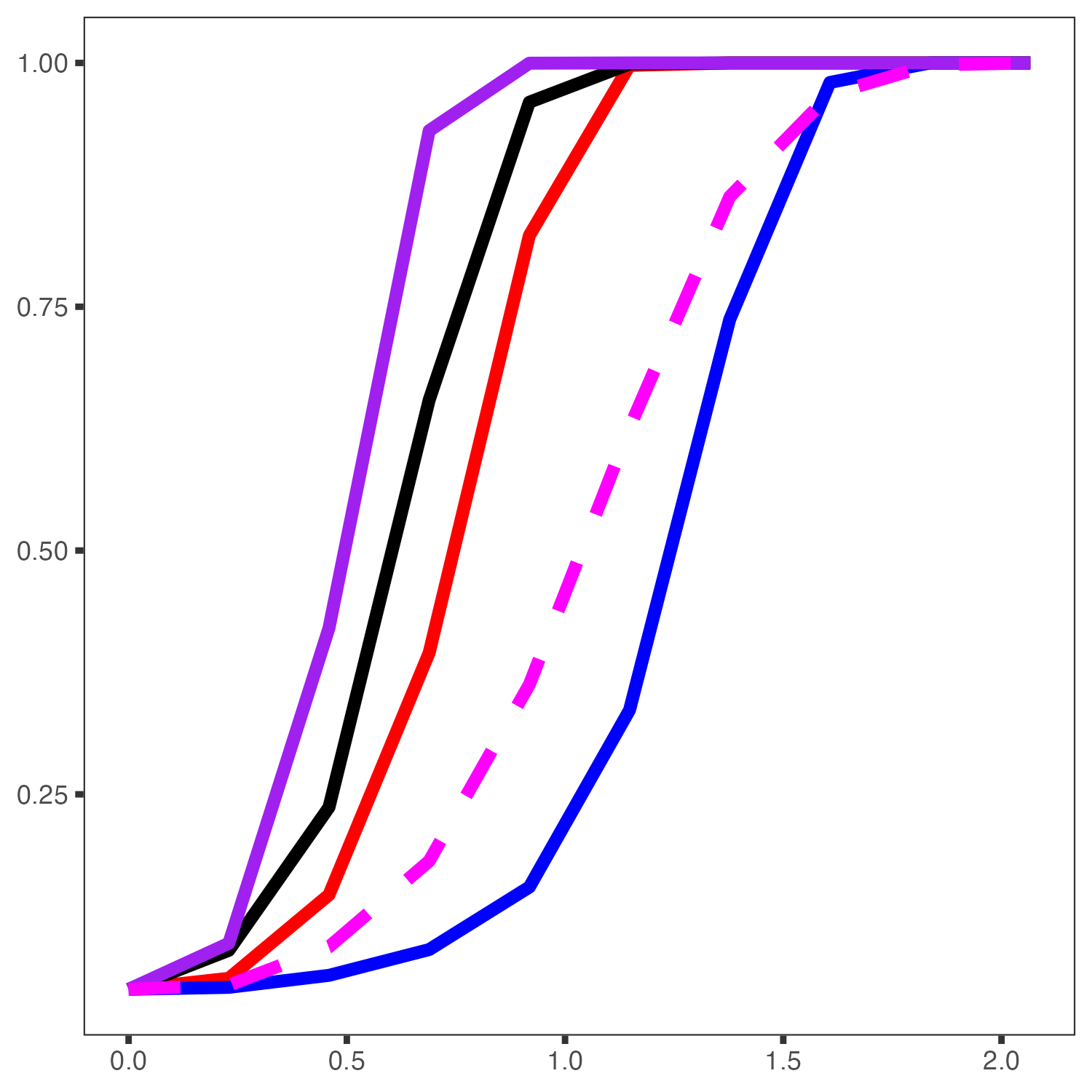}
    \end{subfigure}%
    \begin{subfigure}[t]{0.23\linewidth}
        \centering
        \includegraphics[width=\linewidth]{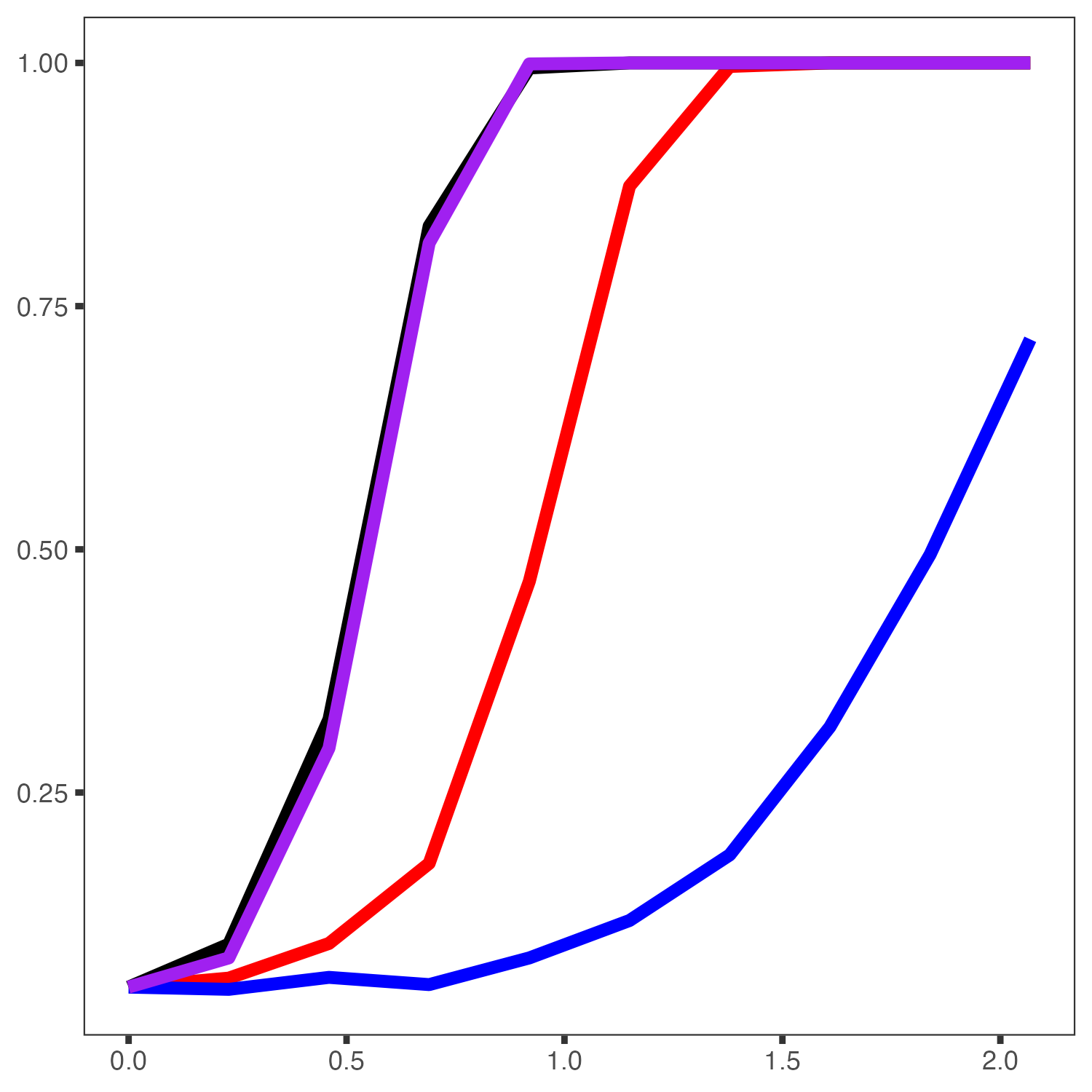}
    \end{subfigure}
    \begin{subfigure}[t]{0.23\linewidth}
        \centering
        \includegraphics[width=\linewidth]{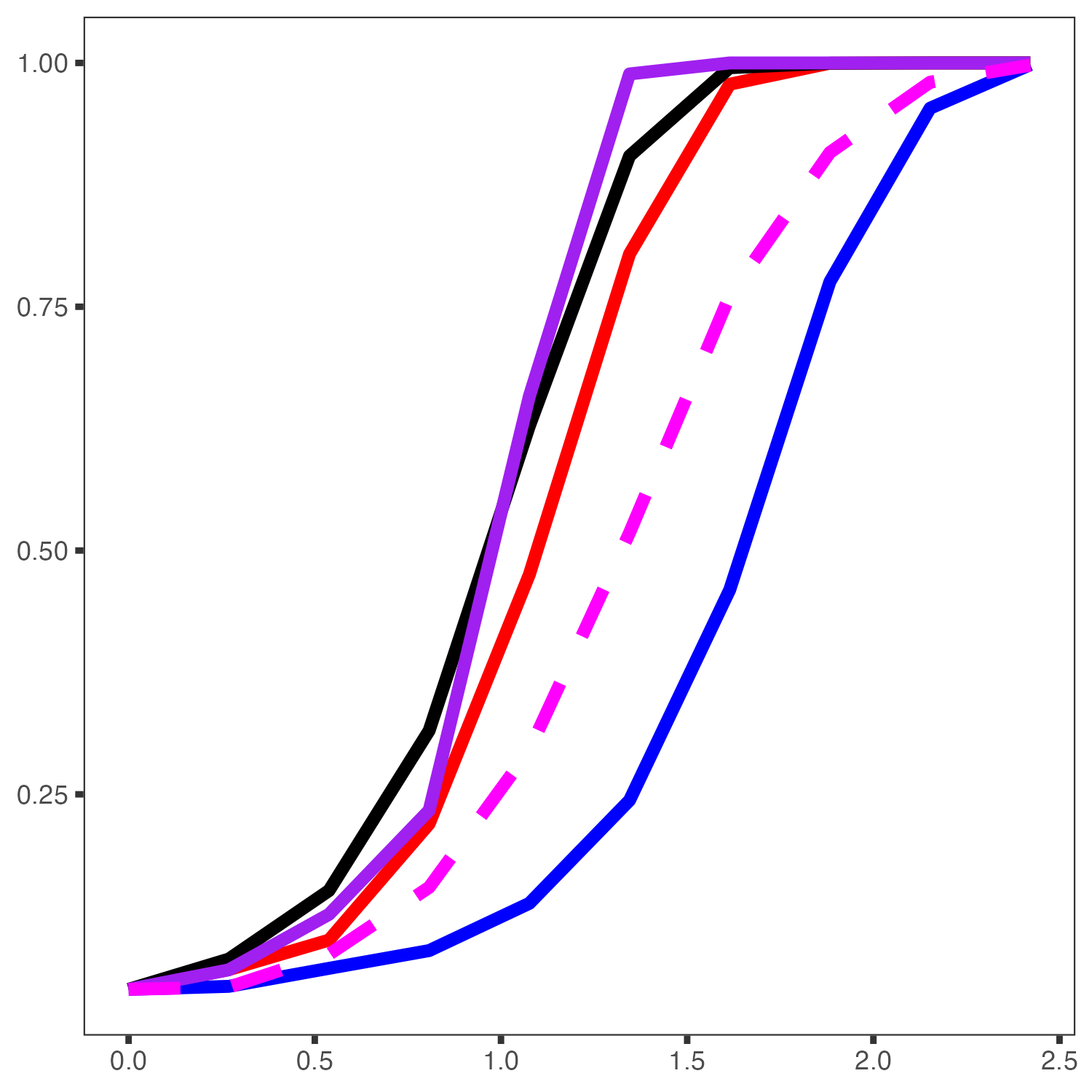}
    \end{subfigure}
    \begin{subfigure}[t]{0.23\linewidth}
        \centering
        \includegraphics[width=\linewidth]{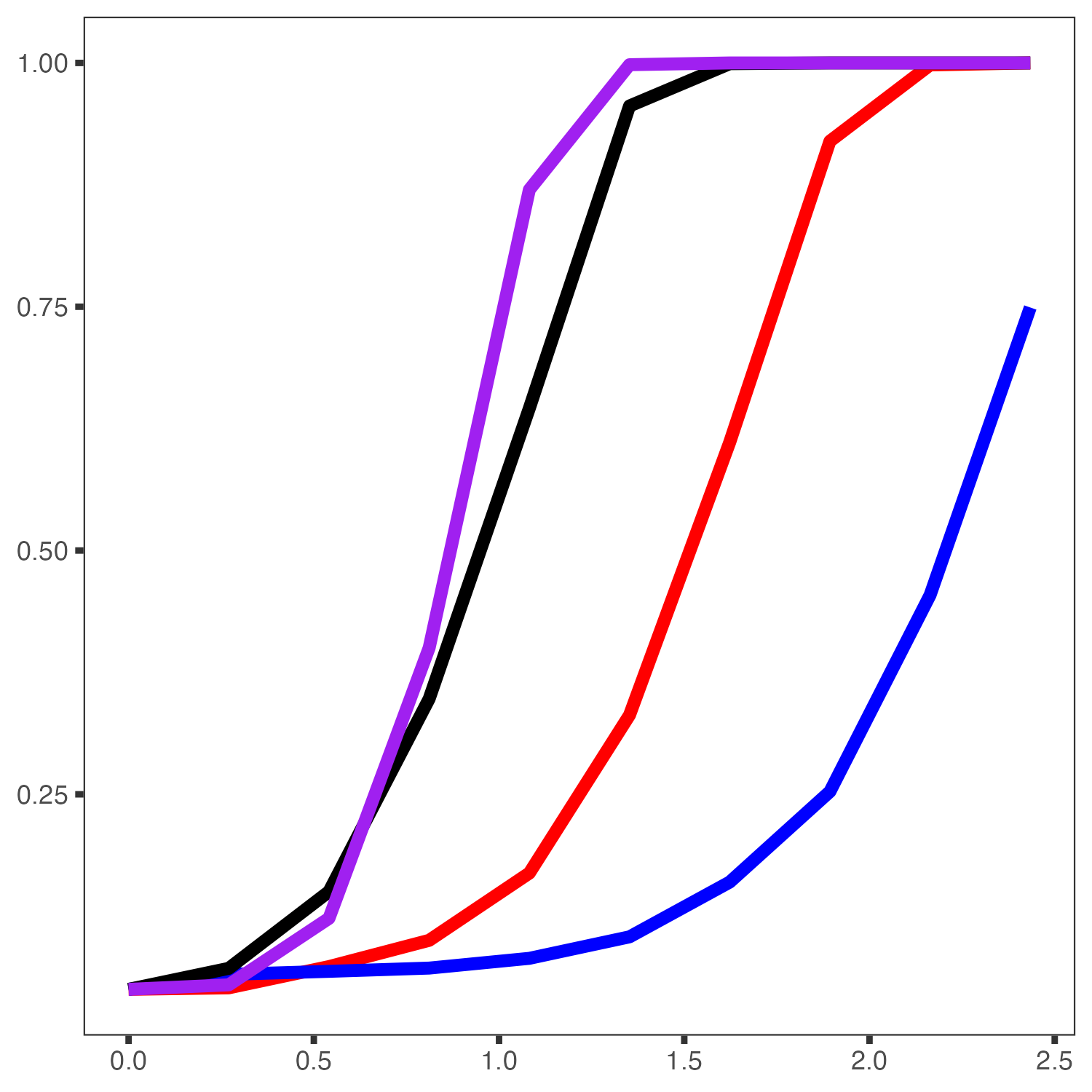}
    \end{subfigure}
    \caption{
    Size-adjusted empirical power under the setting of $\Sigma_0=\Sigma_{\rm Poly}$, and the Exp-Decay correlation model. Panel columns (left to right) correspond to $(p_1,p_2)=(100,100)$, $(100,200)$, $(200,100)$, and $(200,200)$. Panel rows (top to bottom) correspond to $n=400, 200$. The curves represent $\LRT(k_1,\hat{\lambda}_*)$ (black, solid), $\LRT(k_2,\hat{\lambda}_*)$ (purple, solid), $\ell_{\max}(k_3,\hat{\lambda}_*)$ (red, solid), $\ell_{\max}(k_4,\hat{\lambda}_*)$ (blue, solid), YP-LRT (magenta, dashed; available only when $p_2<n$), and HPY-Largest (cyan, dashed; available only when $p_1+p_2<n$). 
    }
    \label{fig:power_poly_exp_decay}
\end{figure}

Nevertheless, the largest-root procedure remains advantageous when a relatively large value of $k$ is required. As demonstrated in Table~\ref{tab:emp_sizes}, the largest-root procedure maintains accurate control of the type-I error rate for substantially larger values of $k$ than the trace-based procedure. Consequently, in applications where a large reduced dimension is needed to adequately remove the dependence between $\bX$ and $\bY$ in $\bW_{k2}$, the largest-root procedure provides a more reliable choice.

The effect of $k$ on empirical power reflects the trade-off between signal accumulation in $\bW_{k1}$ and the residual degrees of freedom available for estimating $\bW_{k2}$. Increasing $k$ allows more principal components, and hence potentially more signal, to be incorporated into $\bW_{k1}$. At the same time, however, it reduces the degrees of freedom associated with $\bW_{k2}$, thereby increasing the variability of the test statistic. This trade-off is clearly illustrated by comparing $\ell_{\max}(k_3,\hat{\lambda}_*)$ and $\ell_{\max}(k_4,\hat{\lambda}_*)$. Under both correlation models, increasing $k$ from $k_3$ to $k_4$ contributes little additional signal, while substantially reducing the degrees of freedom of $\bW_{k2}$. As a result, $\ell_{\max}(k_3,\hat{\lambda}_*)$ consistently achieves higher empirical power than $\ell_{\max}(k_4,\hat{\lambda}_*)$ across all simulation settings.

\begin{figure}[htbp]
    \centering
    \begin{subfigure}[t]{0.23\linewidth}
        \centering
        \includegraphics[width=\linewidth]{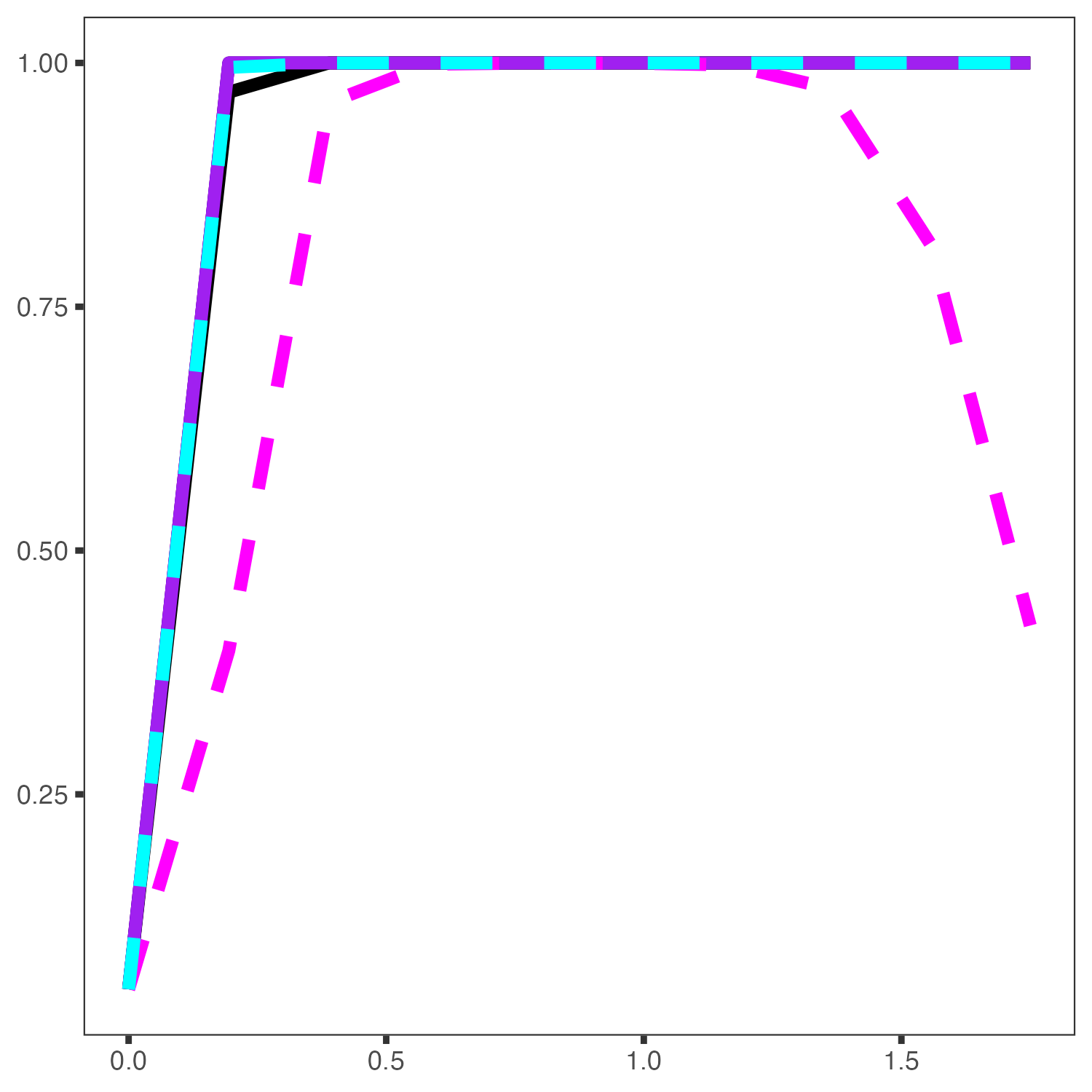}
    \end{subfigure}%
    \begin{subfigure}[t]{0.23\linewidth}
        \centering
        \includegraphics[width=\linewidth]{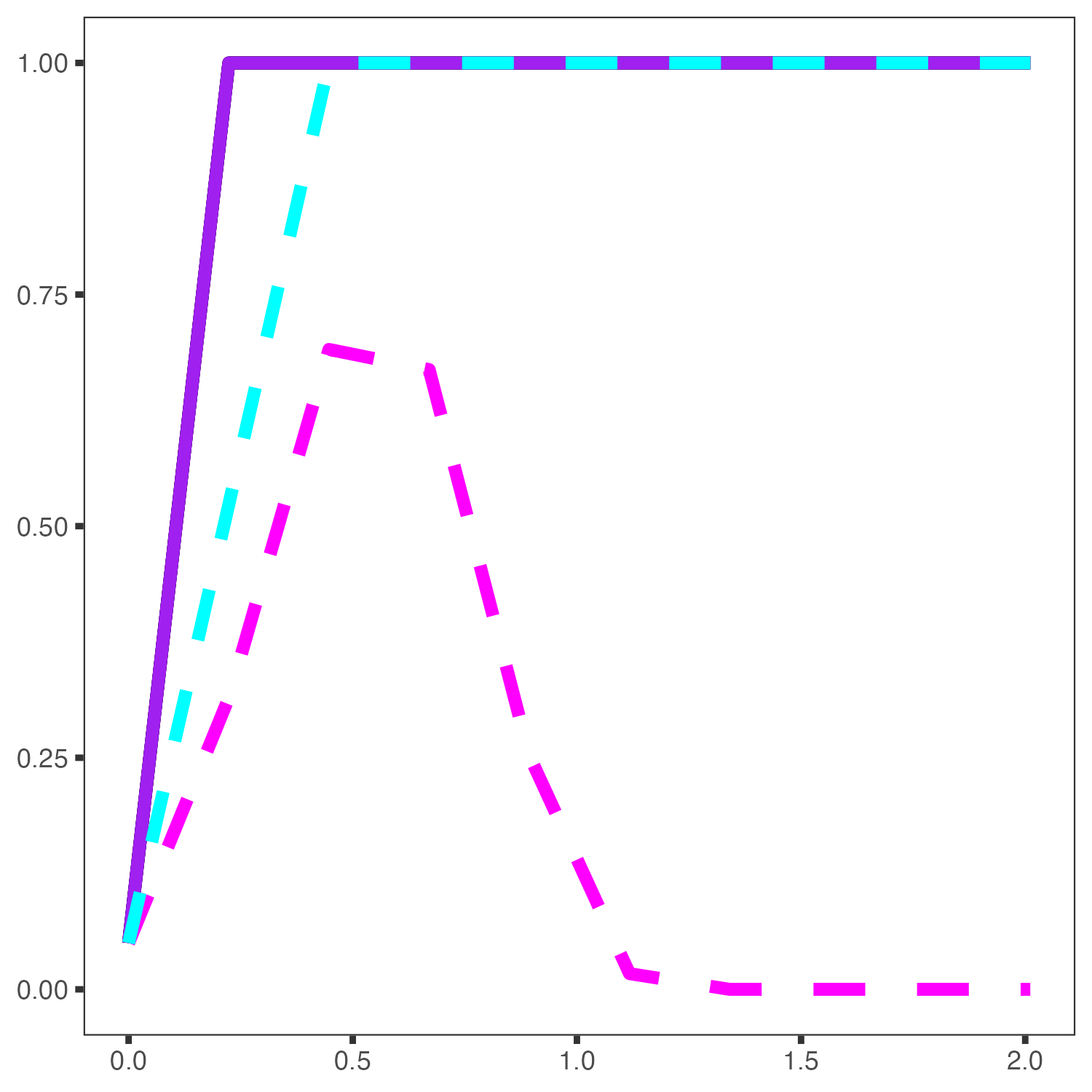}
    \end{subfigure}
    \begin{subfigure}[t]{0.23\linewidth}
        \centering
        \includegraphics[width=\linewidth]{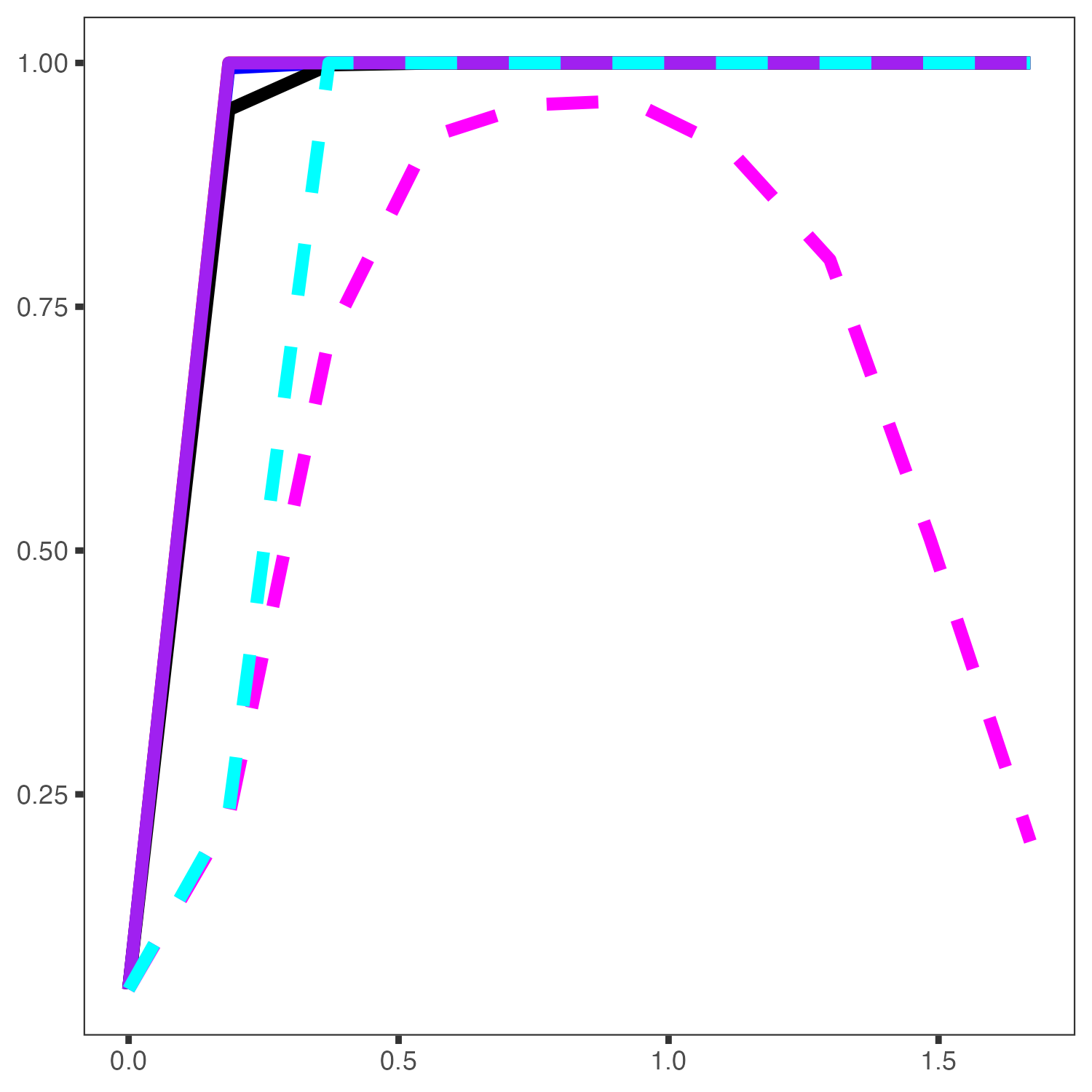}
    \end{subfigure}
    \begin{subfigure}[t]{0.23\linewidth}
        \centering
        \includegraphics[width=\linewidth]{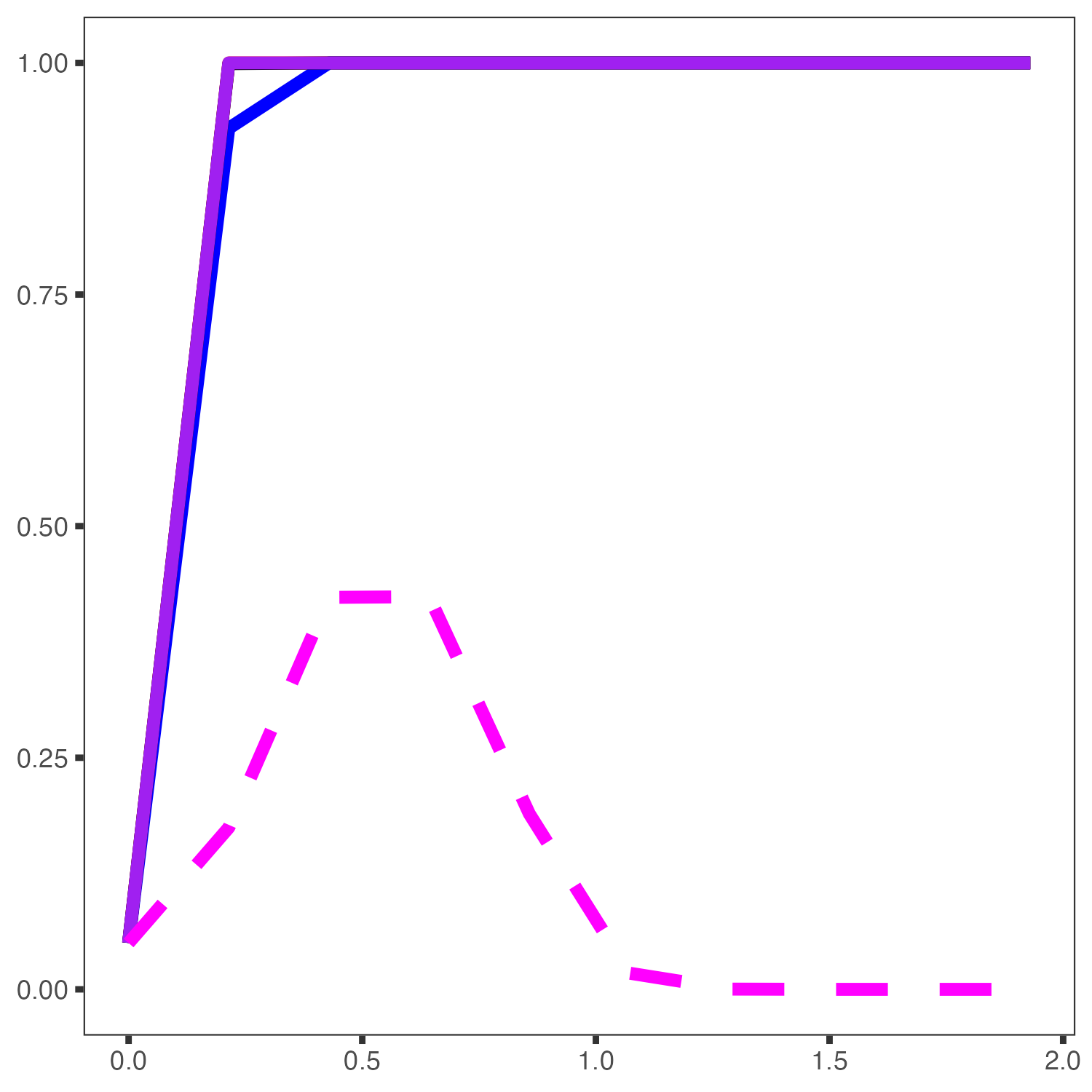}
    \end{subfigure}
    \vfill
    \begin{subfigure}[t]{0.23\linewidth}
        \centering
        \includegraphics[width=\linewidth]{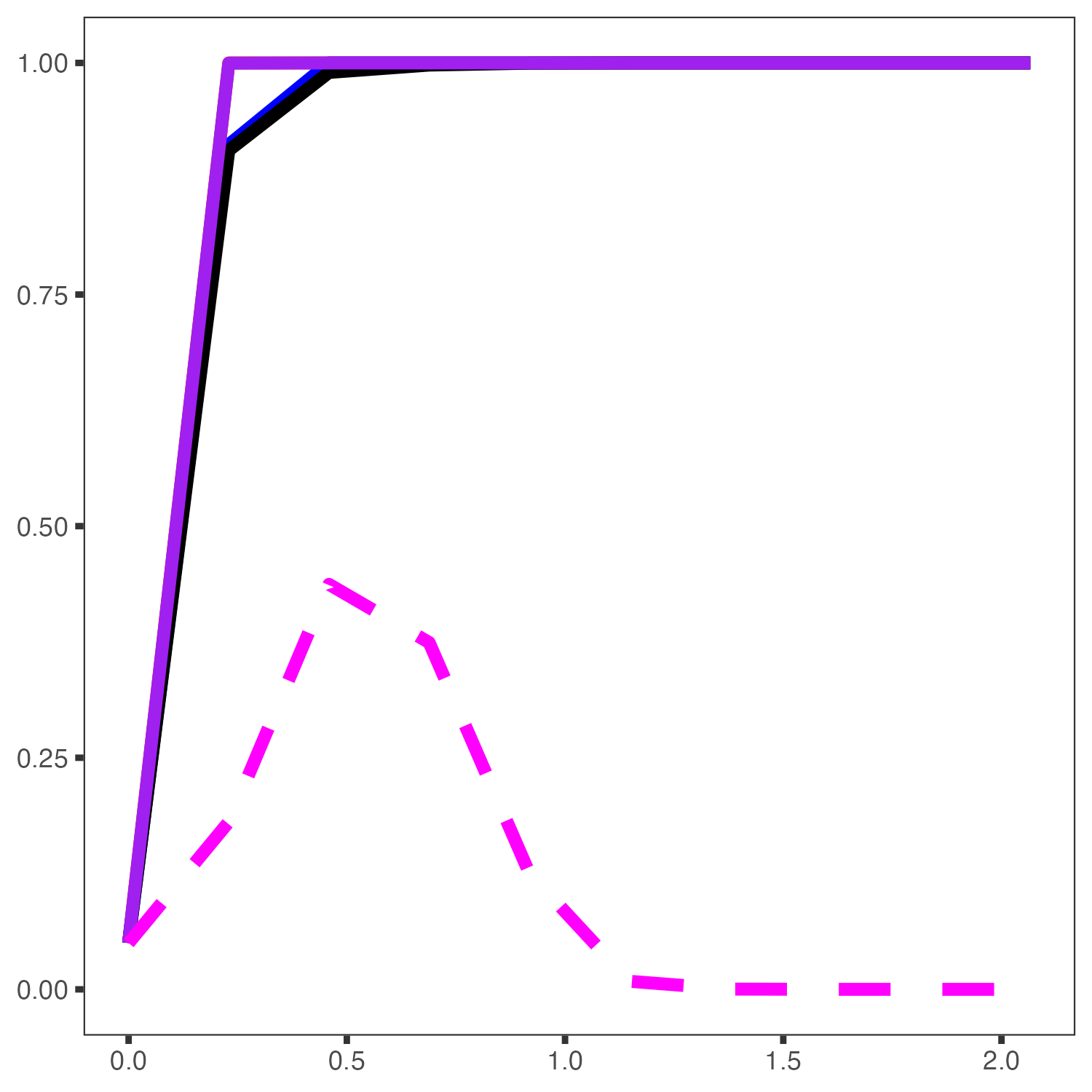}
    \end{subfigure}%
    \begin{subfigure}[t]{0.23\linewidth}
        \centering
        \includegraphics[width=\linewidth]{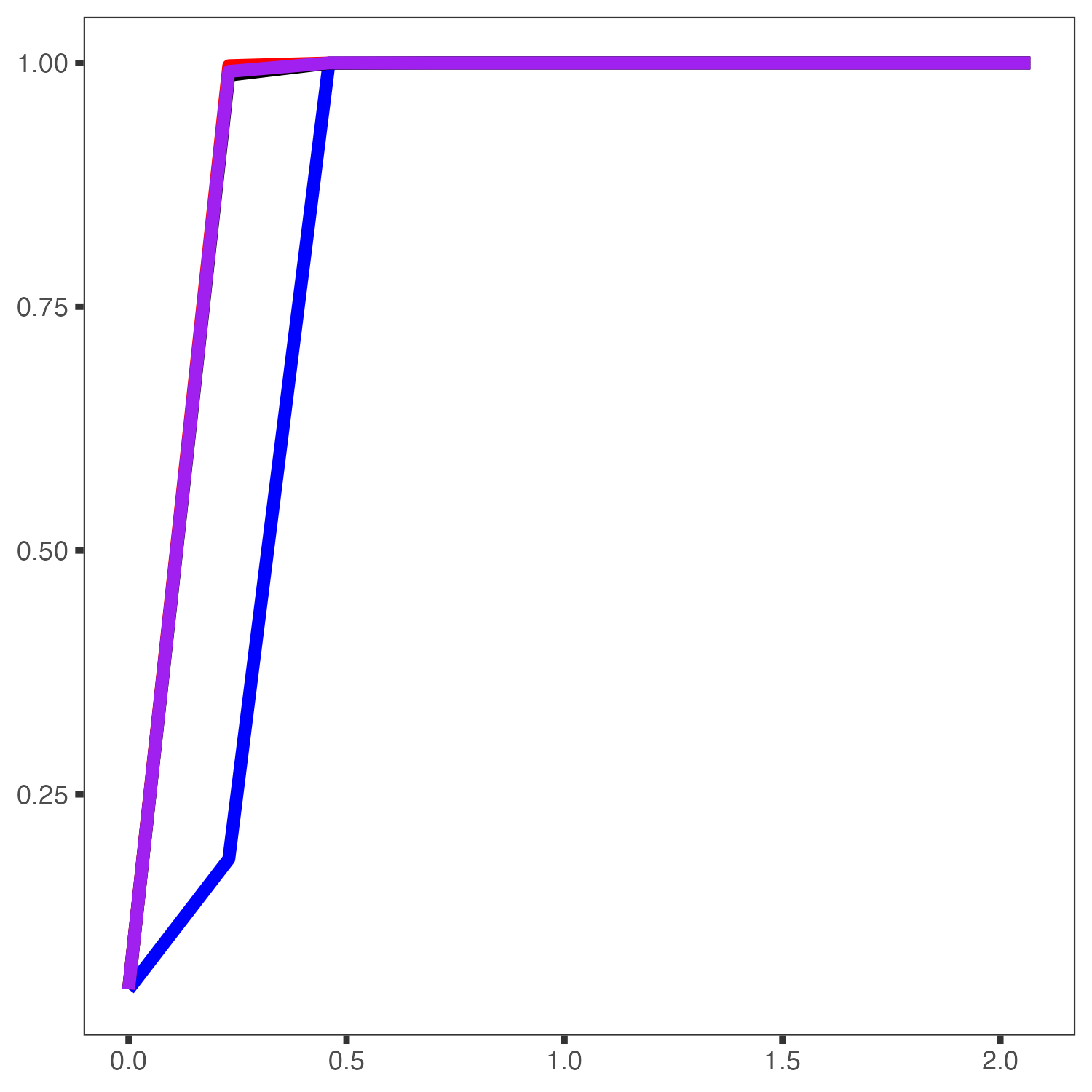}
    \end{subfigure}
    \begin{subfigure}[t]{0.23\linewidth}
        \centering
        \includegraphics[width=\linewidth]{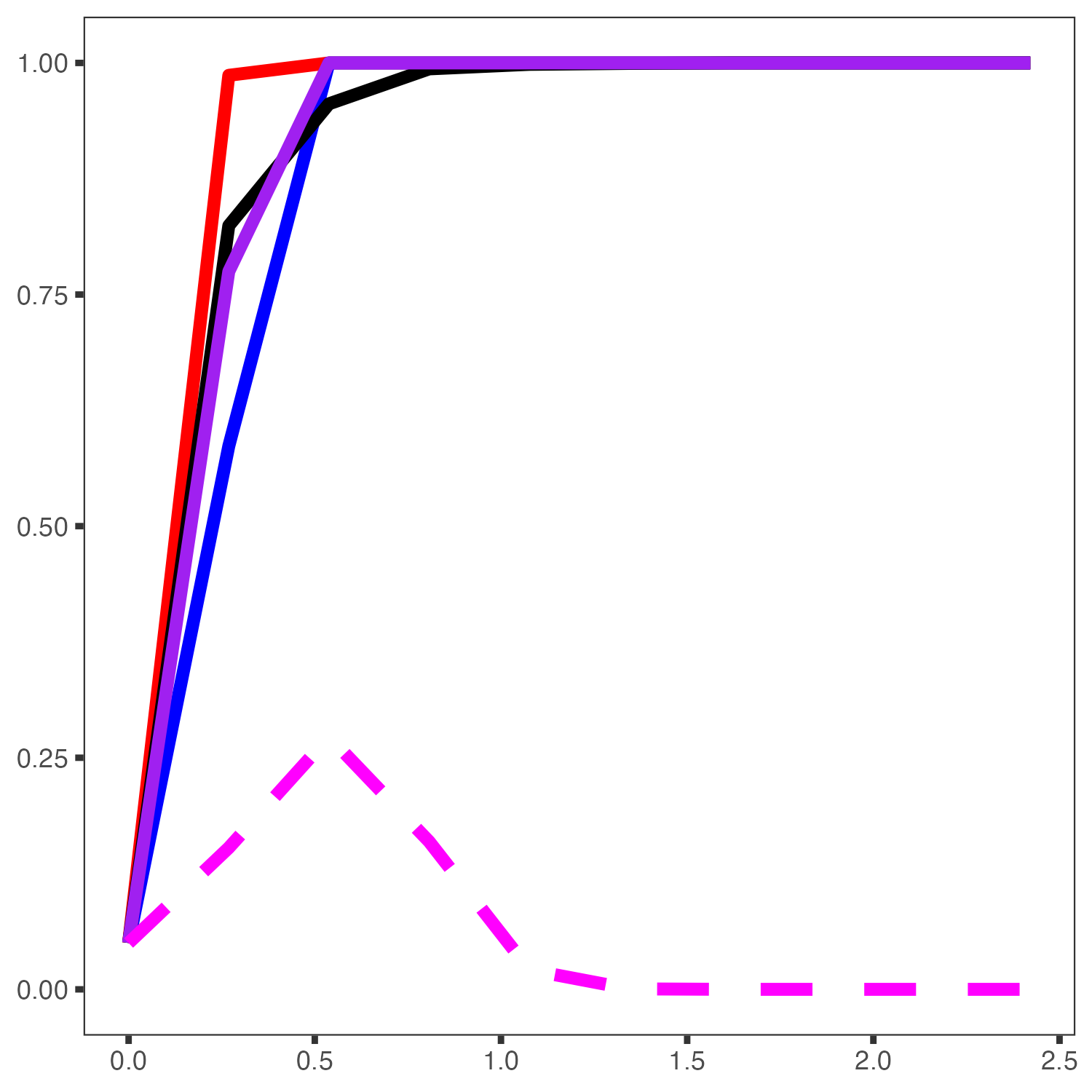}
    \end{subfigure}
    \begin{subfigure}[t]{0.23\linewidth}
        \centering
        \includegraphics[width=\linewidth]{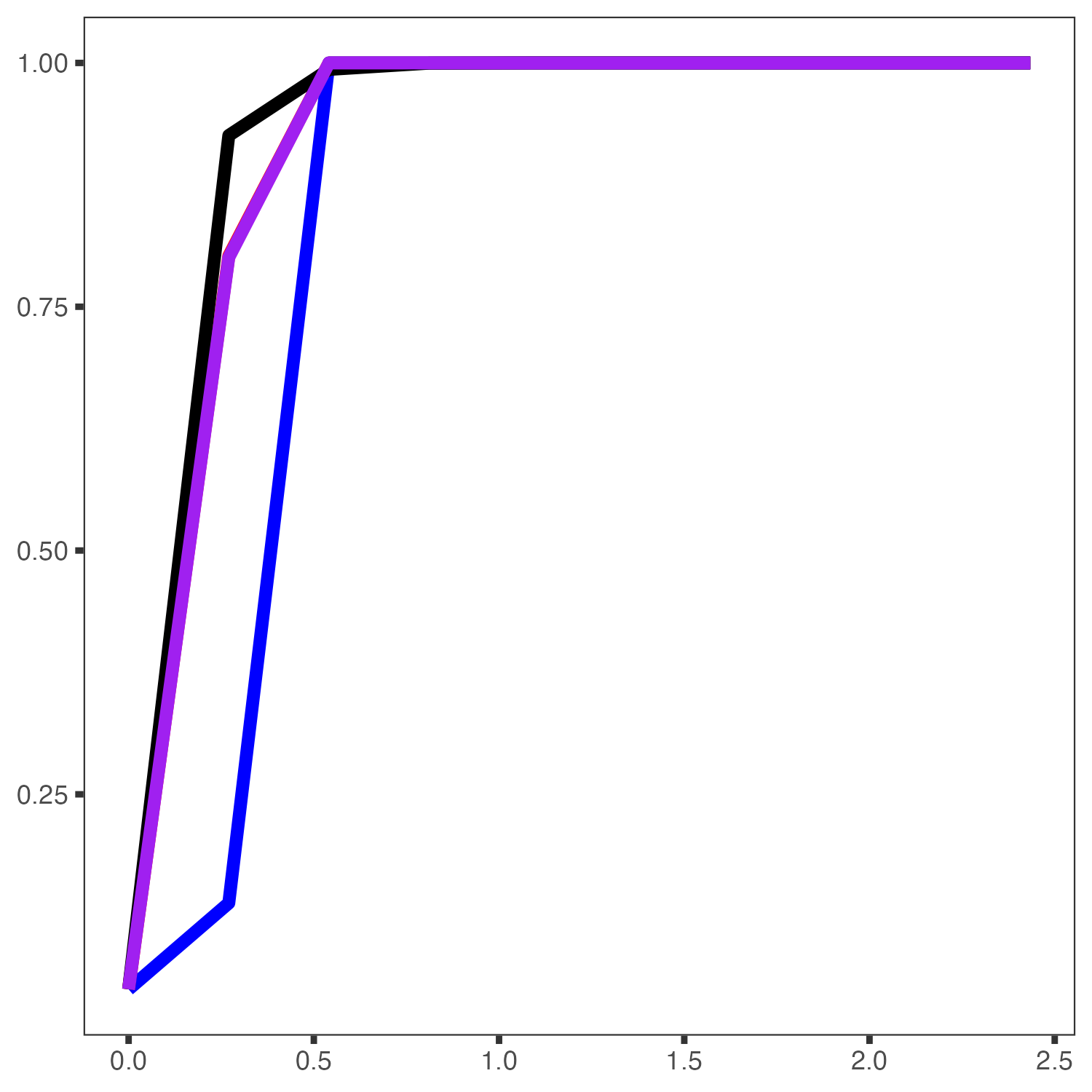}
    \end{subfigure}
    \caption{
    Same as Figure \ref{fig:power_poly_exp_decay} but under $\Sigma_0=\Sigma_{\rm Poly}$ and the Low-rank correlation model. 
    }
    \label{fig:power_poly_low_rank}
\end{figure}

\begin{figure}[htbp]
    \centering
    \begin{subfigure}[t]{0.23\linewidth}
        \centering
        \includegraphics[width=\linewidth]{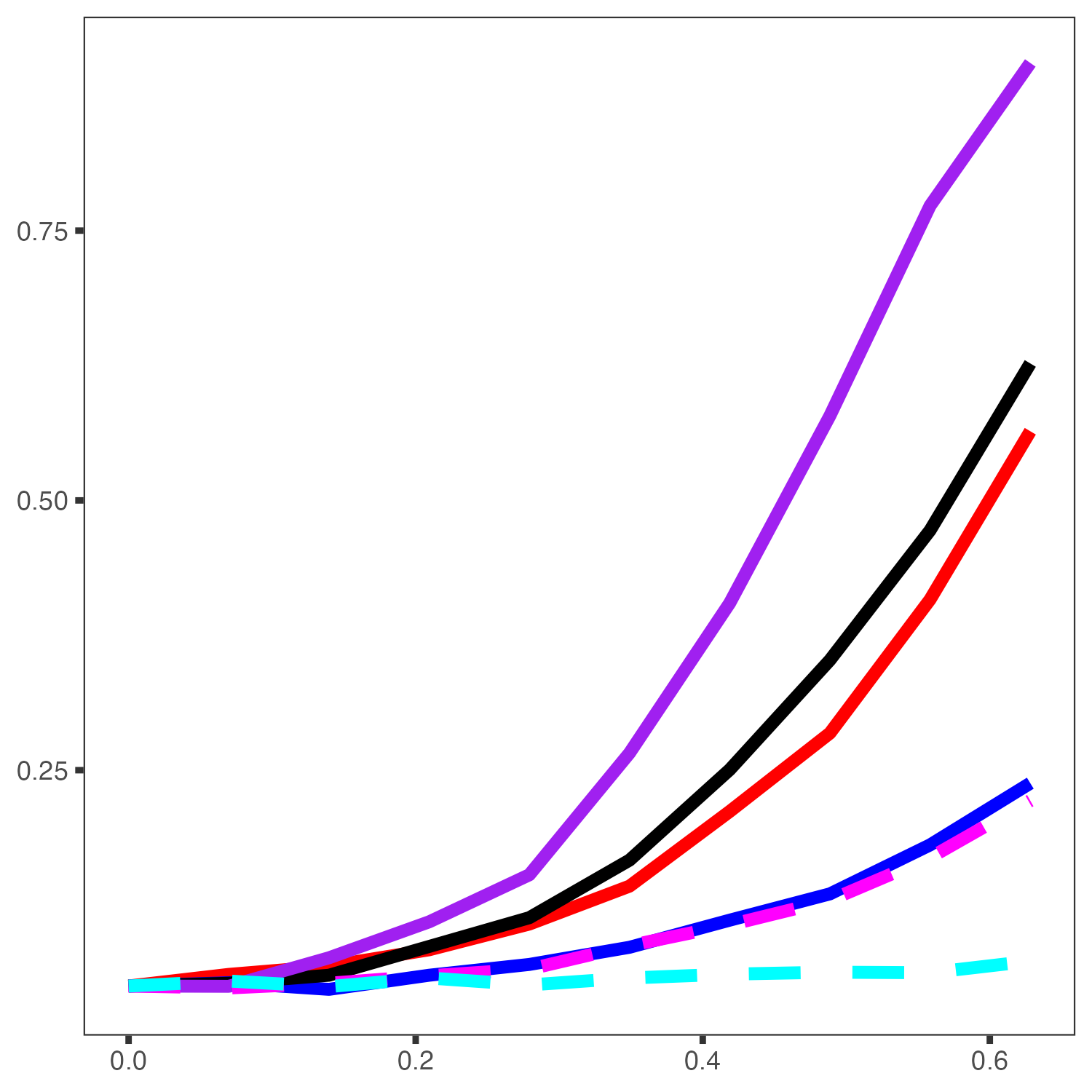}
    \end{subfigure}%
    \begin{subfigure}[t]{0.23\linewidth}
        \centering
        \includegraphics[width=\linewidth]{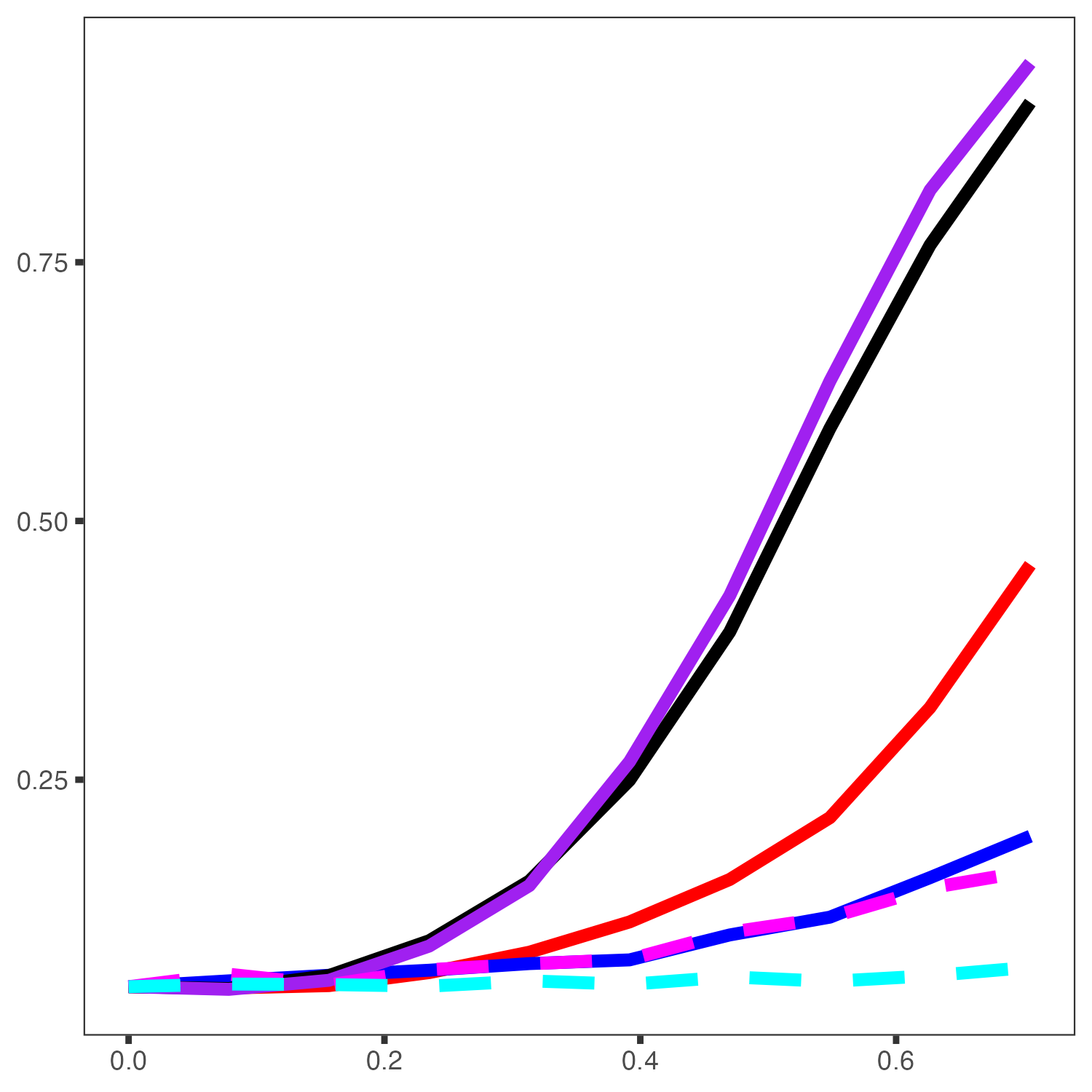}
    \end{subfigure}
    \begin{subfigure}[t]{0.23\linewidth}
        \centering
        \includegraphics[width=\linewidth]{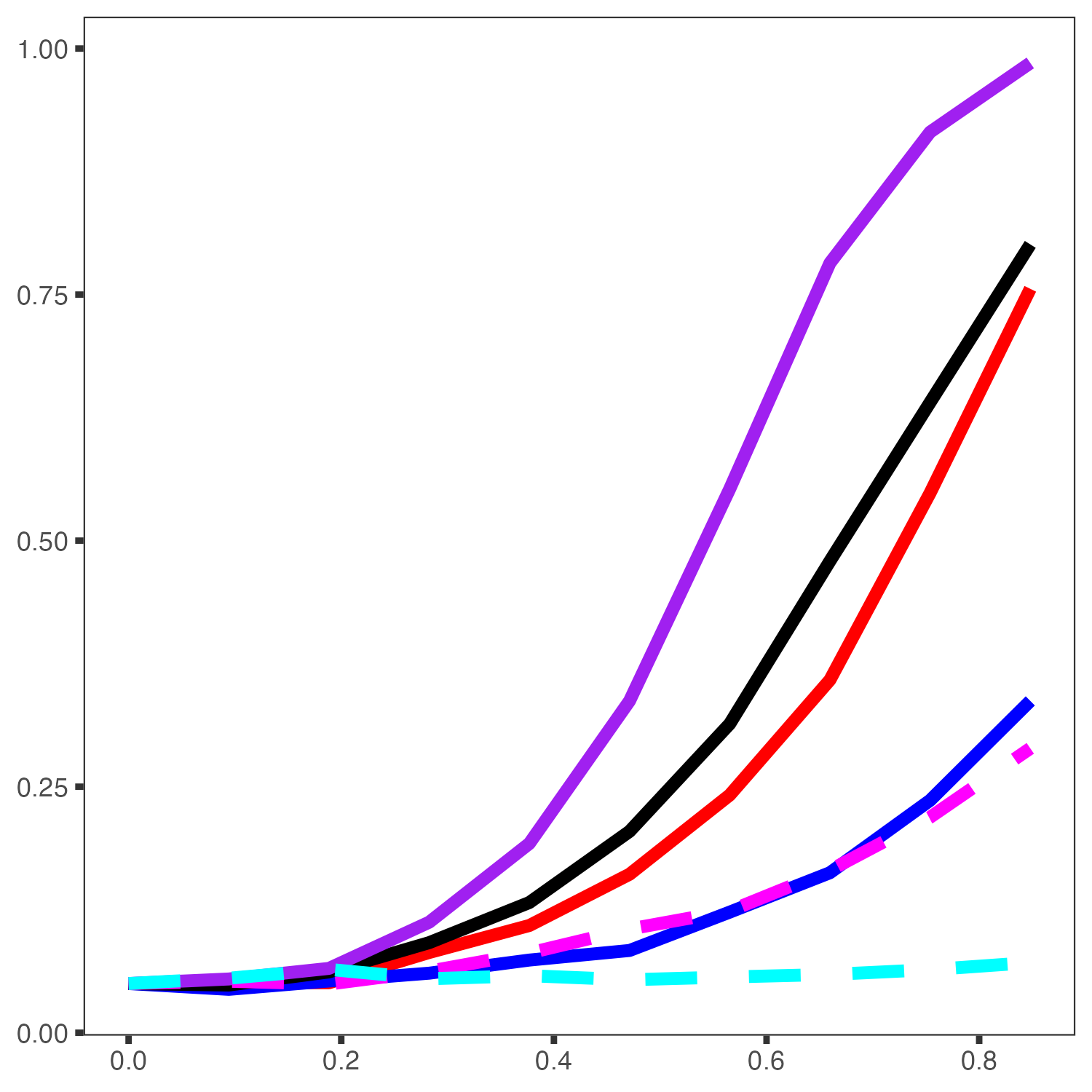}
    \end{subfigure}
    \begin{subfigure}[t]{0.23\linewidth}
        \centering
        \includegraphics[width=\linewidth]{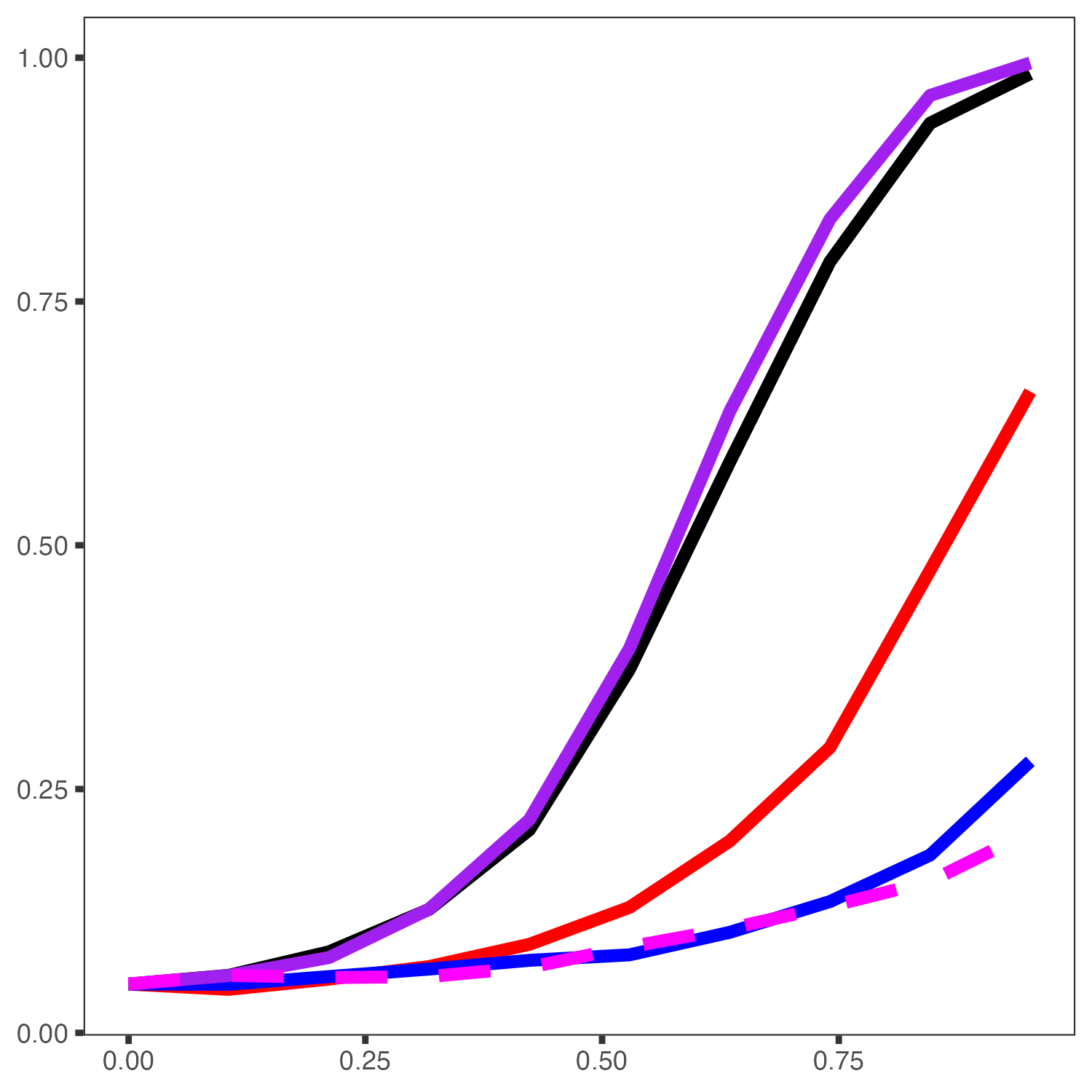}
    \end{subfigure}
    \vfill
    \begin{subfigure}[t]{0.23\linewidth}
        \centering
        \includegraphics[width=\linewidth]{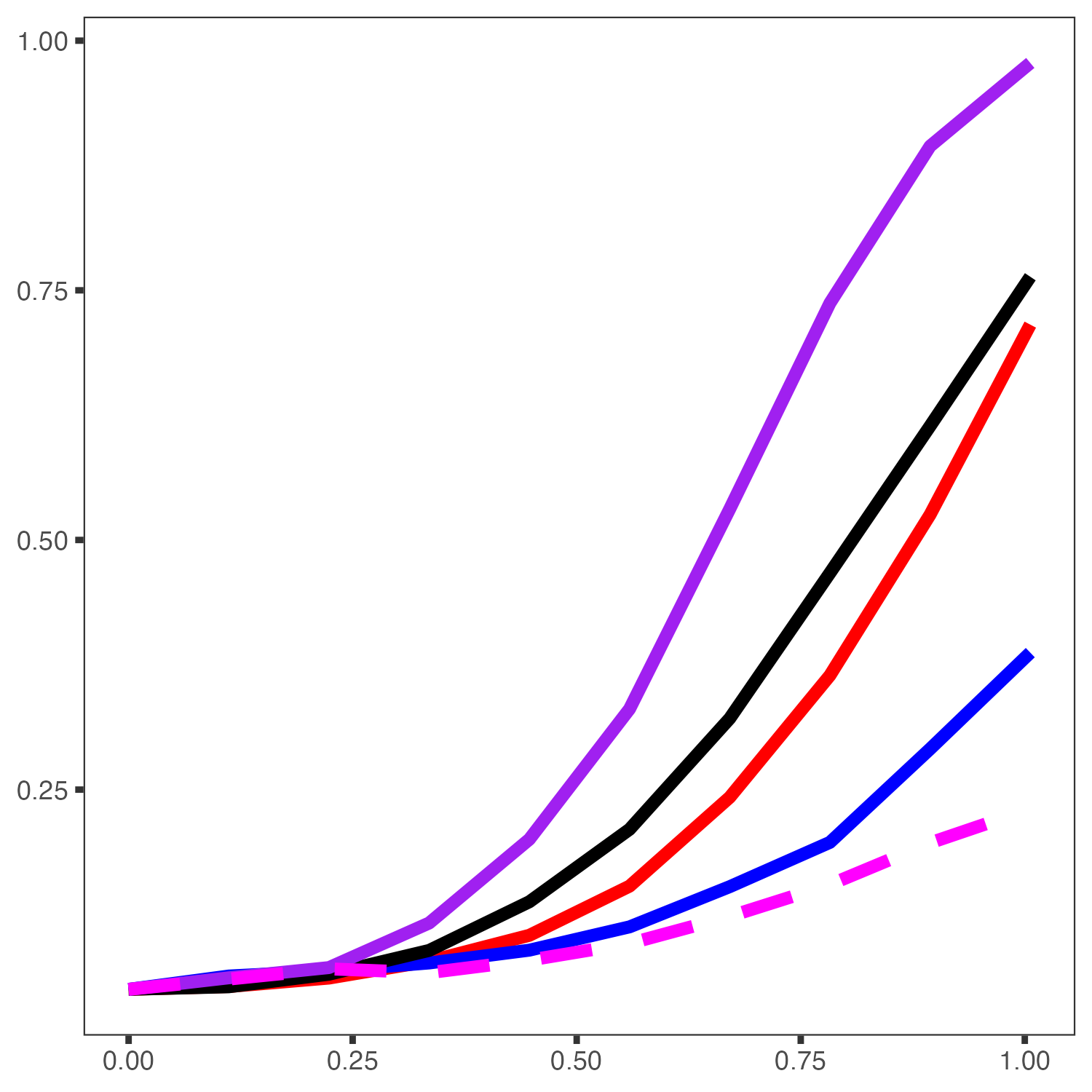}
    \end{subfigure}%
    \begin{subfigure}[t]{0.23\linewidth}
        \centering
        \includegraphics[width=\linewidth]{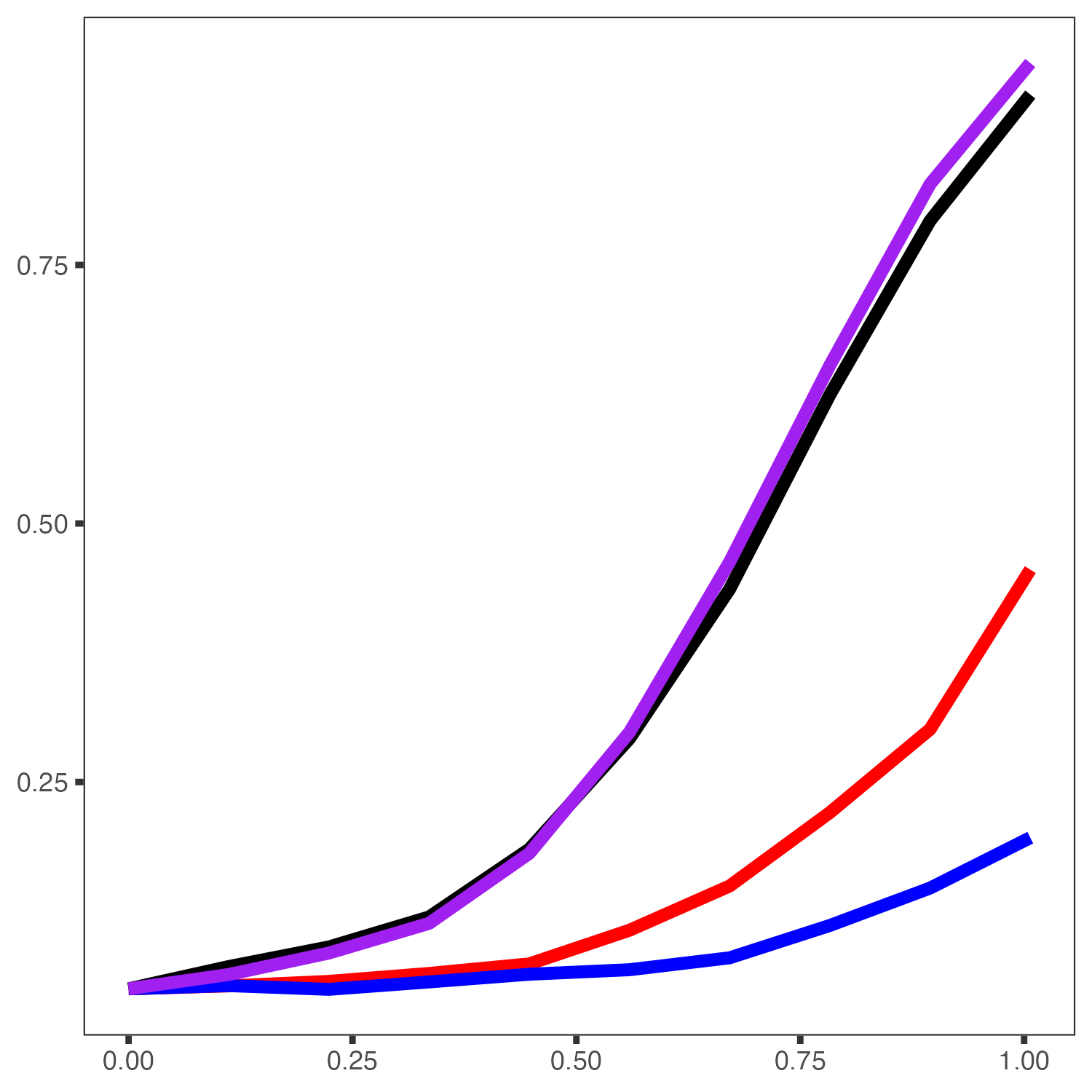}
    \end{subfigure}
    \begin{subfigure}[t]{0.23\linewidth}
        \centering
        \includegraphics[width=\linewidth]{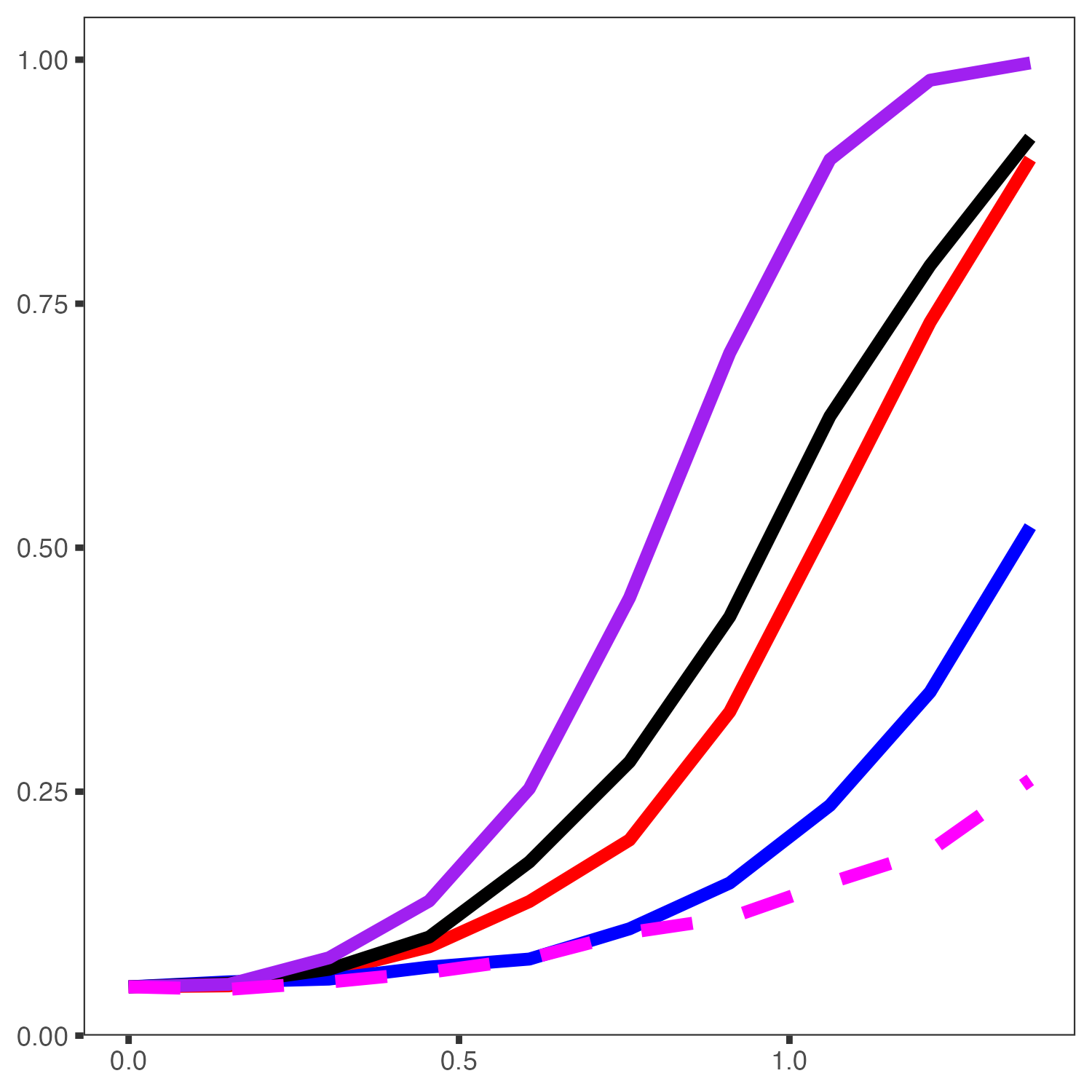}
    \end{subfigure}
    \begin{subfigure}[t]{0.23\linewidth}
        \centering
        \includegraphics[width=\linewidth]{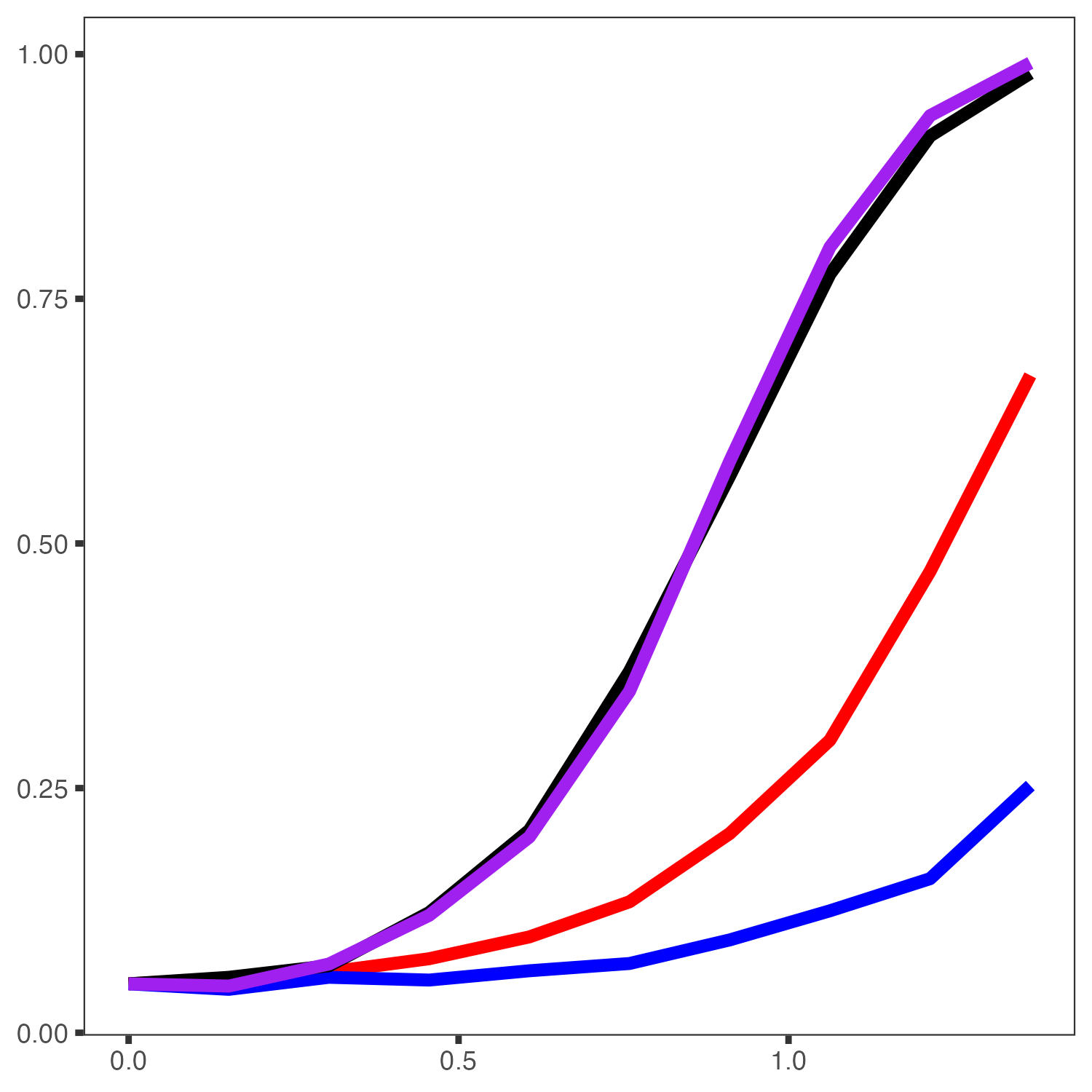}
    \end{subfigure}
    \caption{
    Same as Figure \ref{fig:power_poly_exp_decay} but under $\Sigma_0 = \Sigma_{ID}$ and the Exp-Decay correlation model.
    }
    \label{fig:power_ID_exp_decay}
\end{figure}

The results indicate that {YP-LRT} is asymptotically consistent under the Exp-Decay correlation model when the signal strength is sufficiently large. However, its performance deteriorates substantially under the Low-rank correlation model, where it fails to effectively exploit the concentrated signal; see Figure~\ref{fig:power_poly_low_rank}. This demonstrates that the power of {YP-LRT} depends strongly on the underlying correlation structure. 

Finally, we consider {HPY-Largest}. This procedure is based on the largest eigenvalue of $\bW_{k1}\bW_{k2}^{-1}$ with $k=p_2$ and can therefore be viewed as the unregularized counterpart $(\lambda=0)$ of the proposed largest-root procedure $\ell_{\max}(p_2,\lambda)$. The simulation results indicate that {HPY-Largest} performs comparably to the proposed procedure when the total dimension $p_1+p_2$ is relatively small compared with the sample size $n$. As the dimension increases and $p_1+p_2$ approaches $n$, however, the proposed regularized procedure exhibits a noticeable power advantage. This behavior is consistent with the motivation for ridge regularization. When the residual degrees of freedom associated with $\bW_{k2}$ become limited, the inverse $\bW_{k2}^{-1}$ becomes increasingly variable, which adversely affects the signal-to-noise ratio of the largest-root statistic. By incorporating a ridge regularization term, the proposed procedure stabilizes the inverse of $\bW_{k2}$ and consequently achieves improved finite-sample power in high-dimensional settings.

\section{Discussion}\label{sec:discussion}

We have proposed a regularized framework for testing the independence between two high-dimensional random vectors by combining ridge regularization with principal-component-based dimension reduction. The resulting procedures remain statistically stable when the dimensions are comparable to, or exceed, the sample size. Under two asymptotic regimes determined by the reduced dimension parameter, we established the limiting distributions of the proposed trace-based and largest-root statistics, developed consistent estimators of the required normalization parameters, and investigated their asymptotic power properties. Numerical studies further demonstrate that the proposed procedures provide satisfactory finite-sample performance over a broad range of covariance structures and dependence patterns.

An important feature of the proposed framework is its flexibility. By varying the reduced dimension parameter $k$ and the regularization parameter $\lambda$, the procedures can adapt to different signal configurations and dimensional regimes. Our theoretical and numerical results suggest that the trace-based procedure is preferable when the dependence can be adequately represented by a relatively small reduced dimension, whereas the largest-root procedure is more suitable when a larger reduced dimension is required. We also proposed practical data-driven strategies for selecting both tuning parameters, making the methodology readily applicable in practice.

More broadly, the methodology developed in this paper demonstrates how principal-component-based dimension reduction can be combined to stabilize classical multivariate procedures in high-dimensional settings. We expect that the same principle may prove useful for other high-dimensional multivariate inference problems, including multivariate regression, MANOVA,  and general linear hypotheses. Developing such extensions constitutes an interesting direction for future research. 

Several aspects of the present work also merit further investigation. First, although the proposed data-driven procedures for selecting $k$ and $\lambda$ perform well in our numerical studies, their theoretical optimality remains open. Second, the power analysis under the alternative is carried out under representative structural assumptions that permit explicit asymptotic characterizations. Extending these results to more general alternatives would provide a deeper theoretical understanding of the proposed procedures. Finally, our asymptotic analysis assumes independent observations, whereas many modern applications involve temporal, spatial, or network dependence. Extending the proposed methodology to accommodate such dependence structures is another promising direction for future research.


\appendix

\section{Estimation of $\Theta_1(\lambda,q)$ and $\Theta_2(\lambda,q)$}
\label{sec:estimation}

This section describes the estimation method of the normalization parameters $\Theta_1(\lambda,q)$ and $\Theta_2(\lambda,q)$. The procedure is adapted from \cite{li2025ridge}, with several modifications tailored to the present framework.

\subsection{Reformulation}
Recall that as in Theorem \ref{thm:main_diverge_k}, the quantities $\Theta_1(\lambda,q)$ and $\Theta_2(\lambda,q)$ are defined through the function $s(\cdot)$ and its derivatives, where $s(\cdot)$ is specified implicitly via an inverse mapping. To facilitate estimation, we first reformulate these quantities in a more tractable form. To this end, define the auxiliary functions
\[
\calH_j(h)
=
\int
\frac{\tau^j\, dF^{\Sigma_0}(\tau)}
{(\lambda-\tau h)^j},
\qquad
h\in(-\infty,\eta),
\qquad
j=1,2,3.
\]
Recall that $q = p_1/(n-1-k)$ and the definition of $g(h)$ in \eqref{eq:def_g}. With this notation, the function $g (h)$ and its derivative may be written as
\begin{equation}\label{eq:def_g_H}
\begin{split}
&g (h) = h + \Bigl[1 + q \calH_1(h)\Bigr]^{-1},\\ 
&g' (h) = 1 - q \Bigl[1 + q \calH_1(h)\bigr]^{-2} \calH_2(h).
\end{split}
\end{equation}
We express $s (g)$ and its derivatives in terms of the functions $\calH_j(h)$, $j=1,2,3$. 
\begin{lemma}
    \label{thm:determine_s}
    For any $g_0 \in [0, ~\rho )$,  
   \[ s (g_0) = q^{-1} \Big(\frac{1}{g_0- h(g_0) } -1\Big), \]
   where $h= h(g_0)$ is the unique solution in $(-\infty, ~\eta )$ to 
   \begin{equation}\label{eq:determine_s}
   \begin{cases}
       &g_0 = h + \big[ 1+ q \calH_1(h)\big]^{-1},\\[5pt]
       &1 - q \big[ 1+ q\calH_1(h)\big]^{-2} \calH_2(h) >0. 
   \end{cases} 
   \end{equation}
   Furthermore, the derivatives $s' (g_0)$ and $s'' (g_0)$ are determined as:  
\begin{equation}\label{eq:characterization_s2s3}
\begin{aligned}
&b(g_0) = g_0 - \frac{1}{1 + q {s} (g_0)},\\
&\zeta(g_0) = \frac{q}{( 1+ q s (g_0))^{2}},\\
&s' (g_0) =  \calH_2(b(g_0))\left(\zeta(g_0) s' (g_0) + 1 \right),\\
&s'' (g_0) = 2 \left(\zeta(g_0) s' (g_0) + 1 \right)^2 \calH_3(b(g_0))\\
&\qquad + \calH_2(b(g_0)) \Big[ \frac{- 2 (q s' (g_0))^2 }{(1+q s (g_0))^3} + \zeta(g_0)s'' (g_0) \Big].
\end{aligned}
\end{equation}
\end{lemma}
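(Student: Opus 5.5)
The plan is to show that the displayed characterization is an unwinding of the definition \eqref{eq:def_s_nlambda_g} together with the inverse-function structure of $g$, and then to differentiate one implicit identity twice. There are three steps: identify $h(g_0)$ through \eqref{eq:determine_s}, rewrite $s$ as $\calH_1\circ h$, and apply the chain rule.

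\emph{Step 1 (identification of $h(g_0)$).} By construction, $h(\cdot)$ is the inverse of $g$ restricted to $(-\infty,h_0)$, and on this interval $g'>0$. Strict concavity gives $g''<0$ on $(-\infty,\eta)$, so $g'$ is strictly decreasing there.
\begin{itemize}
\item If $h_0=\eta$, then $g'>0$ on the whole domain.
\item If $h_0<\eta$, then $g'(h)>0$ for $h<h_0$ and $g'(h)<0$ for $h\in(h_0,\eta)$.
\end{itemize}
In either case, the set $\{h\in(-\infty,\eta): g'(h)>0\}$ is exactly $(-\infty,h_0)$. On this interval $g$ is a strictly increasing bijection onto $(-\infty,\rho)$; the lower end is $-\infty$ because $g(h)=h+O(1)$ as $h\to-\infty$, since $\calH_1(h)\to0$. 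Hence for $g_0\in[0,\rho)$, the system \eqref{eq:determine_s}, which by \eqref{eq:def_g_H} reads $g(h)=g_0$, $g'(h)>0$, has the unique solution $h=h(g_0)$. The formula for $s(g_0)$ is then just \eqref{eq:def_s_nlambda_g}.

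\emph{Step 2 (reformulation).} From $g(h)-h=[1+q\calH_1(h)]^{-1}$ we get $1/(g_0-h(g_0))=1+q\calH_1(h(g_0))$. Therefore
\[
s(g_0)=\calH_1\big(h(g_0)\big), \qquad 1+q\,s(g_0)=\frac{1}{g_0-h(g_0)},
\]
and consequently $b(g_0)=g_0-(g_0-h(g_0))=h(g_0)$. This gives the implicit identity
\[
s(g)=\calH_1\big(b(g)\big), \qquad b(g)=g-\big(1+q\,s(g)\big)^{-1}.
\]

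\emph{Step 3 (differentiation).} Lemma~\ref{lemma:properties_s_fun} gives that $s$ is analytic, so we may differentiate freely. Differentiating under the integral (legitimate since $h<\eta$ keeps $\lambda-\tau h$ bounded away from $0$) gives $\calH_1'=\calH_2$ and $\calH_2'=2\calH_3$. We also have
\[
b'(g)=1+\frac{q\,s'(g)}{(1+q\,s(g))^2}=1+\zeta(g)\,s'(g).
\]
The chain rule then yields $s'=\calH_2(b)\,(\zeta s'+1)$. Differentiating once more, using
\[
\zeta'=-\frac{2q^2 s'}{(1+qs)^3},
\qquad
\frac{d}{dg}\big(\zeta s'+1\big)=\zeta' s'+\zeta s''=-\frac{2(qs')^2}{(1+qs)^3}+\zeta s'',
\]
together with $\tfrac{d}{dg}\calH_2(b)=2\calH_3(b)\,b'$, gives exactly the stated expression for $s''$.

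The only step requiring care is Step~1, namely justifying that the sign constraint $g'(h)>0$ singles out the branch $(-\infty,h_0)$ and that $[0,\rho)$ lies in the range of $g$ on that branch. I would handle this through the monotonicity of $g'$ that follows from strict concavity, as above, with $\rho>0$ taken as given from the setup. The remaining steps are routine calculus.
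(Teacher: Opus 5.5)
Your proof is correct. The paper writes out no proof of this lemma; it is imported, together with the estimation procedure, from \cite{li2025ridge}. Within the paper, the natural route to the key identity $s(g_0)=\calH_1(b(g_0))$ is to take $s(g)=\lim_{z\to g}\phi(z)$ (Step~1 of the proof of Theorem~\ref{thm:main_diverge_k}) and evaluate the generalized Mar\v{c}enko--Pastur equation \eqref{eq:MP_G} at the real point $z=g_0<\rho$. On that route the branch is then fixed by $s'>0$, which follows from the integral representation, before differentiating as you do.

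You instead derive the identity directly from the definition \eqref{eq:def_s_nlambda_g}. The first assertion then reduces to checking that the sign constraint in \eqref{eq:determine_s} selects the branch $(-\infty,h_0)$. Your monotonicity argument for $g'$ handles this correctly, including excluding the second preimage in $(h_0,\eta)$ when $h_0<\eta$. This route is more elementary and needs no extension of \eqref{eq:MP_G} to the real axis. It does not, however, give the Stieltjes-transform representation of $s$, which the paper uses for $s'>0$, $s''>0$ and the blow-up of $s'$ at $\rho$.

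Two small refinements are worth making.

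First, Lemma~\ref{lemma:properties_s_fun} asserts analyticity only on the open interval $(0,\rho)$. At $g_0=0$ it is cleaner to use the inverse function theorem, which also avoids any possible circularity with that lemma. Since $g$ is analytic with $g'>0$ on $(-\infty,h_0)$, both $h(\cdot)$ and $s=\calH_1\circ h$ are analytic on the open interval $(-\infty,\rho)$, which contains $0$.

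Second, the lemma says $s'$ and $s''$ are \emph{determined} by the last two equations, which are linear in $s'(g_0)$ and $s''(g_0)$ respectively. You should note that their common coefficient is
$1-\zeta(g_0)\calH_2(b(g_0))=g'(h(g_0))$, which is positive by the second condition in \eqref{eq:determine_s}. Hence $s'(g_0)=\calH_2(h(g_0))/g'(h(g_0))$; this also follows directly from $s=\calH_1\circ h$ and $h'=1/g'$. The equation for $s''(g_0)$ is uniquely solvable in the same way, and this is exactly what Algorithm~\ref{algo:ODE} relies on.
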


Notably, the equations in \eqref{eq:characterization_s2s3} define a system of ordinary differential equations. Therefore, to estimate $s(g)$, $s'(g)$, and $s''(g)$ over the entire domain, it suffices to obtain accurate estimates of the initial value $s(0)$ and the functions $\calH_j(h)$, $j=1,2,3$. The estimate of $s(0)$ can be obtained by solving \eqref{eq:determine_s} at $g_0 =0$. The remaining estimation can then be carried out using standard ODE solvers. The resulting procedure is computationally efficient. In particular, the third equation in \eqref{eq:characterization_s2s3} is linear in $s'(g_0)$ once $s(g_0)$ and $\calH_2(\cdot)$ are specified. Similarly, the last equation in \eqref{eq:characterization_s2s3} is linear in $s''(g_0)$ once $s(g_0)$, $s'(g_0)$, $\calH_2(\cdot)$, and $\calH_3(\cdot)$ are specified.

We summarize the estimation procedure, given estimators of $\rho$ and $\calH_j$, $j=1,2,3$, in Algorithm~\ref{algo:ODE}.

\begin{algorithm}[htbp]
\caption{ODE Formulation for Estimation of $s(g)$, $s'(g)$, and $s''(g)$}
\label{algo:ODE}
\begin{algorithmic}
\STATE \textbf{Input:} $q = p_1/(n-1-k)$, ~$\rho= \hat{\rho}$, ~$\calH_1(\cdot) =  \hat{\calH}_1(\cdot)$,~ $\calH_2(\cdot) = \hat{\calH}_2(\cdot)$, ~$\calH_3(\cdot)= \hat{\calH}_3(\cdot)$;
\STATE \textbf{Initial:} Find  $s_0$ as the unique solution to 
\[\calH_1 \Big( \frac{-1}{1+q s}\Big)  = s \qquad \text{and}\qquad \calH_2\Big(\frac{-1}{1+q s}\Big) < q^{-1} (1+q s)^2.\]
\STATE \textbf{ODE}: On $g \in [0,~\rho)$, solve the following ODE with the initial $s(0) =s_0$:
\begin{equation*}
\begin{aligned}
&s'(g) =  \frac{\calH_2( b(g) )}{ 1 - \calH_2(b(g)) \zeta(g) },\\
&s''(g) = \frac{2[\zeta(g)s'(g) +1]^2 \calH_3(b(g)) - 2{[1+q s(g)]^{-3} } {\calH_2(b(g)) (q s'(g))^2}  }{1 -  \zeta(g) \calH_2(b(g))},\\
&b(g) = g - \frac{1}{1+q s(g)}, ~~\mbox{and } ~~ \zeta(g) = \frac{q}{ (1+q s(g))^2} .
\end{aligned}
\end{equation*}
\STATE \textbf{Output:} Estimated functions: $\hat{s}(g) =  s(g)$, ~$\hat{s}'(g) = s'(g)$, ~$\hat{s}''(g) = s''(g)$ on $[0,~\hat{\rho})$.
\end{algorithmic}
\end{algorithm}

Following Algorithm~\ref{algo:ODE}, we estimate $\beta$ by $\hat{\beta}$, defined as the unique solution to
\[
\hat{\beta}^2
\hat{s}'(\hat{\beta})
=
\frac{k}{p_1},
\qquad
\hat{\beta}\in(0,\hat{\rho}).
\]
The normalization parameters $\Theta_1(\lambda, q)$ and $\Theta_2(\lambda,q)$ are then estimated by
\begin{equation}
\label{eq:estimators_Theta1Theta2}
\hat{\Theta}_1(\lambda)
=
\frac{1}{\hat{\beta}}
\left[
1+
\frac{p_1}{k}
\hat{\beta}\,
\hat{s}(\hat{\beta})
\right],
\qquad
\hat{\Theta}_2(\lambda)
=
\left[
\frac{(p_1/k)^3}{2}
\hat{s}''(\hat{\beta})
+
\frac{(p_1/k)^2}{\hat{\beta}^3}
\right]^{1/3}.
\end{equation}

\subsection{Estimation of $\calH_j$}

The remaining tasks are to estimate $\calH_j(\cdot)$ and $\rho$. To this end, we explore the M-P equation \eqref{eq:M_P_equation} and the convergence of $\hat{\varphi}$ as shown in Result (iii) of Lemma \ref{lemma:determinist_equivalent}. 

First of all, by differentiating both sides of \eqref{eq:M_P_equation} with respect to $z$, we obtain that 
\begin{equation*}
    \begin{split}
        \frac{z}{\lambda q} + \frac{1}{\lambda q \varphi(z)} = \calH_1( - \lambda \varphi(z)),\\
        \frac{1}{\lambda^2 q \varphi^2(z)}  -\frac{1}{\lambda^2 q \varphi'(z)} = \calH_2(-\lambda \varphi(z)).
    \end{split}
\end{equation*}
Here, we extend the definition of $\calH_j(h)$ to the complex domain by allowing $h$ to be complex-valued. 
Since $\varphi(z)$ and $\varphi'(z)$ can be precisely estimated by $\hat{\varphi}(z)$ and $\hat{\varphi}'(z)$,  we define  
\begin{equation}\label{eq:def_Q1Q2}
    Q_1(z)
=
\frac{z}{\lambda q}
+
\frac{1}{\lambda q \hat{\varphi}(z)},
\qquad
Q_2(z)
=
\frac{1}{\lambda^2 q \hat{\varphi}^2(z)}
-
\frac{1}{\lambda^2 q \hat{\varphi}'(z)},
\quad z \in \mathbb{C}^{+}.
\end{equation}
It follows then uniformly for $z$ in any closed and compact subset $\mathcal{C}$ of $\mathbb{C}^+$,
\begin{equation}\label{eq:Q12_approximate_H12}
Q_1(z)  -  \calH_1(-\lambda \hat{\varphi}(z))  \stackrel{P}{\longrightarrow} 0, \qquad   Q_2(z) - \calH_2(-\lambda \hat{\varphi}(z)) \stackrel{P}{\longrightarrow} 0.
\end{equation}

Note that $F^{\Sigma_0}$ can be well-approximated by a weighted discrete measure of the form
\[
F^{\Sigma_0}(\tau)
\simeq
\sum_{b=1}^B
w_b\,\mathbb{I}(\tau>\sigma_b),
\]
where $\{\sigma_b\}_{b=1}^B$ is a user-specified grid chosen to densely cover the interval 
\[\bigl[\liminf \ell_{\min}(\Sigma_0),~\limsup \ell_{\max}(\Sigma_0)\bigr]\] and the weights $\{w_b\}_{b=1}^B$ satisfy $w_b\geq 0$ and $\sum_{b=1}^B w_b=1$. Accordingly, the target functions admit the approximation
\[
\calH_j(h)
=
\int
\frac{\tau^j\, dF^{\Sigma_0}(\tau)}
{(\lambda-\tau h)^j}
\simeq
\sum_{b=1}^B
\frac{
w_b \sigma_b^j
}{
(\lambda-\sigma_b h)^j
}
\coloneqq
\tilde{\calH}_j(h;w_1,\dots,w_B),
\qquad
j=1,2.
\]
In the following estimation procedure, we treat the grid points $\{\sigma_b\}_{b=1}^B$ as fixed, while the weights $\{w_b\}_{b=1}^B$ are treated as unknown model parameters. The resulting estimators are generally insensitive to the particular choice of $\{\sigma_b\}_{b=1}^B$, provided that the grid is chosen sufficiently dense.

Motivated by the relationship \eqref{eq:Q12_approximate_H12}, we consider a grid of points \(z\) in $\mathbb{C}^+$, denoted by \(\{z_i\}_{i=1}^I\). We select the weights \(w_b\) such that
\[Q_1(z) \simeq  \tilde{\calH}_1(-\lambda\hat{\varphi}(z); w_1, \dots, w_B),  \quad Q_2(z) \simeq \tilde{\calH}_2(-\lambda\hat{\varphi}(z); w_1, \dots, w_B), \quad z \in \{z_i\}_{i=1}^I.  \]
In particular,  given the grids $\{\sigma_b\}_{b=1}^B$ and $\{z_i\}_{i=1}^I$,  define the loss function as 
\begin{equation} \label{eq:L_infty_loss}
\begin{split}
L_\infty&(w_1,\dots, w_B)= \max_{i=1,\dots,I} \max_{j=1,2} \Big\{ |\Re(e_{ij} )|, |\Im(e_{ij})|\Big\},~~\mbox{where} \\
    e_{ij} &= \frac{Q_j(z_i) - \tilde{\calH}_j(-\lambda\hat\varphi(z_i), w_1,\dots, w_B)}{|Q_j(z_i)|}, \quad i = 1,\dots, I; j=1,2.
\end{split}
\end{equation}
We choose the weights as
\begin{equation}\label{eq:arg_min_weights}
(w_1^*, \dots, w_B^*) = \arg\min_{w_b \geq 0,\ \sum w_b = 1} L_\infty(w_1, \dots, w_B).
\end{equation}
To mitigate the impact of mass points associated with negligible weights, we incorporate a truncation step applied to $(w_1^*,\dots,w_B^*)$. Specifically, for some small constant $a>0$, we set $w_b^*=0$ whenever $w_b^*<a$, and subsequently renormalize the remaining weights so that they sum to one. With a slight abuse of notation, we continue to use $B$ to denote the number of retained nonzero weights after truncation. The resulting positive weights and associated mass points are denoted by $\{\hat{w}_b: b=1,\dots,B\}$ and $\{\hat{\sigma}_b: b=1,\dots,B\}$, respectively.  Without loss of generality, we assume that the mass points are ordered such that
\[
\hat{\sigma}_1>\hat{\sigma}_2>\cdots>\hat{\sigma}_B.
\]
The recommended choices of the grids $\{\sigma_b\}$ and $\{z_i\}$ are given in Section \ref{sec:additional_details_estimation}. 

The choice of the $L_\infty$-type loss is motivated primarily by computational considerations. In particular, once the support points $\{\sigma_b\}_{b=1}^B$ are specified, the resulting optimization problem can be reformulated as a linear program. This formulation is presented in Algorithm \ref{algo:linear_program}. 
\begin{algorithm}[h]
\caption{Linear Programming Formulation for Selecting Weights}
\label{algo:linear_program}
\begin{algorithmic}
\STATE \textbf{Input:} A grid of masses \(\sigma_1, \dots, \sigma_B\); a grid of points \(z_1, \dots, z_I\); eigenvalues $\ell_j$'s of $\bW_{k2}$; $\lambda>0$. 
\STATE \textbf{Calculation:}  Compute
\[e_{ij} =  \frac{Q_j(z_i)}{|Q_j(z_i)| }- \frac{1}{|Q_j(z_i)|}\sum_{b=1}^B \frac{\sigma_b^jw_b}{(\sigma_b  \lambda\hat\varphi(z_i) +\lambda)^j},~ i =1,\dots, I; j=1,2,\]

\STATE \textbf{Optimization:} 
Find the optimal $(w_1,\dots, w_B, \theta)$ through the linear programming
\renewcommand{\arraystretch}{0.3}  
\[
\begin{array}{ll}
\underset{\theta,w_1,\dots,w_B}{\text{minimize}} & \phantom{-\theta\leq}\theta\\[1pt]
\text{subject to} &  -\theta\leq \Re(e_{ij}) \leq \theta, \quad \forall~ i = 1,\dots, I; j=1,2, \\[1pt]
&  -\theta\leq \Im(e_{ij}) \leq \theta, \quad \forall~ i = 1,\dots, I; j=1,2, \\[1pt]
& w_b \geq 0, \quad \forall b = 1,\dots, B, \\[1pt]
&  \mbox{and } w_1 + w_2 + \dots + w_B = 1.
\end{array}
\]
\STATE \textbf{Truncation}: To mitigate the effect of floating-point errors, truncate the optimal weight $w_b$ to $w_b \times \mathbb{I}(w_b>B^{-1}10^{-d})$ and rescale the truncated weights to be of sum one. The recommended choice of $d$ is $d =2$.  
\STATE \textbf{Output:} All positive weights \(\hat{w}_1, \dots, \hat{w}_B\), the associated masses $\hat{\sigma}_1, \dots, \hat{\sigma}_B$ and loss \(\theta\).
\end{algorithmic}
\end{algorithm}

The functions $\calH_j(\cdot)$ are then estimated by
\[
\hat{\calH}_j(h)
=\int \frac{ \tau^j d\hat{F}(\tau)}{ (\lambda - \tau h)^j }   = 
\sum_{b=1}^{B}
\frac{\hat{w}_b \hat{\sigma}_b^j}{(\lambda - \hat{\sigma}_b h)^j},
\qquad
j=1,2,3,
\quad
0<h<\lambda/\hat{\sigma}_1.
\]
Here, $\eta$ is estimated by $\lambda/\hat{\sigma}_1$. Further, the function $g(h)$ is estimated by 
\[ \hat{g}(h) = h + \Big[ 1+ q \hat{\calH}_1(h) \Big]^{-1}.\]
Further, recall the definition of $\rho$ as in \eqref{eq:def_rho}. We replace $g(h)$ by $\hat{g}(h)$ and estimate $\rho$ as follows. 
(i) If $q \hat{w}_1 \geq 1$, then $\hat{g}'(h) >0$ for all $h < \lambda/\hat{\sigma}_1$ and we estimate $\rho$ by 
\[ \hat{\rho} = \lim_{h \to \lambda/\hat{\sigma}_1} \hat{g}(h) = \lambda/\hat\sigma_{1}.\]
(ii) If $q\hat{w}_1 <1$, then there exists a unique root $\hat{h}_0< \lambda/\hat{\sigma}_1$ to 
\[ q \hat{\calH}_2(\hat{h}_0) =  \Big( 1+ q \hat{\calH}_1(\hat{h}_0)  \Big)^2.\]
Then, we estimate $\rho$ by 
\[  \hat{\rho}  = \hat{g}(\hat{h}_0) =  \hat{h}_0 + \Big[  1+ q \hat{\calH}_1(\hat{h}_0) \Big]^{-1}. \]


\subsection{Consistency}\label{subsec:consistency}

In this section, we establish the consistency of the proposed estimators. The results are adapted from Theorem 3.2, Lemma 3.3, and Lemma 3.4 of \cite{li2025ridge} whose proofs are therefore omitted. For a discrete grid of complex values $J = \{t_1, t_2, \dots\}$, we define its grid size by 
$\operatorname{size}(J) = \sup_{k\geq 2} |t_k - t_{k-1}|$.
Suppose that $\mathcal{C}$ is a compact subset of $\mathbb{C}$ and $J$ is a discrete grid. We define the one-sided Hausdorff distance between $J$ and $\mathcal{C}$ by  
\[ d_H(J, \mathcal{C}) = \sup_{z\in\mathcal{C}} \inf_{t\in J} |t - z|. \]


\begin{theorem}
\label{thm:consistency_estimators}
Suppose that Conditions~\ref{enum:high_dimension_regime}--\ref{enum:edger_regularity} hold. Let $\hat{\Theta}_1(\lambda)$ and $\hat{\Theta}_2(\lambda)$ denote the estimators constructed from the grids $R_n=\{\sigma_b\}$ and $J_n=\{z_i\}$.  Let $\varphi(J_n)$ be the image set $\{\varphi(z): z\in J_n\}$.  We impose the following additional conditions. As in Theorem \ref{thm:main_diverge_k}, assume that $k/n \to \gamma \in (0,\alpha)$. 
Define
\[
\mathfrak{S}
=
\Big[
\liminf_{n\to\infty}\ell_{\min}(\Sigma_0),
\,
\limsup_{n\to\infty}\ell_{\max}(\Sigma_0)
\Big].
\]
We assume that the grid $R_n$ provides an increasingly dense approximation to $\mathfrak{S}$ in the sense that
\[
d_H(R_n,\mathfrak{S})
=
o(n^{-2/3}).
\]
Then, depending on the design of the grid $J_n$, the following conclusions hold.

\begin{itemize}
    \item[(i)]
    Suppose there exists a compact and open subset $\mathcal{C}\subset\mathbb{C}^+$ such that
    $d_H\bigl(\varphi(J_n),\mathcal{C}\bigr)=o(1)$. 
    Then
    \[
    \hat{\Theta}_1(\lambda)-\Theta_1(\lambda,q)
    =
    o_p(1)
    \qquad
    \text{and}
    \qquad    
    \hat{\Theta}_2(\lambda)-\Theta_2(\lambda,q)
    =
    o_p(1).
    \]

    \item[(ii)]
    Let $h_\beta$ denote the solution to \eqref{eq:determine_s} with $g_0=\beta$. Suppose that, for some constant $\varepsilon>0$,
    \[
    d_H\Big(
    \varphi(J_n),
    [h_\beta/\lambda-\varepsilon,\,
    h_\beta/\lambda+\varepsilon]
    \Big)
    =
    o(n^{-2/3}).
    \]
    Then
    \[\hat{\Theta}_1(\lambda)-\Theta_1(\lambda,q)= o_p(n^{-2/3})
    \qquad
    \text{and}
    \qquad
    \hat{\Theta}_2(\lambda)-\Theta_2(\lambda,q)=o_p(1).\]
\end{itemize}
\end{theorem}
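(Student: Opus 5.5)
The plan is to reduce Theorem~\ref{thm:consistency_estimators} to the argument of Theorem~3.2 and Lemmas~3.3--3.4 of \cite{li2025ridge}. The key observation is that the estimators $\hat{\Theta}_1(\lambda)$ and $\hat{\Theta}_2(\lambda)$ depend on the data only through the eigenvalues of $\bW_{k2}$, via $\hat{\varphi}$ and $\hat{\varphi}'$. Under $H_0$, $X$ is independent of $\bP_k$. Conditionally on $\bY$, the matrix $\bW_{k2}$ is therefore $(n-1-k)^{-1}\Sigma_0^{1/2}\bZ_x U U^T\bZ_x^T\Sigma_0^{1/2}$, where $U$ has orthonormal columns spanning the range of $(I_n-\bP_k)$ intersected with the centering complement. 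By rotation invariance in the Gaussian case, and by the anisotropic local law otherwise, this is a sample covariance matrix with $n-1-k$ degrees of freedom and population $\Sigma_0$. The only inputs needed are then exactly those used in \cite{li2025ridge}, and these are supplied here by Lemma~\ref{lemma:determinist_equivalent}(iii): uniformly on compacts of $\mathbb{C}^+\cup\mathbb{R}_-$,
\[
\hat\varphi-\varphi=o_p(n^{-2/3}), \qquad \hat\varphi'-\varphi'=o_p(n^{-2/3}).
\]

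\textbf{Step 1.} Propagate these bounds to $Q_1,Q_2$ in \eqref{eq:def_Q1Q2}. Since $\varphi$ is bounded away from $0$ on compacts, and the identities from differentiating \eqref{eq:M_P_equation} hold exactly for $\varphi$, we obtain
\[
\sup_{z\in J_n}\big|Q_j(z)-\calH_j(-\lambda\hat\varphi(z))\big|=o_p(n^{-2/3}).
\]
The true spectral measure, discretized onto $R_n$ with $d_H(R_n,\mathfrak{S})=o(n^{-2/3})$, is then a feasible point of the linear program with loss $o_p(n^{-2/3})$. Consequently the optimizer $\hat F$ satisfies
\[
\big|\calH_j^{\hat F}(-\lambda\hat\varphi(z))-\calH_j(-\lambda\hat\varphi(z))\big|=o_p(n^{-2/3}), \qquad z\in J_n.
\]
The truncation step perturbs the weights by at most $B\cdot B^{-1}10^{-d}$ in total mass. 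This needs care: I would handle it as in \cite{li2025ridge}, where the threshold is absorbed by letting it vanish, or accepted as the source of the fixed-accuracy caveat.

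\textbf{Step 2.} Upgrade agreement on the grid to agreement of $\calH_1,\calH_2,\calH_3$, together with $\rho$ and $h_0$, near the relevant point. This step is the main obstacle. In case (i), $\varphi(J_n)$ fills an open set $\mathcal C\subset\mathbb{C}^+$. Agreement of Stieltjes-type transforms $\int\tau^j(\lambda-\tau h)^{-j}dF$ on a set with an accumulation point, combined with tightness (supports lie in $\mathfrak S$), gives weak convergence $\hat F-F^{\Sigma_0}\to0$ by the identity theorem and a compactness argument. Hence all $\calH_j$ and their derivatives converge uniformly on compacts of $(-\infty,\eta)$, at rate $o_p(1)$. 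The edge-regularity condition (Definition~\ref{def:edge_regularity}) ensures $|g'|>\epsilon_1$ near $\eta$. This makes $h_0$ and $\rho$ stable under perturbation, so that $\hat\rho\to\rho$; the case split in the estimator for $\rho$ must also match the true phase with probability tending to one. In case (ii), the grid images are dense in a real interval around $h_\beta/\lambda$, so $\calH_j$ is matched to $o_p(n^{-2/3})$ on a neighbourhood of $h_\beta$. Analyticity (Cauchy estimates on a slightly smaller interval, after extending to a complex neighbourhood) then gives the derivatives $\calH_2,\calH_3$ to accuracy $o_p(1)$ at the least.

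\textbf{Step 3.} Pass through the ODE system of Lemma~\ref{thm:determine_s} and Algorithm~\ref{algo:ODE}. The initial value $s_0$ solves an equation whose derivative is nondegenerate by the second condition in \eqref{eq:determine_s}, so the implicit function theorem gives $\hat s_0-s_0$ of the same order as the $\calH$ errors. The right-hand sides are Lipschitz in $(s,s')$ on $[0,\beta+\varepsilon]$, since $1-\zeta\calH_2>0$ away from $\rho$ and $\beta<\rho$ strictly. Gronwall then transfers the errors to $\hat s,\hat s',\hat s''$ uniformly there. Next, $\hat\beta$ solves $\beta^2 s'(\beta)=k/p_1$. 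Its derivative in $\beta$ is $2\beta s'+\beta^2 s''>0$ by Lemma~\ref{lemma:properties_s_fun}, so $\hat\beta-\beta$ is of the same order. Plugging into \eqref{eq:estimators_Theta1Theta2}, the formula for $\Theta_1$ involves only $s$ and $\beta$, which are accurate to $o_p(n^{-2/3})$ in case (ii). Moreover, $\partial_\beta\Theta_1=0$ at the defining equation: differentiating $\beta^{-1}+(p_1/k)s(\beta)$ gives $-\beta^{-2}+(p_1/k)s'(\beta)=0$. This cancellation of first-order sensitivity to $\hat\beta$ is what secures the $o_p(n^{-2/3})$ rate for $\hat\Theta_1$ despite $\hat s''$ being only $o_p(1)$. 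Meanwhile, $\hat\Theta_2$ depends on $s''$ and is consistent at rate $o_p(1)$. In case (i), all of these conclusions hold at rate $o_p(1)$.
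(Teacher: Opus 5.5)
Your reduction is the same as the paper's, which gives no argument beyond citing Theorem~3.2 and Lemmas~3.3--3.4 of \cite{li2025ridge}. The problems are in how you transport the grid information to $\hat\beta$.

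\textbf{The rate in part~(ii) does not follow from your Step~3.} Gronwall along the ODE started at $g=0$ needs $\hat s_0-s_0$ and $\hat\calH_j(b(g))-\calH_j(b(g))$ to be $o_p(n^{-2/3})$ for every $g\in[0,\beta]$. That means for $h$ on the whole segment from $h(0)=-1/(1+qs_0)$ to $h_\beta$.
\begin{itemize}
\item Hypothesis~(ii) is assumed only for \emph{some} $\varepsilon>0$, which may be small. So the linear program certifies $o_p(n^{-2/3})$ agreement of $\hat\calH_1,\hat\calH_2$ only in an $O(\varepsilon)$-neighbourhood of $h_\beta$.
\item Elsewhere you have only the $o_p(1)$ control from the identity-theorem/compactness argument, since continuation off an interval degrades rates.
\item Gronwall therefore yields only $\hat s(\hat\beta)-s(\beta)=o_p(1)$. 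The cancellation $\partial_\beta\Theta_1=0$ does not help, because the error sits in $\hat s$ itself rather than in $\hat\beta$.
\end{itemize}
The missing idea is that the ODE is only a computational device. Its exact solution is the parametric curve of Lemma~\ref{thm:determine_s}: $\hat s(\hat g(h))=\hat\calH_1(h)$ and $\hat s'(\hat g(h))=\hat\calH_2(h)/\hat g'(h)$, where $\hat g'(h)=1-q\hat\calH_2(h)[1+q\hat\calH_1(h)]^{-2}$. Consequently:
\begin{itemize}
\item $\hat\beta=\hat g(\hat h)$, where $\hat h$ solves the local equation $\hat g(h)^2\hat\calH_2(h)/\hat g'(h)=k/p_1$.
\item The population version of this equation has $h$-derivative $g'(h_\beta)[2\beta s'(\beta)+\beta^2 s''(\beta)]$, which is bounded away from zero.
\item $\hat\Theta_1=1/\hat g(\hat h)+(p_1/k)\hat\calH_1(\hat h)$.
\end{itemize}
Only $\hat\calH_1,\hat\calH_2$ near $h_\beta$ enter, so the local matching gives $\hat\Theta_1-\Theta_1=o_p(n^{-2/3})$ directly. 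Control of $\hat\calH_3$ near $h_\beta$ at $o_p(1)$ then handles $\hat\Theta_2$. The rest of the trajectory matters only qualitatively.

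\textbf{That qualitative part is also not delivered by your Step~2, and this affects both parts.} Weak convergence $\hat F\Rightarrow F^{\Sigma_0}$ plus edge regularity does not give $\hat\rho\to\rho$, nor uniform convergence of $\hat\calH_j$ on compacts of $(-\infty,\eta)$. Consider an atom of vanishing mass $w$ at a grid point $\sigma$ with $\lambda/\sigma<h_\beta$.
\begin{itemize}
\item This is possible whenever $h_\beta>0$, because the one-sided condition $d_H(R_n,\mathfrak S)=o(n^{-2/3})$ does not forbid grid points above $\ell_{\max}(\Sigma_0)$.
\item Such an atom changes $\tilde\calH_j$ on the fitting grid by only $O(w)$, so the linear program need not remove it.
\item Yet it creates a pole of $\hat\calH_j$ at $\lambda/\sigma$ and puts the procedure in the branch $q\hat w_1<1$ with $\hat h_0=\lambda/\sigma-O(\sqrt w)$.
\item This gives $\hat\rho\approx g(\lambda/\sigma)<\beta$, so $\hat\beta$ is wrong however small $w$ is.
\end{itemize}
You need an extra argument or assumption controlling the top of the support of $\hat F$. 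Examples would be a grid confined to a shrinking neighbourhood of the spectrum of $\Sigma_0$, or a truncation rule that provably removes such atoms.

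The truncation issue you flagged also has to be resolved rather than accepted. With the threshold $B^{-1}10^{-d}$ for fixed $d$, up to $10^{-d}$ of mass can be removed, and any non-vanishing removed mass would already destroy the $o_p(n^{-2/3})$ rate.
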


To further illustrate the two conditions imposed on $J_n$, note that for any $F^{\Sigma_0}$ and $q$, one can construct a sequence of grids $\{J_n\}$ satisfying the condition in part~(i). In this case, the proposed estimators are consistent, although the convergence rate may be slower than $n^{-2/3}$. By contrast, part~(ii) provides a sufficient condition for achieving the sharper $o_p(n^{-2/3})$ convergence rate. The condition in part~(ii), however, may fail depending on the location of $\beta$. In particular, when $k$ is excessively large, the corresponding value of $h_\beta$ may lie too close to the boundary point
$\eta={\lambda}/{\ell_{\max}(\Sigma_0)}$,
making it impossible to construct a suitable sequence $\{J_n\}$ satisfying the required approximation condition. Conversely, the following lemma shows that the proposed estimators achieve $o_p(n^{-2/3})$ accuracy provided that the selected $k$ is not excessively large. Recall that as in Theorem \ref{thm:main_diverge_k}, $k/n\to\gamma\in(0,1)$.
\begin{lemma}
\label{lemma:consistency_condition3}
Under the conditions of Theorem~\ref{thm:consistency_estimators}, there exists a constant $\gamma_{0}>0$ such that, for any
$\gamma < \gamma_{0}$, 
a sequence of grids $\{J_n\}$ can be constructed to satisfy the condition in part~(ii) of Theorem~\ref{thm:consistency_estimators}.
\end{lemma}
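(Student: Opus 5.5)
The plan is to reduce the condition in part~(ii) of Theorem~\ref{thm:consistency_estimators} to a statement about where $h_\beta$ lies, and to show that for small $\gamma$ it lies in a fixed compact region where $\varphi$ can be inverted smoothly. First I would show that $\beta\to 0$ as $\gamma\to 0$, uniformly in large $n$. By Lemma~\ref{lemma:properties_s_fun}, $s'(g)>0$ and $s''(g)>0$ on $[0,\rho)$, so $s'(g)\ge s'(0)$ there. Using Condition~\ref{enum:boundedness_spectral_norm}, the representation $s'(0)=\calH_2(b(0))/(1-\zeta(0)\calH_2(b(0)))$ from Lemma~\ref{thm:determine_s}, and $q\asymp 1$, I would bound $s'(0)$ below by a constant $c_0>0$ independent of $n$. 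Then \eqref{eq:def_beta} gives
\[
\beta^2 c_0\le k/p_1\asymp \gamma,
\]
so $\beta\le C\sqrt{\gamma}$.

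Next I would locate $h_\beta=h(\beta)$. The inverse map $h(\cdot)$ is continuous and increasing on $(-\infty,\rho)$ with derivative $1/g'(h)$. Near $g=0$, the point $h(0)=-1/(1+qs(0))$ is strictly negative and bounded away from both $0$ and $\eta$. Here $g'(h(0))$ is bounded below: by strict concavity $g'$ is decreasing, and Condition~\ref{enum:edger_regularity} together with $h(0)<h_0$ keeps it away from zero uniformly. Hence, for $\gamma<\gamma_0$ with $\gamma_0$ small, $h_\beta$ stays in a fixed compact interval $[-c_1,-c_2]\subset(-\infty,0)$ for all large $n$. In particular, $h_\beta$ is uniformly separated from the boundary point $\eta$, which is exactly the obstruction described after Theorem~\ref{thm:consistency_estimators}.

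Then I would construct $J_n$ on the negative real axis. For $x\in\mathbb{R}_-$, $\varphi(x)$ is real, smooth and strictly increasing, and the relations preceding \eqref{eq:def_Q1Q2} identify the argument $-\lambda\varphi(x)$ with the $h$-variable. So the target interval $[h_\beta/\lambda-\varepsilon,h_\beta/\lambda+\varepsilon]$ (interpreted, up to the sign convention, as the $\varphi$-values matching $h_\beta$) lies, for small fixed $\varepsilon$, inside the image $\varphi([-L,-l])$ of a fixed compact set $[-L,-l]\subset\mathbb{R}_-$. This uses that $\varphi$ maps $\mathbb{R}_-$ onto an interval containing $(0,\sup)$, which follows from the M-P equation \eqref{eq:M_P_equation}. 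I would take $J_n$ to be an equispaced grid in $[-L,-l]$ with mesh $n^{-1}$. Because $\varphi$ is Lipschitz on $[-L,-l]$ with constants uniform in $n$ (again by Condition~\ref{enum:boundedness_spectral_norm} and $q\asymp1$), we get $d_H(\varphi(J_n),\cdot)=O(n^{-1})=o(n^{-2/3})$. Since $[-L,-l]$ is a compact subset of $\mathbb{C}^+\cup\mathbb{R}_-$, Lemma~\ref{lemma:determinist_equivalent}(iii) also applies on this grid.

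The main obstacle is uniformity in $n$. One must show that $\gamma_0$, the lower bound $c_0$, the separation of $h(\beta)$ from $\eta$, and the Lipschitz and inverse-Lipschitz constants of $\varphi$ on $[-L,-l]$ can all be chosen independently of $n$, even though $F^{\Sigma_0}$ and $q$ vary. I would handle this by a compactness argument: extract subsequences along which $F^{\Sigma_0}$ converges weakly and $q$ converges. Each quantity above depends continuously on $(F^{\Sigma_0},q)$ in this topology, given Conditions~\ref{enum:boundedness_spectral_norm} and \ref{enum:edger_regularity}. A failure of uniformity would then contradict the corresponding strict inequality in the limit.
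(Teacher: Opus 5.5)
Your overall plan is the right one: small $\gamma$ gives small $\beta$, which keeps $h_\beta$ negative and away from $\eta$, and then a fine grid on a compact piece of $\mathbb{R}_-$ does the job. Your Step~1 is fine: $s'(0)=\calH_2(h(0))/g'(h(0))\ge\calH_2(h(0))$, since $g'\le 1$. Your reading of the sign is also correct: because $h=-\lambda\varphi$, the value to be matched is $v_\beta:=-h_\beta/\lambda$. The paper itself omits the proof and defers to \cite{li2025ridge}.

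The gap is the assertion that the target interval lies in $\varphi([-L,-l])$. The content of the lemma is that $v_\beta$ is attained by $\varphi$ at a point of $\mathbb{R}_-$ lying at a fixed distance from both $-\infty$ and $0$. Your bound $h_\beta\le-c_2$ controls the first. For the second you need $v_\beta\le\varphi(-l)$ for a fixed $l>0$. Knowing only $v_\beta\le c_1/\lambda$, with $c_1$ obtained from $s(0)$ (i.e.\ from $\calG$), does not give this. When $q>1$, the regime of main interest, $\varphi$ has no mass at $0$, so $\varphi(\mathbb{R}_-)=(0,\varphi(0^-))$ is a bounded interval, and $c_1/\lambda<\varphi(0^-)$ is not automatic. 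When $q$ is near $1$, the preimage of a bounded value can drift towards $0$. The remark that $\varphi$ maps $\mathbb{R}_-$ onto an interval ``containing $(0,\sup)$'' does not address either case.

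The missing ingredient is the identity $h(0)=-\lambda\varphi(-\lambda)$.
\begin{itemize}
\item Evaluating \eqref{eq:M_P_equation} at $z=-\lambda$ gives $1+q\calH_1(-\lambda\varphi(-\lambda))=1/(\lambda\varphi(-\lambda))$, hence $g(-\lambda\varphi(-\lambda))=0$.
\item Since $g$ is concave, tends to $-\infty$ at $-\infty$, and satisfies $g(h)>h\ge0$ on $[0,\eta)$, this is its unique zero, namely $h(0)$. Consistently, $s(0)=q^{-1}\Omega_1(\lambda,q)$, the Newton starting value in the paper's implementation notes.
\item Because $h(\cdot)$ is increasing, $v_\beta\le\varphi(-\lambda)$ for every $\beta\ge0$. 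So the preimage of $v_\beta$ always lies at or to the left of $-\lambda$.
\item For the other side, concavity on $[h(0),0]$ gives $g(h(0)/2)\ge g(0)/2=\{2(1+q\calH_1(0))\}^{-1}\ge c_*>0$ uniformly in $n$.
\item Hence $\beta<c_*$, which your Step~1 guarantees for $\gamma<\gamma_0$, forces $h_\beta<h(0)/2$ (use $h_\beta<h_0$ and monotonicity of $g$ on $(-\infty,h_0)$). Equivalently, $v_\beta\ge\varphi(-\lambda)/2$.
\item Since $\varphi(x)\le1/|x|$ on $\mathbb{R}_-$, the preimage lies in $[-2/\varphi(-\lambda),-\lambda]$. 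Moreover, $\varphi([-4/\varphi(-\lambda),-\lambda/2])$ contains $[v_\beta-\varepsilon,v_\beta+\varepsilon]$ for a small fixed $\varepsilon$.
\item These endpoints are uniformly bounded, because the law with transform $\varphi$ has support in $[0,C]$, so $\varphi(-\lambda)\ge1/(C+\lambda)$.
\end{itemize}
Your equispaced grid with the Lipschitz bound $\varphi'(x)\le x^{-2}$ then finishes the proof. With these explicit bounds, no inverse-Lipschitz estimate or compactness argument is needed.

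This chord argument also replaces your Step~2. There, a lower bound on $g'(h(0))$ alone does not suffice: $g'$ is decreasing, so a bound at $h(0)$ says nothing about $g'$ on the rest of $[h(0),h_\beta]$.
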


It is also of interest to investigate the behavior of the proposed estimators under the alternative hypothesis. Since the estimation procedure relies solely on the eigenvalues of $\bW_{k2}$ together with the prescribed grids $\{\sigma_b\}$ and $\{z_i\}$, its behavior under $H_a$ depends on the extent to which the signal component perturbs the spectrum of $\bW_{k2}$. Under Condition~\eqref{eq:consistency_eq1}, the perturbation induced by the signal term $M\Lambda_m^T\bY$ is asymptotically negligible. Consequently, the proposed estimators continue to be consistent under the alternative hypothesis.

\begin{lemma}
\label{lemma:consistency_under_Ha}
Under $H_a$, suppose that Condition~\eqref{eq:consistency_eq1} holds. Assume further that the conditions in Part~(i) of Theorem~\ref{thm:consistency_estimators} are satisfied. Then,
\[
\hat{\Theta}_1(\lambda,q)-\Theta_1(\lambda,q)=o_p(1),
\qquad
\hat{\Theta}_2(\lambda,q)-\Theta_2(\lambda,q)=o_p(1).
\]
\end{lemma}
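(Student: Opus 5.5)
The estimators $\hat{\Theta}_1(\lambda)$ and $\hat{\Theta}_2(\lambda)$ are deterministic, continuous functionals of the eigenvalues of $\bW_{k2}$ and the fixed grids. Part (i) of Theorem~\ref{thm:consistency_estimators} needs only the convergence $\sup_{z\in\mathcal C}|\hat{\varphi}(z)-\varphi(z)|\to 0$ in probability, together with the same for $\hat{\varphi}'$. So the plan is to show that under $H_a$ the matrix $\bW_{k2}$ differs from a null-type matrix by a perturbation negligible for the Stieltjes transform. We then rerun the argument of part~(i) verbatim.

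\textbf{Decomposition.} Write $\bX=M\Lambda_m^T\bY+\Sigma_0^{1/2}\bZ_x$, with the centering absorbed as in the null analysis. Set $\bW^0_{k2}=(n-1-k)^{-1}\Sigma_0^{1/2}\bZ_x(I_n-\bP_k)\bZ_x^T\Sigma_0^{1/2}$ and $\Delta=(n-1-k)^{-1/2}M\Lambda_m^T\bY(I_n-\bP_k)$. Since $\bP_k$ depends only on $\bY$, which is independent of $\bZ_x$, the matrix $\bW^0_{k2}$ has exactly the null structure. Lemma~\ref{lemma:determinist_equivalent}(iii) therefore applies to its Stieltjes transform $\hat{\varphi}^0$.

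\textbf{Handling $k\ge k_0$.} Because $k\ge k_0$, the column space of $\bP_{k_0}$ is contained in that of $\bP_k$. Hence $I_n-\bP_k=(I_n-\bP_k)(I_n-\bP_{k_0})$, and
\[
\|\Delta\|_2\le \sqrt{n/(n-1-k)}\,\big\|n^{-1/2}M\Lambda_m^T\bY(I_n-\bP_{k_0})\big\|_2=o_p(1)
\]
by Condition~\eqref{eq:consistency_eq1} and $n-k\asymp n$. Writing $\bW_{k2}=(A+\Delta)(A+\Delta)^T$ with $A=(n-1-k)^{-1/2}\Sigma_0^{1/2}\bZ_x(I_n-\bP_k)$, we have $\|A\|_2=O_p(1)$ by Condition~\ref{enum:boundedness_spectral_norm} and standard bounds on the largest singular value. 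It follows that $\|\bW_{k2}-\bW^0_{k2}\|_2\le 2\|A\|_2\|\Delta\|_2+\|\Delta\|_2^2=o_p(1)$.

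\textbf{Transferring to $\hat{\varphi}$.} For $z$ in a compact $\mathcal C\subset\mathbb C^+$ we have $|\ell_j-z|^{-1}\le (\Im z)^{-1}$, and the resolvents are Lipschitz in the eigenvalues. By Weyl's inequality, $|\ell_j(\bW_{k2})-\ell_j(\bW^0_{k2})|\le\|\bW_{k2}-\bW^0_{k2}\|_2$ for each $j$. This gives
\[
\sup_{z\in\mathcal C}|\hat{\varphi}(z)-\hat{\varphi}^0(z)|\le (\inf_{\mathcal C}\Im z)^{-2}\,o_p(1),
\]
and the analogous bound with exponent $-3$ for the derivatives. Combining these with Lemma~\ref{lemma:determinist_equivalent}(iii) for $\hat\varphi^0$ yields uniform consistency of $\hat{\varphi}$ and $\hat{\varphi}'$ on $\mathcal C$, although only at rate $o_p(1)$. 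Consequently $Q_1,Q_2$ in \eqref{eq:def_Q1Q2} satisfy \eqref{eq:Q12_approximate_H12}.

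\textbf{Main obstacle.} The remaining step is to re-enter the proof of Theorem~\ref{thm:consistency_estimators}(i), adapted from \cite{li2025ridge}, and check that it uses only $o_p(1)$ uniform consistency of $\hat\varphi,\hat\varphi'$ on $\mathcal C$, and nothing specific to the null. This is the step I expect to be delicate. That argument passes through the linear-program weights $\hat w_b$ and the convergence $\hat F\to F^{\Sigma_0}$. It then goes through continuity of $\hat\rho$, of the ODE solution in Algorithm~\ref{algo:ODE}, and of $\hat\beta$, which requires the edge regularity Condition~\ref{enum:edger_regularity} to keep $\rho$ and $\beta$ stable under perturbation.

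I would first verify that the identifiability step, namely that $L_\infty$ near zero forces weak convergence of $\hat F$, is driven only by the limits of $Q_j$. I would then verify that the continuity of the ODE map and of the root $\hat\beta$ holds in a neighborhood of the true $\calH_j$. If both hold, the conclusion follows by the continuous mapping theorem. The sharper $o_p(n^{-2/3})$ rate is not claimed here, which is consistent with the fact that $\|\Delta\|_2$ is only $o_p(1)$.
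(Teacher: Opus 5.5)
Your proposal is correct and follows essentially the same route as the paper. The paper justifies this lemma only by noting two facts: the estimators depend solely on the spectrum of $\bW_{k2}$, and under \eqref{eq:consistency_eq1} (implicitly with $k\ge k_0$, exactly as you use it) the decomposition in the proof of Theorems~\ref{thm:consistency_trace_based}--\ref{thm:consistency_largest_root} gives $\|\bW_{k2}-\bW_{k2}^{(0)}\|_2=o_p(1)$, so the null-case consistency of Theorem~\ref{thm:consistency_estimators}(i) carries over. Your Weyl-inequality transfer to uniform $o_p(1)$ closeness of $\hat{\varphi}$ and $\hat{\varphi}'$ makes explicit a step the paper leaves implicit, and the re-entry into the omitted, borrowed proof of Theorem~\ref{thm:consistency_estimators}(i), which you flag as delicate, is not carried out in the paper either.
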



\subsection{Implementation details}\label{sec:additional_details_estimation}

\emph{Implementation details for Algorithm \ref{algo:ODE}}. The proposed ODE requires high accuracy on the initial condition ${s}_0$. We propose using Newton-Raphson method to find $s_0$, starting from the initial point $[(n-1-k)/p_1] [  (\lambda {\varphi}(-\lambda))^{-1} - 1]$. The ODE can then be solved by traditional numerical methods such as the fourth-order Runge-Kutta method (RK4). In our simulation studies, the \texttt{R} package \texttt{deSolve} is used for implementation.

\emph{Implementation details for Algorithm \ref{algo:linear_program}}. 
The recommended choice of the grid \(\{\sigma_b\}_{b=1}^B\) is equally spaced on \([{\ell}_{\min},~ {\ell}_{\max}]\), with \(B \approx 500\). Here, $\ell_{\min}$ and $\ell_{\max}$ are the smallest and largest eigenvalues of $\bW_{k2}$, respectively. The choice of \(B\) strikes a balance between computational efficiency and estimation accuracy.
The recommended choice of \(\{z_i\}_{i=1}^I\) is such that \(\Re({\hat\varphi}(z_i))\) are evenly spaced on \([{\hat\varphi}(1.05{\ell}_{\max} ), {\hat\varphi}(-\lambda)]\), and $\Im({\hat\varphi}(z_i)) = 10^{-2}\times {\ell}_{\max}^{-1}$. We can first select the value of ${\hat\varphi}(z_i)$ and numerically find the corresponding $z_i$ using optimization methods like Newton-Raphson. While more points can potentially improve accuracy, we recommend setting \(I \approx 500\) to balance computational efficiency and estimation precision. 
Traditional linear programming solvers can be used for implementation. In our simulation, we utilize the \texttt{R} package \texttt{Rglpk}, which implements the GNU Linear Programming Kit. With the recommended settings, the problem can be efficiently handled on a typical PC (\(\leq 10\) seconds).

\subsection{Further discussion}
Algorithm~\ref{algo:linear_program} is inspired by the approaches of \citet{el2008spectrum} and \citet{ledoit2012nonlinear}, but differs substantially in both the choice of basis representation and the design of the loss function. The methods of \citet{el2008spectrum} and \citet{ledoit2012nonlinear} are primarily designed to produce smooth estimators of $F^{\Sigma_0}$ through combinations of smooth basis functions and point masses. In contrast, our objective is not to recover $F^{\Sigma_0}$ itself, but rather to accurately estimate the functionals $\calH_j(h)$. From this perspective, smoothness of the estimator of $F^{\Sigma_0}$ is not essential. 
Accordingly, we adopt a simpler representation that approximates $F^{\Sigma_0}$ using only mixtures of point masses. Our numerical experiments indicate that, provided the grid $\{\sigma_b\}$ is sufficiently dense, this approach performs comparably to, and in some cases better than, approaches based on smooth basis functions for estimating $\calH_j(h)$. This suggests that the additional complexity introduced by smooth basis representations provides limited practical benefit in the present setting. Consequently, we focus exclusively on point-mass mixture models throughout the proposed estimation procedure.

The proposed method also differs from existing approaches in the construction of the loss function. The procedures of \citet{el2008spectrum} and \citet{ledoit2012nonlinear} primarily control the discrepancy between $Q_1$ and $\tilde{\calH}_1$. By contrast, our method explicitly penalizes discrepancies involving both $Q_1$ and $Q_2$, since accurate estimation of $\calH_2$ is directly required in the subsequent analysis.

It has been observed in several studies that the procedure of \citet{el2008spectrum} may perform unsatisfactorily in finite samples when the goal is accurate recovery of $F^{\Sigma_0}$ itself; see, for example, \citet{li2013estimation}, \citet{ledoit2015spectrum,ledoit2017numerical}. Our numerical experiments lead to similar conclusions for the point-mass approximation employed in Algorithm~\ref{algo:linear_program}. Nevertheless, accurate recovery of $F^{\Sigma_0}$ is not the primary objective of the present work. Despite the limitations in estimating the spectral distribution itself, the proposed procedure achieves high accuracy in estimating the target functionals $\calH_j(\cdot)$. Supporting numerical evidence is provided in Section~\ref{sec:simulation}.

We also investigated an alternative approach based on replacing $F^{\Sigma_0}(\tau)$ by the QuEST estimator proposed in \citet{ledoit2015spectrum,ledoit2017numerical}. Although the QuEST approach generally provides a more accurate estimation of $F^{\Sigma_0}$, we find that it is less accurate for estimating the functionals $\calH_j(\cdot)$. Empirically, when the aspect ratio $q$ is moderate, for example $q\lesssim 5$, the two methods yield comparable estimation accuracy for $\calH_j(\cdot)$. In this regime, the proposed procedure is computationally more efficient and is therefore preferred in practice. Moreover, establishing the sharper $o_p(n^{-2/3})$ convergence rate for QuEST-based estimators appears technically challenging.

\section{Additional Simulation Results}\label{sec:addition_simulation_result}
Figures \ref{fig:power_AR_exp_decay}--\ref{fig:power_AR_low_rank} display additional power curves 
\begin{figure}[htbp]
    \centering
    \begin{subfigure}[t]{0.23\linewidth}
        \centering
        \includegraphics[width=\linewidth]{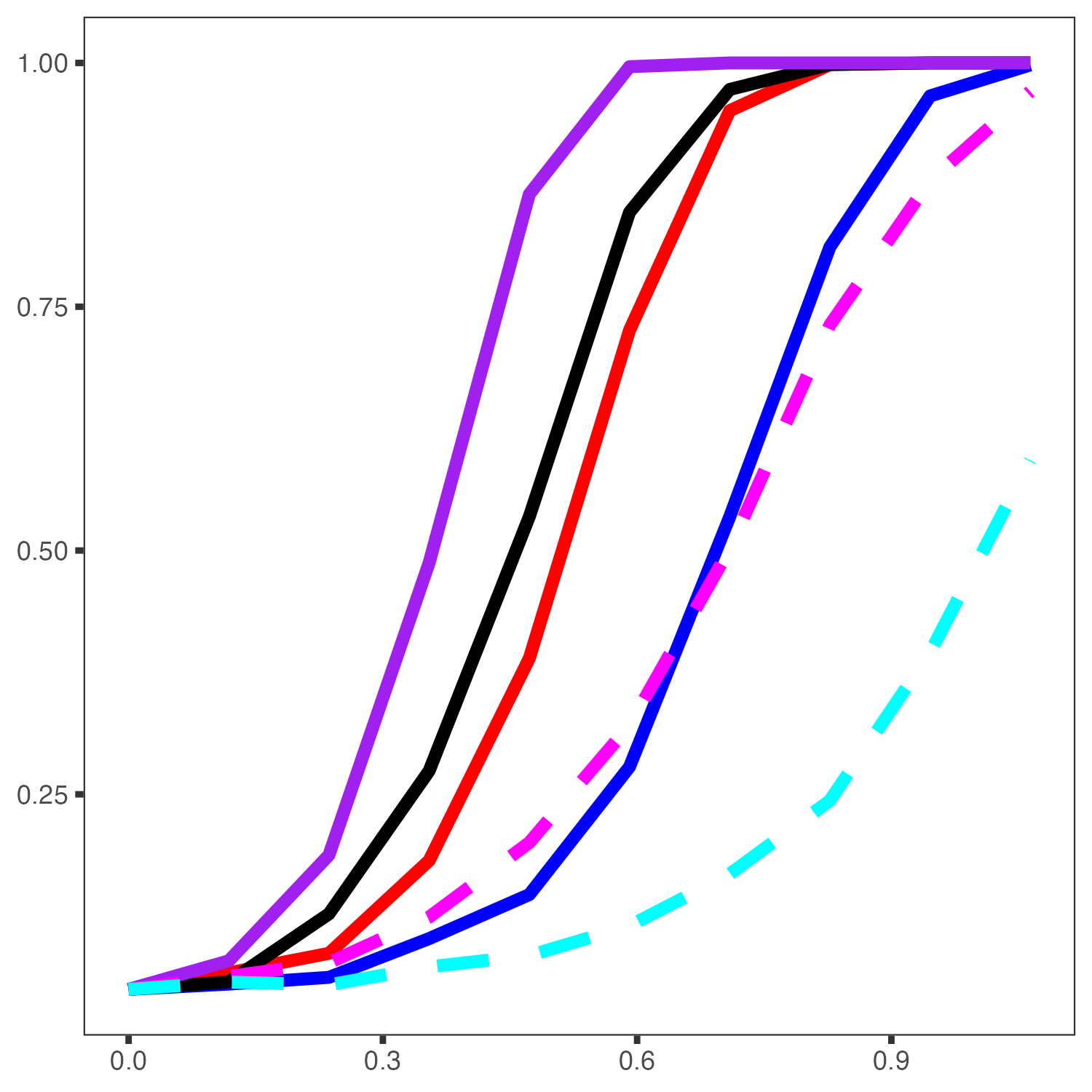}
    \end{subfigure}%
    \begin{subfigure}[t]{0.23\linewidth}
        \centering
        \includegraphics[width=\linewidth]{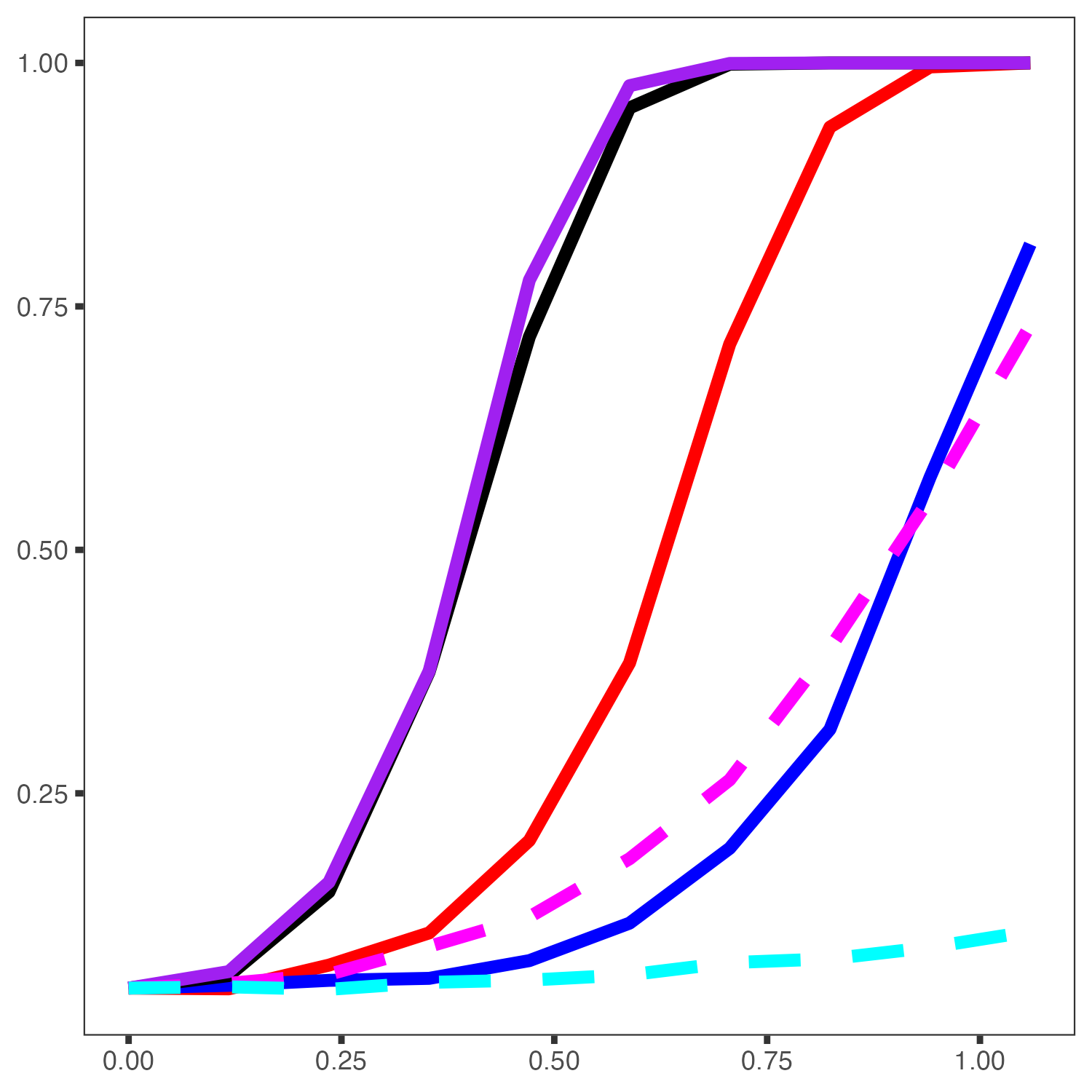}
    \end{subfigure}
    \begin{subfigure}[t]{0.23\linewidth}
        \centering
        \includegraphics[width=\linewidth]{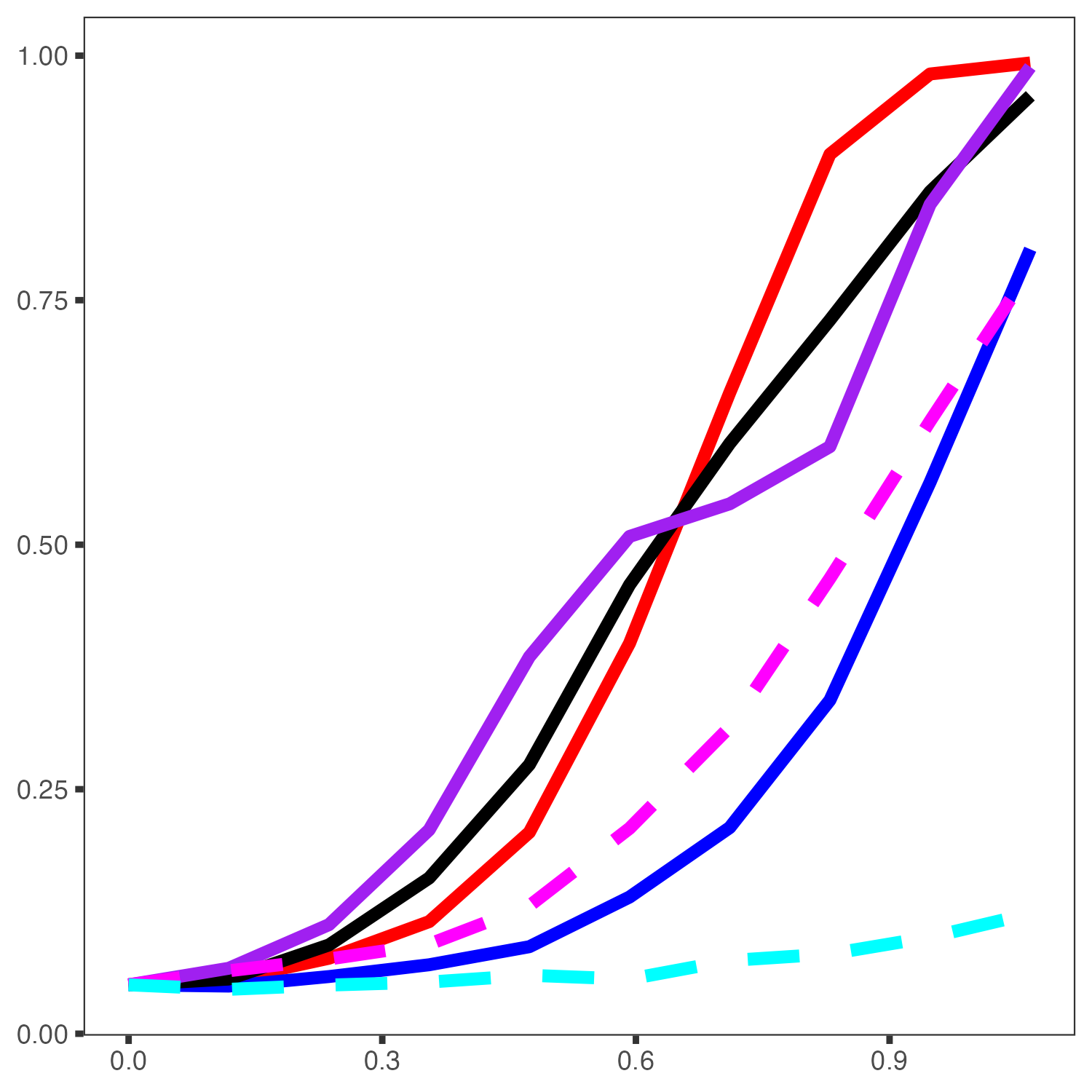}
    \end{subfigure}
    \begin{subfigure}[t]{0.23\linewidth}
        \centering
        \includegraphics[width=\linewidth]{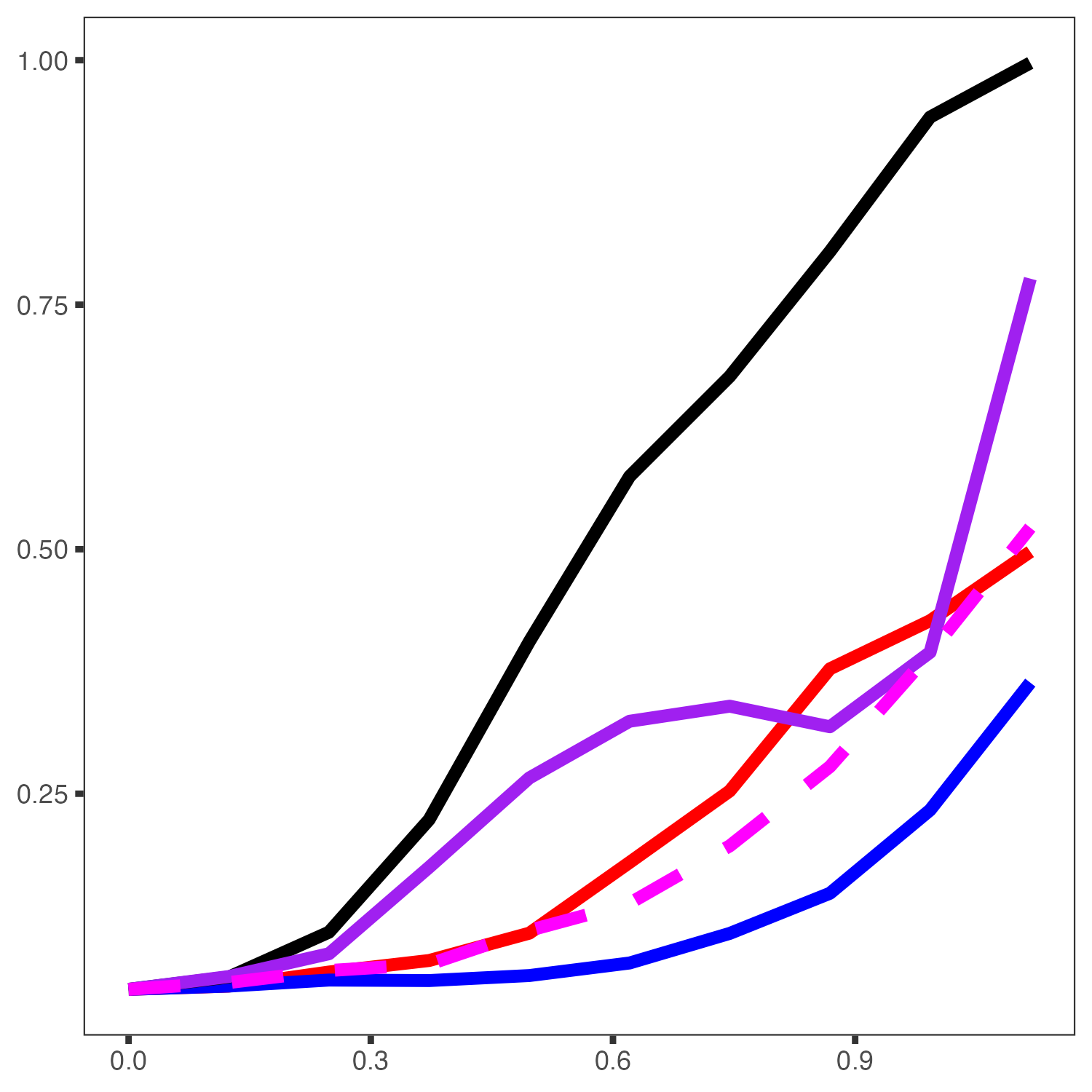}
    \end{subfigure}
    \vfill
    \begin{subfigure}[t]{0.23\linewidth}
        \centering
        \includegraphics[width=\linewidth]{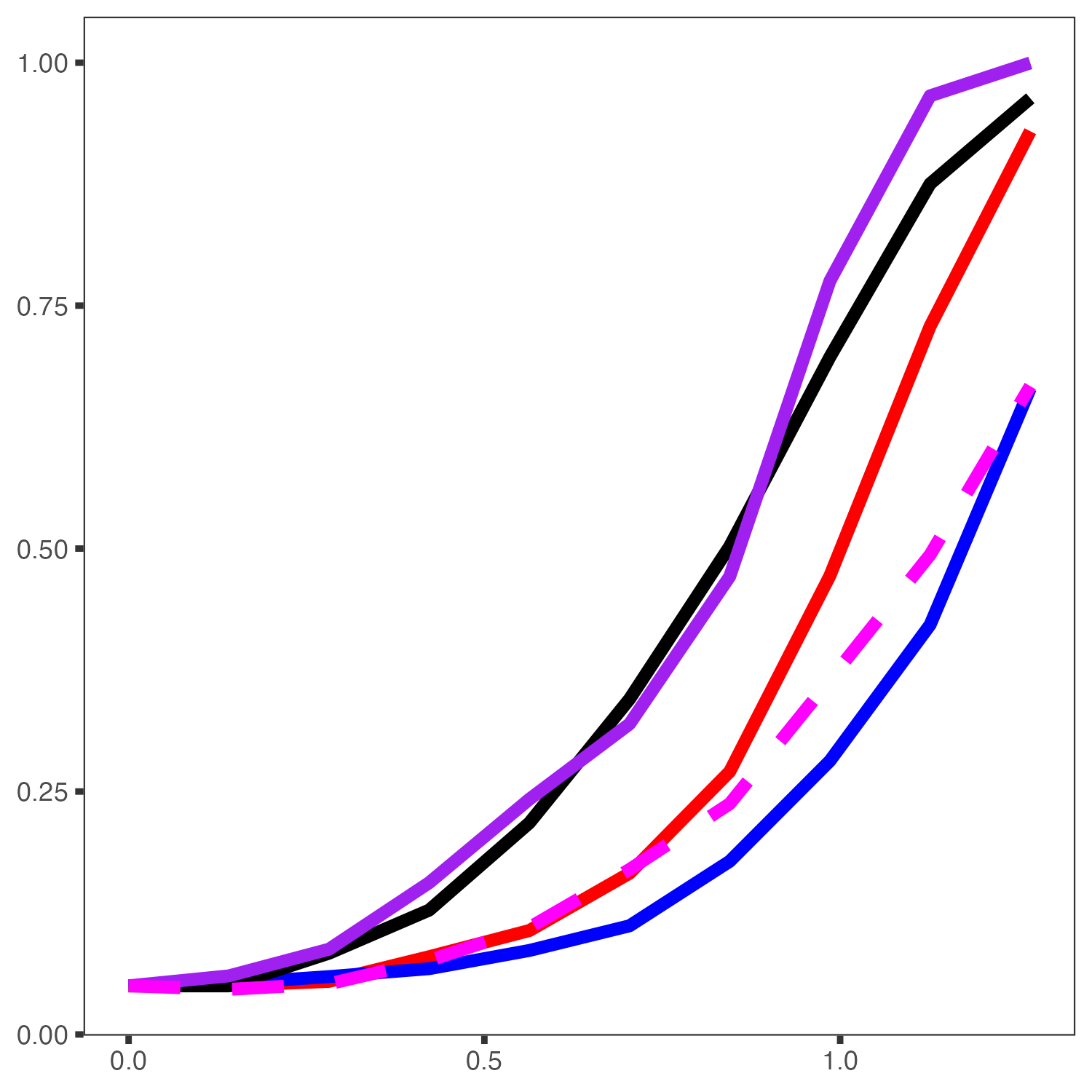}
    \end{subfigure}%
    \begin{subfigure}[t]{0.23\linewidth}
        \centering
        \includegraphics[width=\linewidth]{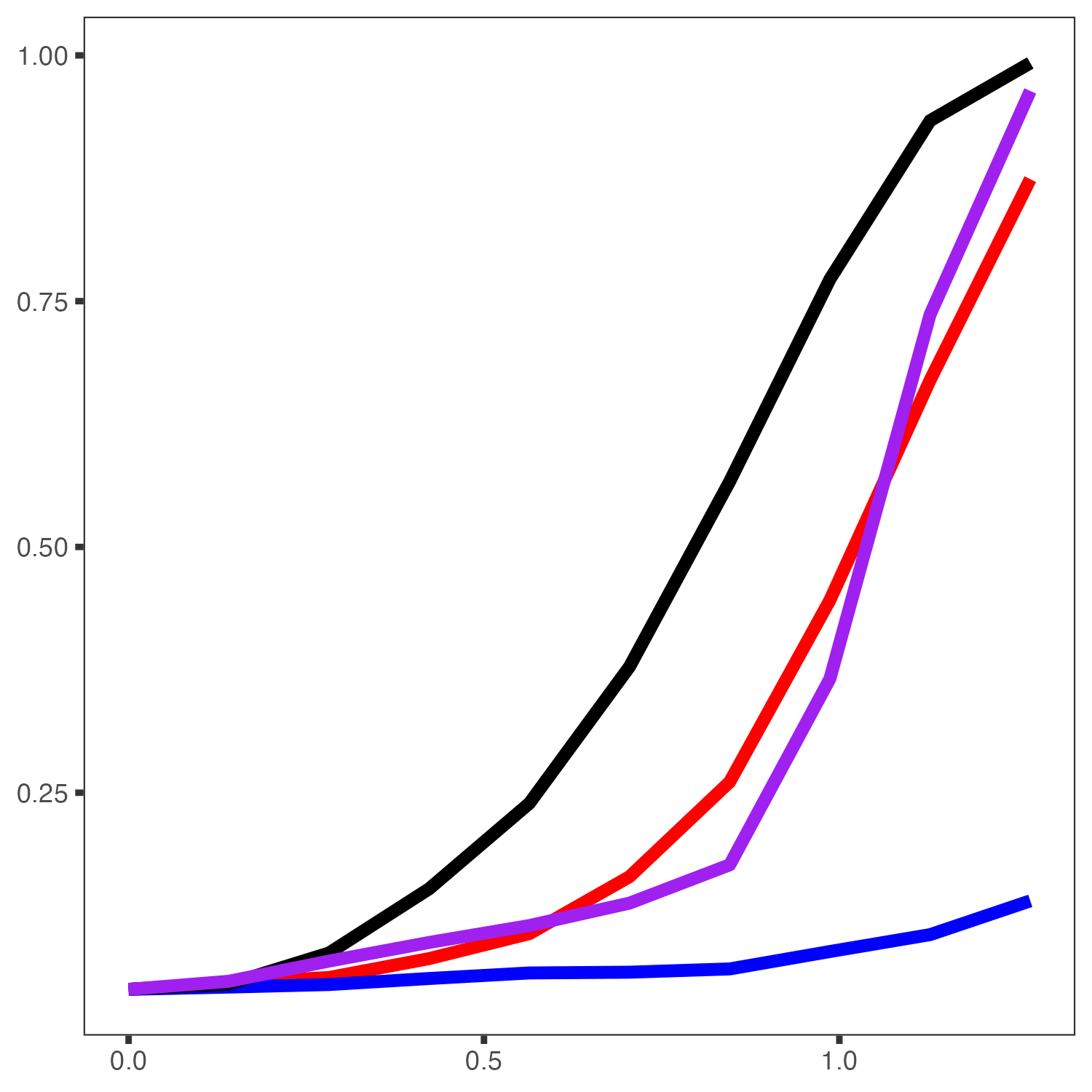}
    \end{subfigure}
    \begin{subfigure}[t]{0.23\linewidth}
        \centering
        \includegraphics[width=\linewidth]{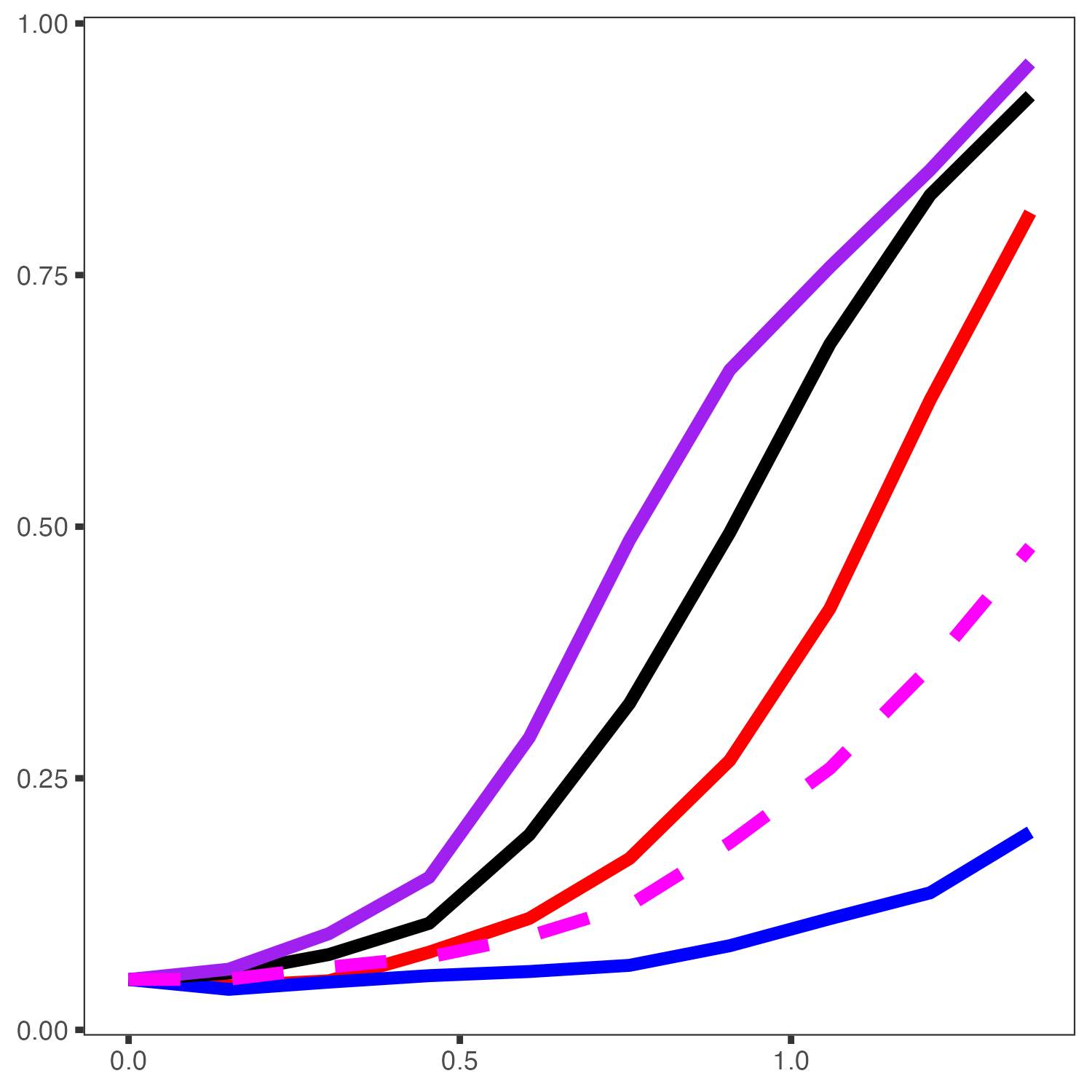}
    \end{subfigure}
    \begin{subfigure}[t]{0.23\linewidth}
        \centering
        \includegraphics[width=\linewidth]{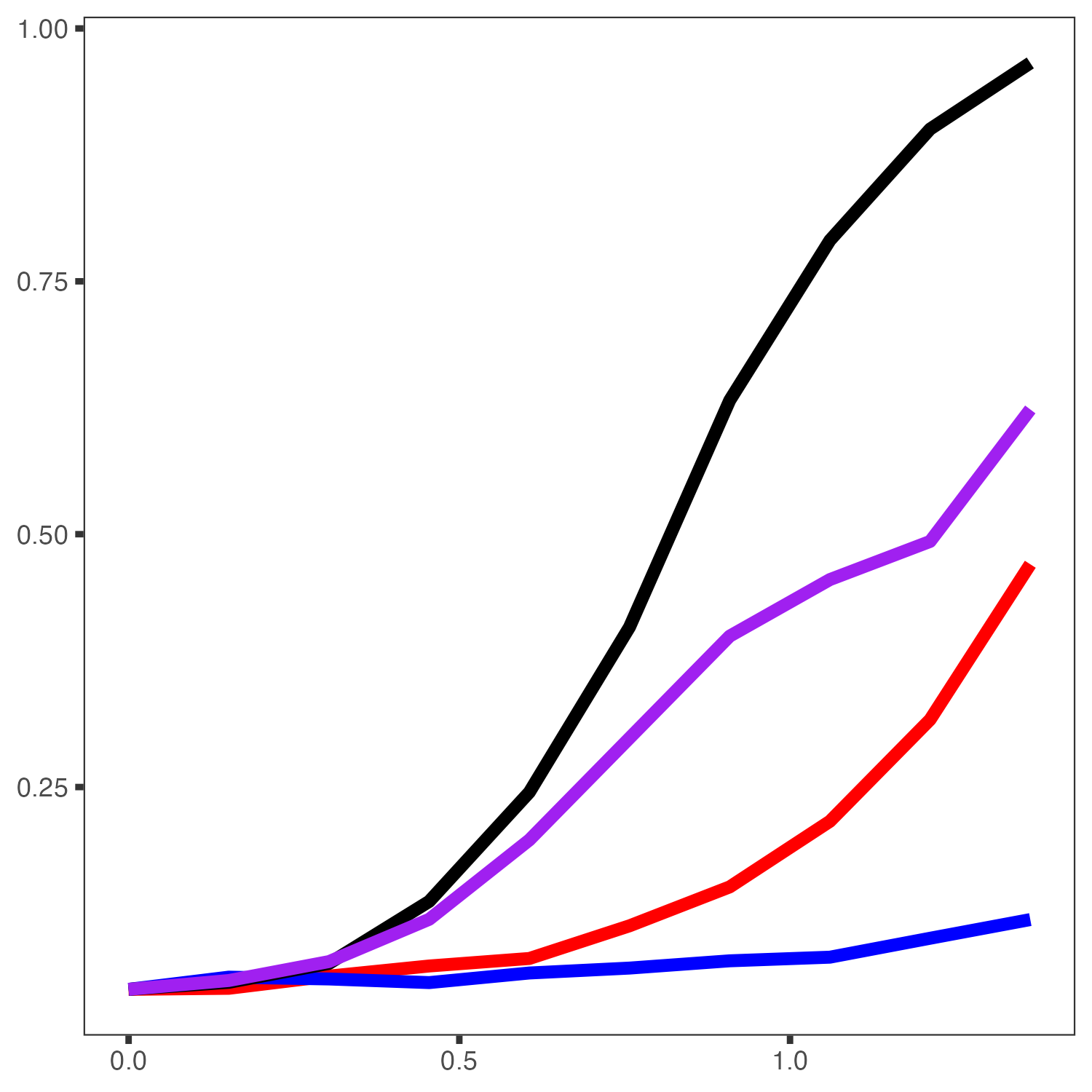}
    \end{subfigure}
    \caption{
    Same as Figure \ref{fig:power_poly_exp_decay} but under $\Sigma_0=\Sigma_{\rm AR}$, and the Exp-Decay correlation model.
    }
    \label{fig:power_AR_exp_decay}
\end{figure}

\begin{figure}[htbp]
    \centering
    \begin{subfigure}[t]{0.23\linewidth}
        \centering
        \includegraphics[width=\linewidth]{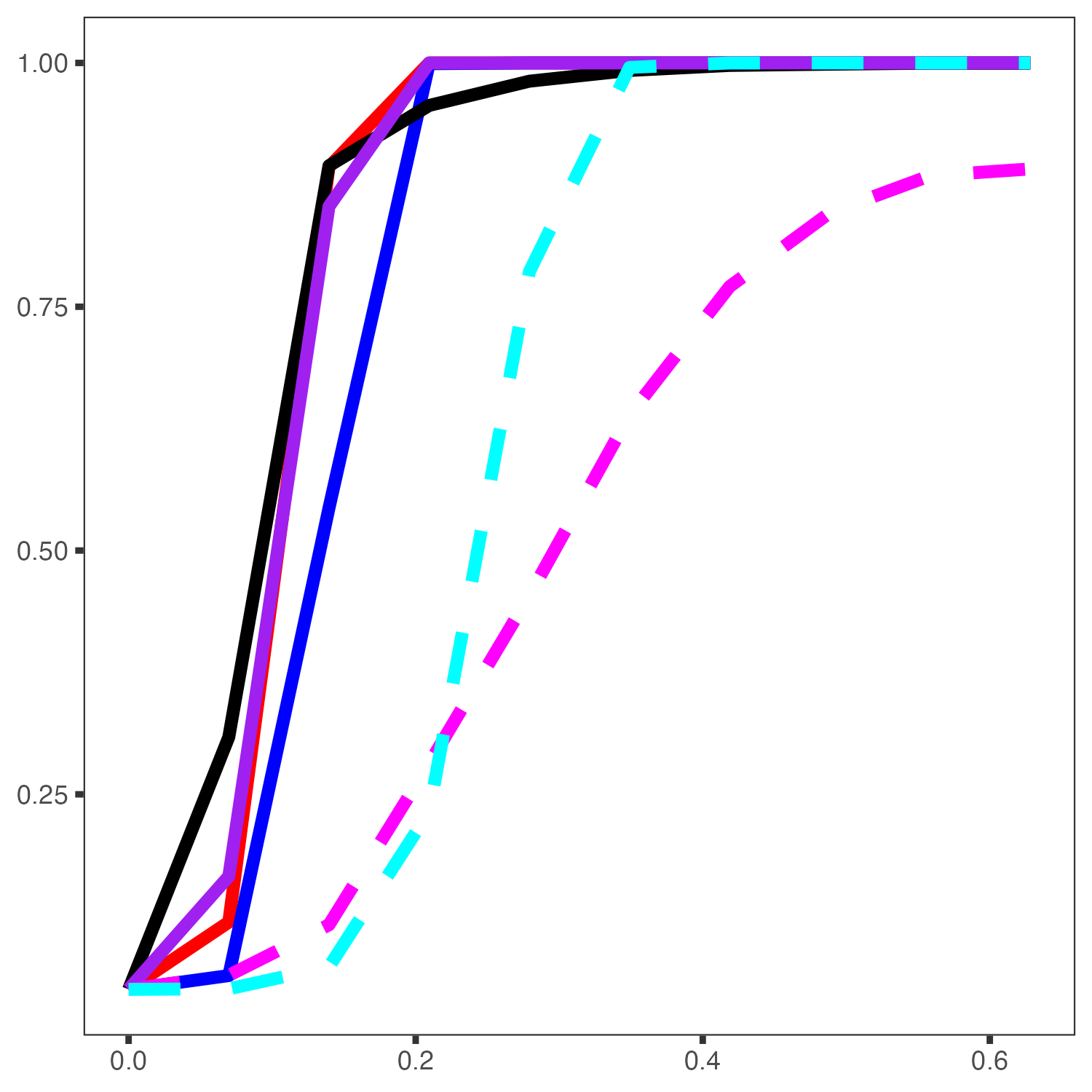}
    \end{subfigure}%
    \begin{subfigure}[t]{0.23\linewidth}
        \centering
        \includegraphics[width=\linewidth]{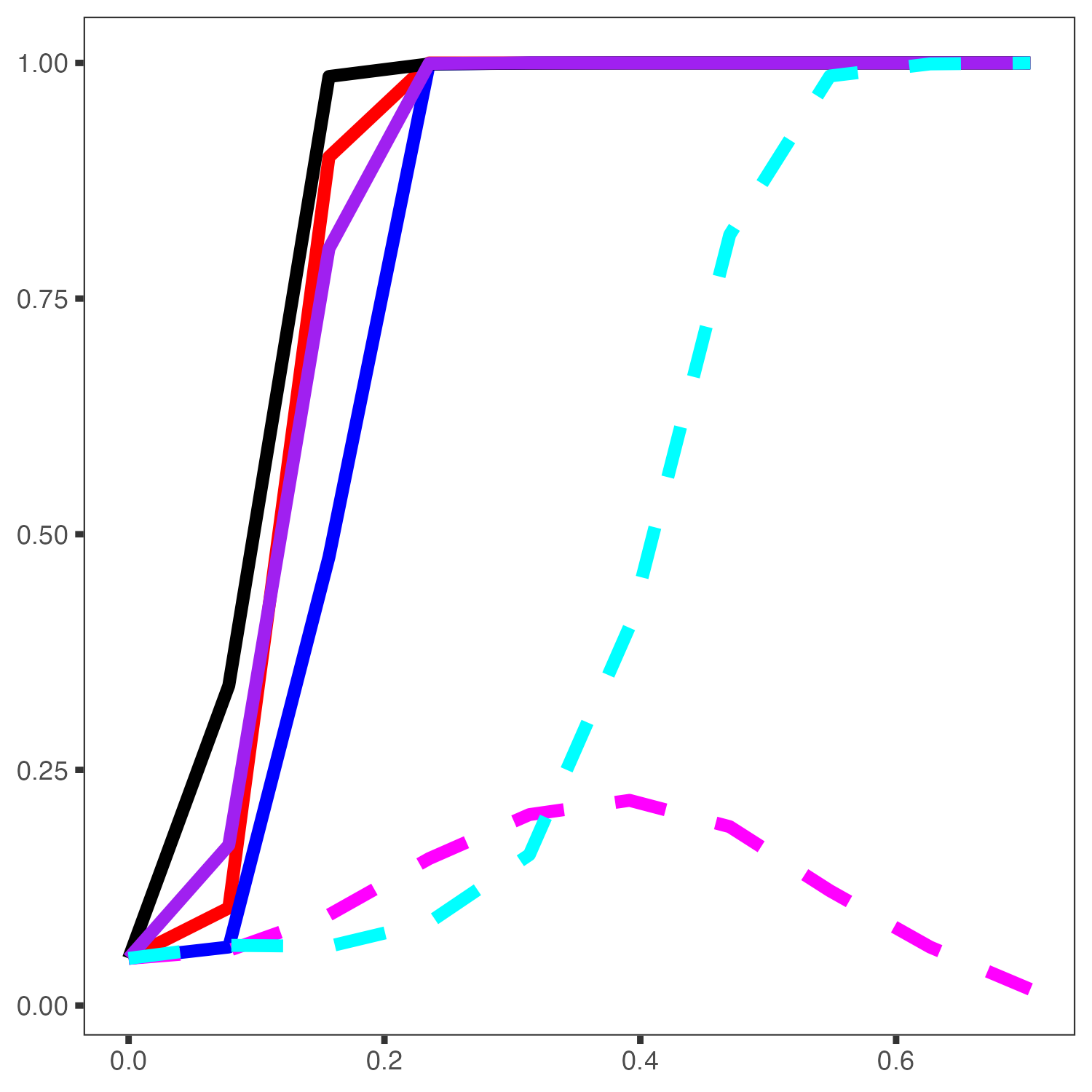}
    \end{subfigure}
    \begin{subfigure}[t]{0.23\linewidth}
        \centering
        \includegraphics[width=\linewidth]{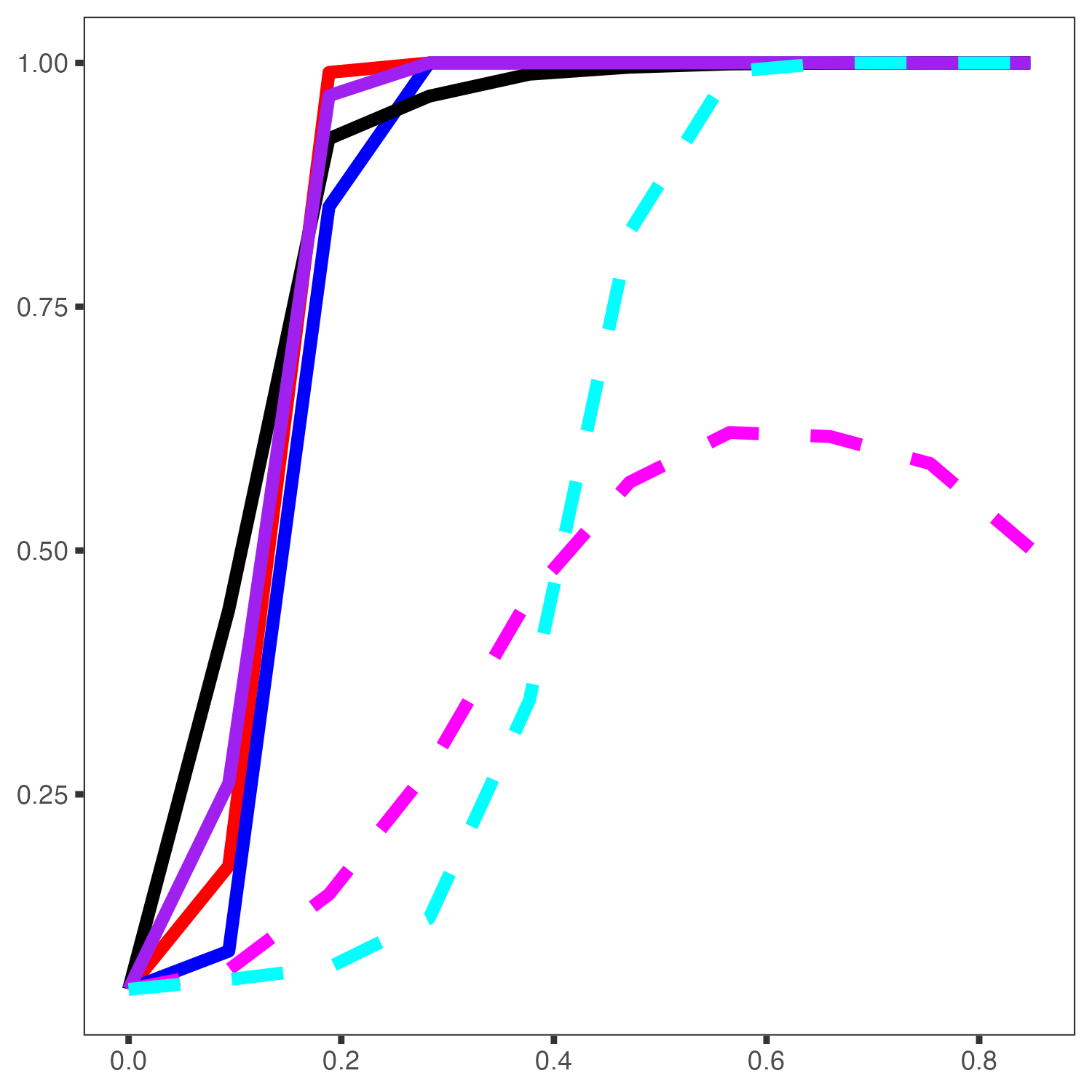}
    \end{subfigure}
    \begin{subfigure}[t]{0.23\linewidth}
        \centering
        \includegraphics[width=\linewidth]{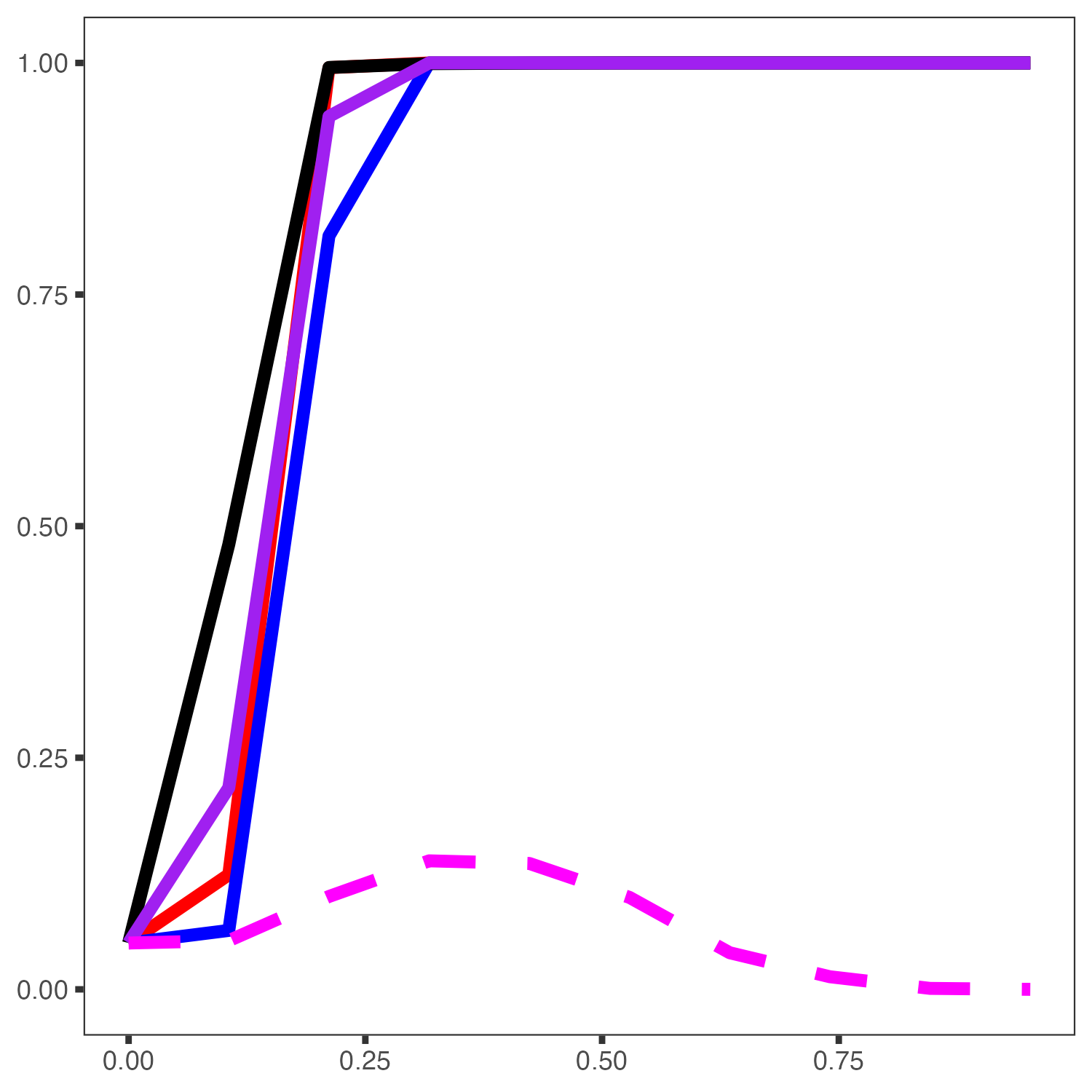}
    \end{subfigure}
    \vfill
    \begin{subfigure}[t]{0.23\linewidth}
        \centering
        \includegraphics[width=\linewidth]{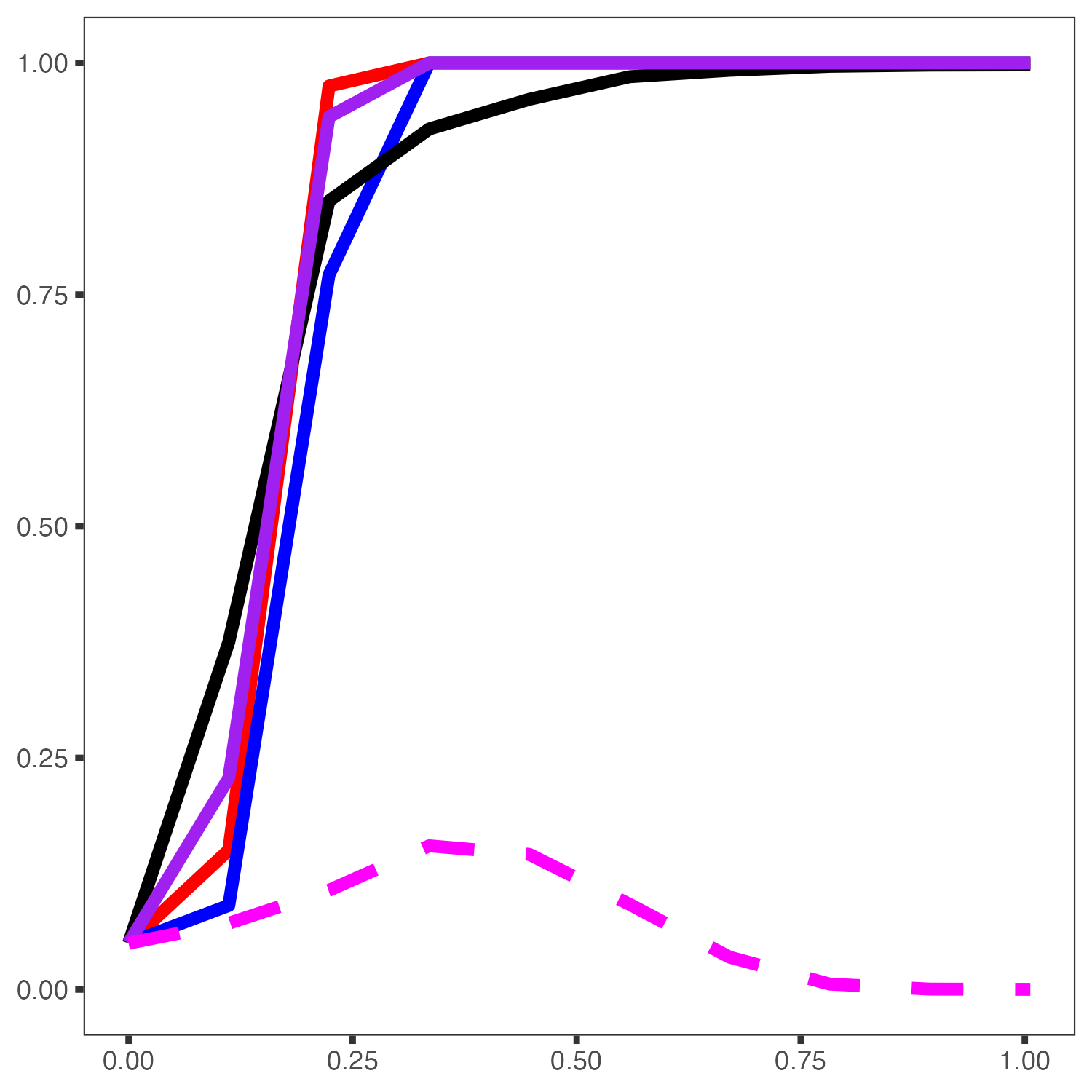}
    \end{subfigure}%
    \begin{subfigure}[t]{0.23\linewidth}
        \centering
        \includegraphics[width=\linewidth]{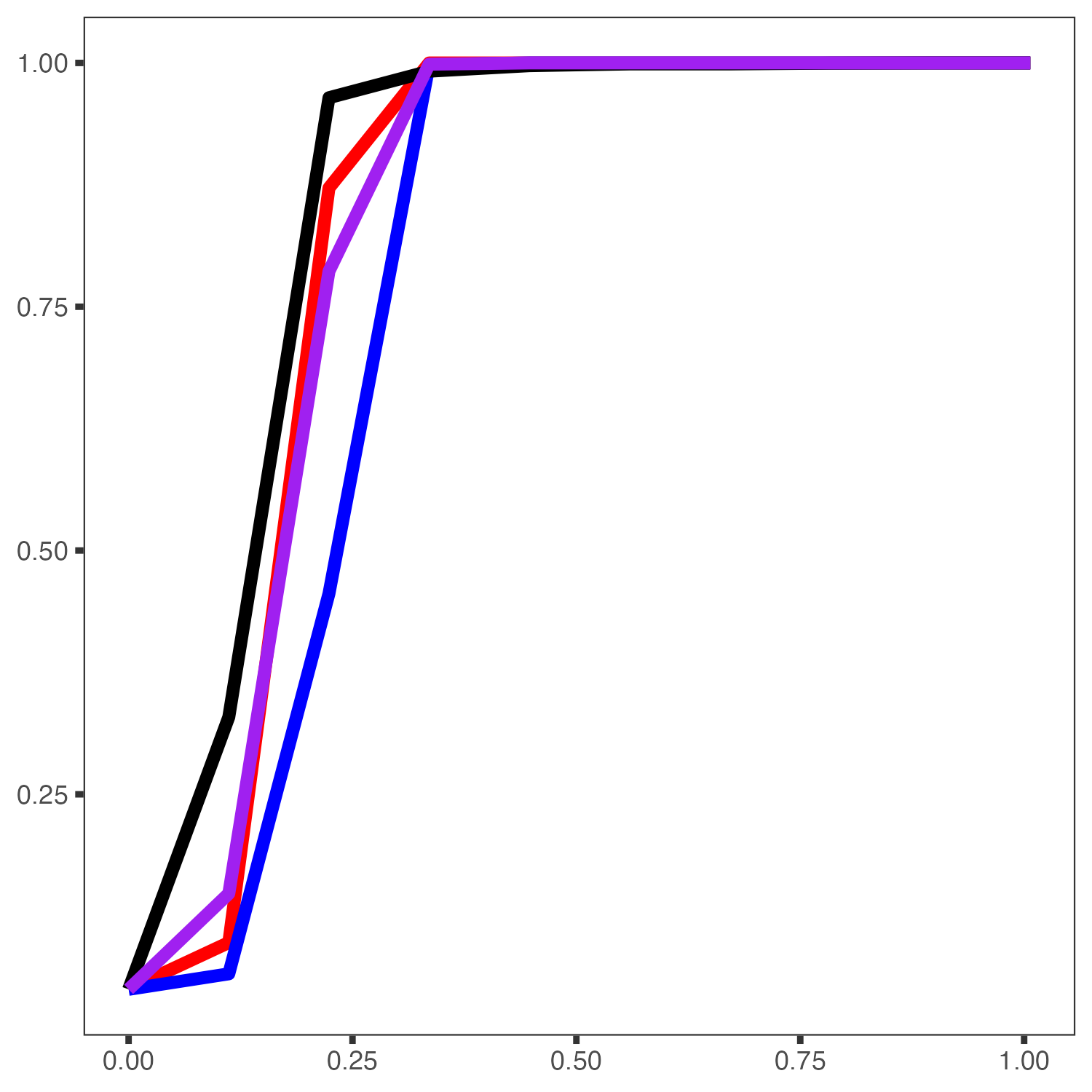}
    \end{subfigure}
    \begin{subfigure}[t]{0.23\linewidth}
        \centering
        \includegraphics[width=\linewidth]{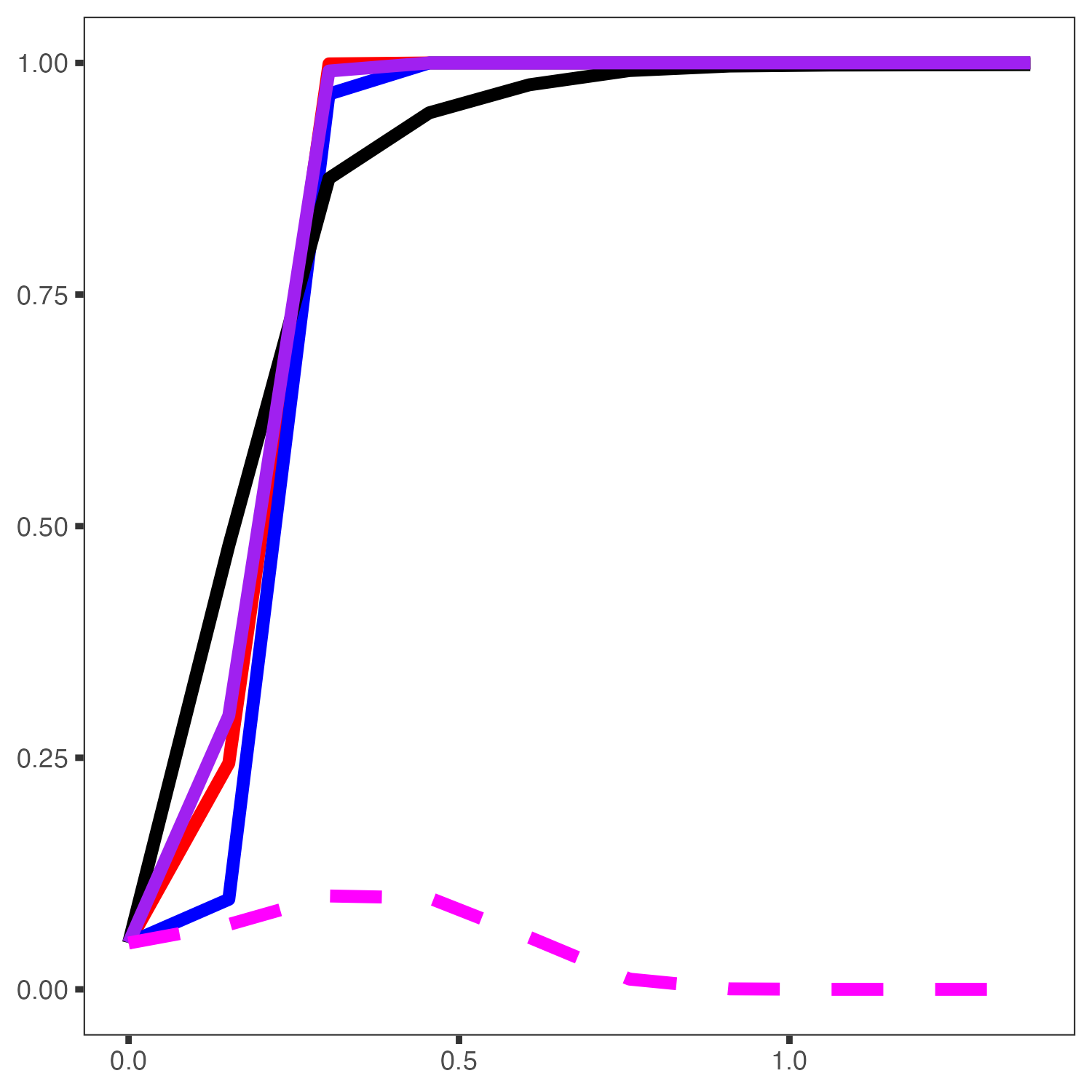}
    \end{subfigure}
    \begin{subfigure}[t]{0.23\linewidth}
        \centering
        \includegraphics[width=\linewidth]{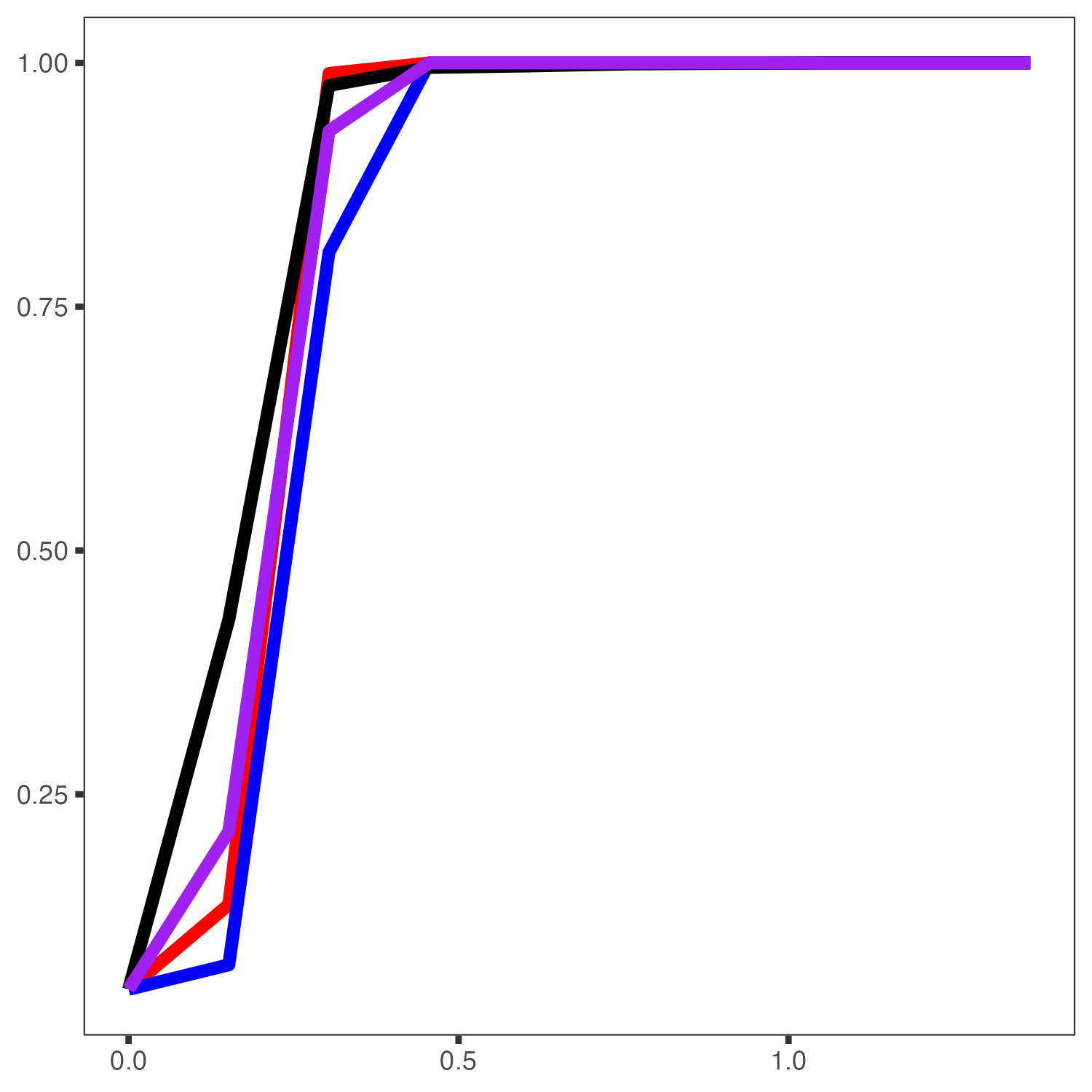}
    \end{subfigure}
    \caption{
    Same as Figure \ref{fig:power_poly_exp_decay} but under $\Sigma_0=\Sigma_{\rm ID}$, and the Low-rank correlation model.
    }
    \label{fig:power_ID_low_rank}
\end{figure}

\begin{figure}[htbp]
    \centering
    \begin{subfigure}[t]{0.23\linewidth}
        \centering
        \includegraphics[width=\linewidth]{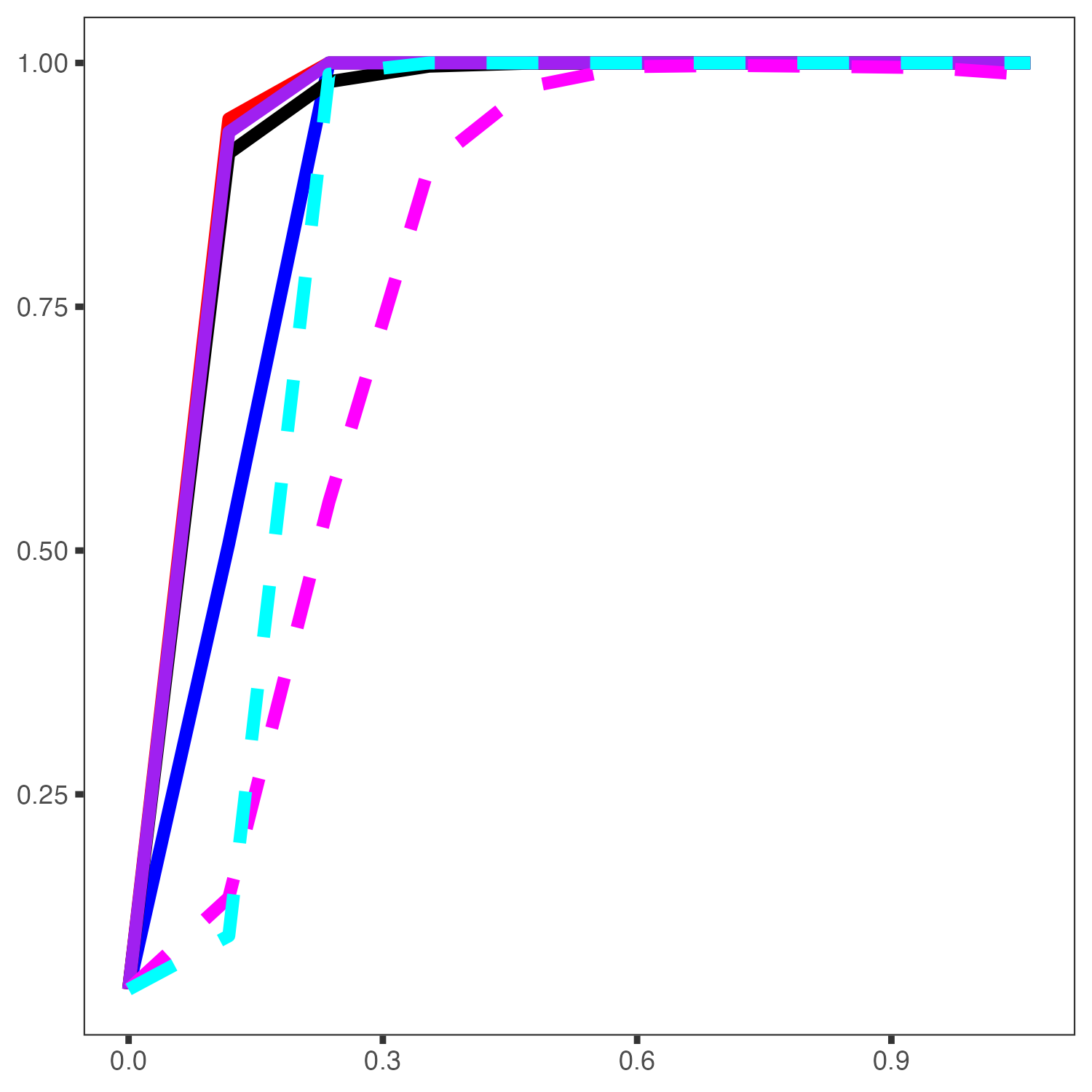}
    \end{subfigure}%
    \begin{subfigure}[t]{0.23\linewidth}
        \centering
        \includegraphics[width=\linewidth]{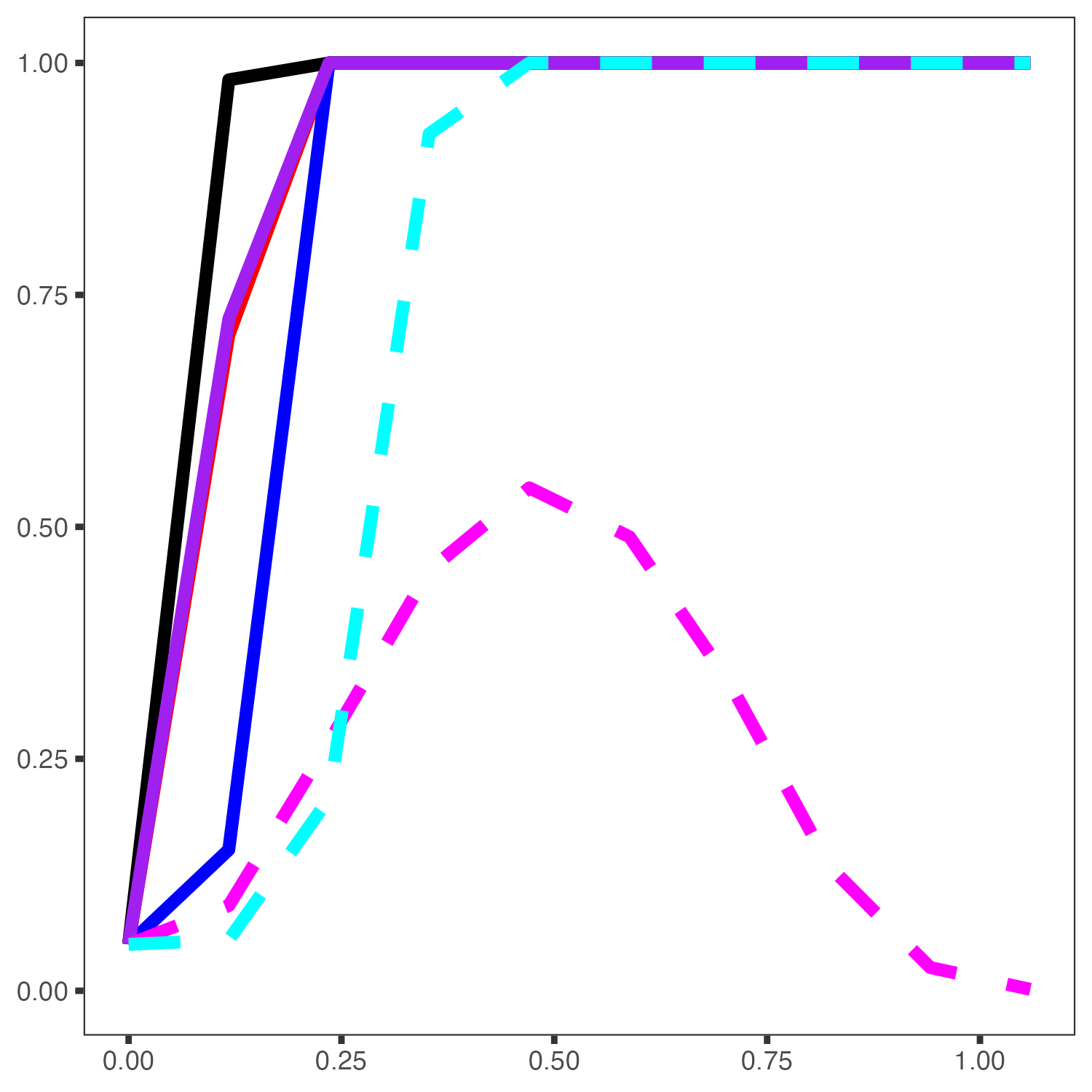}
    \end{subfigure}
    \begin{subfigure}[t]{0.23\linewidth}
        \centering
        \includegraphics[width=\linewidth]{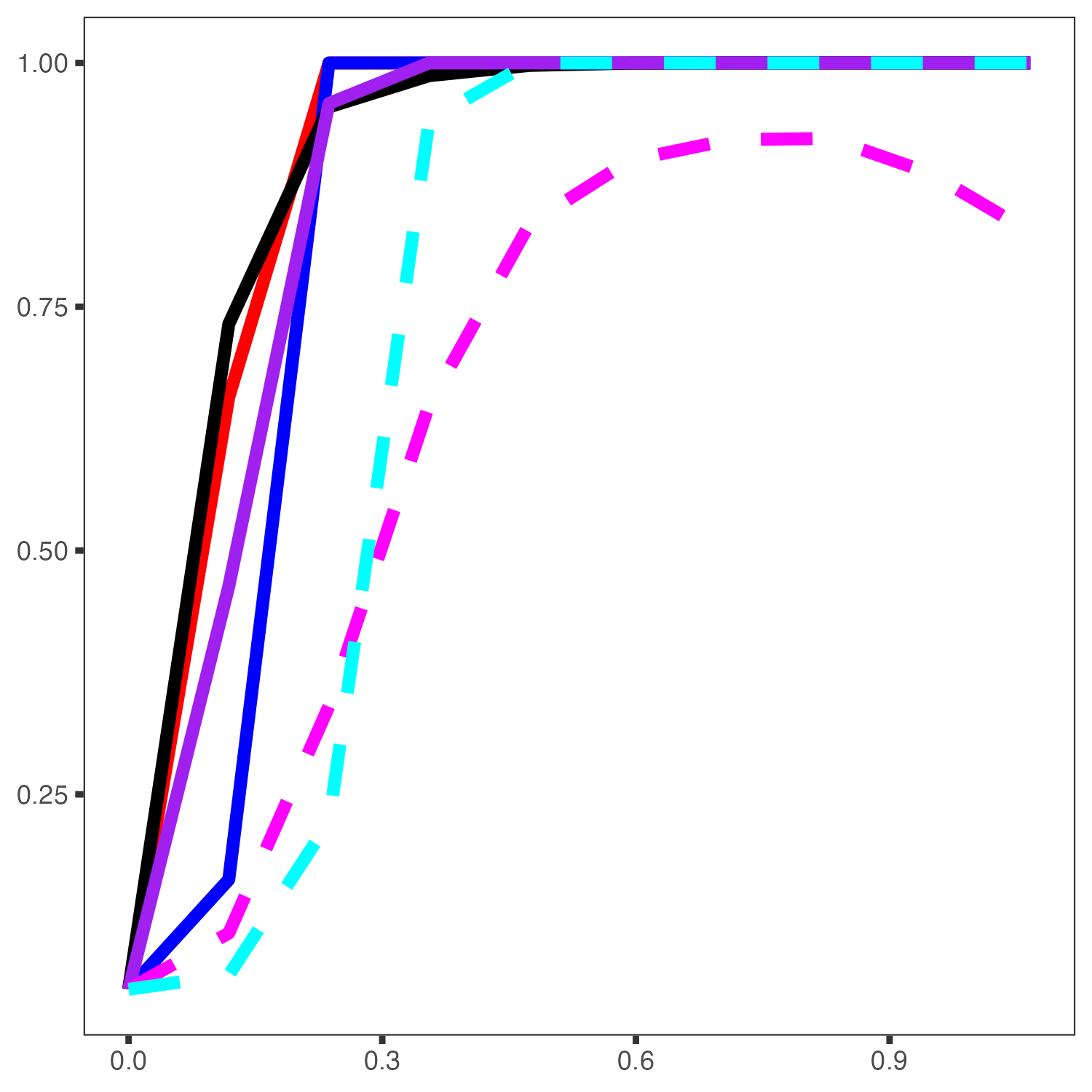}
    \end{subfigure}
    \begin{subfigure}[t]{0.23\linewidth}
        \centering
        \includegraphics[width=\linewidth]{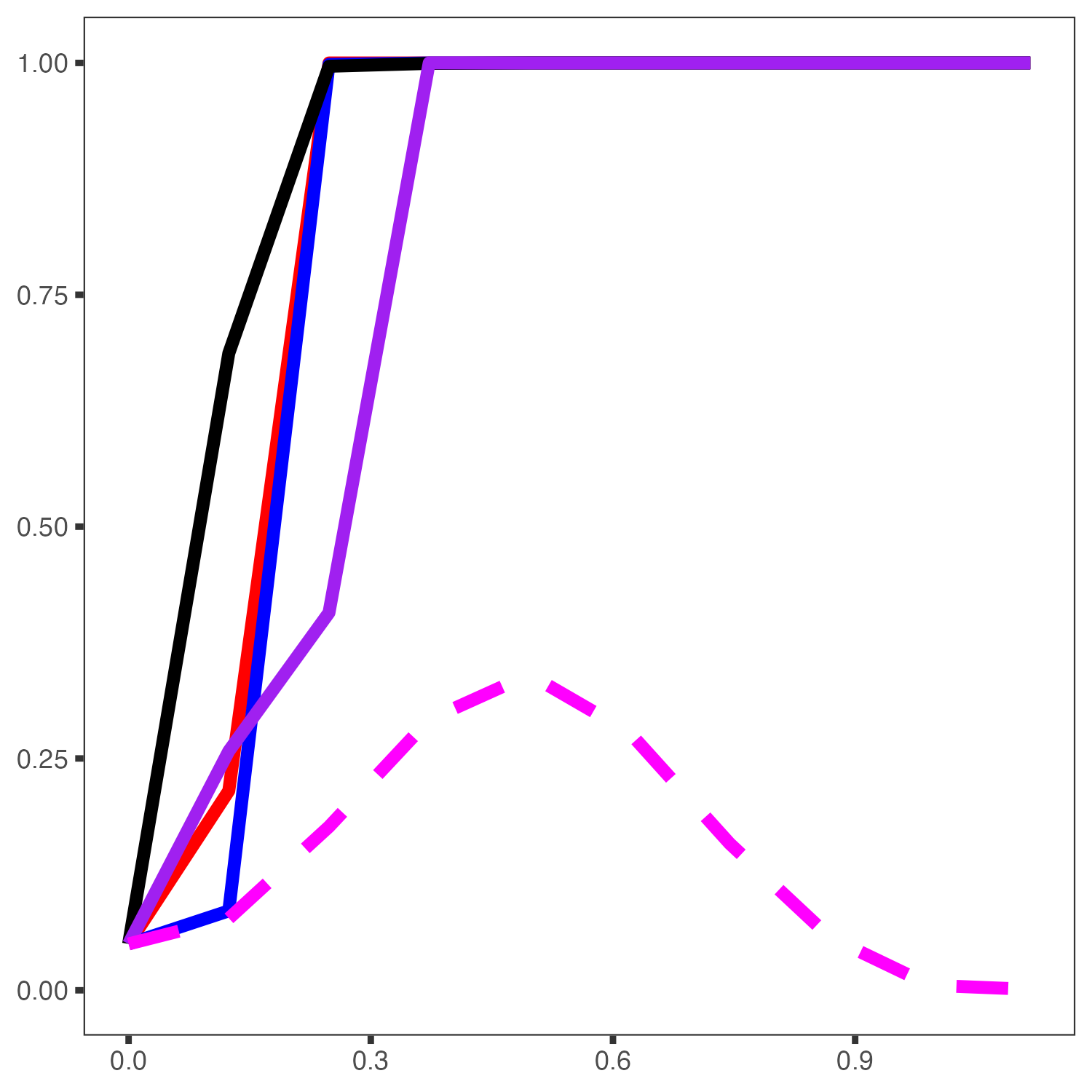}
    \end{subfigure}
    \vfill
    \begin{subfigure}[t]{0.23\linewidth}
        \centering
        \includegraphics[width=\linewidth]{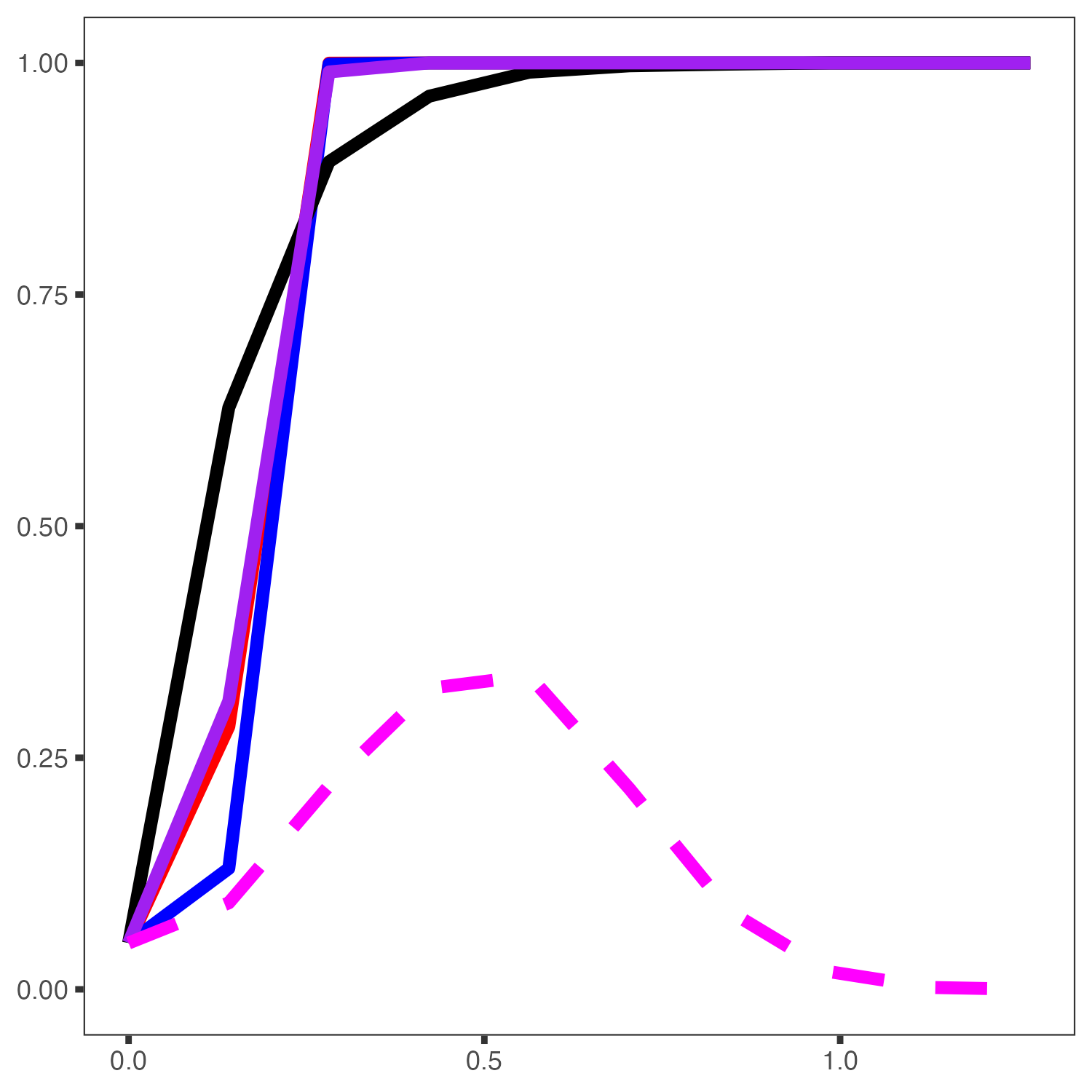}
    \end{subfigure}%
    \begin{subfigure}[t]{0.23\linewidth}
        \centering
        \includegraphics[width=\linewidth]{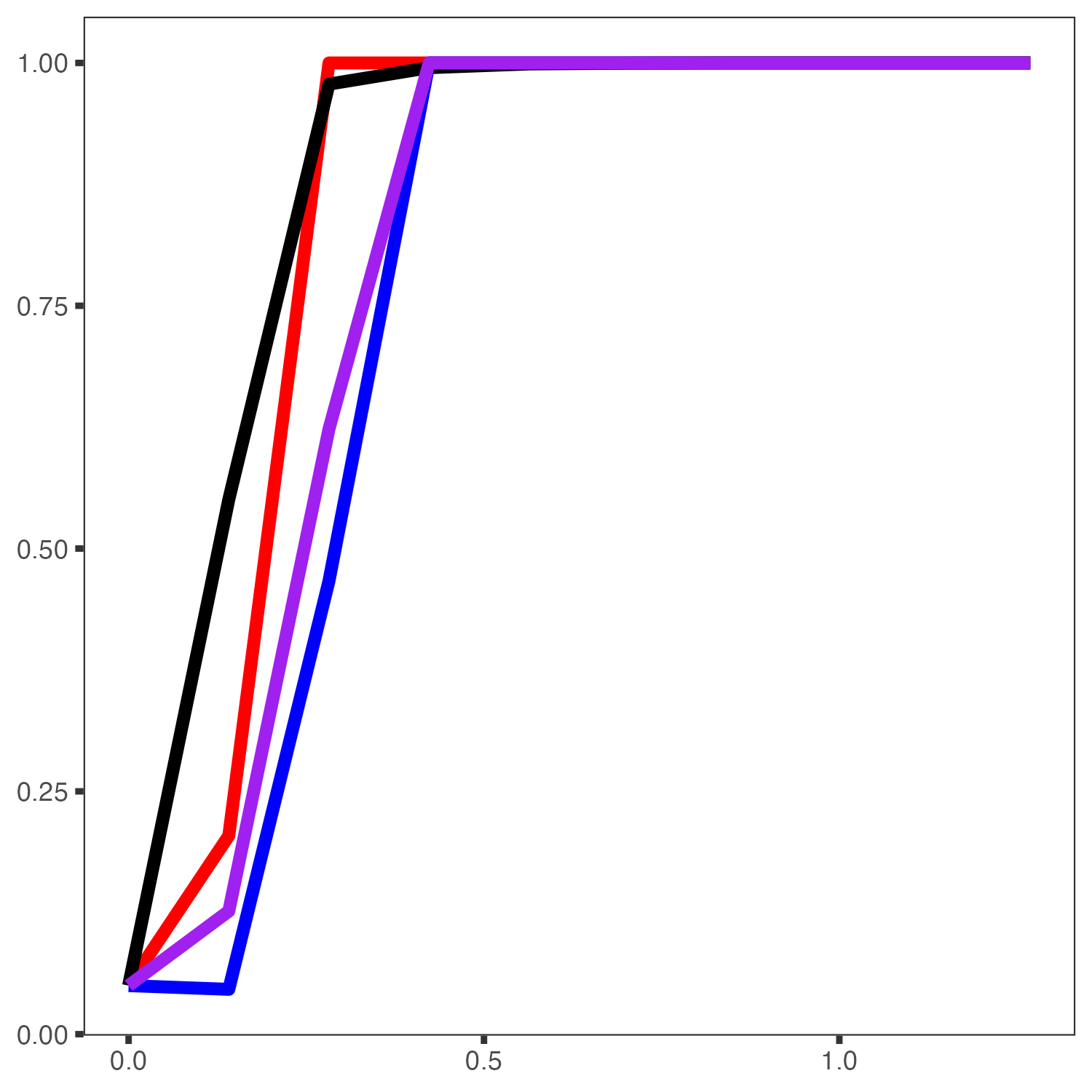}
    \end{subfigure}
    \begin{subfigure}[t]{0.23\linewidth}
        \centering
        \includegraphics[width=\linewidth]{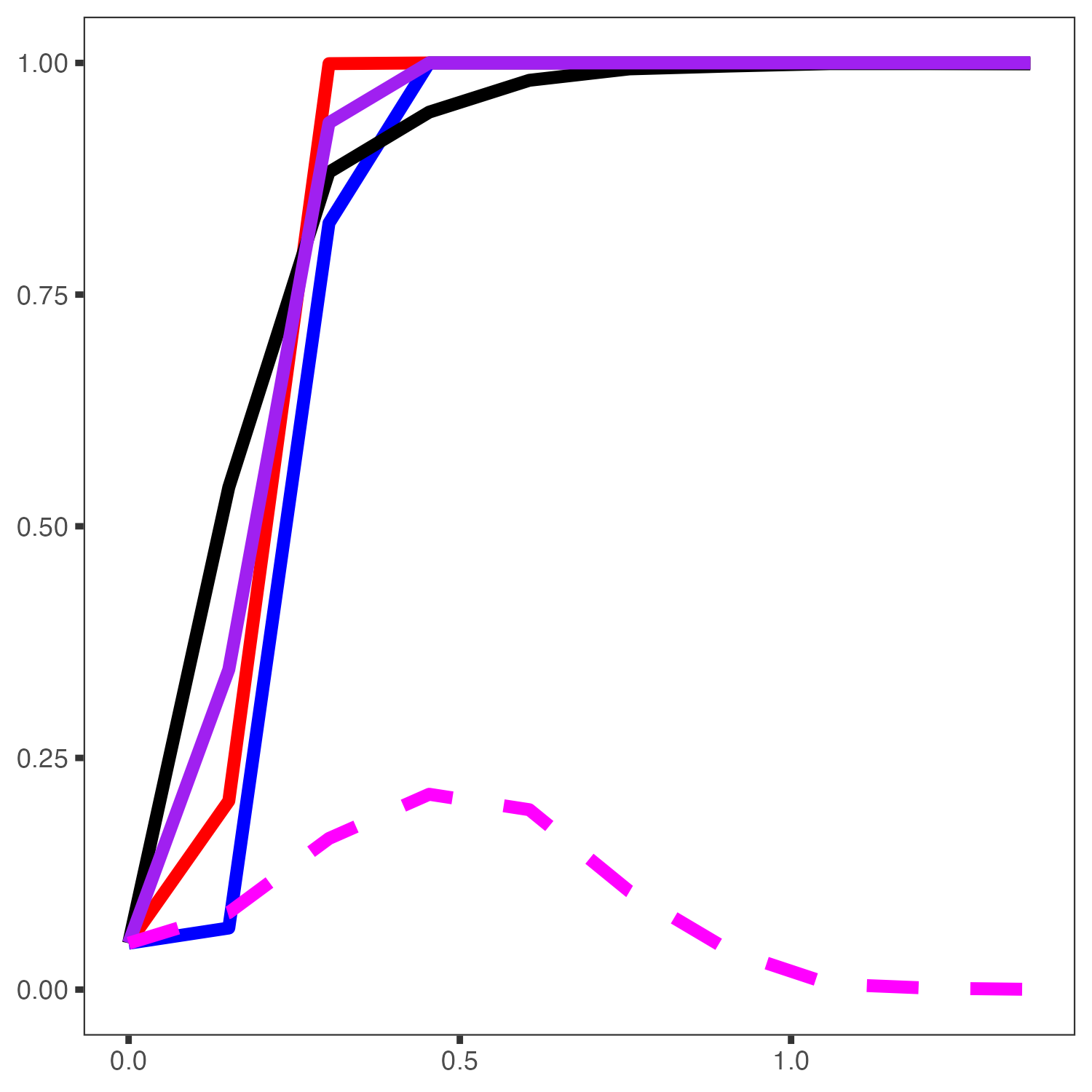}
    \end{subfigure}
    \begin{subfigure}[t]{0.23\linewidth}
        \centering
        \includegraphics[width=\linewidth]{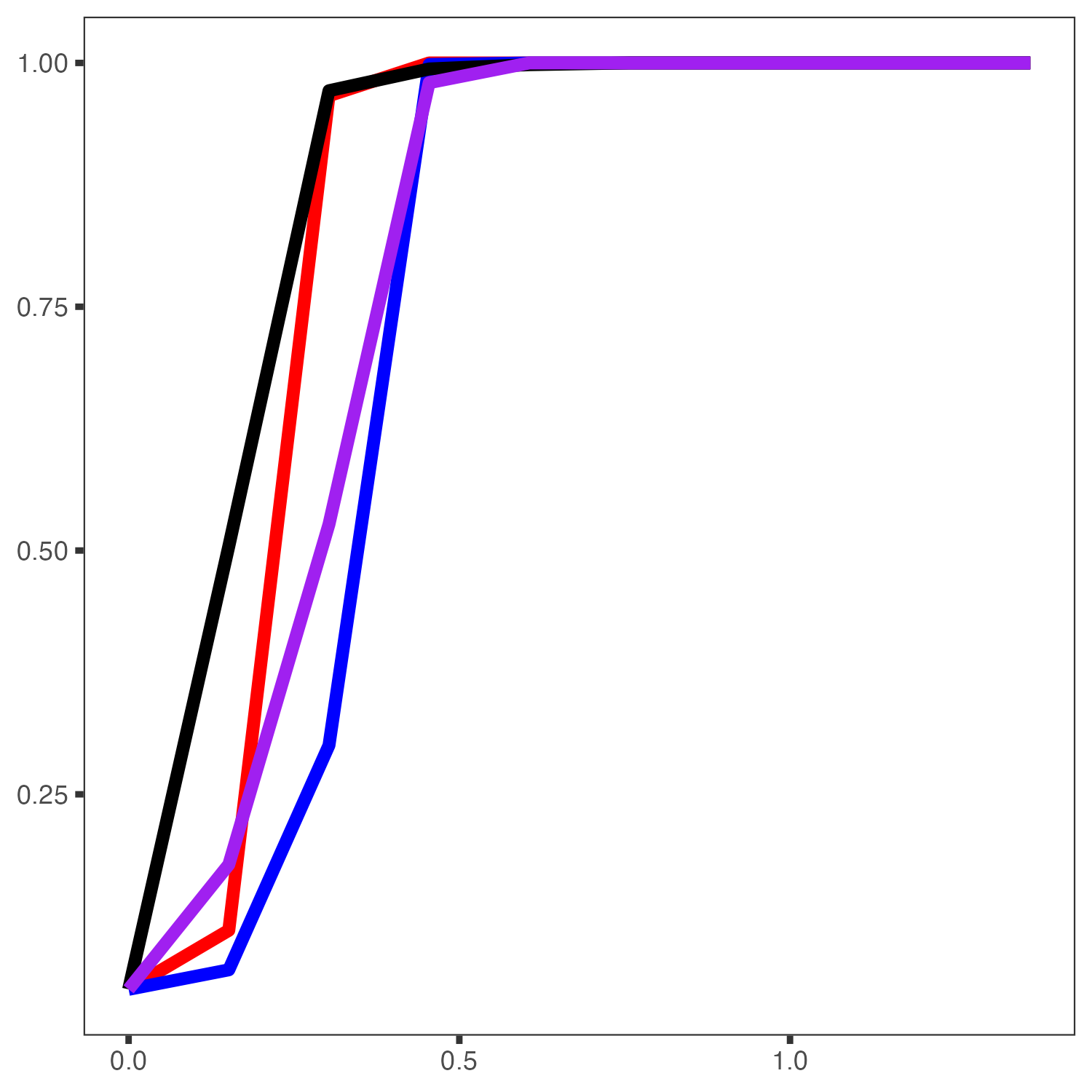}
    \end{subfigure}
    \caption{
    Same as Figure \ref{fig:power_poly_exp_decay} but under $\Sigma_0=\Sigma_{\rm AR}$, and the Low-rank correlation model.
    }
    \label{fig:power_AR_low_rank}
\end{figure}

\section{Technical Proof}\label{sec:technical-proof}
In this section, we present the proofs of the main technical results. The arguments build upon several established results in RMT, including those of \cite{knowles2017anisotropic,han2018unified,li2024analysis,li2025ridge}. To avoid unnecessary repetition, we focus on the main ideas and key steps of the proofs while omitting routine technical details.

First, consider the analysis under the null hypothesis. Since $\bF_{k\lambda}$ is asymmetric, we apply a 
transformation as follows to simplify the analysis. Recall the definition of the regularized $F$-matrix and the projection matrix $\bP_k$ as
\[
\bF_{k\lambda}
=
\bW_{k1}
(\bW_{k2}+\lambda I_{p_1})^{-1}, \qquad \bP_k = \bY^T \hat{\Lambda}_k (\hat{\Lambda}_k^T \bY \bY^T \hat{\Lambda}_k)^{-1} \hat{\Lambda}^T_k \bY.  
\]
Let $P_{\perp} = I_n - n^{-1} 1_n 1_n^T$. Then, the centered data matrices are such that 
\[\bX = M \Lambda_m^T\Sigma_y^{1/2} \bZ_y P_\perp  +   \Sigma_0^{1/2} \bZ_x P_\perp, \qquad \bY =\Sigma_y^{1/2} \bZ_y P_\perp,\] 
where $\bZ_x = (Z_{x1}, \dots, Z_{xn})$ and $\bZ_y = (Z_{y1}, \dots, Z_{yn})$ are two matrices whose columns $Z_{xi}$ and $Z_{yi}$ are iid copies of $Z_x$ and $Z_y$ in Model \eqref{eq:model_PCR}, respectively. Moreover, it is straightforward to verify that 
$P_\perp \bP_k =\bP_k P_\perp = \bP_k$.

Under Condition \ref{enum:lower_bound_eigen_Y}, there exists a universal constant $C>0$ such that the event 
\[\mathcal{E}_{C} = \{\ell_k(\bS_y) > C\} \]
hold with overwhelming probability in the sense that $\mP(\mathcal{E}_C) \geq 1-\exp(-n)$.  In the following analysis, we shall always assume that this event holds so that $\bP_k$ is well-defined. 

Under the null hypothesis $H_0$, we can re-express $\bW_{k1}$ and $\bW_{k2}$ as:
\[ \bW_{k1} = \frac{1}{k} \bX \bP_k \bX^T = \frac{1}{k} \Sigma_0^{1/2} \bZ_x \bP_k \bZ_x^T \Sigma_0^{1/2},\]  
\[\bW_{k2} = \frac{1}{n-1-k} \bX (I_n - \bP_k)\bX^T =  \frac{1}{n-1-k}\Sigma_0^{1/2} \bZ_x (P_\perp -\bP_k) \bZ_x^T\Sigma_0^{1/2}.\]
For any $\bY$, there exists an orthogonal matrix $\bU_k = \bU_k(\bY)$ of dimension $n\times k$ such that  
\[ \bP_{k} = \bU_{k} \bU_k^T.\]
Note that $\bU_k$ depends on $\bY$ only. 

Denote
\[
\bG_{k\lambda}
=
\Sigma_0^{-1/2}
(\bW_{k2}+\lambda I_{p_1})
\Sigma_0^{-1/2}
=\frac{1}{n-1-k}
\bZ_x
(P_\perp-\bU_k\bU_k^T)
\bZ_x^T
+
\lambda\Sigma_0^{-1}
.
\]
Define 
\[
\tilde{\bF}_{k\lambda}
=
\frac{1}{k}
\bU_k^T
\bZ_x^T
\Sigma_0^{1/2}
(\bW_{k2}+\lambda I_{p_1})^{-1}
\Sigma_0^{1/2}
\bZ_x
\bU_k
=
\frac{1}{k}
\bU_k^T
\bZ_x^T
\bG_{k\lambda}^{-1}
\bZ_x
\bU_k.
\]
The matrix $\tilde{\bF}_{k\lambda}$ may be viewed as the symmetric companion matrix of $\bF_{k\lambda}$ under $H_0$ in the sense that $\tilde{\bF}_{k\lambda}$ and $\bF_{k\lambda}$ share the same nonzero eigenvalues. Consequently, since our interest lies in $\sum_{j=1}^k \ell_j(\bF_{k\lambda})$ and $\ell_{\max}(\bF_{k\lambda})$, it is equivalent to analyze $\tilde{\bF}_{k\lambda}$ instead.

\subsection{Basic definitions}
Throughout, we use $a$, $a'$, $c$, $c'$, $\calK$, and $\calK'$ to denote generic constants whose values may change from line to line. Unless stated otherwise, these constants do not depend on $k$, $p_1$, $p_2$, or $n$. 
We use $E = E(z)$ and $\I = \I(z)$ to denote the real and imaginary parts of a complex number $z$, respectively. The argument $z$ is frequently omitted from expressions when no confusion may arise.


The following notion of a high-probability bound has been used in a number of works on random matrix theory. It provides a simple way of systematizing and making precise statements of the form ``$\xi$ is bounded with high probability by $\zeta$ up to small powers of $n$''. 
\begin{definition}[Stochastic domination]\label{def:stochastic_domination}~
\begin{itemize}
\item[(i)] Consider two families of nonnegative random variables
$$
\xi=\left(\xi^{(n)}(u): m \in \mathbb{N}, u \in U^{(n)}\right), \quad \zeta=\left(\zeta^{(n)}(u): n \in \mathbb{N}, u \in U^{(n)}\right),
$$
where $U^{(n)}$ is a possibly $n$-dependent parameter set. We say that $\xi$ is stochastically dominated by $\zeta$, uniformly in $u$, if for all (small) $\varepsilon>0$ and (large) $D>0$ we have
$$
\sup _{u \in U^{(n)}} \mathbb{P}\left[\xi^{(n)}(u)>n^{\varepsilon} \zeta^{(n)}(u)\right] \leqslant n^{-D}
$$
for large enough $n \geqslant n_0(\varepsilon, D)$. Throughout this paper, the stochastic domination will always be uniform in all parameters (such as matrix indices, deterministic vectors, and spectral parameters $z$) that are not explicitly fixed. 
If $\xi$ is stochastically dominated by $\zeta$, uniformly in $u$, we use the notation $\xi \prec \zeta$. Moreover, if for some complex family $\xi$ we have $|\xi| \prec \zeta$ we also write $\xi=O_{\prec}(\zeta)$.
\item[(ii)] We extend the definition of $O_{\prec}(\cdot)$ to matrices in the weak operator sense as follows. Let $A$ be a family of complex square random matrices and $\zeta$ a family of nonnegative random variables. Then we use $A=O_{\prec}(\zeta)$ to mean $|\langle\mathbf{v}, A \mathbf{w}\rangle| \prec \zeta\|\mathbf{v}\|_2\|\mathbf{w}\|_2$ uniformly for all deterministic vectors $\mathbf{v}$ and $\mathbf{w}$.
\item[(iii)] If there exists a positive constant $\calK$ such that $\xi \leq \calK \zeta$, then we write $\xi \lesssim \zeta$. If further there exists a positive constant $\calK'$ such that $\zeta \leq \calK' \xi$, then we write $\xi \asymp \zeta$. 
\end{itemize}
\end{definition}

\begin{definition}\label{def:high_probability}
We say that an event $\Lambda$ holds with high probability if for any large positive constant $D$, there exists $n_0(D)$ such that 
\[\mP(\Lambda^\complement)\leq n^{-D}, ~\mbox{ for any }n\geq n_0(D).\]
\end{definition}

\subsection{Proof of Theorem \ref{thm:main_fix_k}} \label{subsec:proof_main_thm_fix_k}

In this section, we establish Theorem~\ref{thm:main_fix_k} by adapting the technical results of \cite{li2020high} to the present setting. 
For completeness, we restate the main result of \cite{li2020high} in a form tailored to our framework.

\begin{theorem}[Adapted from Theorem~2.1 of \cite{li2020high}]
\label{thm:li2020}
Suppose that Conditions~\ref{enum:high_dimension_regime}, \ref{enum:moments_conditions}, and \ref{enum:boundedness_spectral_norm} hold. Let $k$ be fixed, and let $U_k\in\mathbb{R}^{n\times k}$ be a deterministic matrix with orthonormal columns, that is, $U_k^TU_k=I_k$.
Define
\[
M=
\frac{1}{n-k}
U_k^T\bZ_x^T
\Sigma_0^{1/2}
\left(
\frac{1}{n-k}
\Sigma_0^{1/2}
\bZ_x
(I_n-U_k U_k^T)
\bZ_x^T
\Sigma_0^{1/2}
+\lambda I_{p_1}
\right)^{-1}
\Sigma_0^{1/2}
\bZ_xU_k.
\]
Then,
\[\frac{\sqrt{p_1}}{\Omega_2(\lambda,q)} \Big( M-\Omega_1(\lambda,q)I_k \Big) \stackrel{D}{\longrightarrow} {\rm GOE}(k).\]
\end{theorem}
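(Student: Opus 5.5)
The plan is to reduce $M$ to a $k\times k$ array of (bi)linear forms in the $k$ vectors $Z_xu_1,\dots,Z_xu_k$, where $u_j$ are the columns of $U_k$, and to prove a joint CLT for this array conditionally on the residual matrix.

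Put $A=\Sigma_0^{1/2}\bZ_x$, $V=AU_k\in\mathbb{R}^{p_1\times k}$, $\bW=(n-k)^{-1}A(I_n-U_kU_k^T)A^T$ and $R=(\bW+\lambda I_{p_1})^{-1}$. Then $M=(n-k)^{-1}V^TRV$.

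\textbf{Step 1 (Gaussian case).} Suppose first that the entries of $\bZ_x$ are Gaussian. Complete $U_k$ to an orthogonal matrix $[U_k,U_k^\perp]$. Rotation invariance then gives that $\bZ_xU_k$ and $\bZ_xU_k^\perp$ are independent Gaussian matrices, and $\bW$ depends only on the latter. Condition on $\bW$. The columns $v_1,\dots,v_k$ of $V$ are then i.i.d. $N(0,\Sigma_0)$, and
\[
M_{ij}=\frac{1}{n-k}\,v_i^TRv_j .
\]
\begin{itemize}
\item[(a)] \emph{Centering.} $\mE[M_{ii}\mid\bW]=(n-k)^{-1}\tr(\Sigma_0R)$. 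By Lemma~\ref{lemma:determinist_equivalent}(i), with the $n^{2/3}\gg n^{1/2}$ rate, this equals $(n-k)^{-1}\tr(\Sigma_0\calD(-\lambda))+o_p(p_1^{-1/2})$. Using the M-P equation \eqref{eq:M_P_equation} at $z=-\lambda$, I will identify this with $\Omega_1(\lambda,q)=[\lambda\varphi(-\lambda)]^{-1}-1$. The identity needed is $q\,p_1^{-1}\tr(\Sigma_0\calD)=q\int\tau(\lambda\varphi\tau+\lambda)^{-1}dF^{\Sigma_0}$, which \eqref{eq:M_P_equation} turns into $(\lambda\varphi)^{-1}-1$.
\item[(b)] \emph{Fluctuations.} The standard Gaussian quadratic-form CLT, applied conditionally on $\bW$ with $B=\Sigma_0^{1/2}R\Sigma_0^{1/2}$ of bounded norm, gives asymptotically independent Gaussian entries. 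The diagonal entries have variance $2(n-k)^{-2}\tr(B^2)$ and the off-diagonal ones $(n-k)^{-2}\tr(B^2)$, which is exactly the GOE pattern after scaling by $\sqrt{p_1}$.
\item[(c)] \emph{Variance constant.} It remains to show $p_1(n-k)^{-2}\tr(B^2)\to\Omega_2^2/2$. Here I will express $\tr(\Sigma_0R\Sigma_0R)$ through $\frac{d}{dz}$ of resolvent traces. Concretely, I will differentiate the deterministic equivalent $\calD(z)$, using Lemma~\ref{lemma:determinist_equivalent}(iii) for the convergence of derivatives, and then eliminate $\int\tau^2(\cdots)^{-2}dF^{\Sigma_0}$ via the differentiated M-P equation. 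The goal is the combination $q(\varphi'-\varphi^2)/(\lambda^2\varphi^4)$.
\end{itemize}
Slutsky's theorem then gives the Gaussian case.

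\textbf{Step 2 (general moments).} For general $\bZ_x$ satisfying \ref{enum:moments_conditions}, I will transfer the result by a Lindeberg replacement of the entries of $\bZ_x$ one at a time. The comparison is done on smooth bounded test functions of $\sqrt{p_1}(M-\Omega_1I_k)$, and the needed resolvent bounds come from the anisotropic local law of \cite{knowles2017anisotropic}, stated in stochastic-domination form (Definition~\ref{def:stochastic_domination}). In the expansion up to fourth order, the first- and second-order terms match between the two ensembles. Third-order terms are $O_\prec(n^{-1/2})$ per entry-sum, after using $|u_{lj}|\le1$ and $\sum_l u_{lj}^2=1$.

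\textbf{Main obstacle.} The hard part is the fourth-cumulant contribution. For a single quadratic form $z^TBz$ with $z=\bZ_xu$, the fourth cumulant adds a term of order $\kappa_4\sum_l u_l^4\sum_a B_{aa}^2/(n-k)^2$ to the variance. This term vanishes only when $U_k$ is delocalized, i.e. $\max_{l,j}|u_{lj}|\to0$. I would therefore first try to show that the adapted statement either requires, or can be reduced to, a delocalized $U_k$. In the application $U_k=\bU_k(\bY)$, so one could argue by conditioning on $\bY$, or use that $P_\perp$-orthogonality forces spreading. Failing that, I would add a delocalization hypothesis on $U_k$.

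A second, more routine difficulty is the mixing between $\bZ_xU_k$ and $\bW$ in the non-Gaussian case. I plan to handle it with the same replacement argument, controlling derivatives of $R$ through the isotropic bound $\|R\|_2\le\lambda^{-1}$.
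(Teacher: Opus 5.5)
Your route is genuinely different from the paper's, which gives no argument of its own. The paper invokes Theorem~2.1 of \cite{li2020high}, a martingale-difference CLT for a general analytic shrinkage $f$ of the residual matrix, specializes it to $f(x)=1/(x+\lambda)$, and rewrites the constants at finite $n$. You instead proceed in two stages:
\begin{itemize}
\item in the Gaussian case, you use exact independence of $\bZ_xU_k$ and the residual, a conditional CLT for Gaussian bilinear forms, and explicit M-P identities;
\item for general entries, you use a Lindeberg comparison.
\end{itemize}
This makes the Gaussian case self-contained and valid for every deterministic $U_k$, and your centering identity in (a) is correct. It also exposes something the citation hides: the shape of $U_k$ matters.

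The fourth-cumulant obstacle you raise is real, and it is a defect of the statement rather than of your method. Take $k=1$, $U_k=e_1$, and Rademacher entries, which satisfy \ref{enum:moments_conditions}. Then $z=\bZ_xe_1$ is exactly independent of $B=\Sigma_0^{1/2}R\Sigma_0^{1/2}$. Since $z_a^2=1$, we have $z^TBz=\tr B+\sum_{a\neq b}B_{ab}z_az_b$, so
\[
p_1\operatorname{Var}(M\mid B)=\frac{2p_1}{(n-1)^2}\Big(\tr B^2-\sum_a B_{aa}^2\Big).
\]
The Gaussian value is $2p_1(n-1)^{-2}\tr B^2=\Omega_2^2+o_p(1)$, while $p_1^{-1}\sum_aB_{aa}^2\ge(p_1^{-1}\tr B)^2$ stays bounded away from zero. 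Hence $\sqrt{p_1}(M-\Omega_1)/\Omega_2$ has asymptotic variance at most $1-c$ for some $c>0$. Its square is uniformly integrable, so it cannot converge to ${\rm GOE}(1)=N(0,1)$.

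So for an arbitrary deterministic $U_k$ under only \ref{enum:moments_conditions}, the adapted theorem is false. A condition such as $\max_{l,j}|(U_k)_{lj}|\to0$, or a vanishing fourth cumulant, must be added; the paper's reduction to \cite{li2020high} does not mention one. Your fallback of adding a delocalization hypothesis is the right fix. Your other two ideas do not work:
\begin{itemize}
\item conditioning on $\bY$ just reproduces the deterministic-$U_k$ setting;
\item orthogonality to $1_n$ does not force spreading, e.g.\ $(e_1-e_2)/\sqrt{2}\perp 1_n$.
\end{itemize}
In the application, delocalization of $\bU_k(\bY)$ would have to come from assumptions on $Z_y$ that the paper does not make.

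Step (c) also needs repair when $\Sigma_0\not\propto I$. Derivatives in $z$ of $\tr[A(\bW-z)^{-1}]$ or of $\calD(z)$ at $z=-\lambda$ only yield quantities like $\tr(AR^2)$, never $\tr(\Sigma_0R\Sigma_0R)$. Replacing each $R$ by $\calD(-\lambda)$ in the product also fails: it misses a factor $\varphi'/\varphi^2$. Instead, note that
\[
B=\Big(\frac{1}{n-k}\bZ_x(I_n-U_kU_k^T)\bZ_x^T+\lambda\Sigma_0^{-1}\Big)^{-1}
\]
is the inverse of the analogue of $\bG_{k\lambda}$. Hence $p_1^{-1}\tr B^2\to\phi'(0)$, where $\phi$ solves \eqref{eq:MP_G} with $q=p_1/(n-k)$. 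Now differentiate \eqref{eq:MP_G} at $z=0$, using two facts, with $\varphi,\varphi'$ evaluated at $-\lambda$:
\begin{itemize}
\item $[1+q\phi(0)]^{-1}=\lambda\varphi$;
\item the differentiated M-P equation, $q\int\tau^2(\tau\varphi+1)^{-2}dF^{\Sigma_0}(\tau)=\varphi^{-2}-1/\varphi'$.
\end{itemize}
This gives $q\,\phi'(0)=(\varphi'-\varphi^2)/(\lambda^2\varphi^4)$, so the diagonal variance $2q^2\phi'(0)$ equals $\Omega_2^2$, as required.
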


Theorem~\ref{thm:li2020} is established in \cite{li2020high} using martingale difference techniques. We do not reproduce the proof here; instead, we briefly highlight the connection between the framework considered in \cite{li2020high} and the present setting. 
Specifically, \cite{li2020high} considers a general spectral shrinkage framework in which the eigenvalues of
\[\frac{1}{n-k}\Sigma_0^{1/2}\bZ_x(I_n-U_kU_k^T)\bZ_x^T\Sigma_0^{1/2},\]
denoted by $\ell_1,\ldots,\ell_{p_1}$, are transformed through a general analytic shrinkage function $f$, while the corresponding eigenvectors are retained. Writing $Q$ for the associated eigenvector matrix, the resulting matrix takes the form
\[ \frac{1}{n-k} U_k^T\bZ_x^T \Sigma_0^{1/2} Q \operatorname{Diag} \bigl( f(\ell_1),\ldots,f(\ell_{p_1})\bigr)Q^T \Sigma_0^{1/2}\bZ_x U_k.\]
In the present work, ridge regularization in $M$ corresponds to the particular choice
$f(x)= {1}/{(x+\lambda)}$. Consequently, Theorem~\ref{thm:li2020} follows as a special case of Theorem~2.1 in \cite{li2020high}.

Second, for notational simplicity and clarity, the asymptotic centering and scaling parameters in \cite{li2020high} are presented in their limiting forms, which do not explicitly depend on the aspect ratio $q$. Specifically, \cite{li2020high} additionally assumes that the empirical spectral distribution $F^{\Sigma_0}$ converges in $\ell_\infty$-norm to a limiting distribution $F^\infty$ and that $q=q(n,k)$ converges to a constant. Under these assumptions, the quantities $\Omega_1(\lambda,q)$ and $\Omega_2(\lambda,q)$ admit deterministic limits. In the present work, we do not impose these additional assumptions. Instead, we retain the explicit dependence of the centering and scaling parameters on $(n,k)$ through $q=q(n,k)$ and the empirical spectral distribution $F^{\Sigma_0}$, extracted directly from the proof of Theorem~2.1 of \cite{li2020high}. 

Next, we prove Theorem~\ref{thm:main_fix_k} using Theorem~\ref{thm:li2020}. Fix $\bY$ and consider the event $\mathcal{E}_C$ holds. Let $U_{k+1}$ be chosen such that
\[
U_{k+1}U_{k+1}^T
=\bP_k + \frac{1}{n}{1}_n{1}^T_n.
\]
Without loss of generality, we may take
\[
U_{k+1}
=
[\bU_k, n^{-1/2} 1_n].
\]
With this choice, the leading principal $k\times k$ submatrix of $M$ is
$k/(n-1-k)\tilde{\bF}$.

Applying Theorem~\ref{thm:li2020} conditionally on $\bY$, we obtain
\[
\mathbb I(\mathcal E_C)
\frac{\sqrt{p_1}}{\Omega_2(\lambda,q)}
\Big(
M-\Omega_1(\lambda,q)I_{k+1}
\Big)
\,\Big|\, \bY
\stackrel{D}{\longrightarrow}
{\rm GOE}_{k+1}.
\]
Taking the leading principal $k\times k$ submatrix yields
\[
\mathbb I(\mathcal E_C)
\frac{\sqrt{p_1}}{\Omega_2(\lambda,q)}
\Big(
\frac{k}{n-1-k}\tilde{\bF}
-
\Omega_1(\lambda,q)I_k
\Big)
\,\Big|\, \bY
\stackrel{D}{\longrightarrow}
{\rm GOE}(k).
\]
Therefore, for any fixed open set $\mathcal C\subset\mathbb R^{k\times k}$,
\[
\mathbb P\!\left(
\mathbb I(\mathcal E_C)
\frac{\sqrt{p_1}}{\Omega_2(\lambda,q)}
\left(
\frac{k}{n-1-k}\tilde{\bF}
-
\Omega_1(\lambda,q)I_k
\right)
\in\mathcal C
\,\middle|\,
\bY
\right)
\longrightarrow
\mathbb P\big({\rm GOE}(k)\in\mathcal C\big).
\]
Integrating with respect to the marginal distribution of $\bY$ and applying the dominated convergence theorem gives
\[
\mathbb P\!\left(
\mathbb I(\mathcal E_C)
\frac{\sqrt{p_1}}{\Omega_2(\lambda,q)}
\left(
\frac{k}{n-1-k}\tilde{\bF}
-
\Omega_1(\lambda,q)I_k
\right)
\in\mathcal C
\right)
\longrightarrow
\mathbb P\big({\rm GOE}(k)\in\mathcal C\big).
\]
Since Condition~\ref{enum:lower_bound_eigen_Y} implies that $\mathbb P(\mathcal E_C)\ge 1-\exp(-n)$
for all sufficiently large $n$, the indicator $\mathbb I(\mathcal E_C)$ can be removed asymptotically. Hence,
\[
\mathbb P\!\left(
\frac{\sqrt{p_1}}{\Omega_2(\lambda,q)}
\left(
\frac{k}{n-1-k}\tilde{\bF}
-
\Omega_1(\lambda,q)I_k
\right)
\in\mathcal C
\right)
\longrightarrow
\mathbb P\big({\rm GOE}(k)\in\mathcal C\big).
\]
Finally, since $\bF$ and $\tilde{\bF}$ share the same nonzero eigenvalues, the conclusions of Theorem~\ref{thm:main_fix_k} follow.

\subsection{Proof of Theorem \ref{thm:main_diverge_k}}\label{subsec:proof_main_diverge_k}
In this section, we prove Theorem~\ref{thm:main_diverge_k}. The proof relies on the technical results developed in \cite{knowles2017anisotropic}, \cite{li2024analysis}, and \cite{li2025ridge}. The argument proceeds in three steps.

\begin{itemize}
    \item[]\textbf{Step 1.} We first summarize key properties of a generalized Mar\v{c}enko--Pastur equation, following \cite{li2024analysis}. This equation characterizes the asymptotic spectral behavior of $\bG_{k\lambda}$.

    \item[]\textbf{Step 2.} In addition to Conditions~\ref{enum:high_dimension_regime}--\ref{enum:edger_regularity}, we temporarily assume that the entries of $\bZ_x$ are Gaussian. Under this assumption, we establish the conclusion of Theorem~\ref{thm:main_diverge_k}.

    \item[]\textbf{Step 3.} We then extend the result to the general moment condition specified in Condition~\ref{enum:moments_conditions} by controlling the effect of relaxing the Gaussian assumption on the behavior of $\tilde{\bF}_{k\lambda}$.
\end{itemize}

Such a multi-step strategy is standard in the RMT literature, particularly in the analysis of random matrices with complex dependence structures. Under Gaussianity, the analysis is substantially simplified by the conditional independence structure underlying $\tilde{\bF}_{k\lambda}$. Indeed, since $\bU_k$ is an orthogonal matrix and $P_\perp-\bU_k\bU_k^T$ is orthogonal to $\bU_k\bU_k^T$, it follows that, conditional on $\bY$,
\[
\bZ_x\bU_k ~\mid ~\bY
\quad\text{is independent of}\quad
\bZ_x(P_\perp-\bU_k\bU_k^T) ~\mid ~\bY.
\]
Consequently, conditional on $\bY$, the companion matrix $\tilde{\bF}_{k\lambda}$ may be viewed as a sample covariance matrix with population covariance matrix $\bG_{k\lambda}$.

In the sequel, to simplify the notation, we suppress the dependence on $n$, $k$, and $\lambda$ of various quantities defined in the previous sections whenever no ambiguity arises.

\subsubsection{Step 1: Properties of a generalized Mar\v{c}enko-Pastur equation}\label{subsec:proof_properties_MP_G}
For the population ESD $F^{\Sigma_0}$ and the aspect ratio $q = q(n,k) = p_1/(n-1-k)$, define the following generalized Mar\v{c}enko--Pastur equation.  
\begin{equation}
\label{eq:MP_G}
\phi(z)
=
\int
\frac{\tau\, dF^{\Sigma_0}(\tau)}
{\tau\{[1+q\phi(z)]^{-1}-z\}+\lambda}, \qquad z\in\mathbb{C}^+.
\end{equation}
Under the assumption on $k$, $p_1$ and $n$ in Theorem \ref{thm:main_diverge_k}, we have that $q \asymp 1$. Moreover, $F^{\Sigma_0}$ does not degenerate to the Dirac measure at zero due to Condition \ref{enum:boundedness_spectral_norm}. Throughout the analysis, we assume that the two conditions hold.

Following from the classical works of \cite{marchenko1967distribution} and \cite{silverstein1995empirical}, the equation \eqref{eq:MP_G} admits a unique solution
\[
\phi(z)
=
\phi(z;q,F^{\Sigma_0})
\in
\mathbb{C}^+, \qquad  \text{for every } z \in \mathbb{C}^+.
\]
Moreover, $\phi(z)$ is the Stieltjes transform of a probability measure, denoted by
\[\calG =\calG(q,F^{\Sigma_0}).\]
However, except in a few special cases where $F^{\Sigma_0}$ has a particularly simple form, neither $\phi(z)$ nor the associated measure $\calG$ admits an explicit expression. Nevertheless, many properties of $\calG$ can be deduced from \eqref{eq:MP_G}, studied in \cite{li2024analysis}. 

Recall that $\eta={\lambda}/{\ell_{\max}(\Sigma_0)}$.  Assume that the spectrum of $\Sigma_0$ is asymptotically regular near $\ell_{\max}(\Sigma_0)$ in the sense of Definition~\ref{def:edge_regularity}. Namely, $|g'(h)|> \epsilon_1$ for $h\in (\eta -\epsilon_2, h)$. Then, $\calG$ is compactly supported, whose leftmost support edge is identified as $\rho$, defined in Section \ref{subsec:asymptotic_k_diverge}. The behavior of $\calG$ near $\rho$ is characterized as in Lemma \ref{lemma:leftmost_support_edge}. 
\begin{lemma}
\label{lemma:leftmost_support_edge}
The following properties of $\calG$ are summarized from \cite{li2024analysis}.
\begin{itemize}
\item[(a)] If $g'(h)< -\epsilon_1$, $h\in(\eta-\epsilon_2,\eta)$, then there exists a unique $h_0 <\eta-\epsilon_2$ such that $g'(h_0) = 0$. We have that the leftmost edge $\rho=\lim_{h\uparrow h_0} g(h) > \eta$. Moreover, $\calG$ is continuous near $\rho$ and admits a density, say $f_{\calG}$, in a neighborhood of $\rho$. In particular, there exists a neighborhood of $\rho$, say $(\rho, \rho + a)$, such that for all sufficiently large $n$
\[ f_{\calG}(x) \asymp \sqrt{x-\rho}, \quad x\in (\rho, \rho+a).\]
\item[(b)] If $g'(h) > \epsilon_1$, $h\in (\eta -\epsilon_2, \eta)$, we set $h_0 = \eta$. Then, the leftmost edge $\rho = \lim_{h\uparrow h_0}g(h) = \eta$. Moreover, $\calG$ has an atom at $\rho$ and we can find a lower bound $a>0$ such that $\calG(\{\rho\}) >a$, for all sufficiently large $n$. 
\item[(c)] Moreover, an upper bound on the rightmost support edge of $\calG$ can be found as $ (1+ \sqrt{q})^2 + \lambda/\ell_{\min}(\Sigma_0)$. 
\end{itemize}
\end{lemma}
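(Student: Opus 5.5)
The plan is to linearize \eqref{eq:MP_G} through the substitution
\[
h = z - [1+q\phi(z)]^{-1}.
\]
Under this substitution the denominator becomes $\lambda-\tau h$, so \eqref{eq:MP_G} reads $\phi=\calH_1(h)$. Inverting the substitution then gives
\[
z = h + [1+q\calH_1(h)]^{-1} = g(h),
\]
with $g$ the master function \eqref{eq:def_g}. Thus $g$ is the inverse of $z\mapsto h(z)$, and I would follow the Silverstein--Choi / Marchenko--Pastur scheme, as already carried out in \cite{li2024analysis}, in four steps.

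\textbf{Step 1 (complement of the support).} Characterize the real complement of $\operatorname{supp}\calG$ to the left of the spectrum as the image under $g$ of real $h<\eta$ with $g'(h)>0$. For real $z<\rho$ I would take $h=h(z)\in(-\infty,h_0)$, which is real. Then $\phi(z)=\calH_1(h)$ is real, analytic and increasing in $z$, so $z\notin\operatorname{supp}\calG$. Conversely, when $z$ crosses $\rho$ the real branch ceases to exist, and $\rho$ is therefore the leftmost edge. Strict concavity of $g$, via $g''<0$, gives uniqueness of $h_0$.

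\textbf{Step 2, case (a).} Here $g'<-\epsilon_1$ on $(\eta-\epsilon_2,\eta)$, while $g'>0$ for $h\to-\infty$ and $g'$ is strictly decreasing, so the unique zero $h_0$ lies below $\eta-\epsilon_2$. Expanding $g$ around $h_0$ gives
\[
z-\rho=\tfrac12 g''(h_0)(h-h_0)^2+O(|h-h_0|^3).
\]
Hence for $z=x>\rho$ the $\mathbb C^+$-branch is
\[
h\approx h_0+i\sqrt{2(x-\rho)/|g''(h_0)|}.
\]
Consequently
\[
\Im\phi=\Im\calH_1(h)\asymp\calH_2(h_0)\sqrt{x-\rho},
\]
and Stieltjes inversion yields $f_{\calG}(x)\asymp\sqrt{x-\rho}$. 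For $n$-uniformity I need $|g''(h_0)|$ and $\calH_2(h_0)$ bounded above and below. These follow from \ref{enum:boundedness_spectral_norm}, $q\asymp1$, and the fact that $\lambda-\tau h_0\ge\epsilon_2\ell_{\min}(\Sigma_0)$, which keeps $h_0$ a fixed distance from the pole $\eta$. The inequality $\rho>\eta$ I would check from $\rho=g(h_0)$ together with the monotonicity of $g$ on each side of $h_0$, comparing with the value of $g$ near $\eta$.

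\textbf{Step 3, case (b).} Since $F^{\Sigma_0}$ is an empirical distribution, it has an atom of mass $w_1\ge1/p_1$ at $\ell_{\max}(\Sigma_0)$. Therefore $\calH_1(h)\to\infty$ as $h\uparrow\eta$, so $g(h)\to\eta$ and $\rho=\eta$. Moreover
\[
\calH_2/\calH_1^2\to 1/w_1, \qquad\text{so}\qquad g'(h)\to1-(qw_1)^{-1}.
\]
The hypothesis $g'>\epsilon_1$ thus forces $qw_1\ge(1-\epsilon_1)^{-1}$. A residue computation at $z=\eta$, namely $\phi(z)\sim-m/(z-\eta)$ with $m$ determined by $w_1$ and $q$ (I expect $m$ of order $1-(qw_1)^{-1}\gtrsim\epsilon_1$ up to normalization), gives $\calG(\{\rho\})\ge a>0$. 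Uniformity in $n$ is the delicate point, because $w_1$ may vary with $n$. Here the regularity hypothesis on the whole window $(\eta-\epsilon_2,\eta)$, and not just at the endpoint, should be used.

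\textbf{Step 4, part (c).} I would identify $\calG$ as the limiting ESD of $\bG_{k\lambda}$ and bound deterministically
\[
\|\bG_{k\lambda}\|_2\le\big\|(n-1-k)^{-1}\bZ_x(P_\perp-\bU_k\bU_k^T)\bZ_x^T\big\|_2+\lambda/\ell_{\min}(\Sigma_0).
\]
The first term converges to at most $(1+\sqrt q)^2$ by the Bai--Yin law. Alternatively, one can check directly that $g'>0$ for all real $h$ beyond the corresponding preimage.

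The main obstacle is the $n$-uniform control in case (b): translating the regularity condition into a uniform lower bound on the atom mass when the top atoms of $F^{\Sigma_0}$ may cluster rather than coincide.
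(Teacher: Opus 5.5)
The paper gives no argument for this lemma; it is imported directly from \cite{li2024analysis}. Your proposal therefore supplies what the paper leaves out. You rewrite \eqref{eq:MP_G} via $h=w(z)=z-[1+q\phi(z)]^{-1}$ as $\phi=\calH_1(h)$, $z=g(h)$. You then identify the part of the real line left of $\rho$ with $g\bigl((-\infty,h_0)\bigr)$ in the Silverstein--Choi manner, and treat the three parts by a quadratic expansion at the critical point for (a), a pole computation at $h=\eta$ for (b), and a norm bound for (c). This is the standard route and it is sound. Compared with the bare citation, it makes explicit the $n$-uniform constants that the statement (``for all sufficiently large $n$'') requires.

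The point you flag as the main obstacle actually closes at once. Write $\delta=h-\eta$. Then $\calH_1(h)=-w_1/\delta+O(1)$ and $g(h)=\eta+\bigl(1-(qw_1)^{-1}\bigr)\delta+O(\delta^2)$, so $g$ is locally invertible at $\eta$ and $\phi(z)\sim-(w_1-q^{-1})/(z-\eta)$ as $z\to\eta$. Hence $\calG(\{\eta\})=w_1-q^{-1}$ exactly. This agrees with the rank count: adding a positive semidefinite matrix of rank $n-1-k$ to $\lambda\Sigma_0^{-1}$ leaves $\eta$ with multiplicity at least $p_1w_1-(n-1-k)$.

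Because $g'$ is strictly decreasing, $g'>\epsilon_1$ on the window is equivalent to $1-(qw_1)^{-1}\ge\epsilon_1$. So clustered top eigenvalues with a small top atom are already excluded by the hypothesis, and the endpoint value alone suffices. It follows that $\calG(\{\rho\})=q^{-1}(qw_1-1)\ge\epsilon_1/[(1-\epsilon_1)q]$, which is bounded below uniformly since $q\lesssim1$.

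Two smaller points need tightening. In (a), the bound $\lambda-\tau h_0\ge\epsilon_2\ell_{\min}(\Sigma_0)$ only gives upper bounds. The lower bounds on $\calH_2(h_0)$ and $|g''(h_0)|$ come from $g'(h_0)=0$, i.e.\ $q\calH_2(h_0)=(1+q\calH_1(h_0))^2\ge1$, which also keeps $h_0$ bounded below. Combine this with $|g''(h)|\ge 2q\calH_3(h)/(1+q\calH_1(h))^3$ (by Cauchy--Schwarz, $\calH_2^2\le\calH_1\calH_3$) and $\calH_3\ge\calH_2^{3/2}$.

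In (c), $\calG$ is an $n$-dependent deterministic measure, so ``the limiting ESD of $\bG_{k\lambda}$'' needs to be made precise. Weak approximation at each $n$ does not by itself exclude a small amount of mass beyond the edge. One fix is to replicate $\Sigma_0\otimes I_m$ with $q$ held fixed, so the equation and hence $\calG$ are unchanged, and let $m\to\infty$ before applying Bai--Yin. Another is to use subadditivity of the right edge under free additive convolution of the Mar\v{c}enko--Pastur law with $F^{\lambda\Sigma_0^{-1}}$.
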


The connection of $\phi(z)$ and $\calG$ to the analysis in Section \ref{sec:analysis_null} is as follows. Recall the definition of the function $s(g)$ defined for $g\in [0, \rho)$ as in \eqref{eq:def_s_nlambda_g}. We have the following relationship
\[ s(g) = \lim_{z\in\mathbb{C}^+ \to g} \phi(z)  = \lim_{z\in\mathbb{C}^+ \to g} \int \frac{d\calG(\tau)}{\tau -z} = \int \frac{d\calG(\tau)}{\tau -g},  \quad  g\in [0,\rho).\]
Moreover, take differentiation on both sides,
\[ s'(g) =\phi'(g) = \int \frac{d\calG(\tau)}{(\tau -g)^2} >0 \qquad   s''(g) =\phi''(g) = 2\int\frac{d\calG(\tau)}{(\tau -g)^3}>0.\]
Importantly,  since we either have that $\calG$ has an atom at $\rho$ with a lower bound on $\calG(\{\rho\})$ or $d\calG(x) \asymp \sqrt{x-\rho}$, we conclude that 
for any large constant $\calK'$, we can find a small constant $a'>0$ such that 
\[s'(g) \geq \calK', \qquad \text{if } g \in (\rho-a', \rho),\]
for all sufficiently large $n$. 

Consider a function that plays a foundational role in the following analysis as 
\[ A(x) =  -\frac{1}{x}  + \frac{p_1}{k} \int \frac{d\calG(\tau)}{\tau +x} = -\frac{1}{x} + \frac{p_1}{k} s(-x), \qquad x\in (-\rho, 0)\]
Consider the critical point $x = -\beta$ of the function as the root of the equation
\[ A'(-\beta) = \frac{1}{\beta^2} - (p_1/k)s'(\beta) = 0.\]
Equivalently, $\beta$ is such that 
\[ \beta^2 s'(\beta)  =  k/p_1.\]
Note that this definition $\beta$ is consistent with that in Theorem \ref{thm:main_diverge_k}. Such a root exists and is unique because $\beta^2s'(\beta)$ is a monotonic increasing function whose image set is $(0,\infty)$.  

Since under the assumption of Theorem \ref{thm:main_diverge_k}, $k\asymp p_1$, the quantity $\beta$ remains bounded away from $0$ and $\rho$, in the sense that $\beta \asymp 1$ and $\rho-\beta\asymp 1$.


\subsubsection{Step 2: Proof under Gaussianity}\label{subsec:proof_normality}

In this section, we establish \ref{thm:main_diverge_k} under the Gaussian assumption. The proof proceeds in several stages. First, we present the local laws of the resolvent of $\bG$ established in \cite{li2025ridge}. Second, we identify a suitable high-probability region to which $(\bG,\bY)$ belongs asymptotically. Third, conditioning on $(\bG,\bY)$ lying in this region, we establish weak convergence of the conditional distribution of
$\ell_{\max}(\tilde{\bF})\mid(\bG,\bY)$. 
Finally, we show that the corresponding conditional limit is asymptotically stable throughout this region, which allows us to deduce the unconditional convergence. 

For convenience, we call $\bU_{\perp} \in \mathbb{R}^{n\times (n-1-k)}$ to be an orthogonal matrix such that 
\[ \bU_{\perp} \bU_{\perp}^T = P_\perp - \bP_k.\]
Note that $\bU_\perp$ depends on $\bY$ only.

{\bf Stage 1.} Define a linearization matrix and its inverse as 
\[\bK(z) = \begin{bmatrix} \lambda \Sigma_0^{-1} - z I_{p_1}  & \bZ_x \bU_\perp \\ \bU_\perp^T \bZ_x^T & - I_{n-1-k}\end{bmatrix}, \]
\[\bJ(z) = \bK^{-1}(z) = \begin{bmatrix}
(\bG - z I_{p_1})^{-1} & (\bG - z I_{p_1})^{-1} \bZ_x \bU_{\perp} \\
\bU_\perp^T \bZ_x^T (\bG - z I_{p_1})^{-1} &  \bU_\perp^T \bZ_x^T (\bG - z I_{p_1})^{-1} \bZ_x \bU_\perp - I_{n-1-k}
\end{bmatrix}. 
\]
The reason to discuss $\bK(z)$ is that $\bJ(z)$ contains the resolvent $(\bG - zI_{p_1})^{-1} $ of $\bG$ as one of its block and $\bK(z)$ is linear in the entries of $\bZ_x$. 
The main arguments in the proof are the study of the behavior of $\bJ(z)$ when $z$ is near the real axis, known as ``local laws'' in RMT literature. We are particularly interested in characterizing the behavior of $\bJ(z)$ when $z$ is near $\rho$.

We introduce the following control parameter 
\[ \Psi(z) = \sqrt{\frac{\Im \phi(z)}{n\I(z)}} + \frac{1}{n\I(z)}.\]
Define
\[  Q= Q(a,a', n) = \Big\{z \in \mathbb{C}^+ ~:~ |E(z)| \leq 1/a,~~ E\leq \rho + a,~~n^{-1+a'} \leq \I(z) \leq 1/a'  \Big\}.\]
Define a  \emph{deterministic equivalent} matrix 
\[ \bOmega(z) = \begin{bmatrix}
    (\lambda \Sigma_0^{-1} - w(z) I_{p_1})^{-1} & 0 \\
    0 & (z- w(z)) I_{n-1-k}
\end{bmatrix},\]
where $w(z) = z - [1 + q \phi(z)]^{-1}$.

\begin{theorem}
    \label{thm:local_laws_G}
Suppose that Conditions \ref{enum:high_dimension_regime}--\ref{enum:edger_regularity} hold and $\bZ_x$ is normally distributed. Assume that $k$ is such that $k/n\to\gamma \in (0, \alpha)$ as in Theorem \ref{thm:main_diverge_k}. We have the following statements
\begin{itemize}
    \item[(a)]     Consider the case when $\rho > \eta $ first. For arbitrary $a'>0$, there exists a constant $a>0$ such that the following local laws hold.
    \begin{itemize}
        \item[(i)] The entrywise local law holds as 
        \[ \bJ(z) - \bOmega(z) = O_{\prec} (\Psi(z)), \quad z \in Q. \]
        \item[(ii)] The averaged local law holds as 
       \[ \hat{\phi}(z) - \phi(z) = O_\prec \big(\frac{1}{n\I(z)} \big), \quad z \in Q,\]
       where $\hat{\phi}(z) = p_1^{-1} \tr [(\bG -z I_{p_1})^{-1}]$.
    \end{itemize}
    \item[(b)] Consider the case when $\rho = \eta$. In this case, we have $\ell_{\min}(\bG)$ is $\lambda/\ell_{\max}(\Sigma_0)$ with probability 1. For any $z$ bounded away from $[\rho, (1+\sqrt{q})^2 +\lambda/\ell_{\min}(\Sigma_0)]$, we have 
    \[\hat{\phi}(z) -\phi(z) =O_\prec\big(\frac{1}{n} \big).\]
\end{itemize}
\end{theorem}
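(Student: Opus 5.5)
The first step is a reduction. Conditional on $\bY$, the matrix $\bU_\perp\in\mathbb{R}^{n\times(n-1-k)}$ is deterministic with orthonormal columns. Since the entries of $\bZ_x$ are i.i.d.\ $N(0,1)$, rotation invariance gives that $\tilde{\bZ}:=\bZ_x\bU_\perp$ has i.i.d.\ $N(0,1)$ entries and dimension $p_1\times N$ with $N=n-1-k$. Its law therefore does not depend on $\bY$.

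Consequently, conditional on $\bY$, we have $\bG\stackrel{d}{=}N^{-1}\tilde{\bZ}\tilde{\bZ}^T+\lambda\Sigma_0^{-1}$. In the linearization $\bK(z)$ we read $\bZ_x\bU_\perp$ with the normalization $N^{-1/2}$ that makes its lower-right block the Schur complement of $\bG-zI_{p_1}$. Every statement in Theorem~\ref{thm:local_laws_G} is then a statement about a Gaussian sample covariance matrix deformed additively by the deterministic positive matrix $\lambda\Sigma_0^{-1}$. The bounds hold conditionally with constants independent of $\bY$, hence unconditionally.

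For part (a) I would follow the anisotropic local law scheme of \cite{knowles2017anisotropic}, as adapted to the ridge setting in \cite{li2025ridge}, and argue directly through the linearization $\bJ=\bK^{-1}$.

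\emph{(1) Self-consistent equations.} Taking Schur complements in $\bK$ over a single column index $\mu$ of the $N$-block gives the approximate equation $\bJ_{\mu\mu}^{-1}=-1-N^{-1}\tr[(\bG^{(\mu)}-z)^{-1}]+\text{error}$, where $\bG^{(\mu)}$ denotes $\bG$ with column $\mu$ removed. Gaussian large deviation bounds for quadratic forms control the error by $O_\prec(\Psi)$. Feeding the resulting expression for the upper-left block back in yields $(\bG-z)^{-1}\approx(\lambda\Sigma_0^{-1}-w(z))^{-1}$ with $w=z-[1+q\hat\phi]^{-1}$. Taking $p_1^{-1}\tr$ gives a perturbed version of \eqref{eq:MP_G}.

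\emph{(2) Stability of \eqref{eq:MP_G} on $Q$.} Here I would use Lemma~\ref{lemma:leftmost_support_edge}(a): when $\rho>\eta$, the density of $\calG$ has square-root behaviour at $\rho$ and $g'$ has a simple zero at $h_0$. From this I would derive the standard estimate $\Im\phi\asymp\sqrt{\kappa+\I}$ inside the support and $\Im\phi\asymp \I/\sqrt{\kappa+\I}$ outside, where $\kappa=|E-\rho|$. I would also obtain the stability bound $|\delta\phi|\lesssim \min\{\delta/\sqrt{\kappa+\I},\sqrt{\delta}\}$ for a perturbation of size $\delta$. Edge regularity (Definition~\ref{def:edge_regularity}) keeps $h_0$ at distance of order one from the pole at $\eta$, so the integrands $\tau/(\lambda-\tau h)$ stay bounded. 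This is what makes the stability constants uniform in $n$.

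\emph{(3) Entrywise law.} The entrywise law (i) would follow from a continuity/bootstrapping argument in $\I(z)$. One starts at $\I\asymp 1$, where the bounds are trivial, and descends to $\I\ge n^{-1+a'}$ on a fine $z$-grid, using Lipschitz continuity of $\bJ$ and a union bound. Anisotropic (isotropic) entries are handled by the polynomialization or Gaussian-rotation trick of \cite{knowles2017anisotropic}.

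\emph{(4) Averaged law.} The averaged law (ii) requires improving the error from $\Psi$ to $(n\I)^{-1}$. I would do this by fluctuation averaging of the Schur-complement errors, i.e., showing that $N^{-1}\sum_\mu Z_\mu$ is bounded by $O_\prec(\Psi^2)$, and then combining this with stability once more.

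I expect the main obstacle to be step (2) together with the bootstrap near $\rho$. The deformation $\lambda\Sigma_0^{-1}$ enters additively, not multiplicatively, so the standard Mar\v{c}enko--Pastur stability analysis must be redone via the master function $g$. The square-root edge must also be shown to be stable uniformly in $q=q(n,k)$. I would try first to transfer the estimates of \cite{li2024analysis}, by expressing $\phi$ through $h=-\lambda\cdot(\text{suitable transform})$ so that near $\rho$ the equation reads $g(h)=z$ with $g'(h_0)=0$ and $g''(h_0)<0$. Its local inversion then yields the square-root expansion and the stability constants.

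For part (b), the case $\rho=\eta$, I would argue as follows.

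\emph{Deterministic bottom eigenvalue.} When $p_1>N$, the kernel of $\tilde{\bZ}^T$ has dimension at least $p_1-N$. Almost surely this kernel meets the eigenspace of $\Sigma_0$ for $\ell_{\max}(\Sigma_0)$ whenever the two dimensions are large enough, and this gives $\ell_{\min}(\bG)=\lambda/\ell_{\max}(\Sigma_0)$. If that dimension count fails, the matching lower bound on $\ell_{\min}(\bG)$ is immediate anyway, because $\bG\succeq\lambda\Sigma_0^{-1}$.

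\emph{Resolvent away from the spectrum.} Because the statement only concerns $z$ bounded away from the spectral support, I need only a global, not local, law. First, Gaussian concentration via the Poincar\'e inequality for the Lipschitz function $\tilde{\bZ}\mapsto p_1^{-1}\tr(\bG-z)^{-1}$ shows that $\hat\phi-\mE\hat\phi=O_\prec(n^{-1})$. The Lipschitz constant is $O(n^{-1})$ on the high-probability event $\|\bG\|\lesssim 1$ (a rigidity event). Second, a cumulant (Stein) expansion of $\mE\hat\phi$ shows that it satisfies \eqref{eq:MP_G} up to $O(n^{-1})$. Third, stability of \eqref{eq:MP_G}, which is trivial for such $z$ since everything there is analytic and bounded, yields $\hat\phi-\phi=O_\prec(1/n)$.
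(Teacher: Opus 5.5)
Your reduction is the paper's proof. The paper conditions on $\bY$ on the event $\mathcal E_C=\{\ell_k(\bS_y)>C\}$, where $\bP_k$ and hence $\bU_\perp$ are well defined. There $\bZ_x\bU_\perp$ is an i.i.d.\ Gaussian $p_1\times(n-1-k)$ matrix, so the conditional law of $\bG$ and $\bJ$ does not depend on $\bY$. The paper then quotes the local laws of Section~S.6 of \cite{li2025ridge}, which are proved for deterministic $\bU_\perp$, integrates over $\bY$, and drops $\mathbb{I}(\mathcal E_C)$ using $\mP(\mathcal E_C)\ge 1-e^{-n}$. You should state that event explicitly, since off it $\bU_\perp$ is undefined.

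Where you differ is that you re-derive the deterministic-projection result instead of citing it. For (a) you use Schur-complement self-consistent equations, stability of \eqref{eq:MP_G} through $g$ (with $g'(h_0)=0$, $g''(h_0)<0$, $h_0<\eta-\epsilon_2$), bootstrapping in $\I(z)$, and fluctuation averaging. For (b) you use a Poincar\'e/Stein global law. The outline is sound and is essentially the content of the cited section. It makes the argument self-contained at considerable extra length; the only ingredient new to this setting is the rotation-invariance reduction, which you have.

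One step in (b) does not close. Your fallback for the case where the dimension count fails only gives $\ell_{\min}(\bG)\ge\eta$, not equality. If the count genuinely failed, $\ell_{\min}(\bG)>\eta$ almost surely and the claim would be false. You need to show that $\rho=\eta$ forces the count:
\begin{itemize}
\item $F^{\Sigma_0}$ is an empirical measure with mass $m_1/p_1$ at $\ell_{\max}(\Sigma_0)$, where $m_1$ is the multiplicity. Hence $\lim_{h\uparrow\eta}g'(h)=1-(n-1-k)/m_1$.
\item The case-(b) hypothesis $g'>\epsilon_1$ near $\eta$ therefore gives $m_1>n-1-k$.
\item Then $\ker\big((\bZ_x\bU_\perp)^T\big)$ meets the top eigenspace of $\Sigma_0$ in dimension at least $m_1-(n-1-k)\ge 1$. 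This holds deterministically, not just almost surely.
\item Every $v$ in that intersection satisfies $\bG v=\eta v$, so $\ell_{\min}(\bG)=\eta$.
\end{itemize}

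Finally, your step (1) gives $\bJ_{\mu\mu}\approx-[1+q\phi]^{-1}=w(z)-z$ for the lower-right block. That is the opposite sign of $(z-w(z))I_{n-1-k}$ in $\bOmega$ as stated. Together with the missing $(n-1-k)^{-1/2}$ in $\bK$ that you already flagged, this is a convention slip in the statement. Reconcile it explicitly rather than claiming the law verbatim.
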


While Theorem~\ref{thm:local_laws_G} is established in Section~S.6 of \cite{li2025ridge} for deterministic $\bU_\perp$, the extension to the present setting, in which $\bU_\perp$ is random through its dependence on $\bY$, is straightforward. We briefly describe the main ideas. 

For any rank-$k$ projection matrix $\bP_k$,
\[
\frac{1}{n-1-k}
\bZ_x
(P_\perp-\bP_k)
\bZ_x^T
\stackrel{d}{=}
\frac{1}{n-1-k}
\widetilde{\bZ}_x
\widetilde{\bZ}_x^T,
\]
where $\widetilde{\bZ}_x$ is a $p_1\times(n-1-k)$ matrix with independent $N(0,1)$ entries. Consequently, the conditional distribution of $\bG$ given $\bY$ does not depend on $\bY$. Therefore, all bounds established in Section~S.6 of \cite{li2025ridge} hold uniformly over $\bY$. For example, Theorem~S.6.1 of \cite{li2025ridge} implies that, for any $\varepsilon>0$ and $D>0$, there exists $n_0=n_0(\varepsilon,D)$ such that, for all $n\ge n_0$, any deterministic vectors $\mathbf v$ and $\mathbf w$, and any realization of $\bY$ satisfying the event $\mathcal E_C$, we have
\[
\sup_{z\in Q}
\mathbb P\left(
\mathbb I(\mathcal E_C)
\bigl|
\langle
\mathbf v,
(\bJ(z)-\bOmega(z))
\mathbf w
\rangle
\bigr|
>
n^\varepsilon \Psi(z)
\;\middle|\;
\bY
\right)
\le n^{-D}.
\]

Integrating both sides with respect to the marginal distribution of $\bY$ yields
\[
\sup_{z\in Q}
\mathbb P\!\left(
\mathbb I(\mathcal E_C)
\bigl|
\langle
\mathbf v,
(\bJ(z)-\bOmega(z))
\mathbf w
\rangle
\bigr|
>
n^\varepsilon \Psi(z)
\right)
\le n^{-D}.
\]
Moreover, Condition~\ref{enum:lower_bound_eigen_Y} implies that $\mathbb P(\mathcal E_C)\ge1-\exp(-n)$, for all sufficiently large $n$. Consequently, the indicator $\mathbb I(\mathcal E_C)$ can be removed asymptotically without affecting the stochastic domination bounds. This completes the proof of Part~(a)(i). The remaining assertions follow by analogous arguments.

{\bf Stage 2.} As a consequence of Theorem~\ref{thm:local_laws_G}, we can characterize the asymptotic behavior of both the edge and bulk eigenvalues of $\bG$. For any $\varepsilon\in(0,2/3)$, define
\[
\mathfrak N(\varepsilon) = \left[
\rho-n^{-2/3+\varepsilon}
~~,~~
(1+\sqrt q)^2
+
\frac{\lambda}{\ell_{\min}(\Sigma_0)}
+
n^{-2/3+\varepsilon}
\right].
\]
The following lemma shows that, with high probability, the entire spectrum of $\bG$ is contained in $\mathfrak N(\varepsilon)$.

\begin{lemma}
\label{lemma:convergence_edge_G}
Under the same conditions as Theorem \ref{thm:local_laws_G}, for any $\varepsilon\in(0,2/3)$ and $D>0$, there exists a constant $n_0(D,\varepsilon)$ such that
\[
\mathbb P\!\Big(
[\ell_{\min}(\bG),\,\ell_{\max}(\bG)]
\subset
\mathfrak N(\varepsilon)\Big)
\ge
1-n^{-D},
\]
for all $n\ge n_0(D,\varepsilon)$. 
\end{lemma}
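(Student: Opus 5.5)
We will prove Lemma~\ref{lemma:convergence_edge_G} by handling the two endpoints separately: the lower bound $\ell_{\min}(\bG)\ge \rho-n^{-2/3+\varepsilon}$, and the upper bound $\ell_{\max}(\bG)\le (1+\sqrt q)^2+\lambda/\ell_{\min}(\Sigma_0)+n^{-2/3+\varepsilon}$. Throughout, we work conditionally on $\bY$ on the event $\mathcal E_C$. As explained after Theorem~\ref{thm:local_laws_G}, under Gaussianity the conditional law of $\bG$ does not depend on $\bY$, so any $n^{-D}$ bound obtained conditionally integrates to an unconditional one. Since $\mP(\mathcal E_C^\complement)\le e^{-n}$, that event costs nothing.

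\emph{Upper edge.} We have $\bG=\frac{1}{n-1-k}\widetilde{\bZ}_x\widetilde{\bZ}_x^T+\lambda\Sigma_0^{-1}$ in distribution, so by Weyl's inequality
\[
\ell_{\max}(\bG)\le \ell_{\max}\bigl((n-1-k)^{-1}\widetilde{\bZ}_x\widetilde{\bZ}_x^T\bigr)+\lambda/\ell_{\min}(\Sigma_0).
\]
For a white Wishart matrix with aspect ratio $q\asymp 1$, the largest eigenvalue satisfies $\ell_{\max}\le(1+\sqrt q)^2+n^{-2/3+\varepsilon}$ with probability at least $1-n^{-D}$. This is the standard edge rigidity, e.g.\ from \cite{PillaiYin2014} or \cite{bloemendal2014isotropic}; in the Gaussian case it can also be obtained from Gaussian concentration of $\ell_{\max}$ combined with the known mean bound, after a minor adjustment of $\varepsilon$. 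This gives the right endpoint.

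\emph{Lower edge, case (b) $\rho=\eta$.} Here $\bG\succeq\lambda\Sigma_0^{-1}\succeq(\lambda/\ell_{\max}(\Sigma_0))I=\eta I$ deterministically, so $\ell_{\min}(\bG)\ge\rho$ trivially.

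\emph{Lower edge, case (a) $\rho>\eta$.} This is the main obstacle. We follow the standard route of ruling out eigenvalues outside the support using the averaged local law. First, take $z=E+i\I$ with $E\in[\rho-1/a,\rho-n^{-2/3+\varepsilon}]$ and $\I=n^{-2/3-\varepsilon/4}$ (or use the usual spectral-domain extension of $Q$ below $n^{-1}$, as in \cite{knowles2017anisotropic}). Outside the support, the square-root behaviour of $f_{\calG}$ in Lemma~\ref{lemma:leftmost_support_edge}(a) gives
\[
\Im\phi(z)\asymp \I/\sqrt{\kappa},\qquad \kappa=\rho-E .
\]
Next, if $\bG$ had an eigenvalue at such an $E$, then $\Im\hat\phi(z)\ge (p_1\I)^{-1}$. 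Third, we need the improved bound $|\hat\phi-\phi|\prec n^{-1}(\kappa+\I)^{-1/2}+ (n\I)^{-2}$ outside the spectrum. This bound is the strengthened averaged law that Section~S.6 of \cite{li2025ridge} provides via fluctuation averaging on the linearization $\bJ$. Comparing the two quantities then yields a contradiction whenever $\kappa\ge n^{-2/3+\varepsilon}$.

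To make the statement uniform in $E$, we take a union bound over an $n^{-3}$-net of $E$ values and use Lipschitz continuity of the resolvent. Eigenvalues far below $\rho$ (that is, $E<\rho-1/a$) are excluded because $\bG\succeq\eta I$ gives $\ell_{\min}(\bG)\ge\eta$; together with the fact that $\hat\phi$ stays bounded on $[\eta,\rho-c]$ by part (a)(ii) applied at fixed distance, this covers the remaining range. If the refined outside-spectrum law is not directly available, a fallback is to invoke the anisotropic edge rigidity theorem of \cite{knowles2017anisotropic}. That theorem applies here because $\bG-\lambda\Sigma_0^{-1}$ is, after a deterministic shift, a covariance-type matrix whose regular edge is guaranteed by Definition~\ref{def:edge_regularity}.
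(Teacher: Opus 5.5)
Your architecture matches the paper's. You condition on $\bY$, using that under Gaussianity the conditional law of $\bG$ is free of $\bY$. You bound $\ell_{\max}(\bG)$ by Weyl's inequality plus white-Wishart edge rigidity, take the lower endpoint from the $\bG$-specific theory of \cite{li2025ridge}, and integrate out $\bY$ using $\mP(\mathcal E_C^\complement)\le e^{-n}$. The paper simply cites Theorem S.6.2 of \cite{li2025ridge} for $\ell_{\min}(\bG)\ge\rho-n^{-2/3+\varepsilon}$. Your observation that $\bG\succeq\lambda\Sigma_0^{-1}\succeq\eta I$ settles case (b) deterministically is a nice shortcut. In case (a) you are reconstructing the usual rigidity argument of \cite{erdHos2012rigidity} that underlies that citation. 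This is fine in principle, but several statements need repair.

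\emph{The input bound in case (a).} The bound you quote, $n^{-1}(\kappa+\I)^{-1/2}+(n\I)^{-2}$, is stronger than what one can expect near the edge. Heuristically, the smallest eigenvalue moving on its natural $n^{-2/3}$ scale already shifts $\hat\phi$ by about $n^{-5/3}\kappa^{-2}$. This exceeds your bound at $\kappa\approx n^{-2/3+\varepsilon}$ when $\varepsilon$ is small. The standard outside-spectrum law is $|\hat\phi-\phi|\prec \frac{1}{n(\kappa+\I)}+\frac{1}{(n\I)^2\sqrt{\kappa+\I}}$. With your $\I=n^{-2/3-\varepsilon/4}$ and $\kappa\ge n^{-2/3+\varepsilon}$, both terms, as well as $\Im\phi\asymp\I/\sqrt\kappa$, are at most $n^{-\varepsilon/4}(n\I)^{-1}$. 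So the contradiction with $\Im\hat\phi\gtrsim(n\I)^{-1}$ survives. Theorem~\ref{thm:local_laws_G}(a)(ii) only gives error $1/(n\I)$, so this refined law must be imported, exactly as the paper imports S.6.2. Also, run the comparison over a net of all of $[\eta,\rho-n^{-2/3+\varepsilon}]$. Do not appeal to boundedness of $\hat\phi$ at fixed imaginary part, which cannot exclude an individual eigenvalue.

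\emph{The fallback.} It does not work. Theorem 3.12 of \cite{knowles2017anisotropic} treats matrices of the form $\Sigma^{1/2}XX^T\Sigma^{1/2}$. By contrast, $\bG=(n-1-k)^{-1}\widetilde{\bZ}_x\widetilde{\bZ}_x^T+\lambda\Sigma_0^{-1}$ is an additive deformation by a non-scalar matrix. Removing $\lambda\Sigma_0^{-1}$ is therefore not a spectral shift. The edge $\rho$ is the edge of the measure defined by \eqref{eq:MP_G}, not a sample-covariance edge. This is why the lower endpoint genuinely requires \cite{li2024analysis,li2025ridge}.

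\emph{The upper edge.} Lipschitz concentration of $\|\widetilde{\bZ}_x\|_2$ only pins the normalized $\ell_{\max}$ down to within roughly $n^{-1/2}$. That is not $n^{-2/3+\varepsilon}$ for small $\varepsilon$, so keep the rigidity citation here.
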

The lemma is obtained by adapting Theorem~S.6.2 of \cite{li2025ridge} and Theorem~3.12 of \cite{knowles2017anisotropic}. Again, we briefly describe the main ideas. 
First, applying Theorem~S.6.2 of \cite{li2025ridge} yields that there exists $n_1 (D, \varepsilon)$ such that for all $n\geq n_1(D,\varepsilon)$
\[
\mP \big(\ell_{\min}(\bG)
\ge
\rho-n^{-2/3+\varepsilon}, ~~\ell_k (\bS_y)\geq C \mid \bY  \big) \geq 1-n^{-D}/3.
\]
Next, Theorem~3.12 of \cite{knowles2017anisotropic} implies that there exists $n_2(D,\varepsilon)$ for all $n\geq n_2(D, \varepsilon)$
\[
\mP\Big(
\frac{1}{n-1-k}\ell_{\max}
\!\big(
\bZ_x
(P_\perp-\bP_k)
\bZ_x^T
\big)
\le
(1+\sqrt q)^2+n^{-2/3+\varepsilon},~~\ell_k (\bS_y)\geq C \mid \bY \Big) \geq 1- n^{-D}/3.
\]
The bounds are uniform over $\bY$. Taking expectations with respect to the marginal distribution of $\bY$ yields that for sufficiently large $n$
\[
\mathbb P\!\left(
[\ell_{\min}(\bG),\,\ell_{\max}(\bG)]
\subset
\mathfrak N(\varepsilon),
\;
\ell_k(\bS_y)>C
\right)
\ge
1-(2/3)n^{-D}.
\]
Consequently,
\[
\mathbb P\!\left(
[\ell_{\min}(\bG),\,\ell_{\max}(\bG)]
\subset
\mathfrak N(\varepsilon)
\right)
\ge
1-(2/3)n^{-D}-e^{-n},
\]
and hence
\[
\mathbb P\!\left(
[\ell_{\min}(\bG),\,\ell_{\max}(\bG)]
\subset
\mathfrak N(\varepsilon)
\right)
\ge
1-n^{-D}
\]
for all sufficiently large $n$.

While Lemma~\ref{lemma:convergence_edge_G} concerns the extreme eigenvalues of $\bG$, we next consider its bulk spectral behavior. The following result is adapted form Theorem~S.6.3 of \cite{li2025ridge}.

\begin{lemma}
\label{lemma:linear_spectral_G}
Assume the same conditions as in Theorem \ref{thm:local_laws_G}. Fix $\varepsilon\in(0,2/3)$, and let $f$ be any function analytic on an open interval $\mathfrak N(\infty)$ satisfying $\mathfrak N(\varepsilon)\subset\mathfrak N(\infty)$ for all sufficiently large $n$. 
Let
\[
\mathfrak M^{(1)}_\varepsilon
=
\Big\{
[\ell_{\min}(\bG),\ell_{\max}(\bG)]
\subset
\mathfrak N(\varepsilon)
\Big\}.
\]
Note that, by Lemma \ref{lemma:convergence_edge_G}, $\mathfrak{M}^{(1)}_\varepsilon$ holds with high probability.  Then, 
\[
\mathbb I(\mathfrak M^{(1)}_\varepsilon)
\left|
\frac{1}{p_1}
\sum_{j=1}^{p_1}
f\bigl(\ell_j(\bG)\bigr)
-
\int f(x)\,d\calG(x)
\right|
\prec
\frac{1}{n}.
\]
\end{lemma}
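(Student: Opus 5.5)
We prove Lemma~\ref{lemma:linear_spectral_G} by the Helffer--Sjöstrand/Cauchy-integral route standard for linear spectral statistics, feeding in the averaged local law of Theorem~\ref{thm:local_laws_G}.

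\emph{Contour representation.} Because $f$ is analytic on the open interval $\mathfrak N(\infty)\supset\mathfrak N(\varepsilon)$, we can fix a contour $\Gamma$ in the domain of analyticity of $f$ that encloses $\mathfrak N(\varepsilon)$ together with the support of $\calG$, and stays at distance $\asymp 1$ from both. On $\mathfrak M^{(1)}_\varepsilon$ every eigenvalue of $\bG$ lies inside $\Gamma$, so
\[
\frac{1}{p_1}\sum_{j=1}^{p_1} f(\ell_j(\bG))-\int f\,d\calG
=-\frac{1}{2\pi i}\oint_\Gamma f(z)\big(\hat\phi(z)-\phi(z)\big)\,dz .
\]
Here $\hat\phi(z)=p_1^{-1}\tr[(\bG-zI_{p_1})^{-1}]$ and $\phi$ is the Stieltjes transform of $\calG$ from \eqref{eq:MP_G}. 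Since $f$ is bounded on $\Gamma$, it suffices to show $\sup_{z\in\Gamma}|\hat\phi(z)-\phi(z)|\prec n^{-1}$ on $\mathfrak M^{(1)}_\varepsilon$.

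\emph{Bounding the integrand.} Split $\Gamma$ into a part with $\I(z)\ge c$ and a part near the real axis.
\begin{itemize}
\item On the part with $\I(z)\ge c$, the averaged law in Theorem~\ref{thm:local_laws_G}(a)(ii) gives $O_\prec(1/(n\I(z)))=O_\prec(1/n)$ directly. When $\rho=\eta$, part (b) gives $O_\prec(1/n)$ for $z$ away from the spectrum, which our contour ensures.
\item Near the real axis, the crossing points lie at distance $\asymp 1$ outside $\mathfrak N(\varepsilon)$ and outside ${\rm supp}\,\calG$. Both $\hat\phi$ and $\phi$ are analytic and Lipschitz there, since on $\mathfrak M^{(1)}_\varepsilon$ the resolvent has norm $O(1)$.
\item We extend the bound to these crossing points by one of two standard arguments. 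The first is to take the Theorem~\ref{thm:local_laws_G} bound at $z+i\,c$ and then use Cauchy/maximum principle for the analytic difference on a small disk. The second, if $Q$ does not cover the right side, is a resolvent expansion.
\end{itemize}
Uniformity over the (uncountable) contour follows by a lattice argument: take an $n^{-3}$-net of $\Gamma$ and a union bound (stochastic domination survives polynomially many events), then interpolate using Lipschitz continuity of order $O(1)$ on $\mathfrak M^{(1)}_\varepsilon$.

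\emph{Main obstacle.} The delicate point is that $Q$ in Theorem~\ref{thm:local_laws_G} only covers $E\le\rho+a$, and part (b) is stated away from the support. So we must verify that an averaged law with error $O_\prec(1/n)$ holds at distance $\asymp1$ from the spectrum on the whole contour, including to the right of the bulk. I would handle this first by invoking the corresponding global-scale averaged law from Section~S.6 of \cite{li2025ridge}, which is valid for all $z$ at distance $\asymp 1$ from $[\rho,(1+\sqrt q)^2+\lambda/\ell_{\min}(\Sigma_0)]$. As in the proof of Theorem~\ref{thm:local_laws_G}, the conditional law of $\bG$ given $\bY$ does not depend on $\bY$, so the conditional bounds are uniform on $\mathcal E_C$. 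Integrating over $\bY$ and discarding $\mathcal E_C^\complement$ then yields the claim.
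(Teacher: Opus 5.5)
Your argument is correct in outline, modulo one imported result, and it takes a different route from the paper.

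\emph{The paper's route.} The paper derives nothing here. It quotes Theorem~S.6.3 of \cite{li2025ridge}, which is exactly this linear-spectral-statistic bound for a deterministic rank-$k$ projection. It then removes the randomness of $\bP_k$ by the same conditioning-on-$\bY$ argument used for Lemma~\ref{lemma:convergence_edge_G}. Your final step coincides with that. Under Gaussianity it is nearly trivial, since the law of $\bG$ on $\mathcal{E}_C$ does not depend on $\bY$.

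\emph{Your route.} You reconstruct the bound from an averaged law, via Cauchy's formula on a contour at distance $\asymp 1$ from $\mathfrak N(\varepsilon)\cup\operatorname{supp}\calG$ plus a net argument. This makes the mechanism explicit: only global-scale information enters, which is why the rate is $1/n$ with no edge-scale input. It does not, however, remove the dependence on the supplement. Theorem~\ref{thm:local_laws_G}(a)(ii) is stated only on $Q$, so only for $E\le\rho+a$, and with error $1/(n\I(z))$. For the part of the contour over and to the right of the bulk you must therefore import a global averaged law, which is the same kind of external input the paper imports in finished form. You also implicitly need the holomorphy domain of $f$ to contain a fixed-size complex neighbourhood of $\mathfrak N(\varepsilon)$, uniformly in $n$; the statement leaves this implicit too.

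\emph{The crossing-point step.} As phrased, this step does not work. At a real crossing point $x_0$, a plain maximum principle on a disk around $x_0$ cannot upgrade the $O_\prec(1/n)$ bound available for $\I(z)\ge c$. The disk boundary again meets near-real points where you have no such bound, so the argument is circular. What does work on the left side, where $Q$ applies, is Hadamard's three-circles theorem centred at $x_0+ic$:
\begin{itemize}
\item On the circle of radius $c/2$, you have $|\hat\phi-\phi|\prec 1/(nc)$.
\item On $\mathfrak M^{(1)}_\varepsilon$, you have the trivial $O(1)$ bound on a circle of fixed radius $R\gg c$ that still avoids the spectrum and $\operatorname{supp}\calG$.
\item Since $x_0$ lies on the circle of radius $c$, interpolating gives $|\hat\phi(x_0)-\phi(x_0)|\prec n^{-1+O(1/\log(R/c))}$.
\item Choosing $c$ small, this is $\prec n^{-1}$.
\end{itemize}
On the right side the same trick needs a bound near $x_R+ic$, which lies outside $Q$. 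That bound must again come from the external global (or outside-the-spectrum) averaged law you cite.
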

In particular, Theorem~S.6.3 of \cite{li2025ridge} shows that the result holds when $\bP_k$ is a deterministic sequence of projection matrices of rank $k$. Again, the bound in Lemma \ref{lemma:linear_spectral_G} is obtained by integrating over the marginal distribution of $\bY$ using analogous arguments in the proof of Lemma \ref{lemma:convergence_edge_G}.

Lemma~\ref{lemma:linear_spectral_G} shows that, on the event $\mathfrak M^{(1)}_\varepsilon$, the empirical spectral distribution of $\bG$ is well approximated by its deterministic limit $\calG$ when tested against analytic functions. Equivalently, linear spectral statistics of $\bG$ converge to their deterministic counterparts at rate $O_{\prec}(n^{-1})$.  


{\bf Stage 3.} Now, we are in a position to identify a suitable high-probability region to which $(\bG_{k\lambda}, \bY)$ belongs asymptotically.  Let $\varepsilon \in (0, 1/6)$ be fixed and consider the event $\mathfrak M^{(1)}_\varepsilon$ as in Lemma \ref{lemma:linear_spectral_G}. 


Recall the definition of $A(x)$ and $\beta$ in Section \ref{subsec:proof_properties_MP_G}. We replace $\calG(\tau)$ by $F^{\bG}(\tau)$ and set $\tilde{\beta}$ as the root in $(0, \ell_{\min}(\bG))$ of
\[\tilde{\beta}^2 \int \frac{dF^{\bG}(\tau)}{(\tau - \tilde{\beta})^2} = k/p_1.\]
Note that for any $\bG$, the root exists because that $F^{\bG}$ is discrete and as $\beta \to \ell_{\min}(\bG)$,  
\[\beta^2 \int\frac{dF^{\bG}(\tau)}{(\tau -\beta)^2} \to \infty.\] 
By Lemma \ref{lemma:linear_spectral_G}, we can show that
\[\tilde{\beta} - \beta \prec n^{-1+\varepsilon}.\]
Since $\beta$ is bounded away from $\rho$ as stated in Section \ref{subsec:proof_properties_MP_G}, we have $\tilde{\beta}$ is bounded away from $\ell_{\min}(\bG)$ with high probability. Moreover, define 
\[\widetilde{\Theta}_1 = \frac{1}{\tilde{\beta}} + \frac{1}{k} \sum_{j=1}^{p_1} \frac{1}{\ell_j(\bG) - \tilde{\beta}}, \qquad \widetilde{\Theta}_2 = \left[ (\frac{p_1}{k})^3 \int\frac{dF^{\bG}(\tau)}{(\tau - \tilde{\beta})^3}  + \frac{(p_1/k)^2}{\tilde{\beta}^3}  \right]^{1/3} \]
Then, $\widetilde{\Theta}_1$ and $\widetilde{\Theta}_2$ are $\bZ_x\bU_\perp$-dependent counterpart of $\Theta_1$ and $\Theta_2$. Since when $\tilde{\beta}$ is away from the support of $F^{\bG}$, which holds with high probability, $\widetilde{\Theta}_1$ and $\widetilde{\Theta}_2$ are linear spectral statistics of $\bG$, by Lemma \ref{lemma:linear_spectral_G}, 
\[ \mathbb{I}\big(\widetilde{\beta} < (\rho +\beta)/2\big) \mathbb{I}(\mathfrak{M}_\varepsilon^{(1)}) |\widetilde{\Theta}_1 - \Theta_1| \prec 1/n,\]
\[ \mathbb{I}\big(\widetilde{\beta} < (\rho +\beta)/2\big) \mathbb{I}(\mathfrak{M}_\varepsilon^{(1)}) |\widetilde{\Theta}_2 - \Theta_2| \prec 1/n.\]

Motivated by this, let $\mathfrak{M}^{(2)}_\varepsilon$ be the event that 
\[ \mathfrak{M}^{(2)}_\varepsilon = \Big\{ \tilde{\beta} \leq (\rho +\beta)/2, \quad |\widetilde{\Theta}_1 -\Theta_1| \leq n^{-1+\varepsilon}, \quad |\widetilde{\Theta}_2 - \Theta_2| \leq n^{-1+\varepsilon}  \Big\} \]
When $k$ diverges proportionally with $n$, we have that $\mathfrak{M}^{(1)}_\varepsilon\cap \mathfrak{M}^{(2)}_\varepsilon$ holds with high probability. 

{\bf Stage 4.} To show Theorem \ref{thm:main_diverge_k} under Gaussianity, we use the main theorem of \cite{lee2016tracy}. In particular, assume that $(\bG, \bY)$ belong to $\mathfrak{M}^{(1)}_\varepsilon\cap \mathfrak{M}^{(2)}_\varepsilon$. Conditioning on $(\bG,\bY)$, the main theorem of \cite{lee2016tracy} indicates that, for any $t\in\mathbb{R}$,
as $n\to\infty$,
\[\mP\Big(\Big\{\frac{p_1^{2/3}}{\widetilde{\Theta}_2}  \big(\ell_{\max}(\tilde{\bF}) - \widetilde{\Theta}_1 \big) \leq t\Big\} \cap \mathfrak{M}^{(1)}_\varepsilon\cap \mathfrak{M}^{(2)}_\varepsilon \mid \bG, \bY\Big) \longrightarrow \TW_1(t).\]
Again, we integrate both sides with respect to the distribution of  $(\bG, \bY)$. Using the dominated convergence theorem, we obtain that for any $t\in\mathbb{R}$,
\[ \mP\Big(\Big\{\frac{p_1^{2/3}}{\widetilde{\Theta}_2}  \big(\ell_{\max}(\tilde{\bF}) - \widetilde{\Theta}_1 \big) \leq t\Big\} \cap \mathfrak{M}^{(1)}_\varepsilon\cap \mathfrak{M}^{(2)}_\varepsilon \Big) \longrightarrow \TW_1(t)  \]
On the other hand, under the conditions of Theorem \ref{thm:main_diverge_k}, $\mathfrak{M}^{(1)}_\varepsilon\cap \mathfrak{M}^{(2)}_\varepsilon$ holds with high probability and on $\mathfrak{M}^{(1)}_\varepsilon\cap \mathfrak{M}^{(2)}_\varepsilon$, 
\[ p_1^{2/3} \Big|\widetilde{\Theta}_1 -\Theta_1 \Big| \leq p_1^{2/3} n^{-1+\varepsilon} \to 0,\qquad \Big|\widetilde{\Theta}_2 - \Theta_2 \Big| \leq n^{-1+\varepsilon} \to 0.\]
Applying Slutsky's theorem, the statements of Theorem \ref{thm:main_diverge_k} hold.

\subsubsection{Step 3: Extension to non-Gaussianity}

While Theorem \ref{thm:main_diverge_k} is proved under Gaussianity in Section \ref{subsec:proof_normality}, in this section, we show that the results are not sensitive to the normality assumption but generally hold when the moment conditions in \ref{enum:moments_conditions} hold. This phenomenon is known as edge universality in the RMT literature, meaning that the asymptotic distribution of edge eigenvalues is typically unaffected by the specific distribution of the matrix entries, provided that suitable moment conditions are satisfied. 
Recall that $E = E(z)$ and $\I = \I(z)$ are the real and imaginary parts of $z$. We frequently omit $z$ in the expressions. 

Edge universality is typically established via a Green function comparison theorem. Define the resolvent (Green function) of the matrix $\tilde{\bF}$ by
\[
\bL(z) =\big(\frac{1}{k}\bU_k^{T} \bZ_x^{T} \bG^{-1} \bZ_x \bU_k - z I_{k}\big)^{-1}, 
\qquad z \in \mathbb{C}^+ .
\]
We consider two settings: one in which $\bZ_x$ has a general distribution satisfying Condition~\ref{enum:moments_conditions}, and one in which $\bZ_x$ is Gaussian. To distinguish the latter from the former, we denote the Gaussian matrix by $\bZ_x^0$. Accordingly, the resolvent corresponding to $\bZ_x$ is denoted by $\bL_{\bZ_x}$, while the resolvent corresponding to $\bZ_x^0$ is denoted by $\bL_{\bZ_x^0}$.
The averaged trace of $\bL_{\bZ_x}(z)$ and $\bL_{\bZ_x^0}(z)$ is taken as 
\[ \overline{L}_{\bZ_x}(z) = \frac{1}{k} \tr (\bL_{\bZ_x}(z)) \quad \text{and} \quad \overline{L}_{\bZ_x^0}(z) = \frac{1}{k} \tr (\bL_{\bZ_x^0} (z)).\]
Moreover, the transformed F matrix corresponding to $\bZ_x^0$ is denoted by $\tilde{\bF}^0$. 

We claim that to extend the results to non-Gaussianity, it suffices to show the following Green function comparison theorem (Theorem \ref{thm:green_function_comparison}) and rigidity of the largest root (Lemma \ref{lemma:rough_order_largest_root}). 
\begin{theorem}[Green function comparison]\label{thm:green_function_comparison}
Let $\epsilon>0$ and set $\I = n^{-2/3-\epsilon}$. Let $E_1, E_2 \in \mathbb{R}$ satisfy $E_1 < E_2$ and
\[
|E_1 - \Theta_1|,\; |E_2 - \Theta_1| \leq n^{-2/3+\epsilon}.
\]
Let $K:\mathbb{R}\rightarrow \mathbb{R}$ be a smooth function such that
\[
\max_x |K^{(l)}(x)| \leq C, \qquad l=1,2,3,4,5,
\]
for some constant $C>0$. Then there exists a constant $c>0$ such that, for all sufficiently large $n$ and sufficiently small $\epsilon$, 
\[
\left|
\mathbb{E} \, K\!\left(n \int_{E_1}^{E_2} \Im \overline{L}_{\bZ_x}(x+i \I )\, dx\right)
-
\mathbb{E} \, K\!\left(n \int_{E_1}^{E_2} \Im \overline{L}_{\bZ_x^0}(x+i \I)\, dx\right)
\right| 
\leq n^{-c}.
\]
\end{theorem}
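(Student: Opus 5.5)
We will prove the comparison by the standard Lindeberg replacement strategy, carried out conditionally on $\bY$. Throughout we work on the high-probability event $\mathcal E_C$, so that $\bU_k$ and $\bU_\perp$ are well defined and depend only on $\bY$. The first step is to rotate the problem. Set $\bZ_x\,[\bU_k,\bU_\perp,n^{-1/2}1_n]$. For Gaussian $\bZ_x^0$ this rotated matrix is again a matrix of i.i.d. $N(0,1)$ entries. For general $\bZ_x$, the entries of the rotated matrix are no longer independent across columns.

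To avoid this difficulty, we perform the replacement on the original columns $Z_{xi}$, $i=1,\dots,n$, of $\bZ_x$, one column (or, better, one entry $z_{ai}$) at a time. We interpolate between $\bZ_x$ and $\bZ_x^0$ entrywise, producing an ordering $\bZ^{(0)}=\bZ_x^0,\dots,\bZ^{(p_1n)}=\bZ_x$. At each step we write the telescoping difference of $\mE K(\cdot)$. Both $\tilde{\bF}$ and $\bG$ are quadratic in $\bZ_x$ with the deterministic (given $\bY$) kernels $\bU_k\bU_k^T$ and $P_\perp-\bP_k$. Hence $\bL(z)$ can be written through a single linearization matrix, analogous to $\bK(z)$ in Stage 1, whose entries are linear in $z_{ai}$. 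This linearization involves both blocks $\bU_k$ and $\bU_\perp$ together with $\lambda\Sigma_0^{-1}$ and the spectral parameter.

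The second step is an a priori control of resolvent entries along the interpolation. We need the local laws, namely Theorem~\ref{thm:local_laws_G} for the $\bG$-part and the anisotropic local law for $\tilde{\bF}$ near its edge $\Theta_1$ at scale $\I=n^{-2/3-\epsilon}$. Both must hold uniformly for every interpolating matrix $\bZ^{(j)}$, since these matrices satisfy the same moment conditions~\ref{enum:moments_conditions} and the local-law proofs of \cite{knowles2017anisotropic,li2025ridge} use only moments. This yields entrywise bounds $|\bL_{ab}|\prec 1$ with off-diagonal fluctuations $\prec\Psi$. Together with rigidity of the largest root (Lemma~\ref{lemma:rough_order_largest_root}), these bounds give
\[
n\int_{E_1}^{E_2}\Im\overline L(x+i\I)\,dx\prec n^{\epsilon'}
\]
uniformly along the interpolation.

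The third step is the expansion. For each replaced entry we use resolvent expansions in $z_{ai}$ to fifth order and Taylor-expand $K$, which is justified by the bounds on $K^{(l)}$, $l\le5$. The first- and second-order terms cancel exactly because the first two moments match. The fifth-order remainder is $O_\prec(n^{-5/2})$ per entry, which is negligible after summing $p_1n\asymp n^2$ entries.

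The main obstacle is the third- and fourth-order terms. These do not match between the general and Gaussian cases, and naive power counting gives third-order contributions of $n^{-3/2}$ per entry, which sums to $n^{1/2}$. As in \cite{knowles2017anisotropic} and \cite{lee2016tracy}, we must exhibit extra smallness. In each third-order term at least one factor is an off-diagonal resolvent entry of size $\prec\Psi\lesssim n^{-1/3+\epsilon}$ near the edge, which reduces the total to $n^{-1/6+C\epsilon}$. For the fourth-order terms we plan to use the vanishing of the diagonal-mean via Ward-type identities $\sum_b|\bL_{ab}|^2=\I^{-1}\Im\bL_{aa}$. A complication here is that $\bU_k$ and $\bU_\perp$ are delocalized only in a weak sense. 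We plan to handle this by averaging over the random rotation induced by $\bY$, or, failing that, by expanding in the rotated coordinates and controlling the resulting cross-column dependence by cumulant expansions. Once every step contributes $o(n^{-2})$, summing over the $p_1n$ replacements yields the bound $n^{-c}$ for small $\epsilon$.
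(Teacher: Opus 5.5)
Your outline follows the paper's route; the paper itself defers the details to Section~12 of \citet{han2018unified}. In both, you replace entries of $\bZ_x$ conditionally on $\bY$ on $\mathcal E_C$ and work with the linearization $\bH(z)$. Its one-entry perturbation \eqref{eq:diff_bH_entry_change} is rank two, built from the middle-block unit vector $e_u$ and $\mathbf{d}_v:=\Delta(\bY)e_v$, and does not involve $\lambda$. The local laws for $\bH^{-1}$ and rigidity serve as a priori input. Two side corrections:
\begin{itemize}
\item The delocalization worry does not arise. The anisotropic law controls $\langle\mathbf{a},\bH^{-1}\mathbf{b}\rangle$ for arbitrary vectors that are deterministic given $\bY$, and $\|\mathbf{d}_v\|_2\le 1$. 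This is exactly why the bounds are uniform in $\bY$.
\item The a priori input along the interpolation should be Theorem~\ref{thm:local_laws_bH_nonGauss} applied to each hybrid matrix. It is not enough to say that the proofs in \citet{knowles2017anisotropic,li2025ridge} ``only use moments'': for non-Gaussian entries $\bZ_x\bU_k$ and $\bZ_x\bU_\perp$ are dependent, so those results do not apply to $\tilde{\bF}$ directly.
\end{itemize}

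The genuine gap is the third-order step, which is the real content of an edge comparison when only two moments match. Your arithmetic does not close: starting from $n^{1/2}$, one factor $n^{-1/3+\epsilon}$ gives $n^{1/6}$, not $n^{-1/6}$. An honest count lands in the same place. Take $x=n\int_{E_1}^{E_2}\Im\overline{L}(y+i\I)\,dy$ and write $\bH^{-1}_{\mathbf{a}\mathbf{b}}:=\langle\mathbf{a},\bH^{-1}\mathbf{b}\rangle$. The dangerous piece is $K'(x)\,\partial^3_{uv}x$, whose worst terms are
\[
\frac{n}{k}\int_{E_1}^{E_2}\Im\Big(\bH^{-1}_{uu}\,\bH^{-1}_{\mathbf{d}_v\mathbf{d}_v}\sum_{a\le k}\bH^{-1}_{au}\,\bH^{-1}_{\mathbf{d}_v a}\Big)\,dy .
\]
These are two order-one diagonal-type factors times a trace sum of two off-diagonal-type factors. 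The sum is $O_\prec(k\Phi^2)$, and with interval length $n^{-2/3+\epsilon}$ this gives $|\partial^3_{uv}x|\prec n^{1/3}\Phi^2\asymp n^{-1/3+C\epsilon}$. Integrating by parts does not help: the $a$-sum equals $\partial_y\bH^{-1}_{\mathbf{d}_v u}$, and a boundary term of size $\Phi$ remains. Since the third moments differ, the total over $p_1n\asymp n^2$ entries is $n^2\cdot n^{-3/2}\cdot n^{-1/3+C\epsilon}=n^{1/6+C\epsilon}$, which diverges.

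Everything else, e.g.\ $K''(x)\,\partial x\,\partial^2 x$ and all fourth- and fifth-order terms, is fine by this same power counting, so no extra Ward-type cancellation is needed there. What is missing is one more factor $\Phi$, and it must come from expectation or averaging, not from counting off-diagonal entries. One way to get it:
\begin{itemize}
\item Use the continuous interpolation $\bZ_x^\theta$ of \citet{knowles2017anisotropic}. Up to negligible errors, the third-order terms become $\sum_{u,v}\mE[K'(x)\,\partial^3_{uv}x]$, all evaluated at a common matrix.
\item Replace $\bH^{-1}_{uu}\bH^{-1}_{\mathbf{d}_v\mathbf{d}_v}$ by its $\Pi$-value, at cost $O_\prec(\Phi)$.
\item The leading part becomes $\partial_y\langle\Delta\beta,\bH^{-1}\alpha\rangle$, plus terms where $\partial_y$ hits the smooth weights. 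Here $\alpha\in\mathbb{R}^{p_1}$ lives in the middle block, $\beta\in\mathbb{R}^n$, and both are deterministic given $\bY$.
\item The $\Pi$-part of this quadratic form vanishes, so it is $O_\prec(n\Phi)$ rather than $O_\prec(n^2\Phi)$.
\end{itemize}
Without some such step, your argument does not yield the bound $n^{-c}$.
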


\begin{lemma}[Rigidity of largest eigenvalue]\label{lemma:rough_order_largest_root}
    Under the conditions of Theorem \ref{thm:main_diverge_k}, we have
    \[ \ell_{\max}(\tilde{\bF}) - \Theta_1  = O_\prec(n^{-2/3}). \]
\end{lemma}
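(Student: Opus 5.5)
The plan is to derive the rigidity bound for the general (non-Gaussian) $\bZ_x$ from the anisotropic local law machinery applied to a linearization of $\tilde{\bF}$, rather than from the Tracy--Widom limit. The Gaussian case is already essentially contained in Step~2. Conditionally on $(\bG,\bY)$, $\tilde{\bF}$ is a sample covariance matrix with population $\bG^{-1}$. On $\mathfrak M^{(1)}_\varepsilon\cap\mathfrak M^{(2)}_\varepsilon$ the edge rigidity of \cite{lee2016tracy} (or \cite{knowles2017anisotropic}) gives $\ell_{\max}(\tilde{\bF}^0)-\widetilde\Theta_1\prec n^{-2/3}$. Combined with $|\widetilde\Theta_1-\Theta_1|\le n^{-1+\varepsilon}$, this yields the claim for $\bZ_x^0$.

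For general $\bZ_x$, the independence of $\bZ_x\bU_k$ and $\bZ_x\bU_\perp$ is lost. I would therefore work with the joint linearization. Write $\bZ_x[\bU_k,\bU_\perp]$, which for fixed $\bY$ is a $p_1\times(n-1)$ matrix $\bZ_x\bU$, where $\bU$ is an $n\times(n-1)$ matrix with orthonormal columns. The resolvent of $\tilde{\bF}$ then sits as a block in the inverse of a $(p_1+n-1)\times(p_1+n-1)$ matrix that is linear in $\bZ_x$, of the form
\[
\begin{bmatrix}\lambda\Sigma_0^{-1} & \bZ_x\bU_\perp & \bZ_x\bU_k\\ \bU_\perp^T\bZ_x^T & -(n-1-k)I & 0\\ \bU_k^T\bZ_x^T & 0 & -kzI\end{bmatrix},
\]
up to the appropriate normalizations, and a Schur complement recovers $\bL(z)$.

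I would then prove an entrywise/anisotropic local law for this linearized resolvent on a spectral domain near $\Theta_1$, with $\I\ge n^{-1+a'}$. The deterministic equivalent is built from the self-consistent equation that defines $s(\cdot)$ and $A(x)$; its critical point $-\beta$ corresponds to the edge $\Theta_1=A(-\beta)$ through the usual Marčenko--Pastur edge characterization. The method is the standard one of \cite{knowles2017anisotropic}: large deviation bounds for quadratic forms in the rows of $\bZ_x$ (the finite moments in Condition~\ref{enum:moments_conditions} suffice after truncation), followed by stability of the self-consistent equation near a square-root edge. That stability is supplied by $\beta\asymp1$, $\rho-\beta\asymp1$, and the nondegeneracy $A''(-\beta)\neq0$, which follows from $s''>0$ (Lemma~\ref{lemma:properties_s_fun}).

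From the averaged local law $\overline L(z)-m(z)\prec 1/(n\I)$, the standard argument gives rigidity. For $E\ge\Theta_1+n^{-2/3+\varepsilon}$ one uses the improved bound $\Im\overline L\ll 1/(n\I)$ outside the spectrum, choosing $\I=n^{-1/2}\kappa^{1/4}$-type scales, to exclude eigenvalues above $\Theta_1+n^{-2/3+\varepsilon}$. Counting eigenvalues via Helffer--Sjöstrand then shows that $\ell_{\max}(\tilde{\bF})\ge\Theta_1-n^{-2/3+\varepsilon}$. The rank-$(k+1)$ dependence of $\bU$ on $\bY$ is handled exactly as in Stage~1: the bounds are uniform in deterministic $\bU$, so one conditions on $\bY$ on $\mathcal E_C$ and integrates.

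An alternative shortcut is to observe that only a rough bound $\ell_{\max}(\tilde{\bF})-\Theta_1\prec n^{-2/3}$ is needed. One could first establish this for $\bZ_x^0$ and then transfer it by a moment-matching resolvent comparison for $\Im\overline L$ at $\I=n^{-2/3-\epsilon}$, as in Theorem~\ref{thm:green_function_comparison}. However, that comparison itself uses a priori rigidity, so I regard the direct local-law route as the primary plan.

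The main obstacle is the stability analysis of the self-consistent equation near the edge $\Theta_1$ for the coupled (non-independent) linearization. The edge is defined through $\calG$, which is itself the limit of a random $\bG$. One must show that the combined system for $\bG$ and $\tilde{\bF}$ reduces to the one-variable equation $z=A(x)$, and that $A'(-\beta)=0$ with $A''(-\beta)\asymp1$ gives the square-root behavior uniformly in $n$. I would first try to verify this by explicitly eliminating the $\bG$-block, using the closed form of $\bOmega(z)$ from Theorem~\ref{thm:local_laws_G}.
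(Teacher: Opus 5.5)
Your architecture is the paper's: a local law for a linearization of $\tilde{\bF}$, followed by the standard rigidity argument. Your block matrix is $\bH(z)$ up to reordering, scaling and a sign; as written its Schur complement returns $-k^{-1}\bL(-z)$, so replace $-kzI$ by $kzI$. Your Gaussian step is exactly Stage~4.

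\textbf{The gap.} It lies in the step you call standard: the local law for non-Gaussian $\bZ_x$. In \citet{knowles2017anisotropic}, the Schur-complement/large-deviation derivation is carried out only for a \emph{diagonal} population with independent entries. A general population with general $X$ is reached only through the Gaussian case (rotation invariance) followed by a self-consistent comparison. In your joint linearization neither simplification is available:
\begin{itemize}
\item The block $\lambda\Sigma_0^{-1}$ is not diagonal. Removing the $i$th row of $\bZ_x$ therefore produces, besides a quadratic form in that row, a term $\lambda^2 v_i^T\mathcal{R}^{(i)}v_i$ plus cross terms. Here $v_i=((\Sigma_0^{-1})_{ij})_{j\neq i}$ and $\mathcal{R}^{(i)}$ is the $p_1$-block of the minor resolvent. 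Large deviation bounds do not close these terms.
\item You cannot rotate $\Sigma_0$ to diagonal form, nor switch to the sample index. Both $O^T\bZ_x$ and $\bZ_x\bU$ lose independence of entries once $\bZ_x$ is non-Gaussian.
\end{itemize}
A direct proof would need a genuinely different tool, such as a cumulant expansion for a matrix self-consistent equation, not the argument you describe.

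Conversely, what you flag as the main obstacle is the easy part. The limit is the Mar\v{c}enko--Pastur law with ratio $p_1/k$ and population law the image of $\calG$ under $\tau\mapsto\tau^{-1}$, i.e.\ $z=A(m(z))$. Moreover $A''(-\beta)=2\beta^{-3}+(p_1/k)s''(\beta)=2(k/p_1)^2\Theta_2^3\asymp 1$.

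\textbf{The missing idea.} It is the route the paper takes in Theorems~\ref{thm:local_laws_bH} and~\ref{thm:local_laws_bH_nonGauss}:
\begin{enumerate}
\item Obtain the Gaussian local law by conditioning on $\bZ_x\bU_\perp$ and applying \citet{knowles2017anisotropic} with population $\bG^{-1}$.
\item Transfer it to general $\bZ_x$ by interpolating $\bZ_x^\theta$ and replacing one entry at a time, controlling high moments of $v^T(\bH^{-1}-\Pi)w$ by resolvent expansion.
\item Because the one-entry perturbation \eqref{eq:diff_bH_entry_change} does not involve $\lambda$, the arguments of \citet{han2018unified} carry over.
\end{enumerate}

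Your reason for discarding comparison conflates two different arguments. The Green function comparison at $\I=n^{-2/3-\epsilon}$ (Theorem~\ref{thm:green_function_comparison}) does take rigidity as input. The comparison for the local law itself bootstraps downward in $\I$ from the Gaussian case and needs no a priori rigidity.

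Once Theorem~\ref{thm:local_laws_bH_nonGauss} is available, your final step goes through as written. That step uses the improved averaged law outside the spectrum to exclude eigenvalues above $\Theta_1+n^{-2/3+\varepsilon}$, plus eigenvalue counting below, and it is the standard rigidity argument the paper cites.
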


Before proceeding to the proof of Theorem \ref{thm:green_function_comparison} and Lemma \ref{lemma:rough_order_largest_root}, we briefly discuss how the results are connected to the asymptotic Tracy-Widom limit of the largest root under non-Gaussianity.

First, given Lemma \ref{lemma:rough_order_largest_root}, we can locate to the domain of $E$ such that $|E-\Theta_1|\prec n^{-2/3}$. Fix $E^* \prec n^{-2/3}$ such that it suffices to consider $\ell_{\max}(\tilde{\bF}) \leq \Theta_1 + E^*$. Choose $|E - \Theta_1| \prec n^{-2/3}$, $\I  = n^{-2/3- 9\epsilon}$ and $l = \frac{1}{2} n^{-2/3-\epsilon}$. Then, for some sufficiently small constant $\epsilon>0$ and sufficiently large constant $M$, there exists a function $K$ satisfying the condition in Theorem \ref{thm:green_function_comparison}, a constant $n_0 = n_0(\epsilon, M)$, such that 
\begin{align}
& \mathbb{E} K\left(\frac{k}{\pi} \int_{E-l}^{\Theta_1+E^*} \Im \overline{L}_{\bZ_x}(x+i \I ) d x\right) \leq \mathbb{P}\left( \ell_{\max}(\tilde{\bF}) \leq E\right)\nonumber\\
&\phantom{ssssss} \leq \mathbb{E} K\left(\frac{k}{\pi} \int_{E+l}^{\Theta_1+E^*} \Im \overline{L}_{\bZ_x}(x+i \I) d x\right)+k^{-M}, \label{eq:bound_probability_by_ST_1}\\
&\mathbb{E} K\left(\frac{k}{\pi} \int_{E-l}^{\Theta_1+E^*} \Im \overline{L}_{\bZ_x^0}(x+i \I) d x\right) \leq \mathbb{P}\left( \ell_{\max}(\tilde{\bF}^0) \leq E\right) \nonumber\\
&\phantom{ssssss} \leq \mathbb{E} K\left(\frac{k}{\pi} \int_{E+l}^{\Theta_1+E^*} \Im \overline{L}_{\bZ_x^0}(x+i \I) d x\right)+k^{-M}, \label{eq:bound_probability_by_ST_2}
\end{align}
whenever $n>n_0$. We omit the details in the function $K$ and the proof of \eqref{eq:bound_probability_by_ST_1} and \eqref{eq:bound_probability_by_ST_2}  because it is a standard procedure in RMT. Similar works include Corollary 5.1 of Lemma 6.1 of \citet{erdHos2012rigidity},  \cite{BaoPanZhou2013LocalEdge}, and Lemma 4.1 of \cite{PillaiYin2014}. 

Following \eqref{eq:bound_probability_by_ST_1} and \eqref{eq:bound_probability_by_ST_2},  setting $E$ in \eqref{eq:bound_probability_by_ST_2} as $E\pm 2l$,  we immediately get 
\[ \mP(\ell_{\max}(\tilde{\bF}) \leq E) \geq \mP(\ell_{\max}(\tilde{\bF}^0) \leq E -2l) - k^{-M},\]
\[ \mP(\ell_{\max}(\tilde{\bF}) \leq E) \leq \mP(\ell_{\max}(\tilde{\bF}^0) \leq E + 2l) + k^{-M}.\]
Since $l = O(n^{-2/3 -\epsilon})$, we have then
\[  \mP(\ell_{\max}(\tilde{\bF}^0) \leq E \pm 2l) - \mP(\ell_{\max}(\tilde{\bF}^0) \leq E ) \to 0.\]
We then conclude that 
\[ \mP(\ell_{\max}(\tilde{\bF}) \leq E)  - \mP (\ell_{\max}(\tilde{\bF}^0) \leq  E) \to 0.\]
Since we already established the asymptotic Tracy-Widom distribution of $\ell_{\max}(\tilde{\bF}^0)$ in Section \ref{subsec:proof_normality}, $\ell_{\max}(\tilde{\bF})$ has the same weak limit. This will complete the proof of Theorem \ref{thm:main_diverge_k}.


It remains to prove Lemma~\ref{lemma:rough_order_largest_root} and Theorem~\ref{thm:green_function_comparison}. To this end, we analyze the behavior of the resolvent $\bL(z)$ in a neighborhood of $\Theta_1$. The objective is to show that the resolvent $\bL_{\bZ_x}(z)$ has the same asymptotic limit as that of $\bL_{\bZ_x^0}(z)$. However, it is difficult to work directly with $\bL(z)$ due to its complicated dependence on the individual entries of $\bZ_x$.

Following the strategy of \citet{han2016tracy} and \citet{han2018unified}, we instead express the resolvent $\bL(z)$ as a submatrix of the inverse of a linearization matrix.
In particular, define the following $3\times 3$ block matrix 
\begin{equation}\label{eq:bH}
\bH(z) = \bH(z,\bZ_x,\bY) =\left( \begin{matrix} 
- z I_{k} &  \bU_k^T \bZ_x^T & 0 \\[10pt]
 \bZ_x \bU_k & -\lambda \Sigma_0^{-1} & \bZ_x \bU_{\perp} \\[10pt]
0 & \bU_{\perp}^T \bZ_x^T & I_{n-1-k} 
\end{matrix} \right),
\end{equation}
where $\bU_{\perp} = \bU_{\perp}(\bY) \in \mathbb{R}^{n\times (n-1-k)}$ is an orthogonal matrix such that $\bU_\perp \bU_\perp^T = P_\perp - \bU_k \bU_k^T$. 

Note that $\bH(z)$ is linear in $\bZ_x$. Using the Schur complement formula, it can be verified easily that the upper-left block of the matrix $\bH^{-1}(z)$ is the resolvent $\bL(z)$. Therefore, it suffices to study the properties of $\bH^{-1}(z)$. 

It is important to note that when $\lambda = 0$, the matrix $\bH(z)$ reduces to the linearization matrix (also denoted by $\bH$) used in \citet{han2016tracy} and \citet{han2018unified}. In other words, our linearization matrix differs from that in \citet{han2016tracy} and \citet{han2018unified} only in a single deterministic block (the middle block). Owing to this structural similarity, the arguments developed in \citet{han2018unified} can be reused with only minor modifications, which substantially simplifies our proof.

The explicit expression of $\bH^{-1}(z)$ can be obtained using the standard block matrix inversion formula
\[
\begin{pmatrix}
\mathbf{K} & \mathbf{B} \\
\mathbf{C} & \mathbf{D}
\end{pmatrix}^{-1}
=
\begin{pmatrix}
0 & 0 \\
0 & \mathbf{D}^{-1}
\end{pmatrix}
+
\begin{pmatrix}
\mathbf{I} \\
-\mathbf{D}^{-1}\mathbf{C}
\end{pmatrix}
\left(\mathbf{K}-\mathbf{B}\mathbf{D}^{-1}\mathbf{C}\right)^{-1}
\begin{pmatrix}
\mathbf{I} ~&~\mathbf{B}\mathbf{D}^{-1}
\end{pmatrix}.
\]
As an illustration, the three diagonal blocks of $\bH^{-1}$ are given by
\begin{align*}
(\bH^{-1})_{11} &= \bL(z) = (\bU_k^{T}\bZ_x^{T}\bG^{-1}\bZ_x \bU_k - z I_{k})^{-1},\\[4pt]
(\bH^{-1})_{22} &= (-\bG + z^{-1}\bZ_x \bU_k \bU_k^{T}\bZ_x^{T})^{-1},\\[4pt]
(\bH^{-1})_{33} &= I_{n-1-k} + \bU_{\perp}^{T}\bZ_x^{T}(\bH^{-1})_{22}\bZ_x \bU_{\perp}.
\end{align*}
We omit the explicit expressions for the off-diagonal blocks.

Importantly, when $\bZ_x \bU_{\perp}$ is regarded as fixed, all blocks of $\bH^{-1}$ can be expressed as linear combinations of the block matrices appearing in equation~(4.3) of \citet{knowles2017anisotropic}, provided that the matrix
\[
\bG^{-1} = (\bZ_x \bU_{\perp} \bU_{\perp}^{T}\bZ_x^{T} + \lambda \Sigma_0^{-1})^{-1}
\]
is taken as the population covariance matrix ``$\Sigma$'' in the formulation of \citet{knowles2017anisotropic}. For example, among the diagonal blocks, $(\bH^{-1})_{11}$ corresponds to the last block in (4.3) of \citet{knowles2017anisotropic} (denoted there by $G_n$), while $(\bH^{-1})_{22}$ corresponds to the first block in (4.3) (denoted there by $G_M$). Finally, $(\bH^{-1})_{33}$ can be written as $I_{n-1-k} + \bU_{\perp}^{T}\bZ_x^{T}G_M \bZ_x \bU_{\perp}$.

Heuristically, based on equation~(3.5) of  \cite{knowles2017anisotropic}, the following behaviors of the blocks of $\bH^{-1}$ are expected. 
Recall the definition of the Mar\v{c}enko-Pastur equation solution in \eqref{eq:M_P_equation}, we evaluate the equation at $F^{\bG}$ and $p_1/k$.

Let $\hat{\varphi}_{\bG}(z) = \varphi(z, p_1/k, F^{\bG})$, that is, $\hat{\varphi}_{\bG}(z)$ be the solution to 
\[ z = \frac{-1}{\hat{\varphi}_{\bG}(z)} + \frac{p_1}{k} \int \frac{\tau dF^{\bG}(\tau)}{\tau \hat{\varphi}_{\bG}(z) +1}.  \]
Moreover, let ${\varphi}_{\calG}(z) = \varphi(z, p_1/k, \calG)$ be the limiting counterpart of $\hat{\varphi}_{\bG}(z)$, that is, ${\varphi}_{\calG}(z)$ be the solution to
\[ z = \frac{-1}{\varphi_{\calG}(z)} + \frac{p_1}{k} \int \frac{\tau d\calG(\tau)}{\tau \varphi_{\calG} (z) +1}.  \]

Then, 
\begin{align*}
(\bH^{-1})_{11} &\approx \hat{\varphi}_{\bG}(z)\, I_{k},\\[4pt]
(\bH^{-1})_{22} &\approx -\, \bG^{-1} \bigl(1+ \hat{\varphi}_{\bG}(z)\bG^{-1}\bigr)^{-1}
= -\, (\bG + \hat{\varphi}_{\bG}(z) I_{p_1})^{-1},\\[4pt]
(\bH^{-1})_{33} &\approx I_{n-1-k} - \bU_{\perp}^{T} \bZ_x^{T} (\bG + \hat{\varphi}_{\bG}(z) I_{p_1})^{-1} \bZ_x \bU_{\perp},\\[4pt]
(\bH^{-1})_{ij} & \approx 0 ,\quad \text{if } i\neq j. 
\end{align*}



Moreover, $(\bG + \hat{\varphi}_{\bG}(z) I_{p_1})^{-1}$ is precisely the resolvent of $\bG$ evaluated at $-\hat{\varphi}_{\bG}(z)$. 
In addition, the matrix
\[
I_{n-1-k} - \bU_{\perp}^{T} \bZ_x^{T} (\bG + \hat{\varphi}_{\bG}(z) I_p)^{-1} \bZ_x \bU_{\perp}
\]
can be written as $-\bJ_{(22)}(-\hat{\varphi}_{\bG}(z))$, where $\bJ_{(22)}$ denotes the lower-right block of the matrix $\bJ$. Combining the deterministic equivalent $\bOmega(z)$ of $\bJ(z)$  with the concentration of $\hat{\varphi}_{\bG}(z)$ to ${\varphi}_{\calG}(z)$, we obtain
\begin{align*}
(\bH^{-1})_{22} 
&\approx -\, (\bG + \hat{\varphi}_{\bG}(z) I_{p_1})^{-1} 
\approx \big( \lambda \Sigma_0^{-1} - w(-{\varphi}_{\calG}(z)) I_{p_1} \big)^{-1},\\[6pt]
(\bH^{-1})_{33} 
&\approx I_{n-1-k} - \bU_{\perp}^{T} \bZ_x^{T} (\bG + \hat{\varphi}_{\bG}(z) I_{p_1})^{-1} \bZ_x \bU_{\perp} 
\approx \big(w(-{\varphi}_{\calG}(z)) - z \big) I_{n-1-k}.
\end{align*}

We define the following two block diagonal matrices
\begin{align*}
    \widehat{\Pi}(z) = \left(
    \begin{matrix}
        \hat{\varphi}_{\bG}(z) I_{k} & 0 & 0\\
        0 & (\bG + \hat{\varphi}_{\bG}(z) I_{p_1})^{-1} & 0\\
        0 & 0 & I_{n-1-k} - \bU_{\perp}^{T} \bZ_x^{T} (\bG + \hat{\varphi}_{\bG}(z) I_{p_1})^{-1} \bZ_x \bU_{\perp}
    \end{matrix}
    \right).
\end{align*}

\begin{align*}
    \Pi(z)  =\begin{pmatrix}
        {\varphi}_{\calG}(z) I_{k} & 0 & 0 \\
        0 & \big( \lambda \Sigma_0^{-1} - w(-{\varphi}_{\calG}(z)) I_{p_1} \big)^{-1} & 0\\
        0 & & \big(w(-{\varphi}_{\calG}(z)) - z \big) I_{n-1-k}
    \end{pmatrix}.
\end{align*}

It is worth noting that when $\lambda=0$, the matrix $\widehat{\Pi}(z)$ coincides with the ``limit'' matrix $\Pi(z)$ in \citet{han2018unified} (see equation~(8.5) therein). In particular, for $\lambda=0$, \citet{han2016tracy} and \citet{han2018unified} established that
\[
\bH^{-1}(z) - \widehat{\Pi}(z) 
\prec 
\sqrt{\frac{\Im {\varphi}_{\calG}(z)}{k\I(z)}} + \frac{1}{k\I(z)},
\]
whenever $z$ lies in a neighborhood of $\Theta_1$.  In this section, we establish the corresponding concentration result for $\lambda>0$, where $\widehat{\Pi}(z)$ is replaced by the deterministic equivalent matrix $\Pi(z)$.





Define the following domain near $\Theta_1$ 
\[
D = D(a,a',k) = \Big\{ z = E+\mathrm{i}y \in \mathbb{C}^+ : |E-\Theta_1| \leq a,\; k^{-1+a'} \leq y \leq 1/a' \Big\}
\]
We define the control parameter 
\[ \Phi(z)  =  \Phi(z,\lambda) =  \sqrt{\frac{\Im {\varphi}_{\calG}(z) }{n\I(z)} } + \frac{1}{n\I(z)}.\]

By the convergence of linear spectral statistics established in Lemma~\ref{lemma:linear_spectral_G}, we obtain
\[\bigl|\hat {\varphi}_{\bG}(z)-{\varphi}_{\calG}(z)\bigr|\prec\frac{1}{n\I(z)},\qquad
z\in D(a,a',k).
\]
A similar argument is used in the proof of Theorem~S.7.1 of \cite{li2025ridge}.



Again, we proceed by first assuming Gaussianity. The following theorem follows from a combination of Theorem 3.14 in \cite{knowles2017anisotropic}. 
\begin{theorem}
    \label{thm:local_laws_bH}
    Suppose that the conditions of Theorem \ref{thm:main_diverge_k} hold. Additionally, assume that the observations are Gaussian.  Fix any $\lambda>0$. 
For any $a' >0$,  there exist constants $a>0$ such that the following results hold. 
\begin{itemize}
    \item[(i)] (Entrywise local law)  We have
\[ \bH^{-1}(z) - \widehat{\Pi}(z) = O_\prec(\Phi(z)),\]
uniformly in ${D}(a, a', k)$. 
\item[(ii)] (Averaged local law) We have 
\[| \overline{L}(z) - \hat{\varphi}_{\bG}(z)| \prec \frac{1}{k\I(z)},\]
uniformly in $D(a,a',k)$. Here $\overline{L}(z)$ is the averaged trace of $\bL(z) = (\bH^{-1}(z))_{11}$. 

\item[(iii)] (Local law for $\widehat{\Pi}(z)$) Moreover,
\[  \widehat{\Pi}(z) - \Pi(z) = O_\prec(\frac{1}{n\I(z)}),\]
uniformly in ${D}(a, a', k)$. It immediately follows that part (i) still holds if $\widehat{\Pi}(z)$ is replaced by $\Pi(z)$.
\end{itemize}
\end{theorem}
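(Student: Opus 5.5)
The plan is to condition on $(\bY,\bZ_x\bU_\perp)$ and reduce everything to the anisotropic local law of \citet{knowles2017anisotropic}. Under Gaussianity, the matrices $\bZ_x\bU_k$ and $\bZ_x\bU_\perp$ are independent given $\bY$. The block $\bZ_x\bU_k$ is a $p_1\times k$ matrix with i.i.d.\ $N(0,1)$ entries, and its distribution does not depend on $\bY$, because $\bU_k^T\bU_k=I_k$.

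\textbf{Part (i), entrywise local law.} Conditionally on $\bG$ (equivalently on $\bZ_x\bU_\perp$ and $\bY$), the upper-left $2\times2$ block structure of $\bH^{-1}$ is exactly the linearized resolvent of the sample covariance model $k^{-1}\bU_k^T\bZ_x^T\bG^{-1}\bZ_x\bU_k$. This is the anisotropic model of \citet{knowles2017anisotropic} with population covariance $\bG^{-1}$ and aspect ratio $p_1/k$. Their Theorem~3.14, applied with $\Sigma=\bG^{-1}$, gives the $(1,1)$, $(1,2)$ and $(2,2)$ blocks with deterministic equivalent $\hat\varphi_{\bG}(z)$.

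The theorem's hypotheses must be checked on the event $\mathfrak M^{(1)}_\varepsilon\cap\mathfrak M^{(2)}_\varepsilon$:
\begin{itemize}
\item $\|\bG^{-1}\|_2\le 2/\rho$ and $\ell_{\min}(\bG^{-1})$ is bounded below, using Lemma~\ref{lemma:convergence_edge_G};
\item the edge regularity at the rightmost edge of the spectrum of $\tilde\bF$ holds, i.e.\ the right edge $\Theta_1$ corresponds to the regular critical point $-\tilde\beta$ of the empirical version of $A(x)$;
\item $\tilde\beta$ is bounded away from $\ell_{\min}(\bG)$, which is exactly what $\mathfrak M^{(2)}_\varepsilon$ guarantees.
\end{itemize}

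The third block row involves $\bZ_x\bU_\perp$. Using the block inversion formula, I would express $(\bH^{-1})_{23}$, $(\bH^{-1})_{13}$ and $(\bH^{-1})_{33}$ as $(\bH^{-1})_{22}$ and $(\bH^{-1})_{12}$ multiplied by the conditionally fixed matrix $\bZ_x\bU_\perp$. Since the Knowles--Yin bounds are anisotropic, meaning they hold for arbitrary deterministic vectors, they transfer to these products with the same error $\Phi(z)$. This needs two more ingredients: the comparison $\Im\hat\varphi_{\bG}\asymp\Im\varphi_{\calG}$, and the fact that $\|\bZ_x\bU_\perp\|_2\prec\sqrt n$ enters only through vectors normalized by $\bG$. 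The conditional high-probability bounds are uniform over the region, so integrating over $(\bG,\bY)$ as in Stage~1 and adding the complement probability $n^{-D}$ of the events gives (i).

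\textbf{Part (ii), averaged local law.} This is the averaged law in the same Knowles--Yin theorem for the $G_n$ block, transferred by the same conditioning argument.

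\textbf{Part (iii), local law for $\widehat\Pi(z)$.} I compare the diagonal blocks one at a time.
\begin{itemize}
\item The $(1,1)$ block follows from $|\hat\varphi_{\bG}-\varphi_{\calG}|\prec (n\I)^{-1}$, already noted above.
\item For the $(2,2)$ and $(3,3)$ blocks, write them as $\bJ(-\hat\varphi_{\bG}(z))$. Replace $\hat\varphi_{\bG}$ by $\varphi_{\calG}$ using a resolvent perturbation bound, valid because $-\varphi_{\calG}(z)$ stays at distance of order one from the spectrum of $\bG$. Then invoke Theorem~\ref{thm:local_laws_G}(a) at the point $-\varphi_{\calG}(z)$, which is a point far from the support, so the error $\Psi$ is $O(n^{-1/2})$.
\end{itemize}

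The main obstacle is this last step. The entrywise error at a point of distance of order one gives only $n^{-1/2}$, not $(n\I)^{-1}$. For the $(1,1)$ block and for traces, the averaged law yields $1/n$. For general entries of the $(2,2)$ and $(3,3)$ blocks, I would argue that $n^{-1/2}\lesssim\Phi(z)$ on $D$, since $\Im\varphi_{\calG}\gtrsim\I$ there. That suffices for the final claim that (i) still holds with $\Pi$, and I would state (iii) at that precision if a sharper isotropic fluctuation estimate is not available.

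Finally, I must ensure that $-\varphi_{\calG}(z)$ lies in the domain where Theorem~\ref{thm:local_laws_G} applies, i.e.\ real part below $\rho$. This should follow from $\varphi_{\calG}(\Theta_1)=\beta$-type identities, namely $-\varphi_{\calG}(\Theta_1)=\beta<\rho$, together with continuity in $z$ near $\Theta_1$ for $a$ small.
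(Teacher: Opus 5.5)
Your route is the same as the paper's. You apply the conditional Knowles--Yin law with population covariance $\bG^{-1}$ (regularity coming from $\tilde\beta$ being separated from $\ell_{\min}(\bG)$), use block inversion for the third block row and column, integrate over $(\bY,\bZ_x\bU_\perp)$, and for (iii) compare $\bJ(-\hat\varphi_{\bG}(z))$ with $\bOmega(-\varphi_{\calG}(z))$ via Theorem~\ref{thm:local_laws_G}. Moving the random point $-\hat\varphi_{\bG}(z)$ to the deterministic point $-\varphi_{\calG}(z)$ by a resolvent perturbation bound first is a cleaner version of what the paper leaves implicit.

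Your caveat on (iii) is correct, and it exposes an overstatement in the paper. The paper's proof also rests on the entrywise law of Theorem~\ref{thm:local_laws_G}, which at a point at distance of order one from the spectrum gives only $n^{-1/2}$. In fact the $(n\I)^{-1}$ rate is false for the middle and last diagonal blocks once $\I\gg n^{-1/2}$. Take $\Sigma_0=I_{p_1}$. The eigenvector matrix of $\bG$ is then Haar distributed and independent of the eigenvalues, and $\hat\varphi_{\bG}$ depends only on the eigenvalues. So, conditionally on the spectrum, $e_1^T(\bG+\hat\varphi_{\bG}(z)I_{p_1})^{-1}e_1$ has standard deviation of order $n^{-1/2}$, while the corresponding entry of $\Pi(z)$ is deterministic. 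What can be proved is $\widehat\Pi(z)-\Pi(z)=O_\prec(n^{-1/2}+(n\I)^{-1})$. This is $O_\prec(\Phi(z))$ because $\Im\varphi_{\calG}\gtrsim\I$ on $D$, and it is all the last sentence of (iii) needs.

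One step of your (i) needs the same extra input and does not follow from the conditional Knowles--Yin law alone. By block inversion, $(\bH^{-1})_{23}=-(\bH^{-1})_{22}\bZ_x\bU_\perp$. The conditional law therefore controls this block only against $(\bG+\hat\varphi_{\bG}(z)I_{p_1})^{-1}\bZ_x\bU_\perp$ (up to sign and normalization). That matrix is the off-diagonal block of $\bJ(-\hat\varphi_{\bG}(z))$ and is nonzero, whereas $\widehat\Pi(z)$ is block diagonal. Its isotropic smallness comes from the randomness of $\bZ_x\bU_\perp$, namely from the off-diagonal part of Theorem~\ref{thm:local_laws_G} (where $\bOmega$ is block diagonal) evaluated at $-\varphi_{\calG}(z)$ after your perturbation step. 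This again gives only $O_\prec(n^{-1/2}+(n\I)^{-1})\lesssim\Phi(z)$. The paper passes over this point too.

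Two smaller remarks:
\begin{itemize}
\item Continuity of $\varphi_{\calG}$ at $\Theta_1$ separates $-\varphi_{\calG}(z)$ from $\operatorname{supp}\calG$ only for small $\I$. For $\I$ of order one, the bound $\Im\varphi_{\calG}(z)\gtrsim\I$ does the job instead.
\item Theorem~\ref{thm:local_laws_G}(a) is stated only for $\rho>\eta$. When $\rho=\eta$ you need an isotropic law at order-one distance from the spectrum; this is standard but is not recorded in the paper.
\end{itemize}
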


Before proceeding to the proof of Theorem~\ref{thm:local_laws_bH}, as a connection to the literature, \cite{han2016tracy} and \cite{han2018unified} established part (i) and (ii) of Theorem \ref{thm:local_laws_bH} when $\lambda =0$ under both Gaussian and Condition \ref{enum:moments_conditions}.  

\begin{proof}[Proof of Theorem~\ref{thm:local_laws_bH}]
Recall that with high probability, $\tilde{\beta}$ is bounded away from $\rho$ in the sense that there exists a constant $c>0$ such that $|\tilde{\beta} -\rho| \geq c$. Consequently, when $\bG^{-1}$ is regarded as the ``population covariance matrix'', the regularity condition in Definition~2.7 of \citet{knowles2017anisotropic} is satisfied with high probability.

Under this regularity condition, each block of $\bH^{-1}$ can be expressed as a linear combination of the blocks of the linearization matrix $G$ studied in \citet{knowles2017anisotropic} (see equation~(4.3) therein). Applying the entrywise local laws from Theorem~3.14 of \citet{knowles2017anisotropic}, we obtain that for any $a'>0$ there exists $2a>0$ such that
\[
\mathbb{I}(|\tilde{\beta}-\rho|>c)\,
\bigl|\bH^{-1}(z)-\widehat{\Pi}(z)\bigr|
\; \Big| \;
\bZ_x \bU_{\perp}
\prec 
\sqrt{\frac{\Im \hat{\varphi}_{\bG}(z)}{n\I(z) }}
+\frac{1}{n\I(z) },
\]
uniformly for $z\in \tilde{D}(2a,a',k)$. Here, $\tilde{D}(2a,a',k)$ denotes the $\bZ_x \bU_{\perp}$–dependent counterpart of $D(2a,a',k)$ obtained by replacing $\Theta_1$ with $\tilde{\Theta}_1$. 

Due to the convergence of $\tilde{\Theta}_1$ to $\Theta_1$, for any $a>0$, we have $D(a,a',k)\subset \tilde{D}(2a,a',k)$ with high probability. Together with the fact that $|\hat{\varphi}_{\bG}(z)-{\varphi}_{\calG}(z)|\prec 1/(n\I(z))$. Combining these results, 
\[
\big|\bH^{-1}(z)-\widehat{\Pi}(z)\big|
\,\Big|\, \bZ_x \bU_{\perp}
\prec 
\sqrt{\frac{\Im {\varphi}_{\calG}(z)}{n \I(z) }}
+\frac{1}{n\I(z) },
\]
uniformly for $z\in D(a,a',k)$. Finally, integrating with respect to the distribution of $\bZ_x \bU_{\perp}$, we conclude that
\[
\big|\bH^{-1}(z)-\widehat{\Pi}(z)\big|
\prec 
\sqrt{\frac{\Im {\varphi}_{\calG}(z)}{n\I(z)}}
+\frac{1}{n\I(z)},
\]
uniformly in $D(a,a',k)$. 

The averaged local law in part (ii) follows directly from the averaged version of Theorem~3.14 of \citet{knowles2017anisotropic} together with analogous arguments.

To show part (iii), we only need to consider the second and third diagonal blocks, since the first block is such that $|\hat{\varphi}_{\bG}(z) - {\varphi}_{\calG}(z)|\prec 1/(n\I(z) )$. 
These two blocks in $\widehat{\Pi}(z)$ are linear combinations of the blocks in $\bJ(-\hat{\varphi}_{\bG}(z))$.  Accordingly, the two diagonal blocks in $\Pi(z)$ are the same linear combinations of the blocks in $\bOmega(-{\varphi}_{\calG}(z))$. Given Theorem \ref{thm:local_laws_bH_nonGauss}, to show the concentration of $\widehat{\Pi}(z)$ around $\Pi(z)$, it suffices to verify that $-\hat{\varphi}_{\bG}(z)$ is away from the support of $\calG$ with high probability. By Lemma 4.1 of \cite{li2024analysis}, there exits a constant $c'>0$ such that 
\[\inf_{\tau \in \operatorname{supp}(\calG)} |\tau + {\varphi}_{\calG}(z)|\geq c'\] 
for all $z\in D(a,a',k)$. Therefore, $\inf_{\tau \in \operatorname{supp}(\calG)} | \tau + \hat{\varphi}_{\bG}(z) |\geq c'$ with high probability.  It completes the proof.
\end{proof}

The last step in the proof of Theorem \ref{thm:main_diverge_k} is to extend the results in Theorem \ref{thm:local_laws_bH} to non-Gaussianity. We aim to establish the following results.
\begin{theorem}
    \label{thm:local_laws_bH_nonGauss}
    Suppose that the conditions of Theorem \ref{thm:main_diverge_k} hold. Fix any $\lambda>0$. For any $a' >0$,  there exist constants $a>0$ such that the following results hold. 
\begin{itemize}
    \item[(i)] (Entrywise local law)  We have
\[ \bH^{-1}(z) - \Pi(z) = O_\prec(\Phi(z)),\]
uniformly in ${D}(a, a', k)$. 
\item[(ii)] (Averaged local law) We have 
\[| \overline{L}(z) - \varphi_{\calG}(z)| \prec \frac{1}{n\I(z) },\]
uniformly in $D(a,a',k)$. Here $\overline{L}(z)$ is the averaged trace of $\bL(z) = (\bH^{-1}(z))_{11}$. 
\end{itemize}
\end{theorem}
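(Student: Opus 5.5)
We extend the Gaussian local laws of Theorem~\ref{thm:local_laws_bH} to general entries by a self-consistent comparison (continuity) argument, in the spirit of \citet{han2018unified} and Section~6 of \citet{knowles2017anisotropic}. Since $\bH(z)$ is linear in $\bZ_x$ and differs from the linearization of \citet{han2018unified} only through the deterministic middle block $-\lambda\Sigma_0^{-1}$, we reuse their interpolation scheme. Throughout we condition on $\bY$ on the event $\mathcal E_C$. The bounds will be uniform in $\bY$, so integrating over $\bY$ removes the conditioning, exactly as in the proofs of Theorem~\ref{thm:local_laws_G} and Lemma~\ref{lemma:convergence_edge_G}. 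Conditional on $\bY$, $[\bU_k,\bU_\perp,n^{-1/2}1_n]$ is a deterministic orthogonal matrix, so $\bH$ is a fixed linear function of the i.i.d.\ entries of $\bZ_x$.

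\textbf{Step 1 (interpolation).} Set $\bZ_x^t=\sqrt{t}\,\bZ_x+\sqrt{1-t}\,\bZ_x^0$ and $\bH_t=\bH(z,\bZ_x^t,\bY)$. For a fixed $z\in D(a,a',k)$ and deterministic unit vectors $\mathbf v,\mathbf w$, we bound high moments
\[
\mE\big|\langle \mathbf v,(\bH_t^{-1}-\Pi)\mathbf w\rangle\big|^{2p}.
\]
Differentiating in $t$ produces the usual sum over entries $(i,\mu)$ of $\bZ_x$. We apply a cumulant (Stein-type) expansion in each entry, using resolvent identities $\partial_{x_{i\mu}}\bH_t^{-1}=-\bH_t^{-1}(\partial\bH_t)\bH_t^{-1}$. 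Here $\partial\bH_t$ is a rank-two matrix involving the vectors $\bU_k^T e_\mu$ and $\bU_\perp^T e_\mu$, which are delocalized only in an $\ell^2$ sense. The first two cumulants cancel by matching with the Gaussian case. The third- and higher-order terms carry factors $n^{-3/2}$ or smaller against $O(n^2)$ summands and are controlled by the a priori bound $\|\bH_t^{-1}\|\prec 1$ entrywise.

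\textbf{Step 2 (bootstrapping).} We start at $\I(z)=1/a'$, where crude bounds hold for all $t$. We then descend in $\I$ along a grid with steps $n^{-3}$, using Lipschitz continuity of $\bH_t^{-1}$ in $z$. At each step the moment bound from Step~1 gives $\bH_t^{-1}-\Pi=O_\prec(\Phi)$ uniformly in $t$, which yields part~(i). Part~(ii) follows from the same expansion applied to $\overline L(z)$, where averaging over the $k$ indices gains an extra factor and produces the $1/(n\I)$ rate. Finally we replace $\hat\varphi_{\bG}$ by $\varphi_{\calG}$ using Theorem~\ref{thm:local_laws_bH}(iii), whose ingredient $|\hat\varphi_{\bG}-\varphi_{\calG}|\prec 1/(n\I)$ relies on Lemma~\ref{lemma:linear_spectral_G}. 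Since that lemma is proved for Gaussian data, it must also be extended to general entries. We plan to do this via the local laws for $\bJ$ in \citet{knowles2017anisotropic}, which hold under Condition~\ref{enum:moments_conditions}.

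\textbf{Main obstacle.} The key difficulty is that $\bU_k$ and $\bU_\perp$ are not entrywise delocalized, so the error terms in the expansion involve $\sum_\mu |(\bU_k)_{\mu a}|^3$-type quantities. Only their $\ell^2$ norms are controlled, which is not enough on its own. We would handle this by the same device as \citet{han2018unified}: express the expansion terms through the isotropic laws with test vectors $\bU_k^Te_\mu$ and $\bU_\perp^T e_\mu$, then sum these over $\mu$ using orthonormality. The extra block $-\lambda\Sigma_0^{-1}$ only changes $\Pi$ and is bounded (Condition~\ref{enum:boundedness_spectral_norm}), so it does not affect this counting. We also need the stability of the self-consistent equation for $\varphi_{\calG}$ near $\Theta_1$. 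This follows from the square-root behavior of the edge guaranteed by $\beta$ being bounded away from $\rho$ (Section~\ref{subsec:proof_properties_MP_G}).
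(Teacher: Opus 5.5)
Your overall route is the paper's. You condition on $\bY$ on $\mathcal E_C$, interpolate between $\bZ_x^0$ and $\bZ_x$, and compare $\bH^{-1}$ directly with the deterministic $\Pi(z)$. You use that resampling one entry changes $\bH$ by a rank-two, $\lambda$-independent matrix built from $\Delta e_\mu$, where $\Delta=[\bU_k,\,0_{n\times p_1},\,\bU_\perp]^T$. Uniformity in $\bY$ follows because $\bY$ enters only through $\Delta(\bY)$, and the detailed counting is imported from \citet{han2018unified}. Your $\sqrt t$-interpolation with a cumulant expansion, in place of the mixture interpolation $f_\theta$ of \citet{knowles2017anisotropic}, is a cosmetic difference. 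Two points in your write-up need correcting, however.

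First, the claim in Step~1 that the third- and higher-order cumulant terms are controlled by the entrywise bound $\bH_t^{-1}=O_\prec(1)$ is false. By your own count, $n^{-3/2}$ against $\asymp p_1 n$ summands gives $O_\prec(n^{1/2})$, and the fourth-order terms give $O_\prec(1)$. The Gronwall step for $\mE F^{2p}$ needs each normalized sum to be $O_\prec(\Phi)$. The comparison closes only because of the structure of these terms. Every term has end factors $\langle\mathbf v,\bH_t^{-1}x\rangle$ and $\langle y,\bH_t^{-1}\mathbf w\rangle$ with $x,y\in\{e_i,\Delta e_\mu\}$. It also has cross factors $\langle e_i,\bH_t^{-1}\Delta e_\mu\rangle$, which are small because $\Pi$ is block diagonal while $e_i$ and $\Delta e_\mu$ live in different blocks. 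The end factors are then summed by the Ward identity and by isotropic summation. For example, $\sum_\mu c_\mu\langle\Delta e_\mu,\bH_t^{-1}\mathbf w\rangle=\langle\Delta\mathbf c,\bH_t^{-1}\mathbf w\rangle$, with $\|\Delta\mathbf c\|_2\le\|\mathbf c\|_2$ since $\Delta^T\Delta=P_\perp$. This is the device you mention under ``main obstacle''; it must be the mechanism of Step~1 itself, not a later patch.

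Second, the closing move of Step~2 is unnecessary: you do not need to replace $\hat\varphi_{\bG}$ by $\varphi_{\calG}$, and so you do not need to extend Lemma~\ref{lemma:linear_spectral_G} to non-Gaussian entries. You compare with the deterministic $\Pi$ (and with $\varphi_{\calG}$ in part~(ii)) for every $t$. Hence $\hat\varphi_{\bG}$ enters only through the Gaussian input at $t=0$, namely Theorem~\ref{thm:local_laws_bH}(i)--(iii), which uses Lemma~\ref{lemma:linear_spectral_G} only for Gaussian data. This is exactly the simplification the paper stresses: $\Pi$ does not move when an entry of $\bZ_x$ is resampled, whereas $\widehat\Pi$ does.

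The justification you propose for that extension would not work as stated anyway. For non-Gaussian $\bZ_x$, the entries in a row of $\bZ_x\bU_\perp$ are uncorrelated but not independent. So the local laws of \citet{knowles2017anisotropic} do not apply to $\bJ$ directly; one would need a separable-covariance local law such as \citet{Yang2019EJP}. Likewise, stability of the self-consistent equation near $\Theta_1$ is needed only for the Gaussian input, which you already take as given.
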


Once Theorem~\ref{thm:local_laws_bH_nonGauss} is established, it is standard in RMT to deduce Lemma~\ref{lemma:rough_order_largest_root} (edge rigidity) and Theorem~\ref{thm:green_function_comparison} (the Green function comparison theorem). Consequently, edge universality for $\tilde{\bF}$ follows, completing the proof of Theorem~\ref{thm:main_diverge_k}. Such arguments are standard in the literature; see, for example, Lemma~4 of \citet{han2016tracy} and Lemma~1 of \citet{han2018unified}. We therefore omit the details.

Given Theorem~\ref{thm:local_laws_bH}, Theorem \ref{thm:local_laws_bH_nonGauss}, and Lemma~\ref{lemma:rough_order_largest_root}, it is likewise routine in RMT to derive the bound required for the Green function comparison theorem. Similar arguments can be found in Section~6 of \citet{erdHos2012rigidity} and Section~12 of \citet{han2018unified}. In particular, owing to the close similarity between our linearization matrix $\bH(z)$ and that in \citet{han2018unified}, the arguments in Section~12 of \citet{han2018unified} can be applied essentially verbatim. We therefore omit the details.

\begin{proof}[Proof of Theorem \ref{thm:local_laws_bH_nonGauss}]

The proof of Theorem~\ref{thm:local_laws_bH_nonGauss} is technically considerably more demanding. Nevertheless, closely related analyses have been carried out in Sections~9.1, 9.2, and~10 of \citet{han2016tracy} and \citet{han2018unified}, by adapting the arguments originally developed in \citet{knowles2017anisotropic}. 

Recall that our linearization matrix $\bH$ coincides with that of \citet{han2018unified} in the special case $\lambda=0$. Owing to this structural similarity, most of the arguments in \citet{han2018unified} can be directly adapted to our setting. We therefore focus only on describing the similarities and differences, and on identifying the modifications required to tailor those arguments to the present context. With these modifications, Theorem \ref{thm:local_laws_bH_nonGauss} is proved using arguments in Sections 9.1, 9.2 and 10 of \cite{han2018unified} in a verbatim manner.

For any deterministic unit vector $v$ and $w$, any observation matrix $\bZ_x$ satisfying \ref{enum:moments_conditions}, and $z\in D$, define the inner product function 
\[ F (\bZ_x) = F(\bZ_x, z, v, w)  = |v^T (\bH^{-1} (\bZ_x, z) - \Pi(\bZ_x, z))w|. \]
To show Theorem \ref{thm:local_laws_bH_nonGauss}, it suffices to show that 
\[ \mathbb{I}(\mathcal{E}_C) F(\bZ_x) \mid \bY \prec \Phi  \qquad z\in D,\]
and to show the bound is uniform in $\bY$. 

While this argument is established in Theorem \ref{thm:local_laws_bH} under Gaussian observations, say $\bZ_x^0$, we only need to control the difference 
\[ \mE (\mathbb{I}(\mathcal{E}_C)F(\bZ_x) \mid \bY) - \mE (\mathbb{I}(\mathcal{E}_C)F ( \bZ_x^0)\mid \bY).\]

To this end, \citet{knowles2017anisotropic} employ an interpolation method in which a family of distributions $\{f_\theta : \theta \in [0,1]\}$ is introduced such that $f_0$ is $N(0,1)$ and $f_1$ is the distribution of the entries of $\bZ_x$ (see Definition~7.6 of \citet{knowles2017anisotropic} for details). Let $\bZ_x^\theta$ denote the matrix whose entries follow the distribution $f_\theta$. 

To control
\[\mE \big(\mathbb{I}(\mathcal{E}_C)F(\bZ_x^1)\mid \bY\big)  - \mE \big(\mathbb{I}(\mathcal{E}_C) F(\bZ_x^0) \mid \bY\big),\]
one analyzes the change in $\bH^{-1}(\bZ_x^\theta)$ induced by replacing a single entry of $\bZ_x^{\theta}$. Specifically, let $\bZ_{(uv)}^{\theta,s}$ denote the matrix obtained from $\bZ_x^{\theta}$ by replacing its $(u,v)$-th entry with the value $s$. It turns out that we only need to control 
\[ \sum_{uv} \mE \big(\mathbb{I}(\mathcal{E}_C)F^k (\bZ_{(uv)}^{\theta, z_{uv}^{1}}) \mid \bY \big)- \mE\big( \mathbb{I}(\mathcal{E}_C) F^k (\bZ_{(uv)}^{\theta, z_{uv}^{0}})\mid \bY \big),\]
for arbitrary power $k\in2\mathbb{N}$, $\theta\in[0,1]$, and $(u,v)$. See Lemmas 7.9 and 7.10 of \cite{knowles2017anisotropic} for details. The analysis relies on explicitly quantifying the difference in $\bH^{-1}(\bZ_x^{\theta})$ induced by the change  of the $(u,v)$-th entry of $\bZ_x^{\theta}$. Since $A^{-1} - B^{-1} = - A^{-1} (A-B) B^{-1}$, we have 
\[ \bH^{-1}(\bZ_{(uv)}^{\theta, z_{uv}^{1}}) - \bH^{-1}(\bZ_{(uv)}^{\theta, z_{uv}^{0}}) =  - \bH^{-1}(\bZ_{(uv)}^{\theta, z_{uv}^{1}}) [\bH(\bZ_{(uv)}^{\theta, z_{uv}^{1}})  - \bH(\bZ_{(uv)}^{\theta, z_{uv}^{0}}) ] \bH^{-1}(\bZ_{(uv)}^{\theta, z_{uv}^{0}}).\]
\begin{equation}\label{eq:diff_bH_entry_change}
\bH(\bZ_{(uv)}^{\theta, z_{uv}^{1}})  - \bH(\bZ_{(uv)}^{\theta, z_{uv}^{0}})  = -(z_{uv}^{1} - z_{uv}^{0}) [ e_{(p_1+n-1)u} e_{nv}^T \Delta^T +\Delta e_{nv} e_{( p_1+n -1)u}^T].
\end{equation}
Here, $e_{pi}$ is the standard unit vector of dimension $p$ in the coordinate direction $i$ and 
\[\Delta =\Delta(\bY) = \begin{pmatrix}
    \bU_k^T \\[4pt]
    0_{p_1\times n} \\[4pt]
    \bU_{\perp}^T
\end{pmatrix}.\] 
Arguments in Sections 9.1, 9.2, and 10 of \cite{han2018unified} are devoted to the study of the above quantities when $\lambda=0$. 

The similarities between our setting and that of \citet{han2018unified} can be summarized as follows. We first assume that $\bY$ is fixed and $\mathcal{E}_C$ is true. Then, the arguments in \citet{han2018unified} apply. First, the bounds on the norms $\|\bH(z)\|$, $\|\bH^{-1}(z)\|$, and $\|\partial_z\bH^{-1}(z)\|$ by the imaginary part of $z$ are unaffected by the presence of the parameter $\lambda$. In particular, the estimates in (9.16)--(9.19) of \citet{han2018unified} remain valid for all $\lambda>0$. Second, bounds of the form 
\[\mathbb{I}(\mathcal{E}_C)  |v^{T}(\bH^{-1}(\bZ_x^\theta)-\widehat{\Pi}(z))w| \mid \bY \]
such as those established in Lemma~8 of \citet{han2018unified}, continue to hold for 
\[\mathbb{I}(\mathcal{E}_C) |v^{T}(\bH^{-1}(\bZ_x^\theta)-\Pi(z))w| \mid \bY.\]
Third, the analysis of the perturbation induced by modifying a single entry of $\bZ_x^\theta$ can be carried out in exactly the same manner as in \citet{han2018unified}. In particular, equation~\eqref{eq:diff_bH_entry_change} does not depend on $\lambda$, since in the difference 
$\bH(\bZ_{(uv)}^{\theta,z_{uv}^{1}}) - \bH(\bZ_{(uv)}^{\theta,z_{uv}^{0}})$ the middle block cancels.

Importantly, the bounds on 
\[\mathbb{I}(\mathcal{E}_C) |v^{T}(\bH^{-1}(\bZ_x^\theta)-\Pi(z))w|\mid \bY\] 
can be verified to be uniform in $\bY$. It is because $\bY$ affects the difference only through $\Delta(\bY)$, whose spectral norm and Frobenius norm are uniformly bounded for any $\bY$ given that $\mathcal{E}_C$ holds.

The main difference between our setting and that of \citet{han2018unified} lies in the treatment of the deterministic equivalent matrix. In our context, concentration is established with respect to the deterministic equivalent $\Pi(z)$, whereas \citet{han2018unified} employs a $\bZ_x \bU_{\perp}$–dependent equivalent $\widehat{\Pi}(z)$. This distinction leads to a simpler analysis in our case. Indeed, when individual entries of $\bZ_x$ are modified, the matrix $\widehat{\Pi}(z)$ changes accordingly, while $\Pi(z)$ remains unaffected. Consequently, all arguments in \citet{han2018unified} that pertain to the analysis of $\widehat{\Pi}(z)$ can be omitted in our setting. These include, for example, the definition of the groups $g^{(j)}$ in Section~9.1.2, the second term on the right-hand side of equation~(9.42), the quantities $\mathcal{H}_{sti}$ and $\mathcal{H}_{stu}$ appearing between equations~(9.48) and~(9.49), equation~(9.53), as well as the discussion from the fourth paragraph of Section~9.2 up to equation~(9.60), among others.

With these modifications in place, the remaining arguments of \citet{han2018unified} can be applied in a verbatim manner. We therefore omit further details.
\end{proof}

\subsection{Proof of Theorems~\ref{thm:consistency_trace_based} and \ref{thm:consistency_largest_root}}
\label{subsec:proof_consistency}

Let
\[
\bW_{k1}^{(0)}
=
\frac1k
\Sigma_0^{1/2}
\bZ_x
\bP_k
\bZ_x^T
\Sigma_0^{1/2},
\qquad
\bW_{k2}^{(0)}
=
\frac1{n-1-k}
\Sigma_0^{1/2}
\bZ_x
(I_n-\bP_k)
\bZ_x^T
\Sigma_0^{1/2},
\]
which coincide with the corresponding cross-product matrices under the null hypothesis
$M\Lambda_m^T=0$.

Under the alternative,
\[
\bX
=
M\Lambda_m^T\bY
+
\Sigma_0^{1/2}\bZ_x,
\]
we have
\begin{align*}
\bW_{k2}
=&\,
\bW_{k2}^{(0)}
+
\frac1{n-1-k}
\Sigma_0^{1/2}
\bZ_x
(I_n-\bP_k)
\bY^T
\Lambda_m
M^T   \\
&
+
\frac1{n-1-k}
M\Lambda_m^T
\bY
(I_n-\bP_k)
\bZ_x^T
\Sigma_0^{1/2}
+
\frac1{n-1-k}
M\Lambda_m^T
\bY
(I_n-\bP_k)
\bY^T
\Lambda_m
M^T .
\end{align*}

By Condition~\eqref{eq:consistency_eq1},
\[
n^{-1/2}
\bigl\|
M\Lambda_m^T
\bY
(I_n-\bP_k)
\bigr\|_2
=
o_p(1).
\]
Moreover,
\[
\|\bZ_x\|_2
=
O_p(\sqrt n).
\]
Therefore,
\[
\|
\bW_{k2}
-
\bW_{k2}^{(0)}
\|_2
=
o_p(1).
\]

Since the eigenvalues of $\bW_{k2}^{(0)}+\lambda I_{p_1}$ are uniformly bounded above and away from zero with probability tending to one, the same property holds for
$\bW_{k2}+\lambda I_{p_1}$. Consequently, there exists an event
$\mathfrak E_1$ satisfying
\[
\mathbb P(\mathfrak E_1)\longrightarrow1,
\]
on which
\[
\ell_{\min}
(\bW_{k2}+\lambda I_{p_1})
\asymp1,
\qquad
\ell_{\max}
(\bW_{k2}+\lambda I_{p_1})
\asymp1.
\]

Next,
\begin{align*}
\frac{k}{n}\bW_{k1}
=&\,
\frac{k}{n}\bW_{k1}^{(0)}
+
\frac1n
M\Lambda_m^T
\bY
\bP_k
\bY^T
\Lambda_m
M^T
\\
&
+
\frac1n
M\Lambda_m^T
\bY
\bP_k
\bZ_x^T
\Sigma_0^{1/2}
+
\frac1n
\Sigma_0^{1/2}
\bZ_x
\bP_k
\bY^T
\Lambda_m
M^T.
\end{align*}

Since
\[
\Bigl\|
\frac{k}{n}\bW_{k1}^{(0)}
\Bigr\|_2
=
O_p(1),
\]
and the signal-strength assumption implies
\[
\frac{
\bigl\|
n^{-1}
M\Lambda_m^T
\bY
\bP_k
\bZ_x^T
\Sigma_0^{1/2}
\bigr\|_2
}{
\|\mathcal S_{nk}\|_2
}
=o_p(1),
\]
we obtain
\[
\frac{k}{n}\bW_{k1}
=
\mathcal S_{nk}
+
o_p\!\left(
\|\mathcal S_{nk}\|_2
\right)
\]
in spectral norm.

Hence there exists an event $\mathfrak E_2$ satisfying
$\mathbb P(\mathfrak E_2)\to1$ such that
\[
\frac{k}{n}
\|\bW_{k1}\|_2
\ge
\frac12
\|\mathcal S_{nk}\|_2
\]
on $\mathfrak E_2$.

Combining $\mathfrak E_1$ and $\mathfrak E_2$, we obtain
\[
\mathcal T(k,\lambda)
\ge
C
\|\mathcal S_{nk}\|_2,
\]
and
\[
\ell_{\max}(k,\lambda)
\ge
C
\|\mathcal S_{nk}\|_2,
\]
for some deterministic constant $C>0$, with probability tending to one.

Moreover, by Lemma~\ref{lemma:consistency_under_Ha},
\[
\hat\Omega_1(\lambda,q)-\Omega_1(\lambda,q)
=o_p(1),
\qquad
\hat\Omega_2(\lambda,q)-\Omega_2(\lambda,q)
=o_p(1).
\]
Therefore,
\[
\sqrt{p_1}
\frac{
\mathcal T(k,\lambda)
-
k\hat\Omega_1(\lambda,q)
}
{
\sqrt{k}\,
\hat\Omega_2(\lambda,q)
}
\stackrel{P}{\longrightarrow}
\infty,
\]
which proves
\[
\mathbb P\!\left(
\widetilde{\mathcal T}(k,\lambda)
>
z_{1-\alpha}
\right)
\longrightarrow1.
\]

The proof of Theorem~\ref{thm:consistency_largest_root} is analogous. The lower bound for $\ell_{\max}(k,\lambda)$ together with the consistency of $\hat\Theta_1(\lambda,q)$ and $\hat\Theta_2(\lambda,q)$ established in Lemma~\ref{lemma:consistency_under_Ha} yields
\[
\mathbb P\!\left(
\widetilde{\ell}_{\max}(k,\lambda)
>
\operatorname{TW}_{1, \, 1-\alpha}
\right)
\longrightarrow1.
\]
The proof is therefore omitted.

\subsection{Proof of \eqref{eq:characterize_trace_rankone} and \eqref{eq:characterize_largest_rankone}}\label{subsec:proof_characterize_trace_largest}

We now consider the proof of \eqref{eq:characterize_trace_rankone} and \eqref{eq:characterize_largest_rankone}. Under the assumed rank-one alternative model, 
\begin{align*}
\frac{1}{d_n^2} \bW_{k1} = \frac{1}{k} \mu_n \nu_n^T \bZ_y \bP_k  \bZ_y^T \nu_n \mu_n^T + \frac{1}{kd_n} \Sigma_0^{1/2} \bZ_x \bP_k \bZ_y^T \nu_n \mu_n^T\\
+ \frac{1}{kd_n} \mu_n \nu_n^T \bZ_y \bP_k \bZ_x^T \Sigma_0^{1/2}  + \frac{1}{kd_n^2}\Sigma_0^{1/2}\bZ_x \bP_k \bZ_x^T \Sigma_0^{1/2}.
\end{align*}
We first show that the last three terms are asymptotically negligible. First,
\[ \frac{1}{kd_n^2}\Big\| \Sigma_0^{1/2} \bZ_x \bP_k \bZ_x^T \Sigma_0^{1/2}\Big\|_F \leq \frac{1}{d_n^2}\Big\| \Sigma_0^{1/2} \bZ_x \bZ_x^T \Sigma_0^{1/2} \Big\|_F = O_p(\tr(\Sigma_0)/d_n^2) = o_p(1). \]
Second,
\[\frac{1}{k|d_n|} \Big\|\Sigma_0^{1/2} \bZ_x \bP_k \bZ_y^T \nu_n \mu_n^T\Big\|_F \leq \frac{1}{|d_n|}\Big\|\Sigma_0^{1/2} \bZ_x \Big\|_F \Big\| \bZ_y^T \nu_n \Big\|_F = O_p\Big(\sqrt{\tr(\Sigma_0)/d^2_n } \Big) = o_p(1). \]

On the other hand, following the arguments used in the proofs of Theorems~\ref{thm:consistency_trace_based} and \ref{thm:consistency_largest_root}, under Condition \ref{eq:consistency_eq1},
\[ \Big\| \bW_{k2} - \bW_{k2}^{(0)} \Big\|_2 =o_p(1), \]
and
\[ \Big\| \big(\bW_{k2} + \lambda I_{p_1}\big)^{-1} - \big(\bW_{k2}^{(0)} + \lambda I_{p_1}\big)^{-1} \Big\|_2 =o_p(1). \]

Following Theorem 3.6 of \cite{Yang2019EJP} on the local laws of the sample covariance matrix with separable covariance structure, we conclude that 
\[ \mu_n^T\Big(\bW_{k2}^{(0)} + \lambda I_{p_1}\Big)^{-1}\mu_n  = \mu_n^T \calD(-\lambda) \mu_n+ o_p(1).\]
It follows then 
\[ \mu_n^T \Big(\bW_{k2} + \lambda I_{p_1}\Big)^{-1} \mu_n = \mu_n^T \calD(-\lambda) \mu_n + o_p(1).\]

Combining the above estimates yields
\[
\frac1{d_n^2}\bF_{k\lambda}
=
\frac1k
\bigl(
\nu_n^T\bZ_y\bP_k\bZ_y^T\nu_n
\bigr)
\bigl(
\mu_n \mu_n^T\mathcal D(-\lambda)
\bigr)
+
\mathcal M,
\]
where $\|\mathcal M\|_F=o_p(1)$.
Since the leading term is rank one, its trace and largest eigenvalue are identical. Therefore,
\eqref{eq:characterize_trace_rankone} and
\eqref{eq:characterize_largest_rankone} follow immediately.

\subsection{Proof of Lemma~\ref{lemma:consistency_Bayes_lambda}}
\label{subsec:proof_lemma_consistency_Bayes_lambda}

Recall that the estimated risk function is
\[
\hat{\mathfrak{R}}(\lambda,\Pi)
=
\frac{
-\sum_{i=0}^{\theta}\pi_i\hat{\Upsilon}_i(\lambda)
}{
\hat{\Xi}(\lambda)
\sum_{i=0}^{\theta}\pi_i\hat{\mathfrak M}_i
},
\]
where $\hat{\Upsilon}_i(\lambda)$ is defined in \eqref{eq:def_Upsilon} and is a consistent estimator of $\Upsilon_i(\lambda)$. Pointwise consistency for every fixed $\lambda>0$ is established in \cite{li2020adaptable}. Since the candidate parameter space $\mathcal L$ is finite, the convergence is uniform over $\lambda\in\mathcal L$. Together with the assumption
\[
\sup_{\lambda\in\mathcal L}
\bigl|
\hat{\Xi}(\lambda)-\Xi(\lambda,q)
\bigr|
\stackrel{P}{\longrightarrow}
0,
\]
we obtain that, for every fixed prior model $\Pi$,
\[
\sup_{\lambda\in\mathcal L}
\bigl|
\hat{\mathfrak R}(\lambda,\Pi)
-
\mathfrak R(\lambda,\Pi)
\bigr|
\stackrel{P}{\longrightarrow}
0.
\]

Furthermore, both the numerator and denominator of $\mathfrak R(\lambda,\Pi)$ are linear functions of $\Pi$. Since $\mathfrak P$ is a compact subset of a finite-dimensional Euclidean space, the above convergence is also uniform over $\Pi\in\mathfrak P$. That is,
\[
\sup_{\Pi\in\mathfrak P}
\sup_{\lambda\in\mathcal L}
\bigl|
\hat{\mathfrak R}(\lambda,\Pi)
-
\mathfrak R(\lambda,\Pi)
\bigr|
\stackrel{P}{\longrightarrow}
0.
\]

We now prove the two assertions.

\begin{itemize}
\item[(i)]
Under the separation condition in Part~(i), for any fixed $\Pi\in\mathfrak P$,
\[
\mathbb P\!\left(
\min_{\lambda\in\mathcal L\setminus\{\lambda_\infty\}}
\hat{\mathfrak R}(\lambda,\Pi)
>
\hat{\mathfrak R}(\lambda_\infty,\Pi)
+\frac{\varepsilon}{2}
\right)
\longrightarrow
1.
\]
Hence,
\[
\mathbb P\!\left(
\hat{\lambda}_B(\Pi)=\lambda_\infty
\right)
\longrightarrow
1.
\]
Therefore, the weak convergence results in Theorems \ref{thm:main_fix_k} and  \ref{thm:main_diverge_k} still hold when $\lambda$ is replaced by $\hat{\lambda}_B$

\item[(ii)]
Under the separation condition in Part~(ii),
\[
\mathbb P\!\left(
\min_{\lambda\in\mathcal L\setminus\{\lambda_\infty\}}
\sup_{\Pi\in\mathfrak P}
\hat{\mathfrak R}(\lambda,\Pi)
>
\sup_{\Pi\in\mathfrak P}
\hat{\mathfrak R}(\lambda_\infty,\Pi)
+\frac{\varepsilon}{2}
\right)
\longrightarrow
1.
\]
Hence,
\[
\mathbb P(\hat{\lambda}_*=\lambda_\infty)
\longrightarrow
1.
\]
Therefore, the weak convergence results in Theorems \ref{thm:main_fix_k} and  \ref{thm:main_diverge_k} still hold when $\lambda$ is replaced by $\hat{\lambda}_*$
\end{itemize}

\bibliographystyle{apalike} 
\bibliography{references}       

\end{document}